\newcommand{\query}[1]{query(#1)}
\newcommand{\powset}[1]{\mathcal{P}(#1)}
\newcommand{\powsetnop}{\mathcal{P}}
\newcommand{\allmatches}[2]{\mathcal{M}^{#1}_{#2}}
\newcommand{\results}[4]{\mathcal{R}(#1, #2, #3, #4)}
\newcommand{\resultsnop}{\mathcal{R}}
\newcommand{\resultslocal}[5]{\mathcal{R}^\Phi(#1, #2, #3, #4, #5)}
\newcommand{\resultsstripped}[2]{\mathcal{R}^\Phi_X(#1, #2)}
\newcounter{thmappcount}
\newtheorem{thmapp}{Theorem}[thmappcount]
\tikzset{retenode/.style = {outer sep=0cm, rectangle, text centered, text height = 0.25cm, draw=black, minimum width = 1.7cm, minimum height = 0.75cm}}
\tikzset{ms_retenode/.style = {outer sep=0cm, rectangle, text centered, text height = 0.25cm, draw=black, minimum width = 1.7cm, minimum height = 0.75cm, label={[xshift=0.75cm, yshift=-0.35cm]\scriptsize$\Phi$}}}
\tikzset{reteconfig/.style = {outer sep=0cm, rectangle, dashed, align=left, font=\tiny, text height = 0.25cm, draw=black, minimum width = 1.7cm, minimum height = 0.75cm}}
\tikzset{retenode_app/.style = {outer sep=0cm, rectangle, text centered, text height = 0.25cm, draw=black, minimum width = 1.7cm, minimum height = 0.75cm}}
\tikzset{ms_retenode_app/.style = {outer sep=0cm, rectangle, text centered, text height = 0.25cm, draw=black, minimum width = 1.7cm, minimum height = 0.75cm, label={[xshift=0.75cm, yshift=-0.35cm]\scriptsize$\Phi$}}}
\tikzset{graph/.style = {circle, text centered, minimum size = 1cm}}
\tikzset{vertex/.style = {circle, text centered, text height = 0.25cm, draw=black, minimum width = 1.75cm, minimum height = 1.5cm, outer sep = 0pt}}
\definecolor{darkgreen}{RGB}{0, 200, 0}
\begin{document}

\title[Localized RETE for Incremental Graph Queries with NGCs]{Localized RETE for Incremental Graph Queries with Nested Graph Conditions}

\author{Matthias Barkowsky\lmcsorcid{0000-0002-1138-2425}}
\author{Holger Giese\lmcsorcid{0000-0002-4723-730X}}

\address{Hasso Plattner Institute at the University of Potsdam,
Prof.-Dr.-Helmert-Str. 2-3, D-14482 Potsdam, Germany}
\email{\{matthias.barkowsky,holger.giese\}@hpi.de}

\begin{abstract}
The growing size of graph-based modeling artifacts in model-driven engineering calls for techniques that enable efficient execution of graph queries. Incremental approaches based on the RETE algorithm provide an adequate solution in many scenarios, but are generally designed to search for query results over the entire graph. However, in certain situations, a user may only be interested in query results for a subgraph, for instance when a developer is working on a large model of which only a part is loaded into their workspace. In this case, the global execution semantics can result in significant computational overhead.

To mitigate the outlined shortcoming, in this article we propose an extension of the RETE approach that enables local, yet fully incremental execution of graph queries, while still guaranteeing completeness of results with respect to the relevant subgraph.

We empirically evaluate the presented approach via experiments inspired by a scenario from software development and with queries and data from an independent social network benchmark. The experimental results indicate that the proposed technique can significantly improve performance regarding memory consumption and execution time in favorable cases, but may incur a noticeable overhead in unfavorable cases.
\end{abstract}

\maketitle


\section{Introduction} \label{sec:introduction}

In model-driven engineering, models constitute important development artifacts \cite{kent2002model}. With complex development projects involving large, interconnected models, performance of automated model operations becomes a primary concern.

Incremental graph query execution based on the RETE algorithm \cite{forgy1989rete} has been demonstrated to be an adequate solution in scenarios where an evolving model is repeatedly queried for the same information \cite{szarnyas2018train}. In this context, the RETE algorithm essentially tackles the problem of querying a usually graph-based model representation via operators from relational algebra \cite{codd1970relational}. Thereby, the RETE algorithm supports not only graph queries in the form of plain graph patterns, but also covers advanced formalisms for query specification such as nested graph conditions \cite{habel2009correctness}. On the flip side, the relational operators are not designed to exploit the locality typically found in graph-based encodings. Current incremental querying techniques consequently require processing of the entire model to guarantee complete results. However, in certain situations, global query execution is not required and may be undesirable due to performance considerations.

For instance, a developer may be working on only a part of a model loaded into their workspace, with a large portion of the full model still stored on disk. A concrete example of this would be a developer using a graphical editor to modify an individual block diagram from a collection of interconnected block diagrams, which together effectively form one large model. As the developer modifies the model part in their workspace, they may want to continuously monitor how their changes impact the satisfaction of some consistency constraints via incremental model queries. In this scenario, existing techniques for incremental query execution require loading and querying the entire collection of models, even though the user is ultimately only interested in the often local effect of their own changes.

The global RETE execution semantics then results in at least three problems: First, the computation of query results for the full model, of which only a portion is relevant to the user, may incur a substantial overhead on initial query execution time. Second, incremental querying techniques are known to create and store a large number of intermediate results \cite{szarnyas2018train}, many of which can in this scenario be superfluous, causing an overhead in memory consumption. Third, query execution requires loading the entire model into memory, potentially causing an overhead in loading time and increasing the overhead in memory consumption.

These problems can be mitigated to some extent by employing local search \cite{cordella2004sub,giese2009improved,arendt2010henshin}, which better exploits the locality of the problem and lazy loading capabilities of model persistence layers \cite{cdo,daniel2017neoemf}, instead of a RETE-based technique. However, resorting to local search can result in expensive redundant search operations that are only avoided by fully incremental solutions \cite{szarnyas2018train}.

In this article, we instead propose to tackle the outlined shortcoming of incremental querying techniques via an extension of the RETE approach. This is supported by a relaxed notion of completeness for query results that accounts for situations where the computation for the full model is unnecessary. Essentially, this enables a distinction between the full model, for which query results do not necessarily have to be complete, and a relevant model part, for which complete results are required. The extended RETE approach then anchors query execution to the relevant model part and lazily fetches additional model elements required to compute query results that meet the relaxed completeness requirement. Our approach thereby avoids potentially expensive global query execution and allows an effective integration of incremental queries with model persistence layers.

The remainder of this article is structured as follows:
Section~\ref{sec:preliminaries} provides a summary of our notion of graphs, graph queries, and the RETE mechanism for query execution.
Section~\ref{sec:incremental_queries_over_persistent_models} introduces a relaxed notion of completeness for results of model queries consisting of a plain graph pattern and subsequently presents an adaptation to the RETE querying mechanism that enables local, yet fully incremental execution of such plain model queries.
Our technique for local query execution via RETE is then extended to the case of more complex model queries featuring nested graph conditions in Section~\ref{sec:incremental_extended_queries_over_subgraphs}.
A prototypical implementation of our approach is evaluated regarding execution time and memory consumption in Section~\ref{sec:evaluation}.
Section~\ref{sec:related_work} discusses related work, before Section~\ref{sec:conclusion} concludes the article and provides an outlook on future work.

This article is an extension of our paper at ICGT 2024 \cite{barkowsky2024localized}. The main increment of this article compared to the conference paper version is the approach for localizing execution of graph queries with nested graph conditions presented in Section~\ref{sec:incremental_extended_queries_over_subgraphs}, which is also covered by an extended evaluation in Section~\ref{sec:evaluation}. Besides small improvements to text and figures and the inclusion of further examples, we have also added a theorem for recognizing locally relevant query results as another minor extension in Section~\ref{sec:incremental_queries_over_persistent_models}.


\section{Preliminaries} \label{sec:preliminaries}

In the following, we briefly reiterate the definitions of graphs and graph queries and summarize the RETE approach to incremental graph querying.

\subsection{Graphs and Graph Queries} \label{sec:preliminaries_graphs}

As defined in \cite{Ehrig+2006}, a \emph{graph} is a tuple $G = (V^G, E^G, s^G, t^G)$, with $V^G$ the set of vertices, $E^G$ the set of edges, and $s^G : E^G \rightarrow V^G$ and $t^G : E^G \rightarrow V^G$ functions mapping edges to their source respectively target vertices.

A mapping from a graph $Q$ into another graph $H$ is defined via a \emph{graph morphism} $m : Q \rightarrow H$, which is characterized by two functions $m^V : V^Q \rightarrow V^H$ and $m^E : E^Q \rightarrow E^H$ such that $s^H \circ m^E = m^V \circ s^Q$ and $t^H \circ m^E = m^V \circ t^Q$. A graph morphism is called a monomorphism if $m^{V}$ and $m^ {E}$ are injective and an isomorphism if $m^{V}$ and $m^{E}$ are also surjective. A \emph{partial graph morphism} consists of two partial functions $m^V : V^Q \rightarrow V^H$ and $m^E : E^Q \rightarrow E^H$ with $s^H \circ m^E \leq m^V \circ s^Q$ and $t^H \circ m^E \leq m^V \circ t^Q$ \cite{ribeiro1999parallel}. In this context, for two partial functions $f$ and $g$, $f \leq g$ denotes inclusion of the domain of $f$ in the domain of $g$ and equality of mappings in $f$ and $g$ for all domain elements for which $f$ is defined.

A \emph{graph modification} from a graph $G$ into a modified graph $G'$ can be described via an intermediate graph $K$ and a span of graph monomorphisms $f : K \rightarrow G$ and $g : K \rightarrow G'$ \cite{taentzer2014fundamental}, which can also be denoted $G \xleftarrow{f} K \xrightarrow{g} G'$. Intuitively, elements in $G \setminus f(K)$ are deleted and elements in $G' \setminus g(K)$ are created by the modification.

A graph $G$ can be typed over a type graph $TG$ via a graph morphism $type^G : G \rightarrow TG$, forming the \emph{typed graph} $G^T =(G, type^G)$. A \emph{typed graph morphism} from $G^T = (G, type^G)$ into another typed graph $H^T = (H, type^H)$ with type graph $TG$ is a graph morphism $m^T : G \rightarrow H$ such that $type^H \circ m^T = type^G$.

A \emph{model} is then simply given by a typed graph. In the remainder of the article, we therefore use the terms graph and model interchangably and implicitly assume typing when talking about graphs.

In practice, graphs often contain more edges than nodes, a characteristic that may influence the performance of certain algorithms. To capture this property for each individual edge type in a typed graph, we introduce the notion of \emph{edge-dominated} graphs:

\begin{defi} \label{def:edge_dominated_graphs} \emph{(Edge-Dominated Typed Graphs)}
We say that a typed graph $G$ is \emph{edge-dominated} if $\forall e_{TG} \in E^{TG} : |\{e \in E^G \mid type(e) = e_{TG}\}| \geq max(|\{v \in V^G \mid type(v) = s^{TG}(e_{TG})\}|, |\{v \in V^G \mid type(v) = t^{TG}(e_{TG})\}|)$.
\end{defi}

The above definition will be relevant in the context of an analytical evaluation of our localized querying technique, which technically requires graphs to be edge-dominated in order to guarantee desirable characteristics regarding computational complexity.

A \emph{plain graph query} as considered in this article is characterized by a query graph $Q$ and can be executed over a host graph $H$ by finding graph morphisms $m : Q \rightarrow H$. We also call these morphisms \emph{matches} and denote the set of all matches from $Q$ into $H$ by $\allmatches{Q}{H}$. Typically, a set of matches is considered a complete query result if it contains all matches in $\allmatches{Q}{H}$:

\begin{defi} \label{def:completeness} \emph{(Completeness of Plain Query Results)}
	We say that a set of matches $M$ from query graph $Q$ into host graph $H$ is complete
	iff $\allmatches{Q}{H} \subseteq M$.
\end{defi}

Note that by virtue of $M$ being defined as a set of matches from $Q$ into $H$, the statement $\allmatches{Q}{H} \subseteq M$ in the above definition implies that $M = \allmatches{Q}{H}$.

In some cases, the formalism of plain graph queries may not be sufficiently expressive to specify the desired query, for instance if matches for the query should not be extensible by a specific negative pattern. To increase expressiveness, a plain graph query $Q$ can therefore be equipped with a \emph{nested graph condition} $\psi$ \cite{habel2009correctness}, yielding the \emph{extended graph query} $(Q, \psi)$.

Nested graph conditions can be defined recursively in the context of a graph morphism $m : Q \rightarrow H$, with $m \models \psi$ denoting that $m$ satisfies the condition $\psi$:
\begin{itemize}
\item $\texttt{true}$ is a nested graph condition that is always satisfied, that is, $m \models \texttt{true}$ for any graph morphism $m$.
\item $\neg\psi$ is a nested graph condition that is satisfied iff $m$ does not satisfy the nested graph condition $\psi$, that is, $m \models \neg\psi \Leftrightarrow m \not\models \psi$.
\item $\psi_1 \wedge \psi_2$ is a nested graph condition that is satisfied iff $m$ satisfies both nested graph conditions $\psi_1$ and $\psi_2$, that is, $m \models \psi_1 \wedge \psi_2 \Leftrightarrow (m \models \psi_1 \wedge m \models \psi_2)$.
\item $\exists(a: Q \rightarrow Q', \psi')$, where $a: Q \rightarrow Q'$ is a graph monomorphism, is a nested graph condition that is satisfied iff there exists a graph morphism $m': Q' \rightarrow H$ such that $m = m' \circ a$ and $m'$ satisfies the nested graph condition $\psi'$, that is, $m \models \exists(a: Q \rightarrow Q', \psi') \Leftrightarrow (\exists m' \in \allmatches{Q'}{H}: m' \models \psi' \wedge m = m' \circ a)$.
\end{itemize}

As in \cite{barkowsky2023host}, we also consider the case where the morphism $a$ of a nested graph condition $\exists(a: Q \rightarrow Q', \psi')$ can be a partial graph morphism from some subgraph $Q_p \subseteq Q$ of $Q$ into $Q'$. In this case, satisfaction of the nested graph condition is given by $m \models \exists(a: Q \rightarrow Q', \psi') \Leftrightarrow (\exists m' \in \allmatches{Q'}{H}: m' \models \psi' \wedge m|_{Q_p} = m' \circ a)$.

Equipping the query graph $Q$ with the nested graph condition $\psi$ restricts the set of valid matches for the extended graph query $(Q, \psi)$ to those matches for $Q$ that also satisfy $\psi$. In contrast to plain graph queries, completeness alone is thus insufficient as a criterion for describing what constitutes a desirable query result. Specifically, when talking about extended graph queries, a query result should not only contain all admissible matches, but also no matches that violate the equipped nested graph condition. In this article, we will hence talk about \emph{completeness} with respect to the results of plain graph queries and \emph{correctness} in the case of extended graph queries, with the latter defined as follows:

\begin{defi} \label{def:correctness_ngcs} \emph{(Correctness of Extended Query Results)}
	We say that a set of matches $M$ for the extended graph query $(Q, \psi)$ into host graph $H$ is correct
	iff $M = \{m \in \allmatches{Q}{H} | m \models \psi\}$.
\end{defi}

\subsection{Incremental Graph Queries with RETE} \label{sec:preliminaries_rete}

The RETE algorithm \cite{forgy1989rete} forms the basis of mature incremental graph querying techniques \cite{varro2016}. Therefore, the query is recursively decomposed into simpler subqueries, which are arranged in a second graph called RETE net. In the following, we will refer to the vertices of RETE nets as \emph{(RETE) nodes}.

For plain graph queries, each RETE node $n$ computes matches for a subgraph $\query{n} \subseteq Q$ of the full query graph $Q$. This computation may depend on matches computed by other RETE nodes. Such dependencies are represented by edges from the dependent node to the dependency node. For each RETE net, one of its nodes is designated as the \emph{production node}, which computes the net's overall result. A RETE net is thus given by a tuple $(N, p)$, where the \emph{RETE graph} $N$ is a graph of RETE nodes and dependency edges and $p \in V^N$ is the production node.

We describe the \emph{configuration} of a RETE net $(N, p)$ during execution by a function $\mathcal{C} : V^N \rightarrow \powset{\mathcal{M}_\Omega}$, with $\powsetnop$ denoting the power set and $\mathcal{M}_\Omega$ the set of all graph morphisms. $\mathcal{C}$ then assigns each node in $V^N$ a \emph{current result set}.

For a starting configuration $\mathcal{C}$ and host graph $H$, executing a RETE node $n \in V^N$ yields an updated configuration $\mathcal{C}' = execute(n, N, H, \mathcal{C})$, with

\[
\mathcal{C}'(n') =
	\begin{cases}
	\results{n}{N}{H}{\mathcal{C}}	& \quad \text{if } n' = n\\
	\mathcal{C}(n')			& \quad \text{otherwise},
	\end{cases}
\]

where $\resultsnop$ is a function defining the \emph{target result set} of a RETE node $n$  in the RETE graph $N$ for $H$ and $\mathcal{C}$ such that $\results{n}{N}{H}{\mathcal{C}} \subseteq \allmatches{Q}{H}$, with $Q = \query{n}$.

We say that $\mathcal{C}$ is \emph{consistent} for a RETE node $n \in V^N$ and host graph $H$ iff $\mathcal{C}(n) = \results{n}{N}{H}{\mathcal{C}}$. We furthermore define a consistent configuration as a configuration that is consistent for all nodes of the RETE net:

\begin{defi}[Consistent RETE Net Configurations] \label{def:consistent_configuration} 
We say that a configuration $\mathcal{C} : V^N \rightarrow \powset{\mathcal{M}_\Omega}$ for a RETE net $(N, p)$ is consistent for a host graph $H$ iff $\forall n \in V^N : \mathcal{C}(n) = \results{n}{N}{H}{\mathcal{C}}$.
\end{defi}
 
If $H$ is clear from the context, we simply say that $\mathcal{C}$ is consistent.

Given a host graph $H$ and a starting configuration $\mathcal{C}_0$, a RETE net $(N, p)$ is executed via the execution of a sequence of nodes $O = n_1, n_2, ..., n_x$ with $n_i \in V^N$. This yields the trace $\mathcal{C}_0, \mathcal{C}_1, \mathcal{C}_2, ..., \mathcal{C}_x$, with $\mathcal{C}_y = execute(n_y, N, H, \mathcal{C}_{y-1})$, and the result configuration $execute(O, N, H, \mathcal{C}_0) = \mathcal{C}_x$.

$(N, p)$ can initially be executed over $H$ via a sequence $O$ that yields a consistent configuration $\mathcal{C}_x = execute(O, N, H, \mathcal{C}_0)$, where $\mathcal{C}_0$ is an empty starting configuration with $\forall n \in V^N : \mathcal{C}_0(n) = \emptyset$. Incremental execution of $(N, p)$ can be achieved by retaining a previous result configuration and using it as the starting configuration for execution over an updated host graph. This requires incremental implementations of RETE node execution procedures that update previously computed result sets for changed inputs instead of computing them from scratch.

In the most basic form, a RETE net consists of two kinds of nodes, \emph{edge input nodes} and \emph{join nodes}. An edge input node $[v \rightarrow w]$ has no dependencies, is associated with the query subgraph $\query{[v \rightarrow w]} = (\{v, w\}, \{e\}, \{(e, v)\}, \{(e, w)\})$, and directly extracts the corresponding (trivial) matches from a host graph. A join node $[\bowtie]$ has two dependencies $n_l$ and $n_r$ with $Q_l = \query{n_l}$ and $Q_r = \query{n_r}$ such that $V^{Q_l \cap Q_r} \neq \emptyset$ and is associated with $\query{[\bowtie]} = Q_l \cup Q_r$. $[\bowtie]$ then computes matches for this union subgraph by combining the matches from its dependencies along the overlap graph $Q_\cap = Q_l \cap Q_r$.

In the following, we assume query graphs to be weakly connected and contain at least one edge. A RETE net that computes matches for a query graph $Q$ can then be constructed as a tree of join nodes over edge input nodes. The join nodes thus gradually compose the trivial matches at the bottom into matches for more complex query subgraphs. We call such tree-like RETE nets \emph{well-formed}. An execution sequence that always produces a consistent configuration is given by a reverse topological sorting of the net. The root node of the tree then computes the set of all matches for $Q$ and is designated as the net's production node.

Connected graphs without edges consist of only a single vertex, making query execution via \emph{vertex input nodes}, which function analogously to edge input nodes, trivial. Disconnected query graphs can be handled via separate RETE nets for all query graph components and the computation of a cartesian product.

Figure~\ref{fig:example_query_and_rete} shows an example plain graph query from the software domain and an associated RETE net. The query searches for paths of a package, class, and field. The RETE net constructs matches for the query by combining edges from a package to a class with edges from a class to a field via a natural join.

\begin{figure}
	\centering
	\begin{subfigure}{0.5\textwidth}
		\centering

\begin{tikzpicture}

\node (p) [vertex] {\textit{p:Pkg}};

\node[right = 0.5cm of p] (c) [vertex] {\textit{c:Class}};

\node[right = 0.5cm of c] (f) [vertex] {\textit{f:Field}};

\draw [-{Latex}] (p) -- (c) node [midway, above] {\textit{ce}};

\draw [-{Latex}] (c) -- (f) node [midway, above] {\textit{fe}};

\end{tikzpicture}
	\end{subfigure}\quad\vline\quad
	\begin{subfigure}{0.4\textwidth}
		\centering

\begin{tikzpicture}

\node (join) [retenode] {$\bowtie$};

\node[below left = 0.35cm and -0.5cm of join] (ce) [retenode] {$p \rightarrow c$};

\node[below right = 0.35cm and -0.5cm of join] (fe) [retenode] {$c \rightarrow f$};

\draw [-{Latex}] (join) -- (ce);

\draw [-{Latex}] (join) -- (fe);

\end{tikzpicture}
	\end{subfigure}
	\caption{Example plain graph query (left) and corresponding RETE net (right)} \label{fig:example_query_and_rete}
\end{figure}

The RETE approach to graph query execution also supports extended graph queries via two additional RETE node types: \emph{semi-join nodes} and \emph{anti-join nodes}.  Like join nodes, semi-join nodes and anti-join nodes have two dependencies $n_l$ and $n_r$ with $Q_l = \query{n_l}$ and $Q_r = \query{n_r}$ such that $V^{Q_l \cap Q_r} \neq \emptyset$. However, instead of combining matches from both dependencies, a semi-join $[\ltimes]$ simply computes all matches from the left dependency that have a compatible match in the right dependency. An anti-join $[\rhd]$ then computes all matches from the left dependency that do \emph{not} have a compatible match in the right dependency. Both semi-join and anti-join nodes are thus always associated with the same query graph as their left dependency, that is, $\query{[\ltimes]} = Q_l$ and $\query{[\rhd]} = Q_l$.

To support nested graph conditions of the form $\exists (a : Q \rightarrow Q', \psi)$, a semi-join can also be defined along a non-empty, potentially partial graph morphism \cite{barkowsky2023host}. The definition based on an overlap graph is then just a special case of a semi-join along a partial identity graph morphism.

Using semi-join and anti-join nodes and a RETE net $(N_Q, p_Q)$ for the plain graph query $Q$, a RETE net $(N_{(Q, \psi)}, p_{(Q, \psi)})$ for an extended graph query $(Q, \psi)$ can be constructed recursively according to our description in \cite{barkowsky2023host}, which is visualized in Figure~\ref{fig:extended_rete}.

\begin{figure} [t]
	\centering
	\begin{subfigure}[b]{0.36\textwidth}
		\begin{center}

\begin{tikzpicture}

\node[rectangle,
	minimum width = 4.15cm, 
	minimum height = 2.2cm] (r) at (0,0) {};

\node [below = -2.2cm of r, dashed] [retenode] {$N_Q$};

\end{tikzpicture}
		\end{center}
	\end{subfigure}\hspace{4pt}\vline\hspace{4pt}
	\begin{subfigure}[b]{0.55\textwidth}
		\begin{center}

\begin{tikzpicture}

\node[rectangle,
	minimum width = 6.85cm, 
	minimum height = 2.2cm] (r) at (0,0) {};

\node [below = -2.2cm of r] (join) [retenode] {$\ltimes$};

\node[below left = 0.35cm and -0.5cm of join, dashed] (q) [retenode] {$N_Q$};

\node[below right = 0.35cm and -0.5cm of join, dashed] (qp_psip) [retenode] {$N_{(Q', \psi')}$};

\draw [-{Latex}] (join) -- (q);

\draw [-{Latex}] (join) -- (qp_psip);

\end{tikzpicture}
		\end{center}
	\end{subfigure}

	\hrule

	\begin{subfigure}[b]{0.36\textwidth}
		\begin{center}

\begin{tikzpicture}

\node[rectangle,
	minimum width = 4.15cm, 
	minimum height = 3.5cm] (r) at (0,0) {};

\node [below = -3.25cm of r] (join) [retenode] {$\rhd$};

\node[below left = 0.35cm and -0.5cm of join, dashed] (q) [retenode] {$N_Q$};

\node[below right = 0.35cm and -0.5cm of join, dashed] (q_psip) [retenode] {$N_{(Q, \psi')}$};

\draw [-{Latex}] (join) -- (q);

\draw [-{Latex}] (join) -- (q_psip);

\end{tikzpicture}
		\end{center}
	\end{subfigure}\hspace{4pt}\vline\hspace{4pt}
	\begin{subfigure}[b]{0.55\textwidth}
		\begin{center}

\begin{tikzpicture}

\node[rectangle,
	minimum width = 6.85cm, 
	minimum height = 3.5cm] (r) at (0,0) {};

\node [below left = -3.25cm and -4.9cm of r] (join2) [retenode] {$\ltimes$};

\node [below left = 0.35cm and 0.25cm of join2] (join1) [retenode] {$\ltimes$};

\node[below right = 0.35cm and 0.25cm of join2, dashed] (q_psi2) [retenode] {$N_{(Q, \psi_2)}$};

\node[below left = 0.35cm and -0.5cm of join1, dashed] (q) [retenode] {$N_Q$};

\node[below right = 0.35cm and -0.5cm of join1, dashed] (q_psi1) [retenode] {$N_{(Q, \psi_1)}$};

\draw [-{Latex}] (join2) -- (join1);

\draw [-{Latex}] (join2) -- (q_psi2);

\draw [-{Latex}] (join1) -- (q);

\draw [-{Latex}] (join1) -- (q_psi1);

\end{tikzpicture}
		\end{center}
	\end{subfigure}
	\caption{RETE net $(N_{(Q, \psi)}, p_{(Q, \psi)})$ for an extended graph query $(Q, \psi)$ with nested graph condition of the form $\psi = \texttt{true}$ (top left), $\psi = \exists(a : Q \rightarrow Q', \psi')$ (top right), $\psi = \neg\psi'$ (bottom left), and $\psi = \psi_1 \wedge \psi_2$ (bottom right)} \label{fig:extended_rete}
\end{figure}

Figure~\ref{fig:example_query_and_rete_ngc} shows an example extended graph query on the left. The extended query searches for the same pattern as the plain query in Figure~\ref{fig:example_query_and_rete}, but also requires the class node to be connected to an interface node, without this additional node being part of the main match. Therefore, the query is defined as (Q, $\exists (a : Q \rightarrow Q', \texttt{true}))$ with a partial graph morphism $a$ that maps only the class node in $Q$ to the class node in $Q'$.

The corresponding RETE net is visualized on the right of Figure~\ref{fig:example_query_and_rete_ngc}. For computing matches for the base pattern, the RETE net replicates the structure in Figure~\ref{fig:example_query_and_rete}. The nested graph condition is implemented via a search for the pattern $Q'$, which consists of a single edge from the node $c$ to $i$, and a semi-join node at the top of the net, which checks the matches for $Q$ against the matches for $Q'$.

\begin{figure}
	\centering
	\begin{subfigure}[b]{0.48\textwidth}
		\begin{center}

\begin{tikzpicture}

\node (p) [vertex] {\textit{p:Pkg}};

\node[right = 0.5cm of p] (c) [vertex] {\textit{c:Class}};

\node[right = 0.5cm of c] (f) [vertex] {\textit{f:Field}};

\node[below = 0.5cm of c] (i) [vertex] {\textit{i:Intf}};

\draw [-{Latex}] (p) -- (c) node [midway, above] {\textit{ce}};

\draw [-{Latex}] (c) -- (f) node [midway, above] {\textit{fe}};

\draw [-{Latex}] (c) -- (i) node [midway, right] {\textit{ie}};

\node[above = -4.25cm of c, dashed, rectangle, draw = black, minimum width = 2.25cm, minimum height = 4.75cm, text depth = 4.2cm, text width = 1.75cm] (q_prime) {\textit{Q$'$}};

\node[above = -2cm of c, dashed, rectangle, draw = black, minimum width = 6.75cm, minimum height = 2.75cm, text depth = 2.25cm, text width = 6.25cm] (q) {\textit{Q}};

\node[rectangle,
	minimum width = 6.85cm, 
	minimum height = 5.5cm, below = -3cm of q] (r) {};
	
\end{tikzpicture}
		\end{center}
	\end{subfigure}\hspace{4pt}\vline\hspace{4pt}
	\begin{subfigure}[b]{0.48\textwidth}
		\begin{center}

\begin{tikzpicture}

\node[rectangle,
	minimum width = 6.85cm, 
	minimum height = 5.5cm] (r) at (0,0) {};
	
\node[below left = -4.5cm and -4.9cm of r] (semijoin) [retenode] {$\ltimes$};

\node [below left = 0.35cm and 0.25cm of semijoin] (join) [retenode] {$\bowtie$};

\node[below right = 0.35cm and 0.25cm of semijoin] (ie) [retenode] {$c \rightarrow i$};

\node[below left = 0.35cm and -0.5cm of join] (ce) [retenode] {$p \rightarrow c$};

\node[below right = 0.35cm and -0.5cm of join] (fe) [retenode] {$c \rightarrow f$};

\draw [-{Latex}] (semijoin) -- (join);

\draw [-{Latex}] (semijoin) -- (ie);

\draw [-{Latex}] (join) -- (ce);

\draw [-{Latex}] (join) -- (fe);

\end{tikzpicture}
		\end{center}
	\end{subfigure}
	\caption{Example extended graph query $(Q, \exists (a : Q \rightarrow Q', \texttt{true}))$ (left) and corresponding RETE net (right)} \label{fig:example_query_and_rete_ngc}
\end{figure}


\section{Incremental Plain Queries over Subgraphs} \label{sec:incremental_queries_over_persistent_models}

As outlined in Section~\ref{sec:introduction}, users of model querying mechanisms may only be interested in query results related to some part of a model that is relevant to them rather than the complete model. However, simply executing a query only over this relevant subgraph and ignoring context elements is often insufficient, for instance if the full effect of modifications to the relevant subgraph should be observed, since such modifications may affect matches that involve elements outside the relevant subgraph. Instead, completeness in such scenarios essentially requires the computation of all matches that somehow \emph{touch} the relevant subgraph.

In order to capture this need for local completeness, but avoid the requirement for global execution inherent to the characterization in Definition~\ref{def:completeness}, we define completeness under a relevant subgraph of some host graph as follows:\footnote{Note that by the definition of $H_p$ as a subgraph of $H$ and the definition of matches as graph morphisms, $m(Q) \cap H_p \neq \emptyset \Leftrightarrow m(V^Q) \cap H_p \neq \emptyset$. Our slightly adapted definition in this article thus coincides with the definition in the conference paper version \cite{barkowsky2024localized}.}

\begin{defi}[Completeness of Plain Query Results under Subgraphs] \label{def:completeness_subgraphs} 
	We say that a set of matches $M$ from query graph $Q$ into host graph $H$ is complete under a subgraph $H_p \subseteq H$
	iff $\{m \in \allmatches{Q}{H} \mid m(Q) \cap H_p \neq \emptyset\} \subseteq M$. 
\end{defi}

\subsection{Marking-sensitive RETE}

Due to its reliance on edge and vertex input nodes and their global execution semantics, incremental query execution via the standard RETE approach is unable to exploit the relaxed notion of completeness from Definition~\ref{def:completeness_subgraphs} for query optimization and does not integrate well with mechanisms relying on operation locality, such as model persistence layers based on lazy loading \cite{cdo,daniel2017neoemf}.

While query execution could be localized to a relevant subgraph by executing edge and vertex inputs nodes only over the subgraph, execution would then only yield matches where \emph{all} involved elements are in the relevant subgraph. This approach would hence fail to meet the completeness criterion of Definition~\ref{def:completeness_subgraphs}.

To enable incremental queries with complete results under the relevant subgraph, we instead propose to anchor RETE net execution to subgraph elements while allowing the search to retrieve elements outside the subgraph that are required to produce complete results from the full model.

This approach is based on the observation that, in order for a join node to create a match that involves elements from the relevant subgraph, at least one of the constituting matches already has to involve at least one such element. Intuitively, our localized querying technique then works as follows: At the leaves of a RETE join tree for some plain graph query, we initially compute only matches that touch the relevant subgraph via local edge navigation. We then also implement a mechanism that enables a lazy computation of certain complementary partial matches that do not touch the relevant subgraph at inner nodes of the join tree. This ensures completeness of results with respect to Definition~\ref{def:completeness_subgraphs}.

A na\"{i}ve implementation of this idea that retrieves complementary matches for all matches from each dependency of a join $[\bowtie]$ until a fixpoint is reached however runs the risk of triggering the computation of superfluous matches. In this scenario, some match $m_l$ for the dependency $n_l$ of $[\bowtie]$ that was only required to complement some relevant match $m_r$ from the other dependency $n_r$ would again trigger the computation of complementary matches for $n_r$, causing unnecessary computational effort.

Moreover, such an approach would create cyclic requirement relationships between the matches $m_l$ and $m_r$, where the presence of $m_l$ in the result for $n_l$ mandates the presence of $m_r$ in the result for $n_r$ and vice-versa. This may prevent the unloading of matches when elements are removed from the relevant subgraph.

To mitigate these issues, we extend the standard RETE mechanism by a marking for matches in the form of a natural number. Intuitively, we will use this marking to encode up to which height in a RETE join tree derived matches should trigger the retrieval of complementary matches. An appropriate assignment of such markings then prevents the computation of the aforementioned superfluos matches as well as cyclic requirement relationships between matches, while still guaranteeing completeness according to Definition~\ref{def:completeness_subgraphs}.

In our extension, an intermediate result in a \emph{marking-sensitive RETE net} is therefore characterized by a tuple $(m, \phi)$ of a match $m$ and a marking $\phi \in \overline{\mathbb{N}}$, where we define $\overline{\mathbb{N}} := \mathbb{N} \cup \{\infty\}$. A configuration for a marking-sensitive RETE net $(N^\Phi, p^\Phi)$ is then given by a function $\mathcal{C}^\Phi : V^{N^\Phi} \rightarrow \powset{\mathcal{M}_\Omega \times \overline{\mathbb{N}}}$.

Furthermore, in our extension, result computation distinguishes between the full host graph $H$ and the relevant subgraph $H_p \subseteq H$. For marking-sensitive RETE nodes, we hence extend the function for target result sets by a parameter for $H_p$. The target result set of a marking-sensitive RETE node $n^\Phi \in V^{N^\Phi}$ with $Q = \query{n^\Phi}$ for $H$, $H_p$, and a marking-sensitive configuration $\mathcal{C}^\Phi$ is then given by $\resultslocal{n^\Phi}{N^\Phi}{H}{H_p}{\mathcal{C}^\Phi}$, with $\resultslocal{n^\Phi}{N^\Phi}{H}{H_p}{\mathcal{C}^\Phi} \subseteq \allmatches{Q}{H} \times \overline{\mathbb{N}}$.

Consistency of marking-sensitive configurations is then defined analogously to the standard case:

\begin{defi}[Consistent Marking-Sensitive RETE Net Configurations] \label{def:consistent_marking_sensitive_configuration} 
We say that a configuration $\mathcal{C}^\Phi : V^{N^\Phi} \rightarrow \powset{\mathcal{M}_\Omega \times \overline{\mathbb{N}}}$ for a marking-sensitive RETE net $(N^\Phi, p^\Phi)$ is consistent for a host graph $H$ with relevant subgraph $H_p \subseteq H$ iff $\forall n^\Phi \in V^{N^\Phi} : \mathcal{C}^\Phi(n^\Phi) = \resultslocal{n^\Phi}{N^\Phi}{H}{H_p}{\mathcal{C}^\Phi}$.
\end{defi}
 
Note that technically, the definitions of marking-sensitive configurations and target result sets of marking-sensitive RETE nodes allow the same match to be associated with multiple markings for the same RETE node. However, as shown in \cite{preprint}, the marking of a match in the target result set of a marking-sensitive RETE node is always unique in the constructions presented in this article.

We adapt the \emph{join node}, \emph{union node}, and \emph{projection node} to marking-sensitive variants that assign result matches the maximum marking of related dependency matches and otherwise work as expected from relational algebra \cite{codd1970relational}. We also adapt the \emph{vertex input node} to only consider vertices in the relevant subgraph $H_p$ and assign matches the marking $\infty$. Finally, we introduce \emph{marking assignment nodes}, which assign matches a fixed marking value, \emph{marking filter nodes}, which filter marked matches by a minimum marking value, and \emph{forward} and \emph{backward navigation nodes}, which work similarly to edge input nodes but only extract edges that are adjacent to host graph vertices included in the current result set of a designated dependency. Note that an efficient implementation of the backward navigation node requires reverse navigability of host graph edges.

Formally, we define the target result sets of marking-sensitive RETE nodes as follows:

\begin{defi}[Target Result Sets of Marking-Sensitive RETE Nodes] \label{def:marking-sensitive_result_sets} 
Let $H$ be a host graph with relevant subgraph $H_p \subseteq H$, $(N^\Phi, p^\Phi)$ a containing marking-sensitive RETE net, and $\mathcal{C}^\Phi$ a configuration for $(N^\Phi, p^\Phi)$.
\begin{itemize}
	\item The target result set of a \emph{marking-sensitive join node} $[\bowtie]^\Phi$ with marking-sensitive dependencies $n^\Phi_l$ and $n^\Phi_r$ with $Q_l = \query{n^\Phi_l}$ and $Q_r = \query{n^\Phi_r}$ such that $V^{Q_\cap} \neq \emptyset$ for $Q_\cap = Q_l \cap Q_r$ is given by
	$\resultslocal{[\bowtie]^\Phi}{N^\Phi}{H}{H_p}{\mathcal{C}^\Phi} = \{(m_l \cup m_r, max(\phi_l, \phi_r)) \mid (m_l, \phi_l) \in \mathcal{C}^\Phi(n^\Phi_l) \wedge (m_r, \phi_r) \in \mathcal{C}^\Phi(n^\Phi_r) \wedge m_l|_{Q_\cap} = m_r|_{Q_\cap}\}$.
	\item The target result set of a \emph{marking-sensitive union node} $[\cup]^\Phi$ with a set of marking-sensitive dependencies $N^\Phi_\alpha$ such that $Q = \query{n^\Phi_\alpha}$ for all dependencies $n^\Phi_\alpha \in N^\Phi_\alpha$ is given by
	$\resultslocal{[\cup]^\Phi}{N^\Phi}{H}{H_p}{\mathcal{C}^\Phi} = \{(m, \phi_{max}) \mid (m, \phi_{max}) \in \bigcup_{n_\alpha \in N^\Phi_\alpha} \mathcal{C}^\Phi(n^\Phi_\alpha) \wedge \phi_{max} = max(\{ \phi' \mid (m, \phi') \in \bigcup_{n_\alpha \in N^\Phi_\alpha} \mathcal{C}^\Phi(n^\Phi_\alpha)\})\}$.
	\item The target result set of a \emph{marking-sensitive projection node} $[\pi_Q]^\Phi$ with a single marking-sensitive dependency $n^\Phi_\alpha$ and $Q = \query{[\pi_Q]^\Phi}$ is given by
	$\resultslocal{[\pi_Q]^\Phi}{N^\Phi}{H}{H_p}{\mathcal{C}^\Phi} = \{(m|_Q, \phi_{max}) \mid (m, \phi_{max}) \in \mathcal{C}^\Phi(n^\Phi_\alpha) \wedge \phi_{max} = max(\{ \phi' \mid (m', \phi') \in \mathcal{C}^\Phi(n^\Phi_\alpha) \wedge m'|_Q = m|_Q\})\}$.
	\item The target result set of a \emph{marking assignment node} $[\phi := i]^\Phi$ with a single marking-sensitive dependency $n^\Phi_\alpha$ is given by
	$\resultslocal{[\phi := i]^\Phi}{N^\Phi}{H}{H_p}{\mathcal{C}^\Phi} = \{(m, i) \mid (m, \phi_\alpha) \in \mathcal{C}^\Phi(n^\Phi_\alpha)\}$.
	\item The target result set of a \emph{marking filter node} $[\phi > i_{min}]^\Phi$ with marking-sensitive dependency $n^\Phi_\alpha$ is given by
	$\resultslocal{[\phi > i_{min}]^\Phi}{N^\Phi}{H}{H_p}{\mathcal{C}^\Phi} = \{(m, \phi_\alpha) \mid (m, \phi_\alpha) \in \mathcal{C}^\Phi(n^\Phi_\alpha) \wedge \phi_\alpha > i_{min}\}$.
	\item The target result set of a \emph{forward navigation node} $[v \rightarrow_n w]^\Phi$ with
	$Q = \query{[v \rightarrow_n w]^\Phi} = (\{v, w\}, \{e\}, \{(e, v)\}, \{(e, w)\})$ and marking-sensitive dependency $n^\Phi_v$ with
	$Q_v = \query{n^\Phi_v} = (\{v\}, \emptyset, \emptyset, \emptyset)$ is given by
	$\resultslocal{[v \rightarrow_n w]^\Phi}{N^\Phi}{H}{H_p}{\mathcal{C}^\Phi} = \{(m, \phi) \mid m \in \allmatches{Q}{H} \wedge (m|_{Q_v}, \phi) \in \mathcal{C}^\Phi(n^\Phi_v)\}$.
	\item The target result set of a \emph{backward navigation node} $[w \leftarrow_n v]^\Phi$ with
	$Q = \query{[w \leftarrow_n v]^\Phi} = (\{v, w\}, \{e\}, \{(e, v)\}, \{(e, w)\})$ and marking-sensitive dependency $n^\Phi_w$ with
	$Q_w = \query{n^\Phi_w} = (\{w\}, \emptyset, \emptyset, \emptyset)$ is given by
	$\resultslocal{[w \leftarrow_n v]^\Phi}{N^\Phi}{H}{H_p}{\mathcal{C}^\Phi} = \{(m, \phi) \mid m \in \allmatches{Q}{H} \wedge (m|_{Q_w}, \phi) \in \mathcal{C}^\Phi(n^\Phi_w)\}$.
	\item The target result set of a \emph{marking-sensitive vertex input node} $[v]^\Phi$ with $Q = \query{[v]^\Phi} = (\{v\}, \emptyset, \emptyset, \emptyset)$ is given by
	$\resultslocal{[v]^\Phi}{N^\Phi}{H}{H_p}{\mathcal{C}^\Phi} = \{(m, \infty) \mid m \in \allmatches{Q}{H_p}\}$.
\end{itemize}
\end{defi}

To obtain query results in the format of the standard RETE approach, we define the \emph{stripped result set} of a marking-sensitive RETE node $n^\Phi$ for a marking-sensitive configuration $\mathcal{C}^\Phi$ as the set of matches that appear in tuples in the node's current result set in $\mathcal{C}^\Phi$.

\begin{defi}[Stripped Result Sets of Marking-Sensitive RETE Nodes] \label{def:stripped_result_set} 
The \emph{stripped result set} of a marking-sensitive RETE node $n^\Phi$ for a marking-sensitive configuration $\mathcal{C}^\Phi$ is given by $\resultsstripped{n^\Phi}{\mathcal{C}^\Phi} = \{m \mid (m, \phi) \in \mathcal{C}^\Phi(n^\Phi)\}$.
\end{defi}

\subsection{Localized Search with Marking-sensitive RETE} \label{sec:localized_search_with_marking-sensitive_rete}

Based on these adaptations, we introduce a recursive $localize$ procedure, which takes a regular, well-formed\footnote{The restriction to well-formed RETE nets prevents an optimization where redundant computation of matches for isomorphic query subgraphs is avoided via a non-tree-like structure. However, for queries as in \cite{erling2015ldbc}, performance is often primarily determined by the computation of matches for a large query subgraph that cannot be reused.} RETE net $(N, p)$ as input and outputs a marking-sensitive RETE net, which performs a localized search that does not require searching the full model to produce complete results according to Definition~\ref{def:completeness_subgraphs}.

If $p = [v \rightarrow w]$ is an edge input node, the result of localization for $(N, p)$ is given by $localize(N, p) = (LNS(p), [\cup]^\Phi)$. The \emph{local navigation structure} $LNS(p)$ consists of seven RETE nodes as shown in Figure~\ref{fig:localization_structures} (left): (1, 2) Two marking-sensitive vertex input nodes $[v]^\Phi$ and $[w]^\Phi$, (3, 4) two marking-sensitive union nodes $[\cup]^\Phi_v$ and $[\cup]^\Phi_w$ with $[v]^\Phi$ respectively $[w]^\Phi$ as a dependency, (5) a forward navigation node $[v \rightarrow_n w]^\Phi$ with $[\cup]^\Phi_v$ as a dependency, (6) a backward navigation node $[w \leftarrow_n v]^\Phi$ with $[\cup]^\Phi_w$ as a dependency, and (7) a marking-sensitive union node $[\cup]^\Phi$ with dependencies $[v \rightarrow_n w]^\Phi$ and $[w \leftarrow_n v]^\Phi$.

Importantly, the marking-sensitive vertex input nodes of the local navigation structure are executed over the relevant subgraph, whereas the forward and backward navigation nodes are executed over the full model. Intuitively, the local navigation structure thus takes the role of the edge input node, but initially only extracts edges that are adjacent to a vertex in the relevant subgraph. Connecting additional dependencies to the union nodes $[\cup]^\Phi_v$ and $[\cup]^\Phi_w$ in further constructions then allows the extraction of additional edges that may be required to compute complete query results.

If $p$ is a join node, it has two dependencies $p_l$ and $p_r$ with $Q_l = \query{p_l}$ and $Q_r = \query{p_r}$, which are the roots of two RETE subtrees $N_l$ and $N_r$. In this case, $(N, p)$ is localized as $localize(N, p) = (N^\Phi, p^\Phi) = (N^\Phi_{\bowtie} \cup N^\Phi_l \cup N^\Phi_r \cup RPS_l \cup RPS_r, [\bowtie]^\Phi)$, where $(N^\Phi_l, p^\Phi_l) = localize(N_l, p_l)$, $(N^\Phi_r, p^\Phi_r) = localize(N_r, p_r)$, $N^\Phi_{\bowtie}$ consists of the marking-sensitive join $[\bowtie]^\Phi$ with dependencies $p^\Phi_l$ and $p^\Phi_r$, $RPS_l = RPS(p^\Phi_l, N^\Phi_r)$, and $RPS_r = RPS(p^\Phi_r, N^\Phi_l)$.

The \emph{request projection structure} $RPS_l = RPS(p^\Phi_l, N^\Phi_r)$ contains three RETE nodes as displayed in Figure~\ref{fig:localization_structures} (center): (1) A marking filter node $[\phi > h]^\Phi$ with $p^\Phi_l$ as a dependency, (2) a marking-sensitive projection node $[\pi_{Q_v}]^\Phi$ with $[\phi > h]^\Phi$ as a dependency, and (3) a marking assignment node $[\phi := h]^\Phi$ with $[\pi_{Q_v}]^\Phi$ as a dependency. The value of $h$ is given by the height of the join tree of which $p$ is the root and $Q_v$ is a graph consisting of a single vertex $v \in V^{Q_l \cap Q_r}$. The request projection structure is then connected to an arbitrary local navigation structure in $N^\Phi_r$ that has a marking-sensitive vertex input $[v]^\Phi$ with $Q_v = \query{[v]^\Phi}$. Therefore, it also adds a dependency from the marking-sensitive union node $[\cup]^\Phi_v$ that already depends on $[v]^\Phi$ to the marking assignment node $[\phi := h]^\Phi$. The mirrored structure $RPS_r = RPS(p^\Phi_r, N^\Phi_l)$ is constructed analogously.

Via the request projection structures, partial matches from one join dependency can be propagated to the subnet under the other dependency. Intuitively, the inserted request projection structures thereby allow the join's dependencies to request the RETE subnet under the other dependency to fetch and process the model parts required to complement the first dependency's results. The marking of a match then signals up to which height in the join tree complementarity for that match is required. Notably, matches involving elements in the relevant subgraph are marked $\infty$, as complementarity for them is required at the very top to guarantee completeness of the overall result.

The result of applying $localize$ to a RETE net consisting of a single join is displayed in Figure~\ref{fig:localization_structures} (right). It consists of a marking sensitive join and two local navigation structures connected via request projection structures.

\begin{figure} [t]
	\centering
	\begin{subfigure}[b]{0.35\textwidth}
		\begin{center}

\begin{tikzpicture}

\node (union) [ms_retenode] {$\cup$};

\node[below left = 0.35cm and -0.5cm of union] (forward) [ms_retenode] {$v \rightarrow_n w$};

\node[below right = 0.35cm and -0.5cm of union] (backward) [ms_retenode] {$w \leftarrow_n v$};

\node[below = 0.35cm of forward] (unionv) [ms_retenode] {$\cup$};

\node[below = 0.35cm of backward] (unionw) [ms_retenode] {$\cup$};

\node[below = 0.35cm of unionv] (v) [ms_retenode] {$v$};

\node[below = 0.35cm of unionw] (w) [ms_retenode] {$w$};

\draw [-{Latex}] (union) -- (forward);

\draw [-{Latex}] (union) -- (backward);

\draw [-{Latex}] (forward) -- (unionv);

\draw [-{Latex}] (backward) -- (unionw);

\draw [-{Latex}] (unionv) -- (v);

\draw [-{Latex}] (unionw) -- (w);

\end{tikzpicture}
		\end{center}
	\end{subfigure}\quad\vline\quad
	\begin{subfigure}[b]{0.15\textwidth}
		\begin{center}

\begin{tikzpicture}

\node (assignment) [ms_retenode] {$\phi := h$};

\node[below = 0.35cm of assignment] (projection) [ms_retenode] {$\pi_{Q_v}$};

\node[below = 0.35cm of projection] (filter) [ms_retenode] {$\phi > h$};

\draw [-{Latex}] (assignment) -- (projection);

\draw [-{Latex}] (projection) -- (filter);

\end{tikzpicture}
		\end{center}
	\end{subfigure}\quad\vline\quad
	\begin{subfigure}[b]{0.35\textwidth}
		\begin{center}

\begin{tikzpicture}

\node (join) [ms_retenode] {$\bowtie$};

\node[below left = 0.35cm and -0.5cm of join, dashed] (ce) [ms_retenode] {$LNS^{u \rightarrow v}$};

\node[below right = 0.35cm and -0.5cm of join, dashed] (fe) [ms_retenode] {$LNS^{v \rightarrow w}$};

\node[below = 0.35cm of ce, dashed] (rpsp) [ms_retenode] {$RPS_r$};

\node[below = 0.35cm of fe, dashed] (rpsf) [ms_retenode] {$RPS_l$};

\draw [-{Latex}] (join) -- (ce);

\draw [-{Latex}] (join) -- (fe);

\draw [-{Latex}] (ce) -- (rpsp);

\draw [-{Latex}] (fe) -- (rpsf);

\draw [-{Latex}] (rpsp) -- (fe);

\draw [-{Latex}] (rpsf) -- (ce);

\end{tikzpicture}
		\end{center}
	\end{subfigure}
	\caption{LNS (left), RPS (center), and localized RETE net (right)} \label{fig:localization_structures}
\end{figure}

\begin{figure}[t]

\begin{tikzpicture}

\node (join) [ms_retenode] {$\bowtie$};
\node[below = 0.0cm of join] (join_config) [reteconfig] {$(m_1, \infty)$};


\node[below left = 0.35cm and 2.75cm of join_config] (ul) [ms_retenode] {$\cup$};
\node[below = 0.0cm of ul]  (ul_config) [reteconfig] {$(m_{1.1}, \infty)$};

\node[below left = 0.35cm and 0.25cm of ul_config] (fl) [ms_retenode] {$p \rightarrow_n c$};
\node[below = 0.0cm of fl]  (fl_config) [reteconfig] {$(m_{1.1}, \infty)$};

\node[below right = 0.35cm and 0.25cm of ul_config] (bl) [ms_retenode] {$c \leftarrow_n p$};
\node[below = 0.0cm of bl]  (bl_config) [reteconfig] {};

\node [below = 0.35cm of fl_config] (xfl) [ms_retenode] {$\cup$};
\node[below = 0.0cm of xfl]  (xfl_config) [reteconfig] {$(m_{1.1.1}, \infty)$};

\node [below = 0.35cm of bl_config] (xbl) [ms_retenode] {$\cup$};
\node[below = 0.0cm of xbl]  (xbl_config) [reteconfig] {};

\node [below = 0.35cm of xfl_config] (vl) [ms_retenode] {$p$};
\node[below = 0.0cm of vl]  (vl_config) [reteconfig] {$(m_{1.1.1}, \infty)$};

\node [below left = 0.35cm and -0.5cm of xbl_config] (wl) [ms_retenode] {$c$};
\node[below = 0.0cm of wl]  (wl_config) [reteconfig] {};

\draw [-{Latex}] (join_config) -- (ul);

\draw [-{Latex}] (ul_config) -- (fl);

\draw [-{Latex}] (ul_config) -- (bl);

\draw [-{Latex}] (fl_config) -- (xfl);

\draw [-{Latex}] (bl_config) -- (xbl);

\draw [-{Latex}] (xfl_config) -- (vl);

\draw [-{Latex}] (xbl_config) -- (wl);


\node[below right = 0.35cm and 2.75cm of join_config] (ur) [ms_retenode] {$\cup$};
\node[below = 0.0cm of ur]  (ur_config) [reteconfig] {$(m_{1.2}, 1)$};

\node[below left = 0.35cm and 0.25cm of ur_config] (fr) [ms_retenode] {$c \rightarrow_n f$};
\node[below = 0.0cm of fr]  (fr_config) [reteconfig] {$(m_{1.2}, 1)$};

\node[below right = 0.35cm and 0.25cm of ur_config] (br) [ms_retenode] {$f \leftarrow_n c$};
\node[below = 0.0cm of br]  (br_config) [reteconfig] {};

\node [below = 0.35cm of fr_config] (xfr) [ms_retenode] {$\cup$};
\node[below = 0.0cm of xfr]  (xfr_config) [reteconfig] {$(m_{1.2.1}, 1)$};

\node [below = 0.35cm of br_config] (xbr) [ms_retenode] {$\cup$};
\node[below = 0.0cm of xbr]  (xbr_config) [reteconfig] {};

\node [below right = 0.35cm and -0.5cm of xfr_config] (vr) [ms_retenode] {$c$};
\node[below = 0.0cm of vr]  (vr_config) [reteconfig] {};

\node [below = 0.35cm of xbr_config] (wr) [ms_retenode] {$f$};
\node[below = 0.0cm of wr]  (wr_config) [reteconfig] {};

\draw [-{Latex}] (join_config) -- (ur);

\draw [-{Latex}] (ur_config) -- (fr);

\draw [-{Latex}] (ur_config) -- (br);

\draw [-{Latex}] (fr_config) -- (xfr);

\draw [-{Latex}] (br_config) -- (xbr);

\draw [-{Latex}] (xfr_config) -- (vr);

\draw [-{Latex}] (xbr_config) -- (wr);


\node[below right = 0.35cm and -0.5cm of xbl_config] (mar) [ms_retenode] {$\phi := 1$};
\node[below = 0.0cm of mar]  (mar_config) [reteconfig] {};

\node[below = 0.35cm of mar_config] (pr) [ms_retenode] {$\pi_{Q_c}$};
\node[below = 0.0cm of pr]  (pr_config) [reteconfig] {};

\node[below = 0.35cm of pr_config] (mfr) [ms_retenode] {$\phi > 1$};
\node[below = 0.0cm of mfr]  (mfr_config) [reteconfig] {};

\draw [-{Latex}] (xbl_config) -- (mar);

\draw [-{Latex}] (mar_config) -- (pr);

\draw [-{Latex}] (pr_config) -- (mfr);

\draw [-{Latex}] (mfr_config) -- (-1.3,-12.6) -- (0.15, -12.6) -- (0.15, -1.85) -- (ur);


\node[below left = 0.35cm and -0.5cm of xfr_config] (mal) [ms_retenode] {$\phi := 1$};
\node[below = 0.0cm of mal]  (mal_config) [reteconfig] {$(m_{1.2.1}, 1)$};

\node[below = 0.35cm of mal_config] (pl) [ms_retenode] {$\pi_{Q_c}$};
\node[below = 0.0cm of pl]  (pl_config) [reteconfig] {$(m_{1.2.1}, \infty)$};

\node[below = 0.35cm of pl_config] (mfl) [ms_retenode] {$\phi > 1$};
\node[below = 0.0cm of mfl]  (mfl_config) [reteconfig] {$(m_{1.1}, \infty)$};

\draw [-{Latex}] (xfr_config) -- (mal);

\draw [-{Latex}] (mal_config) -- (pl);

\draw [-{Latex}] (pl_config) -- (mfl);

\draw [-{Latex}] (mfl_config) -- (1.3,-12.85) -- (-0.15, -12.85) -- (-0.15, -1.85) -- (ul);

\end{tikzpicture}
	\caption{Localized RETE net for the example query in Figure~\ref{fig:example_query_and_rete} with a consistent configuration for the host graph in Figure~\ref{fig:example_host_graph}} \label{fig:localized_rete_net}
\end{figure}

The localized RETE net for the example query in Figure~\ref{fig:example_query_and_rete} is shown in Figure~\ref{fig:localized_rete_net}. Alongside the RETE net, the figure visualizes a consistent configuration for the host graph in Figure~\ref{fig:example_host_graph}. In the example host graph, the relevant subgraph consists of the node $p_1$, which is indicated by a bold border.

With the relevant subgraph consisting of only a single node, the only intermediate result in any of the net's vertex input nodes is the tuple $(m_{1.1.1}, \infty)$ for the node $[p]^\Phi$. This intermediate result propagates up through the containing local navigation structure on the left, where it triggers the extraction of the match $m_{1.1}$ by the forward navigation node $[p \rightarrow_n c]^\Phi$. Via the connected request projection structure, the restricted match $m_{1.2.1}$ that consists only of the target node $c_1$ of the extracted edge is then propagated into the local navigation structure on the right. As a result, the match $m_{1.2}$ is extracted by the forward navigation node $[c \rightarrow_n f]^\Phi$, which is combined with the match $m_{1.1}$ by the join node $[\bowtie]^\Phi$ to form the complete match $m_1$.

Notably, the host graph contains a second match $m_2$ for the query pattern, which however does not touch the relevant subgraph. In this example, localization allows the RETE net to completely ignore this second match and skip any related computations.

Moreover, the assignment of the marking $1$ to the match $m_{1.2.1}$ in the request projection structure on the right prevents a propagation of any of the derived intermediate results back into the left local navigation structure. Thereby, potentially problematic cyclic requirement relationships of intermediate results are avoided.

As an example, consider the case where the node $p_1$ is removed from the relevant subgraph and this change is to be reflected in an updated configuration. Without the additional control of propagation provided by the marking filter and marking assignment nodes, the removal would not mandate an unloading of all associated intermediate results. Specifically, the result set for the union node below the backward navigation node $[c \leftarrow_n f]^\Phi$ would still contain the match $m_{1.2.1}$ and the backward navigation node would hence extract the match $m_{1.1}$ again, leading to the pollution of the RETE net's configuration with superfluous intermediate results.

\begin{figure}[t]

\begin{tikzpicture}

\node (r1) [vertex] {\textit{r\textsubscript{1}:Proj}};


\node[below left = 1.25cm and 1.25cm of r1, line width = 4pt] (p1) [vertex] {\textbf{\textit{p\textsubscript{1}:Pkg}}};

\node[below = 1.25cm of p1] (c1) [vertex] {\textit{c\textsubscript{1}:Class}};

\node[below = 1.25cm of c1] (f1) [vertex] {\textit{f\textsubscript{1}:Field}};

\draw [-{Latex}] (r1) -- (p1) node [midway, right] {\textit{pe}};

\draw [-{Latex}] (p1) -- (c1) node [midway, right] {\textit{ce}};

\draw [-{Latex}] (p1) -- (c1) node [midway, right] {\textit{ce}};

\draw [-{Latex}] (c1) -- (f1) node [midway, right] {\textit{fe}};


\node[below right = 1.25cm and 1.25cm of r1] (p2) [vertex] {\textit{p\textsubscript{2}:Pkg}};

\node[below = 1.25cm of p2] (c2) [vertex] {\textit{c\textsubscript{2}:Class}};

\node[below = 1.25cm of c2] (f2) [vertex] {\textit{f\textsubscript{2}:Field}};

\draw [-{Latex}] (r1) -- (p2) node [midway, right] {\textit{pe}};

\draw [-{Latex}] (p2) -- (c2) node [midway, right] {\textit{ce}};

\draw [-{Latex}] (c2) -- (f2) node [midway, right] {\textit{fe}};


\node[above left = -1.8cm and -2.75cm of f1, rectangle, draw = black, dashed, minimum width = 4.5cm, minimum height = 9.25cm, text depth = 9cm, text width = 4.25cm] (m1) {\textit{m\textsubscript{1}}};

\node[above left = -1.8cm and -2.25cm of c1, rectangle, draw = orange, dashed, minimum width = 3.75cm, minimum height = 5.75cm, text depth = 5.5cm, text width = 3.5cm] (m11) {\textit{m\textsubscript{1.1}}};

\node[above left = -1.55cm and -2cm of p1, rectangle, draw = orange, dashed, minimum width = 3cm, minimum height = 2cm, text depth = 1.75cm, text width = 2.75cm] (m111) {\textit{m\textsubscript{1.1.1}}};

\node[above left = -1.55cm and -2.5cm of f1, rectangle, draw = blue, dashed, minimum width = 3.75cm, minimum height = 5.5cm, text depth = 5.25cm, text width = 3.5cm] (m12) {\textit{m\textsubscript{1.2}}};

\node[above left = -1.55cm and -2cm of c1, rectangle, draw = blue, dashed, minimum width = 3cm, minimum height = 2cm, text depth = 1.75cm, text width = 2.75cm] (m121) {\textit{m\textsubscript{1.2.1}}};

\node[above left = -1.8cm and -2.75cm of f2, rectangle, draw = black, dashed, minimum width = 4.5cm, minimum height = 9.25cm, text depth = 9cm, text width = 4.25cm, align = right] (m2) {\textit{m\textsubscript{2}}};

\end{tikzpicture}
	\caption{Example host graph with relevant subgraph marked in bold} \label{fig:example_host_graph}
\end{figure}

Formally, a consistent configuration for a localized RETE net then indeed guarantees query results that are complete according to Definition~\ref{def:completeness_subgraphs}:

\begin{thm}[Consistent configurations for RETE nets localized via $localize$ yield complete query results under the relevant subgraph] \label{the:completeness_consistent_configuration} 
Let $H$ be a graph, $H_p \subseteq H$, $(N, p)$ a well-formed RETE net, and $Q = \query{p}$. Furthermore, let $\mathcal{C}^\Phi$ be a consistent configuration for the localized RETE net $(N^\Phi, p^\Phi) = localize(N, p)$. The set of matches from $Q$ into $H$ given by the stripped result set $\resultsstripped{p^\Phi}{\mathcal{C}^\Phi}$ is then complete under $H_p$.
\end{thm}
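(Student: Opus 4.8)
The plan is to establish a stronger, \emph{anchored} statement by structural induction on the recursion of $localize$ (equivalently, on the height $h$ of the join tree of $(N,p)$) and then to derive the theorem as an immediate consequence. Since $\mathcal{C}^\Phi$ is consistent, I may freely replace each node's current result set by its target result set from Definition~\ref{def:marking-sensitive_result_sets}. The strengthened claim I would prove is: for $(N^\Phi, p^\Phi) = localize(N,p)$ with $Q = \query{p}$ and join-tree height $h$, for every match $m \in \allmatches{Q}{H}$ and every vertex $v_0 \in V^Q$ such that some vertex-input union node $[\cup]^\Phi_{v_0}$ for $v_0$ contains $(m|_{Q_{v_0}}, \phi_0)$ with $\phi_0 > h$, the root satisfies $(m, \phi) \in \mathcal{C}^\Phi(p^\Phi)$ for some $\phi \geq \phi_0$. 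Intuitively, a sufficiently high marking on any single matched vertex forces the whole surrounding match of $Q$ to be assembled at the production node.

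For the base case $p = [v \rightarrow w]$ (so $h = 0$ and $p^\Phi = [\cup]^\Phi$ is the top union of $LNS(p)$), I would argue that if $v_0 = v$ then the forward navigation node $[v \rightarrow_n w]^\Phi$, whose target result set propagates every $m' \in \allmatches{Q}{H}$ whose source restriction lies in $[\cup]^\Phi_v$, yields $(m, \phi_0)$; the case $v_0 = w$ is symmetric via the backward navigation node. The top union then carries $m$ with a marking $\geq \phi_0$.

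For the inductive step, $p$ is a join with subtrees $N_l, N_r$ of heights $h_l, h_r$ and $h = max(h_l, h_r) + 1$. Assume without loss of generality that the union node $[\cup]^\Phi_{v_0}$ lies in $N^\Phi_l$. Since $\phi_0 > h > h_l$, the induction hypothesis applied to the left subtree gives $(m|_{Q_l}, \phi_l) \in \mathcal{C}^\Phi(p^\Phi_l)$ with $\phi_l \geq \phi_0 > h$. Passing this match through $RPS_l = RPS(p^\Phi_l, N^\Phi_r)$, it clears the filter $[\phi > h]^\Phi$, is projected onto the shared vertex $v \in V^{Q_l \cap Q_r}$, and is re-marked to $h$ by $[\phi := h]^\Phi$, so that $(m|_{Q_v}, \phi') \in \mathcal{C}^\Phi([\cup]^\Phi_v)$ in $N^\Phi_r$ for some $\phi' \geq h > h_r$. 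Applying the induction hypothesis to the right subtree with anchor $v$ yields $(m|_{Q_r}, \phi_r) \in \mathcal{C}^\Phi(p^\Phi_r)$ with $\phi_r \geq h$. As $m|_{Q_l}$ and $m|_{Q_r}$ are restrictions of the same $m$, they agree on $Q_\cap$ and their union is $m$; the marking-sensitive join therefore produces $(m, max(\phi_l, \phi_r))$ with $max(\phi_l, \phi_r) \geq \phi_0$, completing the step. The case $v_0 \in V^{Q_r}$ is symmetric via $RPS_r$.

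To obtain the theorem, take any $m \in \allmatches{Q}{H}$ with $m(Q) \cap H_p \neq \emptyset$, i.e.\ $m(V^Q) \cap V^{H_p} \neq \emptyset$, and fix $v^* \in V^Q$ with $m(v^*) \in V^{H_p}$. By the target result set of the marking-sensitive vertex input $[v^*]^\Phi$ (which ranges over $H_p$) and consistency, $(m|_{Q_{v^*}}, \infty)$ lies in $[v^*]^\Phi$ and hence in the union node $[\cup]^\Phi_{v^*}$. Applying the strengthened claim with $\phi_0 = \infty > h$ gives $(m, \infty) \in \mathcal{C}^\Phi(p^\Phi)$, so $m \in \resultsstripped{p^\Phi}{\mathcal{C}^\Phi}$ by Definition~\ref{def:stripped_result_set}. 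The main obstacle is identifying the correct strengthened hypothesis: the coupling between the marking threshold and the tree height (the condition $\phi_0 > h$, together with the re-marking to exactly $h$ in each request projection structure and the key inequality $h > max(h_l, h_r)$) is precisely what lets a request descend into a sibling subtree while guaranteeing the recovered complement still carries a marking high enough to survive the join, and what simultaneously blocks the cyclic re-requests discussed after Figure~\ref{fig:localized_rete_net}. Verifying that the markings behave monotonically (so that $\phi \geq \phi_0$ is preserved) across union, projection, and join nodes is the remaining routine but essential bookkeeping.
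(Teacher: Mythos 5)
Your proof is correct and follows essentially the same route as the paper's: an induction over the height of the join tree in which the request projection structures are shown to force the sibling subtree to supply the complementary partial match, then instantiated with the $\infty$-marking coming from a vertex input over $H_p$. Your strengthened anchored claim with threshold $\phi_0 > h$ (rather than only $\infty$) is exactly the device needed to survive the re-marking to $h$ inside the request projection structures, and matches the role of the anchoring lemma the paper delegates to its preprint appendix.
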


\begin{proof} (Idea)\footnote{Detailed proofs for theorems in this article are given in Appendix~\ref{app:technical_details} or, if indicated, in \cite{preprint}.}
It can be shown via induction over the height of $N$ that request projection structures ensure the construction of all intermediate results required to guarantee completeness of the overall result under $H_p$. See the proof of Theorem 5 in Appendix D of \cite{preprint} for a full proof.
\end{proof}

Interestingly, exactly those matches that touch the relevant subgraph are marked $\infty$ in a consistent configuration for a localized RETE net, which allows an easy recognition of these matches in the result set of the net's production node:

\begin{thm}[Matches are marked $\infty$ in query results of RETE nets localized via $localize$ iff they touch the relevant subgraph] \label{the:precision_consistent_configuration}
Let $H$ be a graph, $H_p \subseteq H$, $(N, p)$ a well-formed RETE net, and $Q = \query{p}$. Furthermore, let $\mathcal{C}^\Phi$ be a consistent configuration for the localized RETE net $(N^\Phi, p^\Phi) = localize(N, p)$. It then holds that $\forall (m, \phi) \in \mathcal{C}^\Phi(p^\Phi) : m(Q) \cap H_p \neq \emptyset \Leftrightarrow \phi = \infty$.
\end{thm}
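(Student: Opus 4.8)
The plan is to prove the equivalence by tracking precisely where the marking $\infty$ can enter and propagate through the localized net. The central observation is that, among all node types in Definition~\ref{def:marking-sensitive_result_sets}, the value $\infty$ is \emph{introduced} only by marking-sensitive vertex input nodes $[v]^\Phi$, which emit it exactly for the vertices of $H_p$; every marking assignment node $[\phi := h]^\Phi$ occurring in a request projection structure instead emits the \emph{finite} height value $h$. Since joins, unions, and projections combine markings via $\max$ and navigation nodes merely inherit the marking of their vertex dependency, any $\infty$ marking in a consistent configuration must ultimately trace back to a vertex of $H_p$. I would make this precise through three invariants, established for an arbitrary consistent configuration $\mathcal{C}^\Phi$ of the full localized net $(N^\Phi, p^\Phi)$.

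First (I1), I would argue locally about each marking-sensitive vertex union node $[\cup]^\Phi_v$ of a local navigation structure. By the construction of $localize$, its dependencies are exactly the vertex input $[v]^\Phi$, executed over $H_p$, together with the marking assignment nodes at the tops of any attached request projection structures. Because $[v]^\Phi$ contributes the pair $(m_v, \infty)$ precisely when $m_v(v) \in V^{H_p}$, while every attached marking assignment contributes only finite markings, the $\max$ rule of the union node yields $(m_v, \infty) \in \mathcal{C}^\Phi([\cup]^\Phi_v)$ iff $m_v(v) \in V^{H_p}$. Second (I2), I would lift this to edges: a forward navigation node $[v \rightarrow_n w]^\Phi$ inherits the marking of its source vertex via $[\cup]^\Phi_v$ and a backward navigation node $[w \leftarrow_n v]^\Phi$ inherits that of its target vertex via $[\cup]^\Phi_w$, so by (I1) the produced edge match carries $\infty$ iff its source, respectively target, lies in $V^{H_p}$. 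Taking the $\max$ over both contributions at the top union $[\cup]^\Phi$ of the local navigation structure, an edge match $(m_e, \phi)$ in $\mathcal{C}^\Phi([\cup]^\Phi)$ satisfies $\phi = \infty$ iff $m_e$ touches $H_p$. The uniqueness of markings noted after Definition~\ref{def:consistent_marking_sensitive_configuration} guarantees these are the only markings attached to the respective matches.

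Third (I3), I would prove by induction on the height of the join tree the following statement: for every localized subnet $localize(N', p')$ arising in the construction, writing $Q' = \query{p'}$ and letting $\rho$ denote its root node, every $(m, \phi) \in \mathcal{C}^\Phi(\rho)$ satisfies $\phi = \infty \Leftrightarrow m(Q') \cap H_p \neq \emptyset$. The base case, where $p'$ is an edge input node and $\rho$ is the top union $[\cup]^\Phi$ of a local navigation structure, is exactly (I2). For the inductive step $\rho = [\bowtie]^\Phi$ combines $(m_l, \phi_l)$ and $(m_r, \phi_r)$ from its dependencies into $(m_l \cup m_r, \max(\phi_l, \phi_r))$; the induction hypothesis applies to these dependencies — crucially, it still applies even though their subnets carry additional request projection structures, since (I1) and (I2) are purely local facts that hold regardless of such attachments — and because $m(Q') = m_l(Q_l) \cup m_r(Q_r)$ touches $H_p$ iff one of $m_l, m_r$ does, the $\max$ rule transfers the equivalence upward. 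Specializing (I3) to the production node $p^\Phi$ with $Q' = Q$ yields the theorem; here I would invoke that $Q$ is weakly connected with at least one edge, so that every query vertex is covered by some local navigation structure and the base case genuinely accounts for every vertex image that could touch $H_p$.

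The main obstacle I anticipate is establishing (I1) rigorously, namely confirming that $\infty$ genuinely originates \emph{only} at vertex inputs over $H_p$ and can never be re-introduced downstream. This requires a careful inspection of the dependency structure created by $localize$ — in particular, that every path along which a requested (finite) marking travels first passes through a marking assignment node $[\phi := h]^\Phi$ that overwrites any incoming marking with the finite height value, so that request projection structures can never feed $\infty$ back into a subnet. Once this structural fact is secured, the $\max$-monotonicity of the remaining node types makes the propagation argument in (I3) routine.
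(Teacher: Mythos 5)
Your proposal is correct and takes essentially the same route as the paper's own proof: both arguments rest on the observation that $\infty$ originates only at marking-sensitive vertex input nodes executed over $H_p$ (request projection structures feeding only finite height markings into the extension points), establish the marking/touching biconditional at the top union nodes of the local navigation structures, and propagate it through the join tree via the $\max$ semantics of marking-sensitive joins. The only notable difference is presentational: the paper treats the two directions separately and invokes Theorem~\ref{the:completeness_consistent_configuration} for the ``touches $H_p$ implies marked $\infty$'' direction, whereas your height induction carries the biconditional directly and is self-contained on that point.
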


\begin{proof}(Idea)
Follows since marking-sensitive joins ultimately combine matches from local navigation structures in $(N^\Phi, p^\Phi)$, where only those matches that touch $H_p$ are marked $\infty$.
\end{proof}

Notably, the insertion of request projection structures creates cycles in the localized RETE net, which prevents execution via a reverse topological sorting. However, the marking filter and marking assignment nodes in the request projection structures effectively prevent cyclic execution at the level of intermediate results: Because matches in the result set of a dependency of some join at height $h$ that are only computed on request from the other dependency are marked $h$, these matches are filtered out in the dependent request projection structure. An execution order for the localized RETE net $(N^\Phi, p^\Phi) = localize(N, p)$ can thus be constructed recursively via an $order$ procedure as follows:

If $p$ is an edge input node, the RETE graph given by $N^\Phi = LNS(p)$ is a tree that can be executed via a reverse topological sorting of the nodes in $LNS(p)$, that is, $order(N^\Phi) = toposort(LNS(p))^{-1}$.

If $p$ is a join, according to the construction, the localized RETE graph is given by $N^\Phi = N^\Phi_{\bowtie} \cup N^\Phi_l \cup N^\Phi_r \cup RPS_l \cup RPS_r$. In this case, an execution order for $(N^\Phi, p^\Phi)$ can be derived via the concatenation $order(N^\Phi) = order(RPS_r) \circ order(N^\Phi_l) \circ order(RPS_l) \circ order(N^\Phi_r) \circ order(RPS_r) \circ order(N^\Phi_l) \circ order(N^\Phi_{\bowtie})$, where $order(RPS_l) = toposort(RPS_l)^{-1}$ and $order(RPS_r) = toposort(RPS_r)^{-1}$ and $order(N^\Phi_{\bowtie}) = [p^\Phi]$, that is, the sequence containing only $p^\Phi$.

Executing a localized RETE net $(N^\Phi, p^\Phi)$ via $order(N^\Phi)$ then guarantees a consistent result configuration for any starting configuration:

\begin{thm}[Execution of localized RETE nets via $order$ yields consistent configurations] \label{the:execution_order_consistency} 
Let $H$ be a graph, $H_p \subseteq H$, $(N, p)$ a well-formed RETE net, and $\mathcal{C}^\Phi_0$ an arbitrary starting configuration. Executing $(N^\Phi, p^\Phi) = localize(N, p)$ via $O = order(N^\Phi)$ then yields a consistent configuration $\mathcal{C}^\Phi = execute(O, N^\Phi, H, H_p, \mathcal{C}^\Phi_0)$.
\end{thm}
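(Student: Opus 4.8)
The plan is to prove Theorem~\ref{the:execution_order_consistency} by induction on the height of the input RETE net $(N, p)$, mirroring the recursive definition of $order$. The goal is to show that after executing $(N^\Phi, p^\Phi)$ via $O = order(N^\Phi)$, every node $n^\Phi \in V^{N^\Phi}$ satisfies $\mathcal{C}^\Phi(n^\Phi) = \resultslocal{n^\Phi}{N^\Phi}{H}{H_p}{\mathcal{C}^\Phi}$, i.e.\ the resulting configuration is consistent in the sense of Definition~\ref{def:consistent_marking_sensitive_configuration}. The crux is that although the localized net contains cycles (introduced by the request projection structures, which feed back into the local navigation structures of the opposite subtree), the markings make the cyclic dependency \emph{stratified}: the marking filter nodes $[\phi > h]^\Phi$ prune exactly the matches that would otherwise cause circular recomputation, so the net behaves, with respect to the final fixpoint, like an acyclic one executed by a reverse topological sort.

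\textbf{Base case.}
When $p = [v \rightarrow w]$ is an edge input node, $N^\Phi = LNS(p)$ is acyclic (it is the tree in Figure~\ref{fig:localization_structures}, left), so $order(N^\Phi) = toposort(LNS(p))^{-1}$. First I would invoke the standard fact that executing an acyclic RETE graph via a reverse topological sort yields a configuration consistent for every node: each node is executed only after all its dependencies have reached their target result sets, and by the definitions in Definition~\ref{def:marking-sensitive_result_sets} each node's target result set depends only on the current result sets of its dependencies, so executing a node makes it consistent and, crucially, does not disturb the consistency of any already-processed dependency (no dependency node has this node among its inputs). A short argument that no later execution in $O$ invalidates an earlier node closes the base case.

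\textbf{Inductive step.}
When $p$ is a join, I assume as inductive hypothesis that executing $(N^\Phi_l, p^\Phi_l)$ via $order(N^\Phi_l)$ and $(N^\Phi_r, p^\Phi_r)$ via $order(N^\Phi_r)$ each yields a consistent configuration \emph{for the respective isolated subnet}. The subtle point is that in the full net these subnets are no longer isolated: $RPS_l$ adds a dependency edge into a union node of $N^\Phi_r$, and $RPS_r$ into a union node of $N^\Phi_l$. I would therefore analyze the specific concatenation
\[
order(N^\Phi) = order(RPS_r) \circ order(N^\Phi_l) \circ order(RPS_l) \circ order(N^\Phi_r) \circ order(RPS_r) \circ order(N^\Phi_l) \circ order(N^\Phi_{\bowtie}),
\]
and argue phase by phase that after the final pass each node is consistent. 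The key claim is that the second pass over $N^\Phi_l$ (and the preceding $order(RPS_r)$) does not invalidate the consistency established for $N^\Phi_r$ in the middle phase, and vice versa: by Theorem~\ref{the:precision_consistent_configuration}'s underlying mechanism, the matches that $RPS_r$ injects into $N^\Phi_l$ carry marking $h$, and any result these generate that would flow back through $RPS_l$ into $N^\Phi_r$ is removed by the filter $[\phi > h]^\Phi$ in $RPS_l$; hence the feedback loop terminates after one round, and the second execution of $order(N^\Phi_l)$ computes a fixed point that $RPS_l$ then maps into $N^\Phi_r$ \emph{without changing} what $N^\Phi_r$ already contains.

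\textbf{Main obstacle.}
The hard part will be making the stratification argument precise: I must show that the information injected by a request projection structure into the opposite subnet never propagates back across the join in a way that alters an already-consistent node, which requires a careful invariant tracking the marking values along every path through the cycle. Concretely I would establish that along any cycle through $RPS_l$ and $RPS_r$, the marking strictly decreases at the filter whose threshold equals the join's height $h$, so matches cannot circulate indefinitely; combined with the fact that union and join nodes only take the maximum marking (and thus never create a match with a strictly larger marking than its inputs along the feedback direction), this bounds the feedback to a single pass and justifies why the given finite sequence $O$, rather than an unbounded fixpoint iteration, suffices to reach consistency. I expect verifying that the two interleaved passes in $O$ cover exactly the propagation needed, with no residual inconsistency at either subnet's root or at $[\bowtie]^\Phi$, to be the most delicate bookkeeping in the proof.
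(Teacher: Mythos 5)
Your overall strategy---induction mirroring the recursive definition of $order$, a phase-by-phase analysis of the seven-phase concatenation, and the claim that the marking filter nodes cut off cyclic recomputation---is exactly the route the paper takes (its printed proof is only a sketch deferring to the preprint version). However, the invariant you propose in order to make the feedback argument precise has a genuine flaw.

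You claim that ``any result these [$h$-marked requests] generate that would flow back through $RPS_l$ into $N^\Phi_r$ is removed by the filter $[\phi > h]^\Phi$'' and, in your obstacle paragraph, that union and join nodes ``never create a match with a strictly larger marking than its inputs along the feedback direction.'' Both statements are false as written: a marking-sensitive join assigns the \emph{maximum} of its dependencies' markings, so a partial match derived inside $N^\Phi_l$ from an injected request marked $h$ can join with a match marked $\infty$ (one touching $H_p$) and yield a result marked $\infty$---and such a result \emph{does} pass the filter $[\phi > h]^\Phi$ in $RPS_l$. The filter alone therefore does not stop the feedback. What actually closes the argument is completeness rather than filtering: in any consistent configuration of $N^\Phi_l$ (with requests injected at its extension points), every $\infty$-marked partial match already triggers, through the \emph{internal} request projection structures of $N^\Phi_l$, the computation of all matches extending it, so every $\infty$-marked match that a new request could help produce is already present. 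Consequently the stratum $\{(m,\phi) \in \mathcal{C}^\Phi(p^\Phi_l) \mid \phi > h\}$ equals the set of $H_p$-touching matches---this is the content of Theorems~\ref{the:completeness_consistent_configuration} and~\ref{the:precision_consistent_configuration}, generalized to subnets with externally injected inputs (the paper's appendix formalizes these as modular extensions)---and is therefore \emph{invariant} under changes to the injected $h$-marked requests. This invariance is the missing lemma you need: it guarantees that re-executing $N^\Phi_l$ in the sixth phase leaves the output of $RPS_l$, and hence the consistency of $N^\Phi_r$ established in the fourth phase, untouched. Your induction hypothesis must likewise be stated for such modular extensions under arbitrary starting configurations, not for the subnets in isolation; note in particular that in the very first phase $RPS_r$ is executed on the arbitrary contents of $\mathcal{C}^\Phi_0(p^\Phi_r)$, so ``garbage'' requests marked $h$ are injected into $N^\Phi_l$, and the same invariance argument is required to show that they cannot corrupt the $\infty$-marked results on which the rest of the execution depends.
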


\begin{proof} (Idea)
Follows because the inserted marking filter nodes prevent cyclic execution behavior at the level of intermediate results. See the proof of Theorem 6 in Appendix D of \cite{preprint} for a full proof.
\end{proof}

Combined with the result from Theorem~\ref{the:completeness_consistent_configuration}, this means that a localized RETE net can be used to compute complete query results for a relevant subgraph in the sense of Definition~\ref{def:completeness_subgraphs}, as outlined in the following corollary:

\begin{cor}[Execution of RETE nets localized via $localize$ yields complete query results under the relevant subgraph] \label{cor:completeness_execution} 
Let $H$ be a graph, $H_p \subseteq H$, $(N, p)$ a well-formed RETE net, and $Q = \query{p}$. Furthermore, let $\mathcal{C}^\Phi_0$ be an arbitrary starting configuration for the marking-sensitive RETE net $(N^\Phi, p^\Phi) = localize(N, p)$ and $\mathcal{C}^\Phi = execute(order(N^\Phi), N^\Phi, H, H_p, \mathcal{C}^\Phi_0)$. The set of matches from $Q$ into $H$ given by $\resultsstripped{p^\Phi}{\mathcal{C}^\Phi}$ is then complete under $H_p$.
\end{cor}

\begin{proof}
Follows directly from Theorem~\ref{the:completeness_consistent_configuration} and Theorem~\ref{the:execution_order_consistency}.
\end{proof}

\subsection{Performance of Localized RETE Nets} \label{sec:localized_rete_performance}

Performance of a RETE net $(N, p)$ with respect to both execution time and memory consumption is largely determined by the \emph{effective size} of a consistent configuration $\mathcal{C}$ for $(N, p)$, which we define as follows:

\begin{defi}[Effective Size of RETE Net Configurations] \label{def:effective_configuration_size} 
The \emph{effective size} of a configuration $\mathcal{C}$ for a RETE net $(N, p)$ is given by $|\mathcal{C}|_e := \sum_{n \in V^N} \sum_{m \in \mathcal{C}(n)} |m|$, where we define the size of a match $m : Q \rightarrow H$ as $|m| := |m^V| + |m^E| = |V^Q| + |E^Q|$.
\end{defi}

It can then be shown that localization incurs only a constant factor overhead on the effective size of $\mathcal{C}$ for any edge-dominated host graph:\footnote{For non-edge-dominated host graphs, the number of matches for (marking-sensitive) vertex input nodes may exceed the number of matches for related edge-input nodes. If matches for marking-sensitive vertex input nodes make up the bulk of intermediate results, localization then introduces an overhead on effective configuration size that cannot be characterized by a constant factor. However, we expect this situation to be rare in practice, since it requires very sparse host graphs.}

\begin{thm}[RETE net localization via $localize$ introduces only a constant factor overhead on effective configuration size] \label{the:upper_bound_configuration_size}
Let $H$ be an edge-dominated graph, $H_p \subseteq H$, $(N, p)$ a well-formed RETE net with $Q = \query{p}$, $\mathcal{C}$ a consistent configuration for $(N, p)$ for host graph $H$, and $\mathcal{C}^\Phi$ a consistent configuration for the marking-sensitive RETE net $(N^\Phi, p^\Phi) = localize(N, p)$ for host graph $H$ and relevant subgraph $H_p$. It then holds that $\sum_{n^\Phi \in V^{N^\Phi}} \sum_{(m, \phi) \in \mathcal{C}(n^\Phi)} |m| \leq 7 \cdot |\mathcal{C}|_e$.
\end{thm}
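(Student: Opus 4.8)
The plan is to prove the bound by a node-by-node effective-size analysis combined with a charging argument that distributes the nodes of the localized net over the nodes of the original net. First I would establish, by induction over the height of the join tree $N$, that the stripped result set of the localized root of any subtree is contained in the result set of the corresponding original node: navigation nodes only extract edges that are actually present in $H$, and marking-sensitive joins combine matches exactly as standard joins do, so localized matches are always a subset of the globally computed matches. Writing $S_{n'} := \sum_{m \in \mathcal{C}(n')} |m|$ for the effective size contributed by an original node $n'$, this yields that the localized root associated with $n'$ has effective size at most $S_{n'}$, and moreover $|\mathcal{C}|_e = \sum_{n'} S_{n'}$ splits exactly into the edge-input and join contributions. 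Throughout, the uniqueness of markings noted after Definition~\ref{def:consistent_marking_sensitive_configuration} guarantees that each match is counted exactly once per result set.

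Next I would bound the remaining node types. For an edge input $e = [v \rightarrow w]$ replaced by a local navigation structure, the three edge-level nodes (forward navigation, backward navigation, and the top union) hold only matches for the query edge and are thus each bounded by $S_e$. The four vertex-level nodes (the two vertex inputs and the two vertex unions) hold matches for a single query vertex; here edge-dominance (Definition~\ref{def:edge_dominated_graphs}) is essential, since it guarantees that the number of host vertices of any endpoint type is at most the number of edges of $type(e)$, so each such node contributes at most $S_e/3$. Summing yields effective size at most $\tfrac{13}{3}S_e$ for the whole local navigation structure. For a request projection structure, the marking filter passes only a subset of its dependency's matches, while the projection and marking-assignment nodes hold single-vertex matches whose count cannot exceed that of the filter; since every join dependency has matches of size at least three, reading from a node of effective size $S$ gives a total bounded by $\tfrac{5}{3}S$.

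The decisive step is the charging scheme. I would charge each local navigation structure to its edge input, each marking-sensitive join to its join, and each request projection structure $RPS_l = RPS(p^\Phi_l, N^\Phi_r)$ (respectively $RPS_r$) to the dependency $n_l$ (respectively $n_r$) whose localized root $p^\Phi_l$ it reads from. Because $N$ is a tree, every node is the dependency of at most one join, so each original node is charged by at most one request projection structure, namely the one belonging to its parent. An edge input thus accumulates at most $\tfrac{13}{3}S_e + \tfrac{5}{3}S_e = 6\,S_e$, and a join at most $S_n + \tfrac{5}{3}S_n = \tfrac{8}{3}S_n$; both are at most $7\,S_{n'}$. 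Summing over all original nodes gives $\sum_{n^\Phi \in V^{N^\Phi}} \sum_{(m,\phi) \in \mathcal{C}^\Phi(n^\Phi)} |m| \le 7\,|\mathcal{C}|_e$.

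I expect the main obstacle to be the request projection structures: because their filters read from a join's \emph{dependencies} rather than from the join itself, a naive ``seven localized nodes per original node, each bounded by that node'' argument fails, since a selective join may have far smaller effective size than its inputs. Resolving this requires precisely the charging-to-children scheme together with the tree property ensuring at most one incoming request per node. The edge-dominance hypothesis is the other load-bearing ingredient, as it is exactly what prevents the vertex-level nodes of the local navigation structures from dominating the edge-level contributions; without it the vertex inputs could outnumber the edge matches and the constant-factor bound would break, as noted in the accompanying footnote.
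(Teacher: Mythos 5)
Your proof is correct and follows essentially the same route as the paper's: bounding each local navigation structure by its edge input (using edge-dominance for the vertex-level nodes), each marking-sensitive join by its join, and each request projection structure by the dependency node whose localized root it reads from, with the tree structure of $N$ guaranteeing at most one request projection structure charges any given node. Your constants ($\tfrac{13}{3}$ per LNS, $\tfrac{5}{3}$ per RPS) are slightly sharper than the paper's ($5$ and $2$), but the decomposition, the charging scheme, and the role of edge-dominance and marking uniqueness are identical.
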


\begin{proof} (Idea)
Follows because for each RETE node in $(N, p)$, $(N^\Phi, p^\Phi)$ includes a constant number of nodes whose result sets contain corresponding matches and marking-sensitive result sets contain no duplicate matches.
\end{proof}

By Theorem~\ref{the:upper_bound_configuration_size}, it then follows that localization of a RETE net incurs only a constant factor overhead on memory consumption even in the worst case where the relevant subgraph is equal to the full model:

\begin{cor}[RETE net localization via $localize$ introduces only a constant factor overhead on memory consumption]
Let $H$ be an edge-dominated graph, $H_p \subseteq H$, $(N, p)$ a well-formed RETE net, $\mathcal{C}$ a consistent configuration for $(N, p)$ for host graph $H$, and $\mathcal{C}^\Phi$ a consistent configuration for the localized RETE net $(N^\Phi, p^\Phi) = localize(N, p)$ for host graph $H$ and relevant subgraph $H_p$. Assuming that storing a match $m$ requires an amount of memory in $O(|m|)$ and storing an element from $\overline{\mathbb{N}}$ requires an amount of memory in $O(1)$, storing $\mathcal{C}^\Phi$ requires memory in $O(|\mathcal{C}|_e)$.
\end{cor}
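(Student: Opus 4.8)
The plan is to derive this corollary directly from Theorem~\ref{the:upper_bound_configuration_size}, accounting separately for the memory needed to store the matches and the memory needed to store their markings. Storing the configuration $\mathcal{C}^\Phi$ amounts to storing, for every node $n^\Phi \in V^{N^\Phi}$ and every tuple $(m, \phi) \in \mathcal{C}^\Phi(n^\Phi)$, both the match $m$ and the marking $\phi$. By assumption, storing $m$ costs an amount of memory in $O(|m|)$ and storing $\phi \in \overline{\mathbb{N}}$ costs an amount in $O(1)$. The total memory is thus in $O\left(\sum_{n^\Phi \in V^{N^\Phi}} \sum_{(m, \phi) \in \mathcal{C}^\Phi(n^\Phi)} (|m| + 1)\right)$, which I would split into a match-storage term $\sum_{n^\Phi} \sum_{(m, \phi)} |m|$ and a marking-storage term counting the number of tuples $\sum_{n^\Phi} \sum_{(m, \phi)} 1$.

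For the match-storage term, Theorem~\ref{the:upper_bound_configuration_size} directly yields $\sum_{n^\Phi \in V^{N^\Phi}} \sum_{(m, \phi) \in \mathcal{C}^\Phi(n^\Phi)} |m| \leq 7 \cdot |\mathcal{C}|_e$, so this contribution is in $O(|\mathcal{C}|_e)$.

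For the marking-storage term, the key observation is that every match $m : Q \rightarrow H$ has size $|m| = |V^Q| + |E^Q| \geq 1$, since every query subgraph associated with a (marking-sensitive) RETE node contains at least one vertex. Consequently the number of tuples is itself bounded by the effective size: $\sum_{n^\Phi} \sum_{(m, \phi)} 1 \leq \sum_{n^\Phi} \sum_{(m, \phi)} |m| \leq 7 \cdot |\mathcal{C}|_e$, again by Theorem~\ref{the:upper_bound_configuration_size}. Hence the marking-storage term is likewise in $O(|\mathcal{C}|_e)$.

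Adding the two contributions gives total memory in $O(|\mathcal{C}|_e) + O(|\mathcal{C}|_e) = O(|\mathcal{C}|_e)$, as claimed. I expect the only mild subtlety to be the treatment of the markings: since the effective size $|\mathcal{C}|_e$ measures match sizes and ignores markings entirely, one must separately bound the tuple count rather than folding it into $|\mathcal{C}|_e$ — but the trivial lower bound $|m| \geq 1$ makes this immediate. No induction or structural reasoning about $localize$ is needed at this stage, as all the structural work has already been absorbed into Theorem~\ref{the:upper_bound_configuration_size}.
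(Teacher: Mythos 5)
Your proposal is correct and takes essentially the same route as the paper, whose proof is simply a direct appeal to Theorem~\ref{the:upper_bound_configuration_size}; your additional observation that the tuple count (and hence the marking-storage cost) is itself bounded by $7 \cdot |\mathcal{C}|_e$ via $|m| \geq 1$ just makes explicit a step the paper leaves implicit.
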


\begin{proof} (Idea)
Follows from Theorem~\ref{the:upper_bound_configuration_size}.
\end{proof}

In the worst case, a localized RETE net would still be required to compute a complete result. In this scenario, the execution of the localized RETE net may essentially require superfluous recomputation of match markings, causing computational overhead. When starting with an empty configuration, the number of such recomputations per match is however limited by the size of the query graph, only resulting in a small increase in computational complexity:

\begin{thm}[Execution time overhead introduced by RETE net localization via $localize$ depends on query graph size compared to average match size] \label{the:complexity_time_localized} 
Let $H$ be an edge-dominated graph, $H_p \subseteq H$, $(N, p)$ a well-formed RETE net for query graph $Q$, $\mathcal{C}$ a consistent configuration for $(N, p)$, and $\mathcal{C}^\Phi_0$ the empty configuration for $(N^\Phi, p^\Phi) = localize(N, p)$. Executing $(N^\Phi, p^\Phi)$ via $execute(order(N^\Phi), N^\Phi, H, H_p, \mathcal{C}^\Phi_0)$ then takes $O(T \cdot (|Q_a| + |Q|))$ steps, with $|Q_a|$ the average size of matches in $\mathcal{C}$ and $T = \sum_{n \in V^N} |\mathcal{C}(n)|$.
\end{thm}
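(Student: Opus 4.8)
The plan is to decompose the total execution cost into a \emph{baseline} term for constructing every match in the final configuration exactly once and an \emph{overhead} term accounting for repeated updates of match markings, and to bound these separately by $O(T \cdot |Q_a|)$ and $O(T \cdot |Q|)$. Since the approach assumes incremental node implementations that process only input deltas, I would charge a node execution only for the tuples it actually inserts or whose marking it actually changes; a re-execution of a node whose dependency result sets are unchanged since its last execution produces no delta and is free, so the (potentially large) length of $order(N^\Phi)$ is irrelevant to the asymptotic work and only the total delta-processing effort matters.

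For the baseline term, I would first observe that constructing a single match $m$ at any marking-sensitive node costs $O(|m|)$ steps: join, union, projection, and navigation nodes each form their output mapping in time linear in its size, and vertex inputs in constant time. Charging one insertion per tuple in the consistent configuration $\mathcal{C}^\Phi$ and invoking Theorem~\ref{the:upper_bound_configuration_size}, the total insertion cost is bounded by $O\!\left(\sum_{n^\Phi} \sum_{(m,\phi) \in \mathcal{C}^\Phi(n^\Phi)} |m|\right) \le O(7 \cdot |\mathcal{C}|_e) = O(T \cdot |Q_a|)$, using $|\mathcal{C}|_e = T \cdot |Q_a|$ by Definition~\ref{def:effective_configuration_size}.

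For the overhead term, I would bound both the number of tuples and the number of marking updates per tuple. Since $localize$ replaces each original node by a constant number of marking-sensitive nodes, the match counts at the join and navigation nodes of $N^\Phi$ are bounded by the counts at the corresponding nodes of $N$, while the additional vertex-input and projection nodes contribute only single-vertex matches; because $H$ is edge-dominated, the number of such vertex matches is bounded by the number of incident edge matches, so $\sum_{n^\Phi} |\mathcal{C}^\Phi(n^\Phi)| = O\!\left(\sum_{n} |\mathcal{C}(n)|\right) = O(T)$. Next, I would establish monotonicity of the execution from the empty configuration $\mathcal{C}^\Phi_0$: because no tuple is ever removed and every marking is obtained as a maximum over contributing markings (Definition~\ref{def:marking-sensitive_result_sets}), the marking of any fixed match at any fixed node can only increase as execution proceeds. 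As markings range over $\{0, 1, \ldots, h_{max}\} \cup \{\infty\}$, where $h_{max}$ is the height of the join tree and hence $h_{max} \le |Q|$, each tuple undergoes at most $O(|Q|)$ marking increases over the whole execution, each of which only rewrites a value in $\overline{\mathbb{N}}$ and therefore costs $O(1)$. Multiplying by the $O(T)$ tuples gives overhead $O(T \cdot |Q|)$, and adding the baseline term yields $O(T \cdot (|Q_a| + |Q|))$.

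The main obstacle is the rigorous justification of the monotonicity claim in the presence of the cycles introduced by the request projection structures and the specific forward/backward traversal encoded by $order$. I would prove by induction over the height of $N$ that, starting from $\mathcal{C}^\Phi_0$, every prefix of the executed sequence yields a configuration whose result sets are subsets of the final consistent ones and in which no marking exceeds its final value, so that each update strictly raises a marking toward its fixpoint value. Establishing that each such update is genuinely $O(1)$ additionally requires that a marking change to a dependency tuple propagates to an already-present dependent tuple without rebuilding its match structure, which holds because the $\max$-based marking rules leave the underlying match unchanged; combined with Theorem~\ref{the:execution_order_consistency}, which guarantees that $order$ reaches the fixpoint, this closes the argument.
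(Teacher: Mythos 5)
Your proposal is correct and follows essentially the same route as the paper's proof: the paper likewise splits the cost into initial match construction, which is linear in the total size of constructed matches and hence bounded by $O(T \cdot |Q_a|)$ via the constant-factor bound of Theorem~\ref{the:upper_bound_configuration_size}, plus marking maintenance, where the marking of each of the $O(T)$ tuples changes at most $|Q|$ times (your monotonicity-plus-bounded-range argument is exactly the justification for that claim). The only detail worth tightening is that delta processing which touches dependent tuples without changing their markings must also be accounted for, but each such probe can be charged to a (dependent tuple, dependency marking change) pair and is therefore absorbed into the same $O(T \cdot |Q|)$ term.
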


\begin{proof} (Idea)
Follows since the effort for initial construction of matches by the marking-sensitive RETE net is linear in the total size of the constructed matches and the marking of a match changes at most $|Q|$ times. See the proof of Theorem 9 in Appendix D of \cite{preprint} for a full proof.
\end{proof}

Assuming an empty starting configuration, a regular well-formed RETE net $(N, p)$ can be executed in $O(|\mathcal{C}|_e)$ steps, which can also be expressed as $O(T \cdot |Q_a|)$. The overhead of a localized RETE net compared to the original net can thus be characterized by the factor $\frac{|Q|}{|Q_a|}$. Assuming that matches for the larger query subgraphs constitute the bulk of intermediate results, which seems reasonable in many scenarios, $\frac{|Q|}{|Q_a|}$ may be approximated by a constant factor.

For non-empty starting configurations and incremental changes, no sensible guarantees can be made. On the one hand, in a localized RETE net, a host graph modification may trigger the computation of a large number of intermediate results that were previously omitted due to localization. On the other hand, in a standard RETE net, a modification may result in substantial effort for constructing superfluous intermediate results that can be avoided by localization. Depending on the exact host graph structure and starting configuration, execution may thus essentially require a full recomputation for either the localized or standard RETE net but cause almost no effort for the other variant.


\section{Incremental Extended Queries over Subgraphs} \label{sec:incremental_extended_queries_over_subgraphs}

Standard RETE nets can be employed to incrementally compute results for extended graph queries via semi-join and anti-join nodes \cite{barkowsky2023host}. It is thus desirable to extend the localization technique for simple graph queries presented in Section~\ref{sec:incremental_queries_over_persistent_models} to also enable the \emph{localized} incremental execution of extended graph queries.

However, multiple interpretations of local correctness of query results for extended graph queries are conceivable depending on the application scenario. More specifically, it is unclear whether this notion should cover only query results where the main match touches the relevant subgraph or also those where elements from the relevant subgraph play a role regarding the satisfaction of the equipped nested graph condition.

Notably, it has been shown that nested graph conditions are equivalent to first order logic \cite{habel2009correctness,rensink2004representing} and satisfiability for first order logic is known to be undecidable. A precise solution for the second case thus seems infeasible, as discussed in more detail in Section~\ref{sec:localized_rete_satisfaction_changes}. In the following, we will therefore first develop an extension of our localization technique that covers the first case and then proceed by presenting a further extension that covers the second case at a practical level via overapproximation.

\subsection{Localized RETE for Extended Queries} \label{sec:localized_rete_extended_queries}

In the most basic case, a set of matches for an extended graph query $(Q, \psi)$ may be considered correct under a given relevant subgraph if it contains no matches that violate $\psi$ and it contains all matches for $Q$ that satisfy the condition $\psi$ and involve elements from the relevant subgraph:

\begin{defi}[Correctness of Extended Query Results under Subgraphs] \label{def:correctness_subgraphs_ngcs} 
	We say that a set of matches $M$ for the extended graph query $(Q, \psi)$ into a host graph $H$ is correct under a subgraph $H_p \subseteq H$
	iff $M \subseteq \{m \in \allmatches{Q}{H} \mid m \models \psi\}$ and $\{m \in \allmatches{Q}{H} \mid m \models \psi \wedge m(Q) \cap H_p \neq \emptyset\} \subseteq M$.
\end{defi}

To enable a localized execution of extended graph queries that meets this definition via the RETE approach, we first have to introduce marking-sensitive versions of the semi-join and anti-join node. These nodes work similarly to the regular versions, but also preserve the marking of matches from their left dependency:

\begin{defi}[Target Result Sets of Advanced Marking-Sensitive RETE Nodes] \label{def:marking-sensitive_result_sets_ngcs} 
Let $H$ be a host graph with relevant subgraph $H_p \subseteq H$, $(N^\Phi, p^\Phi)$ a containing marking-sensitive RETE net, and $\mathcal{C}^\Phi$ a configuration for $(N^\Phi, p^\Phi)$.
\begin{itemize}
	\item The target result set of a \emph{marking-sensitive semi-join node} $[\ltimes]^\Phi$ with marking-sensitive dependencies $n^\Phi_l$ and $n^\Phi_r$ with $Q_l = \query{n^\Phi_l}$ and $Q_r = \query{n^\Phi_r}$ such that $V^{Q_\cap} \neq \emptyset$ for $Q_\cap = Q_l \cap Q_r$ is given by
	$\resultslocal{[\ltimes]^\Phi}{N^\Phi}{H}{H_p}{\mathcal{C}^\Phi} = \{(m_l, \phi_l) \mid (m_l, \phi_l) \in \mathcal{C}^\Phi(n^\Phi_l) \wedge \exists (m_r, \phi_r) \in \mathcal{C}^\Phi(n^\Phi_r) : m_l|_{Q_\cap} = m_r|_{Q_\cap}\}$.
	\item The target result set of a \emph{marking-sensitive semi-join node} $[\ltimes]^\Phi$ along a non-empty graph morphism $a : Q_l \rightarrow Q_r$ with marking-sensitive dependencies $n^\Phi_l$ and $n^\Phi_r$ with $Q_l = \query{n^\Phi_l}$ and $Q_r = \query{n^\Phi_r}$ is given by
	$\resultslocal{[\ltimes]^\Phi}{N^\Phi}{H}{H_p}{\mathcal{C}^\Phi} = \{(m_l, \phi_l) \mid (m_l, \phi_l) \in \mathcal{C}^\Phi(n^\Phi_l) \wedge \exists (m_r, \phi_r) \in \mathcal{C}^\Phi(n^\Phi_r) : m_l = m_r \circ a\}$.
	\item The target result set of a \emph{marking-sensitive semi-join node} $[\ltimes]^\Phi$ along a non-empty partial graph morphism $a : Q_l \rightarrow Q_r$ from some subgraph $Q_p \subseteq Q_l$ into $Q_r$ with marking-sensitive dependencies $n^\Phi_l$ and $n^\Phi_r$ with $Q_l = \query{n^\Phi_l}$ and $Q_r = \query{n^\Phi_r}$ is given by
	$\resultslocal{[\ltimes]^\Phi}{N^\Phi}{H}{H_p}{\mathcal{C}^\Phi} = \{(m_l, \phi_l) \mid (m_l, \phi_l) \in \mathcal{C}^\Phi(n^\Phi_l) \wedge \exists (m_r, \phi_r) \in \mathcal{C}^\Phi(n^\Phi_r) : m_l|_{Q_p} = m_r \circ a\}$.
	\item The target result set of a \emph{marking-sensitive anti-join node} $[\rhd]^\Phi$ with marking-sensitive dependencies $n^\Phi_l$ and $n^\Phi_r$ with $Q_l = \query{n^\Phi_l}$ and $Q_r = \query{n^\Phi_r}$ such that $V^{Q_\cap} \neq \emptyset$ for $Q_\cap = Q_l \cap Q_r$ is given by
	$\resultslocal{[\rhd]^\Phi}{N^\Phi}{H}{H_p}{\mathcal{C}^\Phi} = \{(m_l, \phi_l) \mid (m_l, \phi_l) \in \mathcal{C}^\Phi(n^\Phi_l) \wedge \nexists (m_r, \phi_r) \in \mathcal{C}^\Phi(n^\Phi_r) : m_l|_{Q_\cap} = m_r|_{Q_\cap}\}$.
\end{itemize}
\end{defi}

Based on the $localize$ procedure for plain graph queries from Section~\ref{sec:localized_search_with_marking-sensitive_rete} and our RETE net construction technique for extended graph queries from \cite{barkowsky2023host}, we can then construct a localization procedure $localize^\Psi$ that can be applied to an extended graph query $(Q, \psi)$ to create a localized RETE net $(N^\Phi, p^\Phi) = localize^\Psi(Q, \psi)$. $localize^\Psi$ is a recursive procedure, which is visualized in Figure~\ref{fig:localize_psi} and works as follows:

\begin{itemize}
\item For a nested graph condition of the form $\psi = true$, the result of $localize^\Psi$ is given by $(N^\Phi, p^\Phi) = (N^\Phi_Q, p^\Phi_Q) = localize(Q)$\footnote{Here, we use $localize(Q)$ as a shorthand for $localize(N, p)$ with an arbitrary well-formed RETE net $(N, p)$ for $Q$.}.

\item For a nested graph condition of the form $\psi = \exists (a: Q \rightarrow Q', \psi')$, $N^\Phi$ consists of the localized RETE net for the plain pattern $Q$, $(N^\Phi_Q, p^\Phi_Q) = localize(Q)$, the localized RETE net for the query $(N^\Phi_{(Q', \psi')}, p^\Phi_{(Q', \psi')}) = localize^\Psi(Q', \psi')$, a request projection structure $RPS^{\infty}_l = RPS^{\infty}(p^\Phi_Q, N^\Phi_{(Q', \psi')})$, and a marking-sensitive semi-join node $p^\Phi = [\ltimes]^\Phi$ along $a$ with left dependency $p^\Phi_Q$ and right dependency $p^\Phi_{(Q', \psi')}$.

\item For a nested graph condition of the form $\psi = \neg\psi'$, $N^\Phi$ consists of the localized RETE net for the plain pattern $Q$, $(N^\Phi_Q, p^\Phi_Q) = localize(Q)$, the localized RETE net for the query $(N^\Phi_{(Q, \psi')}, p^\Phi_{(Q, \psi')}) = localize^\Psi(Q, \psi')$, a request projection structure $RPS^{\infty}_l = RPS^{\infty}(p^\Phi_Q, N^\Phi_{(Q, \psi')})$, and a marking-sensitive anti-join node $p^\Phi = [\rhd]^\Phi$ with left dependency $p^\Phi_Q$ and right dependency $p^\Phi_{(Q, \psi')}$.

\item For a nested graph condition of the form $\psi = \psi_1 \wedge \psi_2$, $N^\Phi$ consists of
the localized RETE net for the plain pattern $Q$, $(N^\Phi_Q, p^\Phi_Q) = localize(Q)$,
the RETE net $(N^\Phi_{(Q, \psi_1)}, p^\Phi_{(Q, \psi_1)}) = localize^\Psi(Q, \psi_1)$, a request projection structure $RPS^{\infty}_1 = RPS^{\infty}(p^\Phi_Q, N^\Phi_{(Q, \psi_1)})$, a marking-sensitive semi-join node $[\ltimes]^\Phi_1$ with left dependency $p^\Phi_Q$ and right dependency $p^\Phi_{(Q, \psi_1)}$,
the RETE net $(N^\Phi_{(Q, \psi_2)}, p^\Phi_{(Q, \psi_2)}) = localize^\Psi(Q, \psi_2)$, a request projection structure $RPS^{\infty}_2 = RPS^{\infty}([\ltimes]^\Phi_1, N^\Phi_{(Q, \psi_2)})$, and a marking-sensitive semi-join node $p^\Phi = [\ltimes]^\Phi_2$ with left dependency $[\ltimes]^\Phi_1$ and right dependency $p^\Phi_{(Q, \psi_2)}$.
\end{itemize}

Note that we include semi-join and anti-join nodes when computing join tree height for determining the value of marking filter nodes in request projection structures. Moreover, by $RPS^{\infty}$, we denote a request projection structure where the marking assignment node assigns a marking of $\infty$. While the marking assignment nodes in these request projection structures are technically redundant when considering the constructions in this article and could be removed, we include the nodes to mirror the structure of the request projection structures from Section~\ref{sec:incremental_queries_over_persistent_models}.

Intuitively, by employing localized RETE nets in the computation of nested graph condition satisfaction, it is no longer necessary to compute related matches over the entire host graph. The insertion of request projection structures via $localize^\Psi$ then ensures that matches that are required to correctly evaluate a nested graph condition are computed on demand for every relevant match for the context pattern.

\begin{figure} [t]
	\centering
	\begin{subfigure}[b]{0.36\textwidth}
		\begin{center}

\begin{tikzpicture}

\node[rectangle,
	minimum width = 4.15cm, 
	minimum height = 3.25cm] (r) at (0,0) {};

\node [below = -3.25cm of r, dashed] [ms_retenode] {$N^\Phi_Q$};

\end{tikzpicture}
		\end{center}
	\end{subfigure}\hspace{4pt}\vline\hspace{4pt}
	\begin{subfigure}[b]{0.55\textwidth}
		\begin{center}

\begin{tikzpicture}

\node[rectangle,
	minimum width = 6.85cm, 
	minimum height = 3.25cm] (r) at (0,0) {};

\node [below = -3.25cm of r] (join) [ms_retenode] {$\ltimes$};

\node[below left = 0.35cm and -0.5cm of join, dashed] (q) [ms_retenode] {$N^\Phi_Q$};

\node[below right = 0.35cm and -0.5cm of join, dashed] (qp_psip) [ms_retenode] {$N^\Phi_{(Q', \psi')}$};

\node[below = 0.35cm of qp_psip, dashed] (rpsl) [ms_retenode] {$RPS^{\infty}_l$};

\draw [-{Latex}] (join) -- (q);

\draw [-{Latex}] (join) -- (qp_psip);

\draw [-{Latex}] (qp_psip) -- (rpsl);

\draw [-{Latex}] (rpsl) -- (q);

\end{tikzpicture}
		\end{center}
	\end{subfigure}

	\hrule

	\begin{subfigure}[b]{0.36\textwidth}
		\begin{center}

\begin{tikzpicture}

\node[rectangle,
	minimum width = 4.15cm, 
	minimum height = 4.3cm] (r) at (0,0) {};

\node [below = -4.1cm of r] (join) [ms_retenode] {$\rhd$};

\node[below left = 0.35cm and -0.5cm of join, dashed] (q) [ms_retenode] {$N^\Phi_Q$};

\node[below right = 0.35cm and -0.5cm of join, dashed] (q_psip) [ms_retenode] {$N^\Phi_{(Q, \psi')}$};

\node[below = 0.35cm of q_psip, dashed] (rpsl) [ms_retenode] {$RPS^{\infty}_l$};

\draw [-{Latex}] (join) -- (q);

\draw [-{Latex}] (join) -- (q_psip);

\draw [-{Latex}] (q_psip) -- (rpsl);

\draw [-{Latex}] (rpsl) -- (q);

\end{tikzpicture}
		\end{center}
	\end{subfigure}\hspace{4pt}\vline\hspace{4pt}
	\begin{subfigure}[b]{0.55\textwidth}
		\begin{center}

\begin{tikzpicture}

\node[rectangle,
	minimum width = 6.85cm, 
	minimum height = 4.3cm] (r) at (0,0) {};

\node [below left = -4.1cm and -4.9cm of r] (join2) [ms_retenode] {$\ltimes_2$};

\node [below left = 0.35cm and 0.25cm of join2] (join1) [ms_retenode] {$\ltimes_1$};

\node[below right = 0.35cm and 0.25cm of join2, dashed] (q_psi2) [ms_retenode] {$N^\Phi_{(Q, \psi_2)}$};

\node[below = 0.35cm of q_psi2, dashed] (rps2) [ms_retenode] {$RPS^{\infty}_2$};

\node[below left = 0.35cm and -0.5cm of join1, dashed] (q) [ms_retenode] {$N^\Phi_Q$};

\node[below right = 0.35cm and -0.5cm of join1, dashed] (q_psi1) [ms_retenode] {$N^\Phi_{(Q, \psi_1)}$};

\node[below = 0.35cm of q_psi1, dashed] (rps1) [ms_retenode] {$RPS^{\infty}_1$};

\draw [-{Latex}] (join2) -- (join1);

\draw [-{Latex}] (join2) -- (q_psi2);

\draw [-{Latex}] (q_psi2) -- (rps2);

\draw [-{Latex}] (rps2) -- (join1);

\draw [-{Latex}] (join1) -- (q);

\draw [-{Latex}] (join1) -- (q_psi1);

\draw [-{Latex}] (q_psi1) -- (rps1);

\draw [-{Latex}] (rps1) -- (q);

\end{tikzpicture}
		\end{center}
	\end{subfigure}
	\caption{Results of applying $localize^\Psi$ to an extended graph query $(Q, \psi)$ with nested graph conditions of the form $\psi = \texttt{true}$ (top left), $\psi = \exists(a : Q \rightarrow Q', \psi')$ (top right), $\psi = \neg\psi'$ (bottom left), and $\psi = \psi_1 \wedge \psi_2$ (bottom right)} \label{fig:localize_psi}
\end{figure}

The localized RETE net for the extended graph query in Figure~\ref{fig:example_query_and_rete_ngc} is visualized in Figure~\ref{fig:localized_rete_net_ngc} along with a consistent configuration for the host graph in Figure~\ref{fig:example_host_graph_ngc}. For local navigation structures and request projection structures, the displayed current result set is the current result set of the topmost union node respectively the marking assignment node of the structure.

Analogously to Figure~\ref{fig:example_query_and_rete_ngc}, the localized RETE net essentially extends the RETE net from Figure~\ref{fig:localized_rete_net} by a marking sensitive semi-join and a local navigation structure for matching the pattern $Q'$. In addition, it contains a request projection structure that controls the computation of matches for $Q'$ based on matches for $Q$ that touch the relevant subgraph.

Via this new request projection structure, the match $m_1$ for $Q$ triggers the extraction of the match $m_{1.3}$ for $Q'$ in the local navigation structure $LNS^{c \rightarrow i}$. This ensures that $m_1$ is included in the result set of the semi-join node at the top of the net, which thereby provides a locally correct query result.

\begin{figure}[t]

\begin{tikzpicture}

\node (semijoin) [ms_retenode] {$\ltimes$};
\node[below = 0.0cm of semijoin]  (semijoin_config) [reteconfig] {$(m_{1}, \infty)$};


\node[below left = 0.35cm and 0.25cm of semijoin_config] (join) [ms_retenode] {$\bowtie$};
\node[below = 0.0cm of join]  (join_config) [reteconfig] {$(m_{1}, \infty)$};

\node[below left = 0.35cm and -0.5cm of join_config, dashed] (ce) [ms_retenode] {$LNS^{p \rightarrow c}$};
\node[below = 0.0cm of ce]  (ce_config) [reteconfig] {$(m_{1.1}, \infty)$};

\node[below right = 0.35cm and -0.5cm of join_config, dashed] (fe) [ms_retenode] {$LNS^{c \rightarrow f}$};
\node[below = 0.0cm of fe]  (fe_config) [reteconfig, minimum width = 1.8cm] {$(m_{1.2}, 1)$};

\node[below = 0.35cm of ce_config, dashed] (rps1) [ms_retenode] {$RPS_c$};
\node[below = 0.0cm of rps1]  (rps1_config) [reteconfig] {$$};

\node[below = 0.35cm of fe_config, dashed] (rps2) [ms_retenode] {$RPS_c$};
\node[below = 0.0cm of rps2]  (rps2_config) [reteconfig] {$(m_{1.2.1}, 1)$};

\draw [-{Latex}] (semijoin_config) -- (join);

\draw [-{Latex}] (join_config) -- (ce);

\draw [-{Latex}] (join_config) -- (fe);

\draw [-{Latex}] (fe_config) -- (rps2);

\draw [-{Latex}] (rps2) -- (ce_config);

\draw [-{Latex}] (ce_config) -- (rps1);

\draw [-{Latex}] (rps1) -- (fe_config);


\node[below right = 0.35cm and 0.25cm of semijoin_config, dashed] (ie) [ms_retenode] {$LNS^{c \rightarrow i}$};
\node[below = 0.0cm of ie]  (ie_config) [reteconfig] {$(m_{1.3}, \infty)$};

\node[below = 0.35cm of ie_config, dashed] (rps3) [ms_retenode] {$RPS^{\infty}_c$};
\node[below = 0.0cm of rps3]  (rps3_config) [reteconfig] {$(m_{1.2.1}, \infty)$};

\draw [-{Latex}] (semijoin_config) -- (ie);

\draw [-{Latex}] (ie_config) -- (rps3);

\draw [-{Latex}] (rps3) -- (join_config);

\end{tikzpicture}
	\caption{Localized RETE net for the example query in Figure~\ref{fig:example_query_and_rete_ngc} with a consistent configuration for the host graph in Figure~\ref{fig:example_host_graph_ngc}} \label{fig:localized_rete_net_ngc}
\end{figure}

\begin{figure}[t]

\begin{tikzpicture}

\node (r1) [vertex] {\textit{r\textsubscript{1}:Proj}};


\node[below left = 1.25cm and 1.25cm of r1, line width = 4pt] (p1) [vertex] {\textbf{\textit{p\textsubscript{1}:Pkg}}};

\node[below = 1.25cm of p1] (c1) [vertex] {\textit{c\textsubscript{1}:Class}};

\node[below = 1.25cm of c1] (f1) [vertex] {\textit{f\textsubscript{1}:Field}};

\node[left = 1.75cm of c1] (i1) [vertex] {\textit{i\textsubscript{1}:Intf}};

\draw [-{Latex}] (r1) -- (p1) node [midway, right] {\textit{pe}};

\draw [-{Latex}] (p1) -- (c1) node [midway, right] {\textit{ce}};

\draw [-{Latex}] (p1) -- (c1) node [midway, right] {\textit{ce}};

\draw [-{Latex}] (c1) -- (f1) node [midway, right] {\textit{fe}};

\draw [-{Latex}] (c1) -- (i1) node [midway, above] {\textit{ie}};


\node[below right = 1.25cm and 1.25cm of r1] (p2) [vertex] {\textit{p\textsubscript{2}:Pkg}};

\node[below = 1.25cm of p2] (c2) [vertex] {\textit{c\textsubscript{2}:Class}};

\node[below = 1.25cm of c2] (f2) [vertex] {\textit{f\textsubscript{2}:Field}};

\draw [-{Latex}] (r1) -- (p2) node [midway, right] {\textit{pe}};

\draw [-{Latex}] (p2) -- (c2) node [midway, right] {\textit{ce}};

\draw [-{Latex}] (c2) -- (f2) node [midway, right] {\textit{fe}};


\node[above left = -1.8cm and -2.75cm of f1, rectangle, draw = black, dashed, minimum width = 4.5cm, minimum height = 9.25cm, text depth = 9cm, text width = 4.25cm] (m1) {\textit{m\textsubscript{1}}};

\node[above left = -1.8cm and -2.25cm of c1, rectangle, draw = orange, dashed, minimum width = 3.75cm, minimum height = 5.75cm, text depth = 5.5cm, text width = 3.5cm] (m11) {\textit{m\textsubscript{1.1}}};

\node[above left = -1.55cm and -2cm of p1, rectangle, draw = orange, dashed, minimum width = 3cm, minimum height = 2cm, text depth = 1.75cm, text width = 2.75cm] (m111) {\textit{m\textsubscript{1.1.1}}};

\node[above left = -1.55cm and -2.5cm of f1, rectangle, draw = blue, dashed, minimum width = 3.75cm, minimum height = 5.5cm, text depth = 5.25cm, text width = 3.5cm] (m12) {\textit{m\textsubscript{1.2}}};

\node[above left = -1.55cm and -2cm of c1, rectangle, draw = blue, dashed, minimum width = 3cm, minimum height = 2cm, text depth = 1.75cm, text width = 2.75cm] (m121) {\textit{m\textsubscript{1.2.1}}};

\node[above left = -1.8cm and -2.75cm of f2, rectangle, draw = black, dashed, minimum width = 4.5cm, minimum height = 9.25cm, text depth = 9cm, text width = 4.25cm, align = right] (m2) {\textit{m\textsubscript{2}}};

\node[above left = -1.65cm and -2.125cm of c1, rectangle, draw = darkgreen, dashed, minimum width = 6cm, minimum height = 2.75cm, text depth = 2.5cm, text width = 5.75cm] (m13) {\textit{m\textsubscript{1.3}}};

\end{tikzpicture}
	\caption{Example host graph with relevant subgraph marked in bold} \label{fig:example_host_graph_ngc}
\end{figure}

Formally, the following theorem states the correctness of localization via $localize^\Psi$ with respect to Definition~\ref{def:correctness_subgraphs_ngcs}:

\begin{thm}[Consistent configurations for RETE nets localized via $localize^\Psi$ yield correct query results under the relevant subgraph] \label{the:localized_rete_ngcs_correctness}
Let $H$ be a graph, $H_p \subseteq H$, and $(Q, \psi)$ an extended graph query. Furthermore, let $\mathcal{C}^\Phi$ be a consistent configuration for the localized RETE net $(N^\Phi, p^\Phi) = localize^\Psi(Q, \psi)$. The set of matches given by the stripped result set $\resultsstripped{p^\Phi}{\mathcal{C}^\Phi}$ is then correct under $H_p$.
\end{thm}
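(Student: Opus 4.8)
The plan is to proceed by structural induction on the nested graph condition $\psi$, establishing for each case the two inclusions required by Definition~\ref{def:correctness_subgraphs_ngcs}: \emph{soundness}, $\resultsstripped{p^\Phi}{\mathcal{C}^\Phi} \subseteq \{m \in \allmatches{Q}{H} \mid m \models \psi\}$, and \emph{local completeness}, $\{m \in \allmatches{Q}{H} \mid m \models \psi \wedge m(Q) \cap H_p \neq \emptyset\} \subseteq \resultsstripped{p^\Phi}{\mathcal{C}^\Phi}$. Because the recursively embedded subnets are seeded through incoming request projection structures with anchor vertices that may lie \emph{outside} $H_p$, a plain ``correct under $H_p$'' hypothesis is too weak to carry the induction. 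I would therefore strengthen the induction hypothesis so that each localized subnet $localize^\Psi(Q'', \psi'')$ is unconditionally sound and is complete for all matches of $Q''$ satisfying $\psi''$ that touch the subgraph obtained from $H_p$ by adding the vertices injected as $\infty$-marked seeds through the subnet's incoming $RPS^\infty$. For the top-level net no external seeds are present, so this specialization is exactly correctness under $H_p$; uniqueness of the marking of each match (established in \cite{preprint}) guarantees that the notion of ``$\infty$-marked seed'' is well defined.

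For the base case $\psi = \texttt{true}$ we have $localize^\Psi(Q, \texttt{true}) = localize(Q)$: soundness is immediate since every match satisfies $\texttt{true}$, and local completeness is precisely Theorem~\ref{the:completeness_consistent_configuration}, while Theorem~\ref{the:precision_consistent_configuration} identifies the $\infty$-marked matches. For the existential case $\psi = \exists(a : Q \rightarrow Q', \psi')$, soundness holds because the marking-sensitive semi-join along $a$ retains a match $m$ only if the right subnet contains some $m'$ with $m|_{Q_p} = m' \circ a$, which by the induction hypothesis satisfies $\psi'$, so $m \models \psi$. Local completeness is where the seeding is indispensable: a satisfying $m$ touching $H_p$ is present and $\infty$-marked in $p^\Phi_Q$ by Theorems~\ref{the:completeness_consistent_configuration} and~\ref{the:precision_consistent_configuration}; the structure $RPS^\infty_l$ projects $m$ onto the anchor vertex and injects it $\infty$-marked into the subnet for $(Q', \psi')$, so any witness $m'$ with $m = m' \circ a$ touches the augmented subgraph and is found by the induction hypothesis, whence the semi-join retains $m$. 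The conjunction case $\psi = \psi_1 \wedge \psi_2$ chains two semi-joins and follows by applying the hypothesis to the subnets for $\psi_1$ and $\psi_2$, noting that $RPS^\infty_2$ is anchored at $[\ltimes]^\Phi_1$ and therefore still receives every $\infty$-marked match of $Q$ that survives the $\psi_1$-check.

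The negation case $\psi = \neg\psi'$ is the main obstacle, since the anti-join is the only non-monotone operator: retaining $m$ because \emph{no} compatible witness appears in the subnet is sound only if that subnet is genuinely complete for the relevant matches. Here the strengthened hypothesis is essential, most visibly under nesting such as $\psi = \neg\exists(a, \psi')$, where the outer anti-join's soundness reduces to the \emph{completeness} of the inner subnet $N^\Phi_{(Q, \exists(a, \psi'))}$ for matches touching $H_p$, and that completeness in turn relies on the inner semi-join locating witnesses away from $H_p$ via its own $RPS^\infty$ seeding. I would discharge negation by showing that any $m$ touching $H_p$ with $m \models \psi'$ is, by the induction hypothesis applied to the augmented subgraph (which already contains $m$'s own image), present in the subnet, so the anti-join discards exactly the $\psi'$-satisfying matches and retains exactly those satisfying $\neg\psi'$; local completeness is the dual argument, using soundness of the subnet instead of its completeness. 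The delicate point throughout is to keep the augmented relevant subgraph correctly threaded through the recursion and to verify at each anti-join that the seeds propagated by the enclosing construction suffice to make the subcondition check complete, so that the absence of a witness faithfully reflects $m \not\models \psi'$.
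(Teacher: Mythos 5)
Your proposal takes essentially the same route as the paper's own proof: the paper reduces the theorem to a strengthened auxiliary lemma (its Lemma~\ref{lem:localized_rete_condition_completeness} on \emph{modular extensions} of localized nets), which asserts precisely your two invariants --- unconditional soundness of every node's result set together with completeness for all matches of $Q$ anchored at an $\infty$-marked entry of an extension point, i.e.\ your ``$H_p$ augmented by the $\infty$-marked seeds injected through incoming $RPS^\infty$ structures'' --- and discharges it by the same structural induction over $\psi$ with the same case analysis (plain-query completeness and precision, Theorems~\ref{the:completeness_consistent_configuration} and~\ref{the:precision_consistent_configuration}, for $\texttt{true}$; $RPS^\infty$ seeding plus semi-join semantics for $\exists$; the dual argument at the anti-join for $\neg$; chained semi-joins for $\wedge$). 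Even the ``delicate point'' you flag at negation (that absence of a witness in the subnet must faithfully reflect $m \not\models \psi'$) is treated in the paper at exactly the level of your sketch, namely by invoking seeded completeness of the inner subnet for the $\infty$-marked matches and then appealing to the anti-join semantics, so your proposal matches the paper's proof.
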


\begin{proof} (Idea)
It can be shown via induction over $\psi$ that the inserted request projection structures trigger the computation of the required intermediate results for nested queries of $(Q, \psi)$.
\end{proof}

An execution order for a localized RETE net $(N^\Phi, p^\Phi) = localize^{\Psi}(Q, \psi)$ can then be computed via the recursive procedure $order^{\Psi}$. This procedure works as follows:

\begin{itemize}
\item For a nested graph condition of the form $\psi = true$, the result of $order^\Psi$ is simply given by $order^\Psi(N^\Phi) = order(N^\Phi)$.

\item For a nested graph condition of the form $\psi = \exists (a: Q \rightarrow Q', \psi')$, the result of $order^\Psi$ is given by $order^\Psi(N^\Phi) = order(N^\Phi_Q) \circ toposort(RPS^{\infty}_l)^{-1} \circ order^\Psi(N^\Phi_{(Q', \psi')}) \circ [\ltimes]^\Phi$.

\item For a nested graph condition of the form $\psi = \neg\psi'$, the result of $order^\Psi$ is given by $order^\Psi(N^\Phi) = order(N^\Phi_Q) \circ toposort(RPS^{\infty}_l)^{-1} \circ order^\Psi(N^\Phi_{(Q, \psi')}) \circ [\rhd]^\Phi$.

\item Finally, for a nested graph condition of the form $\psi = \psi_1 \wedge \psi_2$, the result of $order^\Psi$ is given by $order^\Psi(N^\Phi) = order(N^\Phi_Q) \circ toposort(RPS^{\infty}_1)^{-1} \circ order^\Psi(N^\Phi_{(Q, \psi_1)}) \circ [\ltimes]^\Phi_1 \circ toposort(RPS^{\infty}_2)^{-1} \circ order^\Psi(N^\Phi_{(Q, \psi_2)}) \circ [\ltimes]^\Phi_2$.
\end{itemize}

This execution order then guarantees a consistent resulting configuration:

\begin{thm}[Execution of localized RETE nets via $order^{\Psi}$ yields consistent configurations] \label{the:localized_rete_ngcs_execution} 
Let $H$ be a graph, $H_p \subseteq H$, $(Q, \psi)$ a graph query, and $\mathcal{C}^\Phi_0$ an arbitrary starting configuration for the marking-sensitive RETE net $(N^\Phi, p^\Phi) = localize^{\Psi}(Q, \psi)$. Executing $(N^\Phi, p^\Phi)$ via $O = order^{\Psi}(N^\Phi)$ then yields a consistent configuration $\mathcal{C}^\Phi = execute(O, N^\Phi, H, H_p, \mathcal{C}^\Phi_0)$.
\end{thm}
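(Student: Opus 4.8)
The plan is to prove the statement by structural induction on the nested graph condition $\psi$, following the recursive definitions of $localize^\Psi$ and $order^\Psi$ in lockstep. The base case $\psi = \texttt{true}$ is immediate: here $(N^\Phi, p^\Phi) = localize(Q)$ and $order^\Psi(N^\Phi) = order(N^\Phi)$, so consistency of the resulting configuration follows verbatim from Theorem~\ref{the:execution_order_consistency}. For the three inductive cases I would reuse Theorem~\ref{the:execution_order_consistency} for the context net and the induction hypothesis for the nested net(s), gluing them together along the newly inserted request projection structures.

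The first thing I would establish in the inductive step is that, although each subnet $N^\Phi_Q$, $N^\Phi_{(Q', \psi')}$, etc. internally contains the cycles introduced by the request projection structures of Section~\ref{sec:incremental_queries_over_persistent_models}, the structures inserted by $localize^\Psi$ create \emph{no} new cycles at the level of these components. Reading the dependency edges off the construction and Figure~\ref{fig:localize_psi}, the components form a directed acyclic graph in which data flows from the context net $N^\Phi_Q$ into each $RPS^{\infty}$, from each $RPS^{\infty}$ into the corresponding nested net, and from the nested net into the top-level semi-/anti-join, with no edge ever carrying data back from a nested net or the join into $N^\Phi_Q$. In the $\wedge$ case the same one-directional flow holds along the chain $N^\Phi_Q \to RPS^{\infty}_1 \to N^\Phi_{(Q, \psi_1)} \to [\ltimes]^\Phi_1 \to RPS^{\infty}_2 \to N^\Phi_{(Q, \psi_2)} \to [\ltimes]^\Phi_2$.

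Given this acyclic component structure, I would process $order^\Psi(N^\Phi)$ block by block. Executing $order(N^\Phi_Q)$ first makes $N^\Phi_Q$ consistent by Theorem~\ref{the:execution_order_consistency}; since no node of $N^\Phi_Q$ depends on any node outside $N^\Phi_Q$, this consistency is preserved by every later step. Executing $toposort(RPS^{\infty}_l)^{-1}$ then makes the request projection structure consistent, as its nodes depend only on the now-stable $p^\Phi_Q$ and form a tree. Applying the induction hypothesis to $order^\Psi(N^\Phi_{(Q', \psi')})$ yields consistency of the nested net; here I rely on the fact that both Theorem~\ref{the:execution_order_consistency} and the induction hypothesis guarantee consistency from an \emph{arbitrary} starting configuration, so the extra, now-fixed input edge from the $RPS^{\infty}$ marking assignment node into a union node of the nested net is absorbed as a constant contribution to the stable environment. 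Finally, executing $[\ltimes]^\Phi$ (respectively $[\rhd]^\Phi$) once makes the production node consistent, since both its dependencies are already consistent and nothing depends on it; the $\wedge$ case iterates the same pattern, executing $[\ltimes]^\Phi_1$ to consistency before $RPS^{\infty}_2$ reads from it.

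The main obstacle I anticipate lies in the third block: I must argue carefully that invoking the subnet's own order procedure establishes consistency of that subnet even though the subnet now carries an incoming dependency (the $RPS^{\infty}$) it did not have in isolation, and that this invocation never disturbs the already-established consistency of the upstream components. The former reduces to checking that the union node receiving the external request is re-executed after that request has stabilized and that every node downstream of it within the subnet is subsequently re-executed by $order^\Psi$ — which holds because the extra input is \emph{constant} throughout the block and therefore introduces no new dynamics beyond those the recursive order already accounts for. The latter reduces to the one-directionality of inter-component edges noted above. Making these two points precise, ideally by isolating a lemma stating that executing a consistency-establishing order for a subnet with fixed, consistent external inputs yields a configuration consistent for all subnet nodes while leaving upstream result sets untouched, is where the bulk of the technical care will be required.
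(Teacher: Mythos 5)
Your plan follows essentially the same route as the paper's proof: structural induction over $\psi$, executing $O = order^{\Psi}(N^\Phi)$ block by block (context net $N^\Phi_Q$ first, then the request projection structure via its reverse topological sorting, then the nested net by the induction hypothesis, then the top semi-/anti-join once), using the one-directional component-level data flow to argue that established consistency is never disturbed. The lemma you defer to the end is not an optional technicality, however — it is the statement the induction must be performed on. The paper formalizes it as Lemma~\ref{lem:localized_rete_condition_execution}, phrased in terms of \emph{modular extensions} (a net $X^\Phi \supseteq N^\Phi$ whose every edge between $N^\Phi$-nodes already lies in $N^\Phi$): executing $order^\Psi(N^\Phi)$ inside any such $X^\Phi$ makes all nodes of $N^\Phi$ consistent, and Theorem~\ref{the:localized_rete_ngcs_execution} is the special case $X^\Phi = N^\Phi$. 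Be careful with your interim justification that the extra $RPS^{\infty}$ edge is ``absorbed'' because the induction hypothesis holds for arbitrary starting configurations: that inference is not valid as stated, since a fixed external dependency changes the \emph{target result set} (and hence the consistency condition) of the extension-point union node, not merely its initial contents, so consistency in isolation and consistency when embedded are different predicates. This is precisely why the strengthened, modular-extension form must serve as the induction hypothesis, and also why the innermost block needs a modular-extension version of Theorem~\ref{the:execution_order_consistency} (the paper cites such a lemma from its preprint): when the recursion enters a nested net $N^\Phi_{(Q',\psi')}$, its plain subnet $localize(Q')$ itself carries the outer $RPS^{\infty}$ edge at one of its extension points, so even the ``base-case-like'' step cannot invoke the isolated theorem verbatim. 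With that restructuring — which is exactly what you gesture at in your final paragraph — your argument coincides with the paper's.
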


\begin{proof} (Idea)
Follows because all cyclical structures in $N^\Phi$ are encapsulated in localized RETE subnets for plain graph queries, for which $order$ creates an admissible ordering.
\end{proof}

It thus follows that a RETE net localized via $localize^\Psi$ can be executed to produce a set of matches that is correct under a given subgraph:

\begin{cor}[Execution of RETE nets localized via $localize^\Psi$ yields correct query results under the relevant subgraph]
Let $H$ be a graph, $H_p \subseteq H$, and $(Q, \psi)$ a graph query. Furthermore, let $\mathcal{C}^\Phi_0$ be an arbitrary configuration for the marking-sensitive RETE net $(N^\Phi, p^\Phi) = localize^\Psi(Q, \psi)$ and $\mathcal{C}^\Phi = execute(order^\Psi(N^\Phi), N^\Phi, H, H_p, \mathcal{C}^\Phi_0)$. The set of matches from $Q$ into $H$ given by $\resultsstripped{p^\Phi}{\mathcal{C}^\Phi}$ is then correct under $H_p$.
\end{cor}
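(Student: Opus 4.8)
The plan is to obtain this corollary with essentially no new work, by chaining the two immediately preceding theorems exactly as the plain-query Corollary~\ref{cor:completeness_execution} was derived from Theorem~\ref{the:completeness_consistent_configuration} and Theorem~\ref{the:execution_order_consistency}. The only content of the argument is a hypothesis-matching step followed by two applications.

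First I would invoke Theorem~\ref{the:localized_rete_ngcs_execution}. Its hypotheses are precisely those of the corollary: $(N^\Phi, p^\Phi) = localize^\Psi(Q, \psi)$, an arbitrary starting configuration $\mathcal{C}^\Phi_0$, and execution via $O = order^\Psi(N^\Phi)$. The theorem then yields that the resulting configuration $\mathcal{C}^\Phi = execute(O, N^\Phi, H, H_p, \mathcal{C}^\Phi_0)$ is consistent for $H$ with relevant subgraph $H_p$ in the sense of Definition~\ref{def:consistent_marking_sensitive_configuration}. This is the substantive input: it guarantees that the fixpoint over the localized net is actually reached despite the cyclic request projection structures, which relies on the marking-filter and marking-assignment nodes breaking cyclic propagation at the level of intermediate results.

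Second, I would feed this consistency into Theorem~\ref{the:localized_rete_ngcs_correctness}. Since $\mathcal{C}^\Phi$ is now known to be a consistent configuration for the same localized net $(N^\Phi, p^\Phi) = localize^\Psi(Q, \psi)$, that theorem directly asserts that the stripped result set $\resultsstripped{p^\Phi}{\mathcal{C}^\Phi}$ is correct under $H_p$ in the sense of Definition~\ref{def:correctness_subgraphs_ngcs}. That is exactly the claim of the corollary, so the proof closes.

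The key point to verify — and the only place any care is needed — is that the hypotheses of the two theorems align verbatim with one another and with the corollary: both theorems are stated for the net produced by $localize^\Psi(Q, \psi)$, both refer to execution via $order^\Psi(N^\Phi)$, and both quantify over arbitrary starting configurations, so no side conditions need to be reconciled. Consequently there is no genuine obstacle here; all the difficulty has been discharged inside Theorems~\ref{the:localized_rete_ngcs_correctness} and~\ref{the:localized_rete_ngcs_execution}, namely establishing that the inserted request projection structures trigger the computation of every intermediate result needed to evaluate each nested subquery, and that $order^\Psi$ nevertheless terminates in a consistent configuration. The corollary itself follows immediately.
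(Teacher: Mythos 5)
Your proposal is correct and matches the paper's own proof exactly: the paper derives this corollary by chaining Theorem~\ref{the:localized_rete_ngcs_execution} (execution via $order^\Psi$ yields a consistent configuration) with Theorem~\ref{the:localized_rete_ngcs_correctness} (consistent configurations yield results correct under $H_p$), precisely as you do. Your hypothesis-matching check is sound, and no additional argument is needed.
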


\begin{proof}
Follows directly from Theorems~\ref{the:localized_rete_ngcs_execution} and~\ref{the:localized_rete_ngcs_correctness}.
\end{proof}

As $localize^\Psi$ essentially mirrors the construction procedure for RETE nets for extended graph queries from \cite{barkowsky2023host}, each RETE net $(N^\Phi, p^\Phi) = localize^\Psi(Q, \psi)$ for some extended query $(Q, \psi)$ has a corresponding regular RETE net $(N, p)$ created via the procedure described in \cite{barkowsky2023host} that uses the same RETE net structure for $Q$ and all patterns in $\psi$. As for $localize$ and well-formed RETE nets for simple queries, we can therefore show that a consistent configuration for $(N^\Phi, p^\Phi)$ only exhibits a constant-factor increase in configuration size compared to $(N, p)$ in the worst case:

\begin{thm}[RETE net localization via $localize^\Psi$ introduces only a constant factor overhead on effective configuration size] \label{the:upper_bound_configuration_size_ngc}
Let $H$ be an edge-dominated graph, $H_p \subseteq H$, $(N, p)$ a RETE net created via the procedure described in \cite{barkowsky2023host} for the extended graph query $(Q, \psi)$, $\mathcal{C}$ a consistent configuration for $(N, p)$ for host graph $H$, and $\mathcal{C}^\Phi$ a consistent configuration for the marking-sensitive RETE net $(N^\Phi, p^\Phi) = localize^\Psi(Q, \psi)$ corresponding to $(N, p)$ for host graph $H$ and relevant subgraph $H_p$. It then holds that $\sum_{n^\Phi \in V^{N^\Phi}} \sum_{(m, \phi) \in \mathcal{C}^\Phi(n^\Phi)} |m| \leq 7 \cdot |\mathcal{C}|_e$.
\end{thm}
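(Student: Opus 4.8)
The plan is to prove the bound by induction over the structure of the nested graph condition $\psi$, reducing at each step to the already-established plain-query bound of Theorem~\ref{the:upper_bound_configuration_size}. Since the quantity on the left is a sum over $V^{N^\Phi}$ and effective configuration size is additive over disjoint sets of RETE nodes (Definition~\ref{def:effective_configuration_size}), I would first decompose both nets along the recursive constructions. By hypothesis $(N^\Phi, p^\Phi) = localize^\Psi(Q, \psi)$ and $(N, p)$ use the same underlying RETE structure for $Q$ and for every pattern occurring in $\psi$, so $N^\Phi$ splits into the localized plain subnet $N^\Phi_Q = localize(Q)$, one or two localized subcondition nets $localize^\Psi$, the inserted semi-join respectively anti-join node(s), and the associated $RPS^{\infty}$ structure(s), while $N$ splits analogously but without the $RPS^{\infty}$ structures. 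Writing $\Sigma^\Phi$ for the left-hand side, it then suffices to bound the contribution of each group of nodes against the corresponding group in $N$.

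For the base case $\psi = \texttt{true}$ we have $N^\Phi = N^\Phi_Q$ and $N$ a plain RETE net for $Q$, so the claim is exactly Theorem~\ref{the:upper_bound_configuration_size}. In the inductive step, the contribution of $N^\Phi_Q$ is bounded by $7$ times that of $N_Q$ again by Theorem~\ref{the:upper_bound_configuration_size}, and the contribution of each localized subcondition net is bounded by $7$ times that of its regular counterpart by the induction hypothesis. Throughout, I would rely on the fact shown in \cite{preprint} that markings in consistent marking-sensitive result sets are unique, so that each set contains no duplicate matches and its effective size coincides with that of its stripped counterpart; this rules out any inflation from multiply-marked matches.

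It then remains to charge the newly inserted nodes --- each semi-join or anti-join node $p^\Phi$ together with its $RPS^{\infty}$ structure(s) --- against the corresponding node of $N$, mirroring exactly the treatment of join nodes and their request projection structures in the proof of Theorem~\ref{the:upper_bound_configuration_size}. Each marking-sensitive semi-join or anti-join stores a subset of the matches for its left input, each of the same size as in the regular node, so its own contribution is bounded by that of the regular semi-join/anti-join node. For the three nodes of each $RPS^{\infty}$ structure I would exploit that they store only projections of their input onto a single shared vertex $Q_v$, hence matches of size $1$; combined with edge-domination (Definition~\ref{def:edge_dominated_graphs}), which prevents the number of such single-vertex matches from exceeding the number of corresponding larger matches already accounted for, these three projection nodes contribute only a bounded multiple, keeping the overall per-node correspondence at most $7$-to-$1$.

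The main obstacle I anticipate is precisely the charging of the $RPS^{\infty}$ nodes. Because an $RPS^{\infty}$ structure is fed by the production node of the plain subnet $N^\Phi_Q$ rather than by the semi-join output, it may store projections of matches for $Q$ that touch $H_p$ but do \emph{not} satisfy $\psi$, so its match count is \emph{not} bounded by the number of matches retained at the semi-join/anti-join node. The delicate part is therefore to set up the node-to-node correspondence so that these projection nodes are charged against the plain subnet's $Q$-matches --- where both the cardinality and size bounds hold via edge-domination and marking uniqueness --- while verifying that the additional semi-join/anti-join and $RPS^{\infty}$ nodes never push the correspondence beyond $7$-to-$1$; this bookkeeping, identical in spirit to the plain case, is where the argument must be carried out with care.
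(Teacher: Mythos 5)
Your overall strategy --- structural induction over $\psi$, decomposition of both nets along the recursive construction, bounding the plain subnet via Theorem~\ref{the:upper_bound_configuration_size} and the subcondition nets via the induction hypothesis, and charging each marking-sensitive semi-join/anti-join at factor $1$ against its regular counterpart --- is the paper's approach, and you have correctly located the crux: the $RPS^{\infty}$ nodes must be charged against the plain subnet's production node $p_Q$, since they may hold (projections of) matches for $Q$ that touch $H_p$ but do not satisfy $\psi$. However, the bookkeeping as you sketch it does not close, and your final paragraph acknowledges the difficulty without resolving it. If $N^\Phi_Q$ is bounded by $7 \cdot |\mathcal{C}|_{V^{N_Q}}|_e$ using Theorem~\ref{the:upper_bound_configuration_size} as a black box, then adding the $RPS^{\infty}_l$ contribution --- which is $\leq 2 \cdot |\mathcal{C}|_{\{p_Q\}}|_e$, since the marking filter node stores \emph{full-size} matches of $p_Q$ while only the projection and assignment nodes store single-vertex matches of at most half the size (your claim that all three nodes store size-$1$ matches is incorrect on this point) --- yields a factor of $9$ on the production node, breaking the claimed bound. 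The ingredient you are missing is the paper's strengthened plain-query bound (Lemma~\ref{lem:upper_bound_configuration_size}): localization via $localize$ costs at most factor $7$ on non-production nodes but only factor $5$ on the production node, precisely because inside a localized plain net the production node is never attached to a request projection structure. That slack of $2$ at $p_Q$ is exactly what absorbs $RPS^{\infty}_l$; the same device recurs in the conjunction case, where $RPS^{\infty}_2$ (factor $2$) plus the marking-sensitive semi-join itself (factor $1$) are charged against the regular node $[\ltimes]_1$, giving $3 \leq 7$.

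A second point your proposal glosses over: the induction hypothesis is stated for consistent configurations of the \emph{standalone} net $localize^\Psi(Q', \psi')$, but inside the composite net the extension points of that subnet receive additional input from $RPS^{\infty}_l$, so the restriction of $\mathcal{C}^\Phi$ to the subnet is not a consistent configuration of the standalone subnet and the hypothesis does not literally apply. The paper bridges this with a worst-case observation: since every marking-sensitive result set is a subset of $\allmatches{\cdot}{H} \times \overline{\mathbb{N}}$ with unique markings, the induction-hypothesis bound --- measured against the globally complete regular configuration --- already covers the situation in which the extension points contain every possible single-vertex match into $H$, with or without the extra RPS input. Finally, your appeal to edge-domination for the RPS charging is misplaced: edge-domination is needed to compare vertex-input against edge-input result sets inside the local navigation structures (i.e., within the plain-query bound), whereas the factor-$2$ RPS bound follows from marking uniqueness and the fact that single-vertex matches have at most half the size of their source matches.
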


\begin{proof} (Idea)
Follows because each node in $N$ has only a constant number of associated nodes with result sets containing corresponding matches in $N^\Phi$.
\end{proof}

Consequently, localization via $localize^\Psi$ then also incurs at most a constant-factor overhead on memory consumption:

\begin{cor}[RETE net localization via $localize^\Psi$ introduces only a constant factor overhead on memory consumption]
Let $H$ be an edge-dominated graph, $H_p \subseteq H$, $(N, p)$ a RETE net created via the procedure described in \cite{barkowsky2023host} for the extended graph query $(Q, \psi)$, $\mathcal{C}$ a consistent configuration for $(N, p)$ for host graph $H$, and $\mathcal{C}^\Phi$ a consistent configuration for the marking-sensitive RETE net $(N^\Phi, p^\Phi) = localize^\Psi(Q, \psi)$ corresponding to $(N, p)$ for host graph $H$ and relevant subgraph $H_p$. Assuming that storing a match $m$ requires an amount of memory in $O(|m|)$ and storing an element from $\overline{\mathbb{N}}$ requires an amount of memory in $O(1)$, storing $\mathcal{C}^\Phi$ requires memory in $O(|\mathcal{C}|_e)$.
\end{cor}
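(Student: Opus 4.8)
The plan is to derive the bound directly from Theorem~\ref{the:upper_bound_configuration_size_ngc}, which already controls the total size of all matches stored in $\mathcal{C}^\Phi$. First I would decompose the memory required to store $\mathcal{C}^\Phi$ into two contributions: the memory for the matches themselves and the memory for their associated markings. For a node $n^\Phi \in V^{N^\Phi}$ and a tuple $(m, \phi) \in \mathcal{C}^\Phi(n^\Phi)$, storing $m$ costs $O(|m|)$ and storing $\phi \in \overline{\mathbb{N}}$ costs $O(1)$ by assumption, so the total memory is bounded by a constant multiple of $\sum_{n^\Phi \in V^{N^\Phi}} \sum_{(m, \phi) \in \mathcal{C}^\Phi(n^\Phi)} (|m| + 1)$.

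For the match-storage contribution, Theorem~\ref{the:upper_bound_configuration_size_ngc} immediately yields $\sum_{n^\Phi \in V^{N^\Phi}} \sum_{(m,\phi) \in \mathcal{C}^\Phi(n^\Phi)} |m| \leq 7 \cdot |\mathcal{C}|_e$, which is in $O(|\mathcal{C}|_e)$. For the marking-storage contribution, I would bound the total number of tuples in $\mathcal{C}^\Phi$, namely $\sum_{n^\Phi \in V^{N^\Phi}} |\mathcal{C}^\Phi(n^\Phi)|$. Since every match $m : Q \rightarrow H$ satisfies $|m| = |V^Q| + |E^Q| \geq 1$ (query graphs being non-empty), this count is itself bounded by $\sum_{n^\Phi} \sum_{(m,\phi)} |m| \leq 7 \cdot |\mathcal{C}|_e$, so the total cost of storing all markings is again in $O(|\mathcal{C}|_e)$. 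Adding the two bounds establishes that storing $\mathcal{C}^\Phi$ requires memory in $O(|\mathcal{C}|_e)$, as claimed.

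I do not foresee a genuine obstacle, as the statement mirrors the analogous corollary for plain queries in Section~\ref{sec:localized_rete_performance} and reduces to a routine application of Theorem~\ref{the:upper_bound_configuration_size_ngc}. The only point requiring a moment of care is observing that $|m| \geq 1$, so that the per-tuple $O(1)$ cost of the markings can be absorbed into the already-bounded sum of match sizes rather than being charged separately; this is what prevents the marking contribution from dominating in configurations containing many small matches.
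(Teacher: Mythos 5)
Your proposal is correct and follows essentially the same route as the paper, whose proof is simply that the corollary follows directly from Theorem~\ref{the:upper_bound_configuration_size_ngc} and the stated memory assumptions. Your additional observation that $|m| \geq 1$ lets the per-tuple $O(1)$ marking cost be absorbed into the bounded sum of match sizes is exactly the routine bookkeeping the paper leaves implicit.
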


\begin{proof}
Follows directly from Theorem~\ref{the:upper_bound_configuration_size_ngc} and the assumptions.
\end{proof}

Finally, by a similar argumentation as for $localize$, $localize^\Psi$ also only leads to an increase in computational complexity by factor $\frac{|Q|}{|Q_a|}$, with $|Q_a|$ the average size of matches in a consistent configuration for $(N, p)$:

\begin{thm}[Execution time overhead introduced by RETE net localization via $localize^\Psi$ depends on query graph size compared to average match size]
Let $H$ be an edge-dominated graph, $H_p \subseteq H$, $(N, p)$ a RETE net created via the procedure described in \cite{barkowsky2023host} for the extended graph query $(Q, \psi)$, $\mathcal{C}$ a consistent configuration for $(N, p)$ for host graph $H$, and $\mathcal{C}^\Phi_0$ the empty configuration for the marking-sensitive RETE net $(N^\Phi, p^\Phi) = localize^\Psi(Q, \psi)$ corresponding to $(N, p)$. Executing $(N^\Phi, p^\Phi)$ via $execute(order^\Psi(N^\Phi), N^\Phi, H, H_p, \mathcal{C}^\Phi_0)$ then takes $O(T \cdot |Q_{max}|)$ steps, with $T = \sum_{n \in V^N} |\mathcal{C}(n)|$ and $|Q_{max}|$ the size of the largest graph among $Q$ and graphs in $\psi$.
\end{thm}

\begin{proof} (Idea)
Follows from Theorem~\ref{the:complexity_time_localized} and the fact that all semi-joins, anti-joins, and request projection structures not encapsulated in RETE subnets for plain graph queries in $N^\Phi$ are only executed once.
\end{proof}

Interestingly, only a subset of the vertex input nodes in a localized RETE net for an extended graph query are required for the net to produce correct results with respect to Definition~\ref{def:correctness_subgraphs_ngcs}. Specifically, all vertex input nodes except those in the local navigation structures of the RETE net for the query pattern created in the topmost recursive call of $localize^\Psi$ are ultimately superfluous. By removing these superfluous vertex input nodes, localization via $localize^\Psi$ may then actually improve performance even in cases where the relevant subgraph coincides with the full host graph.

Essentially, the resulting RETE net no longer necessarily computes all matches for all subpatterns of a query's nested graph condition. Instead, match computation for these subpatterns is actually guided by requests for complementarity resulting from matches for the base pattern. Restrictive base patterns with few matches may then create a similar effect to localized execution, with only a subset of matches for subpatterns in nested graph conditions being computed in order to complement existing matches for the base pattern. This is similar to what can be achieved with local-search-based querying techniques but with the advantage of enabling incremental execution. As with localization, the effectiveness however depends on the exact query and host graph.

\subsection{Localized Detection of NGC Satisfaction Changes} \label{sec:localized_rete_satisfaction_changes}

While RETE nets constructed via $localize^\Psi$ can be used to locally and incrementally compute matches for an extended query that touch a relevant subgraph, the notion of completeness in Definition~\ref{def:correctness_subgraphs_ngcs} no longer covers the use case of correctly tracking the full effect of local changes on a global query result. This is due to the fact that for extended queries, local changes may result in changes to the satisfaction of a nested graph condition for matches that do not touch the relevant subgraph, effectively manipulating query results that do not fall under Definition~\ref{def:correctness_subgraphs_ngcs}.

To track the effect of local host graph changes on the global set of results for an extended graph query, we thus also have to maintain the set of matches for which a modification of the relevant subgraph may change the nested graph condition's satisfaction. We ideally also want to precompute this set of matches before the host graph is modified to enable incremental maintenance. We thus have no prior knowledge regarding the modifications we may have to consider except for the fact that they concern the relevant subgraph.

This proves problematic when considering a case where the relevant subgraph constitutes the entire host graph as an example. In this scenario, checking whether an arbitrary modification to the relevant subgraph may change the satisfaction of a nested graph condition for some context match corresponds to checking satisfiability of the nested graph condition.

Notably however, nested graph conditions have been shown to be equivalent to first-order logic over graphs \cite{habel2009correctness,rensink2004representing}. The satisfiability problem for first-order logic being undecidable thus makes a direct computation of matches for which NGC satisfaction somehow depends on the relevant subgraph impossible in the general case.

In order to still enable a localized tracking of changes to the results of an extended graph query, we therefore employ the following overapproximation for the set of matches where NGC satisfaction may change as a result of local modifications:

\begin{defi}[Subgraph Satisfaction Dependence] \label{def:subgraph_satisfaction_dependence} 
We say that a match $m : Q \rightarrow H$ for an extended graph query $(Q, \psi)$ into a host graph $H$ with relevant subgraph $H_p \subseteq H$ is \emph{subgraph satisfaction dependent} if

\begin{itemize}
\item $\psi = \exists (a: Q \rightarrow Q', \psi')$ and there exists a graph morphism $m' : Q' \rightarrow H$ such that $m = m' \circ a$ and $m'(Q') \cap H_p \neq \emptyset \text{ or } m' \text{ is subgraph satisfaction dependent}$
\item $\psi = \neg\psi'$ and $m$ is subgraph satisfaction dependent for $(Q, \psi')$
\item $\psi = \psi_1 \wedge \psi_2$ and $m$ is subgraph satisfaction dependent for $(Q, \psi_1)$ or $(Q, \psi_2)$
\end{itemize}

If $\psi = \texttt{true}$, $m$ is never subgraph satisfaction dependent.
\end{defi}

We furthermore define a subgraph-restricted, that is, local graph modification based on the basic notion of graph modifications introduced in Section~\ref{sec:preliminaries_graphs}. In a subgraph-restricted graph modification, changes are limited to relevant subgraphs of source and target graph and relevant subgraphs are mapped onto each other:

\begin{defi}[Subgraph-restricted Graph Modification] \label{def:subgraph_restricted_graph_modification} 
We say that a graph modification $H \xleftarrow{f} K \xrightarrow{g} H'$ from a graph $H$ with relevant subgraph $H_p$ into a modified graph $H'$ with relevant subgraph $H'_p$ is \emph{subgraph-restricted} if $iso_s = g \circ f^{-1}|_{H_s} = (f \circ g^{-1}|_{H'_s})^{-1}$ is an isomorphism between $H_s = \{V^{H} \setminus (V^{H_p} \setminus V^{H_c}), E^{H} \setminus E^{H_p}, s^{H}|_{E^{H} \setminus E^{H_p}}, t^{H}|_{E^{H} \setminus E^{H_p}}\}$ and $H'_s = \{V^{H'} \setminus (V^{H'_p} \setminus V^{H'_c}), E^{H'} \setminus E^{H'_p}, s^{H'}|_{E^{H'} \setminus E^{H'_p}}, t^{H'}|_{E^{H'} \setminus E^{H'_p}}\}$, where $V^{H_c} = \{v_c \in V^{H_p} \mid e \in E^{H} \setminus E^{H_p} : s^H(e) = v_c \vee t^{H}(e) = v_c\}$ and with $V^{H'_c}$ defined analogously. Furthermore, it must hold that $(g \circ f^{-1})(H_p) \subseteq H'_p \wedge (f \circ g^{-1})(H'_p) \subseteq H_p$.
\end{defi}

\begin{figure}
\begin{center}

\begin{tikzpicture}

\node (h) [graph] {$H$};

\node [right=3cm of h] (k) [graph] {$K$};

\node [right=3cm of k] (h_prime) [graph] {$H'$};

\node[below right = 1.25cm and 0.5cm of h] (h_p) [graph] {$H_p$};

\node[below left = 1.25cm and 0.5cm of h] (h_s) [graph] {$H_s$};

\node[below left = 1.25cm and 0.5cm of h_prime] (h_p_prime) [graph] {$H'_p$};

\node[below right = 1.25cm and 0.5cm of h_prime] (h_s_prime) [graph] {$H'_s$};

\draw [arrows={Hooks[left]-Stealth}] (k) -- (h) node[midway,auto,swap] {f};

\draw [arrows={Hooks[right]-Stealth}] (k) -- (h_prime) node[midway,auto] {g};

\draw (h) -- (h_s) node[midway,inner sep=0] (j1) {};

\draw (h) -- (h_p) node[midway,inner sep=0] (j2) {};

\draw[bend right] (j1) to node [midway,auto] (union) {$\cup$} (j2);

\draw (h_prime) -- (h_p_prime) node[midway,inner sep=0] (j1_prime) {};

\draw (h_prime) -- (h_s_prime) node[midway,inner sep=0] (j2_prime) {};

\draw [bend right] (j1_prime) to node [midway,auto] (union_prime) {$\cup$} (j2_prime);

\node[below right = 1.25cm and 0.5cm of h_s] (h_c) [graph] {$H_c$};

\draw (h_s) -- (h_c) node[midway,inner sep=0] (j3) {};

\draw (h_p) -- (h_c) node[midway,inner sep=0] (j4) {};

\draw[bend right] (j3) to node [midway,auto] (intersection) {$\cap$} (j4);

\node[below left = 1.25cm and 0.5cm of h_s_prime] (h_c_prime) [graph] {$H'_c$};

\draw (h_s_prime) -- (h_c_prime) node[midway,inner sep=0] (j3_prime) {};

\draw (h_p_prime) -- (h_c_prime) node[midway,inner sep=0] (j4_prime) {};

\draw[bend left] (j3_prime) to node [midway,auto,swap] (intersection_prime) {$\cap$} (j4_prime);


\draw [bend left,arrows={-Stealth[left]}] (h_p) to node[midway,auto] {$(g \circ f^{-1})|_{H_p}$} (h_p_prime);

\draw [bend left,arrows={-Stealth[left]}] (h_p_prime) to node[midway,auto,swap] {$(f \circ g^{-1})|_{H'_p}$} (h_p);

\draw [bend right,out=-90,in=-90,arrows={Stealth-Stealth}] (h_s) to node[midway,auto,swap] {$iso_s = g \circ f^{-1}|_{H_s} = (f \circ g^{-1}|_{H'_s})^{-1}$} (h_s_prime);

\end{tikzpicture}
\end{center}
\caption{Relationship between graphs in Definition~\ref{def:subgraph_restricted_graph_modification}} \label{fig:subgraph_restricted_graph_modification}
\end{figure}

The relationship between the graphs in the above definition is visualized in the diagram in Figure~\ref{fig:subgraph_restricted_graph_modification}. $H$ is the source graph of the modification, which is modified into $H'$ via the intermediate graph $K$. $H_p$ and $H'_p$ are the relevant subgraphs of $H$ respectively $H'$, which have to be mapped onto each other via $f \circ g^{-1}$ and $g \circ f^{-1}$. $H_s$ and $H'_s$ constitute the remainder of $H$ and $H'$ and are required to be isomorphic, which means that the subgraph-restricted graph modification cannot make changes outside the relevant subgraph. Note that $H_s$ and $H'_s$ may share nodes with $H_p$ respectively $H'_p$ if they are adjacent to edges in $H \setminus H_p$ respectively $H' \setminus H'_p$ to avoid dangling edges. $H_c$ and $H'_c$ then are the discrete graphs consisting of these shared nodes $V^{H_c}$ respectively $V^{H'_c}$ and no edges.

The subgraph satisfaction dependent matches for an extended graph query are then a superset of all matches for which the satisfaction of the query's nested graph condition may change as a result of a subgraph-restricted graph modification. Essentially, a change in nested graph condition satisfaction requires a match to be subgraph satisfaction dependent before or after the host graph modification:

\begin{thm}[Subgraph-restricted graph modifications can only change NGC satisfaction for subgraph satisfaction dependent matches] \label{the:subgraph_satisfaction_dependence_necessary} 
Let $(Q, \psi)$ be an extended graph query and $H \xleftarrow{f} K \xrightarrow{g} H'$ a subgraph-restricted graph modification from host graph $H$ with relevant subgraph $H_p$ into the modified host graph $H'$ with relevant subgraph $H'_p$. It then holds for any graph morphisms $m_K : Q \rightarrow K$ that $(f \circ m_K \models \psi \wedge g \circ m_K \not\models \psi) \vee (f \circ m_K \not\models \psi \wedge g \circ m_K \models \psi) \Rightarrow f \circ m_K \text{ is subgraph satisfaction dependent } \text{ or } g \circ m_K \text{ is subgraph satisfaction dependent}$.
\end{thm}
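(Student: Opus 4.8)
The plan is to prove the statement by structural induction on the nested graph condition $\psi$, keeping the subgraph-restricted modification $H \xleftarrow{f} K \xrightarrow{g} H'$ fixed throughout. Writing $m_H = f \circ m_K$ and $m_{H'} = g \circ m_K$, the goal in each case is: if exactly one of $m_H, m_{H'}$ satisfies $\psi$, then $m_H$ is subgraph satisfaction dependent in $H$ or $m_{H'}$ is subgraph satisfaction dependent in $H'$. The cases $\psi = \texttt{true}$ (satisfaction never differs, so the premise is vacuously false), $\psi = \neg\psi'$ (satisfaction of $\neg\psi'$ differs iff satisfaction of $\psi'$ differs, and by Definition~\ref{def:subgraph_satisfaction_dependence} the two dependence notions coincide), and $\psi = \psi_1 \wedge \psi_2$ (a differing conjunction forces a differing conjunct, to which the induction hypothesis applies, and dependence for one conjunct yields dependence for the conjunction) are all routine reductions to the induction hypothesis.

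The substantial case is $\psi = \exists(a : Q \to Q', \psi')$. Using the symmetry of Definition~\ref{def:subgraph_restricted_graph_modification} in $H$ and $H'$, I would assume $m_H \models \psi$ and $m_{H'} \not\models \psi$, and fix a witness $m'_H : Q' \to H$ with $m_H = m'_H \circ a$ (resp.\ $m_H|_{Q_p} = m'_H \circ a$ for partial $a$) and $m'_H \models \psi'$. I then distinguish two subcases. If $m'_H(Q') \cap H_p \neq \emptyset$, then $m'_H$ itself witnesses that $m_H$ is subgraph satisfaction dependent via the first disjunct of Definition~\ref{def:subgraph_satisfaction_dependence}, and we are done. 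Otherwise $m'_H$ avoids $H_p$ entirely, so (since $H_p$ is a subgraph, image edges with endpoints outside $V^{H_p}$ cannot lie in $E^{H_p}$) its image lies in the stable part $H_s$ of $H$, and I would transport it along the isomorphism $iso_s$ to obtain the candidate $m'_{H'} = iso_s \circ m'_H : Q' \to H'_s \subseteq H'$.

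The key step is to verify that $m'_{H'}$ is an admissible existential witness candidate in $H'$ and, more importantly, that $m'_H$ and $m'_{H'}$ arise from a common morphism into $K$, so that the induction hypothesis becomes applicable. Since nothing in $H_s$ is deleted, $H_s \subseteq f(K)$ and $f^{-1}|_{H_s}$ is defined; setting $m'_K = f^{-1}|_{H_s} \circ m'_H : Q' \to K$ gives $f \circ m'_K = m'_H$ and $g \circ m'_K = iso_s \circ m'_H = m'_{H'}$, using $iso_s = g \circ f^{-1}|_{H_s}$. Left-cancelling the monomorphism $f$ in $f \circ m_K = m_H = m'_H \circ a = f \circ m'_K \circ a$ yields $m_K = m'_K \circ a$, whence $m_{H'} = g \circ m_K = m'_{H'} \circ a$, so $m'_{H'}$ is indeed a candidate for the existential in $H'$. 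As $m_{H'} \not\models \psi$, this forces $m'_{H'} \not\models \psi'$, while $m'_H \models \psi'$; thus satisfaction of $\psi'$ differs for the pair $(f \circ m'_K, g \circ m'_K)$. Applying the induction hypothesis to $(Q', \psi')$ with the morphism $m'_K$ yields that $m'_H$ or $m'_{H'}$ is subgraph satisfaction dependent for $(Q', \psi')$; via the second disjunct of Definition~\ref{def:subgraph_satisfaction_dependence} (using $m_H = m'_H \circ a$, resp.\ $m_{H'} = m'_{H'} \circ a$) this lifts to subgraph satisfaction dependence of $m_H$ in $H$ or $m_{H'}$ in $H'$ for $(Q, \psi)$.

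I expect the main obstacle to be exactly this existential case, specifically the bookkeeping that a witness not touching $H_p$ lives in $H_s$, that its $iso_s$-image is a legitimate existential witness in $H'$ (which needs the monomorphism cancellation to re-establish the factorization $m_{H'} = m'_{H'} \circ a$), and that both witnesses factor through the same $m'_K : Q' \to K$ so that the induction hypothesis applies to a genuine pair $(f \circ m'_K, g \circ m'_K)$ rather than to two unrelated morphisms. The connective and negation cases, and the final lifting through the two disjuncts of the dependence definition, should then be straightforward.
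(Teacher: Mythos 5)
Your proposal is correct and follows essentially the same route as the paper's proof: structural induction over $\psi$ with routine reductions for $\texttt{true}$, $\neg$, and $\wedge$, and, in the existential case, a dichotomy on the witness (your ``touches $H_p$'' versus the paper's ``$iso_s \circ m_a$ is not a valid morphism into $H'$'' are contrapositive formulations of the same split), followed by transporting the witness through $K$ via $iso_s$ and applying the induction hypothesis. If anything, your write-up is slightly more explicit than the paper's, which asserts the existence of the common preimage $m_{K_a}$ and the factorization $g \circ m_K = m_a' \circ a$ without spelling out the $H_s \subseteq f(K)$ observation and the monomorphism-cancellation argument that you provide.
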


\begin{proof} (Idea)
Can be shown via induction over $\psi$, exploiting the fact that host graph modifications can only directly change the satisfaction of existential subconditions.
\end{proof}

A RETE net for locally computing all subgraph satisfaction dependent matches for an extended graph query $(Q, \psi)$ can then be constructed recursively via the procedure $localize^{sat}$, which is visualized in Figure~\ref{fig:localize_sat} and works as follows:

\begin{itemize}
\item For a nested graph condition of the form $\psi = true$, $(N^{sat}, p^{sat}) = localize^{sat}(Q, \psi)$ consists of a single dummy RETE node $p^{sat} = [\emptyset]^\Phi$, whose result set is always empty.

\item For a nested graph condition of the form $\psi = \exists (a: Q \rightarrow Q', \psi')$, $(N^{sat}, p^{sat}) = localize^{sat}(Q, \psi)$ consists of the localized RETE net for the plain pattern $Q$, $(N^\Phi_Q, p^\Phi_Q) = localize(Q)$, the localized RETE net for the plain pattern $Q'$, $(N^\Phi_{Q'}, p^\Phi_{Q'}) = localize(Q')$, the RETE net $(N^{sat}_{(Q', \psi')}, p^{sat}_{(Q', \psi')}) = localize^{sat}(Q', \psi')$, a marking-sensitive union node $[\cup]^\Phi$ with dependencies $p^\Phi_{Q'}$ and $p^{sat}_{(Q', \psi')}$, a request projection structure given by $RPS^{\infty}_r = RPS^{\infty}([\cup]^\Phi, N^\Phi_Q)$, and a marking-sensitive semi-join node $p^{sat} = [\ltimes]^\Phi$ along $a$ with left dependency $p^\Phi_Q$ and right dependency $[\cup]^\Phi$.

\item For a nested graph condition of the form $\psi = \neg\psi'$, $(N^{sat}, p^{sat}) = localize^{sat}(Q, \psi) = localize^{sat}(Q, \psi')$.

\item For a nested graph condition of the form $\psi = \psi_1 \wedge \psi_2$, $(N^{sat}, p^{sat}) = localize^{sat}(Q, \psi)$ consists of the RETE nets $(N^{sat}_{(Q, \psi_1)}, p^{sat}_{(Q, \psi_1)}) = localize^{sat}(Q, \psi_1)$ and $(N^{sat}_{(Q, \psi_2)}, p^{sat}_{(Q, \psi_2)}) = localize^{sat}(Q, \psi_2)$ and a marking-sensitive union node $p^{sat} = [\cup]^\Phi$ with dependencies $p^{sat}_{(Q, \psi_1)}$ and $p^{sat}_{(Q, \psi_2)}$.
\end{itemize}

\begin{figure} [t]
	\centering
	\begin{subfigure}[b]{0.36\textwidth}
		\begin{center}

\begin{tikzpicture}

\node[rectangle,
	minimum width = 4.15cm, 
	minimum height = 3.25cm] (r) at (0,0) {};

\node [below = -3.25cm of r] [ms_retenode] {$\emptyset$};

\end{tikzpicture}
		\end{center}
	\end{subfigure}\hspace{4pt}\vline\hspace{4pt}
	\begin{subfigure}[b]{0.55\textwidth}
		\begin{center}

\begin{tikzpicture}

\node[rectangle,
	minimum width = 6.85cm, 
	minimum height = 3.25cm] (r) at (0,0) {};

\node [below left = -3.25cm and -3.65cm of r] (join) [ms_retenode] {$\ltimes$};

\node [below left = 0.35cm and 0.25cm of join, dashed] (q) [ms_retenode] {$N^{\Phi}_{Q}$};

\node [below right = 0.35cm and 0.25cm of join] (union) [ms_retenode] {$\cup$};

\node [below = 0.35cm of q, dashed] (rpsr) [ms_retenode] {$RPS^{\infty}_r$};

\node [below left = 0.35cm and -0.5cm of union, dashed] (qp) [ms_retenode] {$N^{\Phi}_{Q'}$};

\node [below right = 0.35cm and -0.5cm of union, dashed] (qp_sat) [ms_retenode] {$N^{sat}_{(Q', \psi')}$};

\draw [-{Latex}] (join) -- (q);

\draw [-{Latex}] (join) -- (union);

\draw [-{Latex}] (union) -- (qp);

\draw [-{Latex}] (union) -- (qp_sat);

\draw [-{Latex}] (rpsr) -- (union);

\draw [-{Latex}] (q) -- (rpsr);

\end{tikzpicture}
		\end{center}
	\end{subfigure}

	\hrule

	\begin{subfigure}[b]{0.36\textwidth}
		\begin{center}

\begin{tikzpicture}

\node[rectangle,
	minimum width = 4.15cm, 
	minimum height = 2.2cm] (r) at (0,0) {};

\node [below = -1.95cm of r, dashed] [ms_retenode] {$N^{sat}_{(Q, \psi')}$};

\end{tikzpicture}
		\end{center}
	\end{subfigure}\hspace{4pt}\vline\hspace{4pt}
	\begin{subfigure}[b]{0.55\textwidth}
		\begin{center}

\begin{tikzpicture}

\node[rectangle,
	minimum width = 6.85cm, 
	minimum height = 2.2cm] (r) at (0,0) {};

\node [below = -1.95cm of r] (union) [ms_retenode] {$\cup$};

\node [below left = 0.35cm and 0.25cm of union, dashed] (q_psi1) [ms_retenode] {$N^{sat}_{(Q, \psi_1)}$};

\node [below right = 0.35cm and 0.25cm of union, dashed] (q_psi2) [ms_retenode] {$N^{sat}_{(Q, \psi_2)}$};

\draw [-{Latex}] (union) -- (q_psi1);

\draw [-{Latex}] (union) -- (q_psi2);

\end{tikzpicture}
		\end{center}
	\end{subfigure}
	\caption{Results of applying $localize^{sat}$ to an extended graph query $(Q, \psi)$ with nested graph conditions of the form $\psi = \texttt{true}$ (top left), $\psi = \exists(a : Q \rightarrow Q', \psi')$ (top right), $\psi = \neg\psi'$ (bottom left), and $\psi = \psi_1 \wedge \psi_2$ (bottom right)} \label{fig:localize_sat}
\end{figure}

A localized RETE net for computing subgraph-satisfaction dependent matches for the example query in Figure~\ref{fig:example_query_and_rete_ngc} is displayed in Figure~\ref{fig:localized_rete_net_sat} alongside a consistent configuration for the host graph in Figure~\ref{fig:example_host_graph_sat}. It consists of the localized RETE net in Figure~\ref{fig:localized_rete_net} and the structure created for handling the nested graph condition according to Figure~\ref{fig:localize_sat}.

Notably, there exist no matches for the base pattern $Q$ that touch the relevant subgraph, which only consists of the interface node $i_1$. Instead, the computation of matches for $Q$ is controlled via the request projection structure that feeds matches for the right dependency of the semi-join node into the localized RETE net for $Q$.

Compared to Figure~\ref{fig:localized_rete_net_ngc}, propagation of matches is essentially reversed: First, the local navigation structure $LNS^{c \rightarrow i}$ extracts all matches for the pattern $Q'$ that touch the relevant subgraph. The request projection structure connected to the union node then causes the local navigation structure $LNS^{c \rightarrow f}$ to extract the match $m_{1.2}$. The leftmost request projection structure finally ensures that the required complementary match $m_{1.1}$ is extracted by the local navigation structure $LNS^{p \rightarrow c}$.

\begin{figure}[t]

\begin{tikzpicture}

\node (semijoin) [ms_retenode] {$\ltimes$};
\node[below = 0.0cm of semijoin]  (semijoin_config) [reteconfig] {$(m_{1}, \infty)$};


\node[below left = 0.35cm and 2.75cm of semijoin_config] (join) [ms_retenode] {$\bowtie$};
\node[below = 0.0cm of join]  (join_config) [reteconfig] {$(m_{1}, \infty)$};

\node[below left = 0.35cm and 0.25cm of join_config, dashed] (ce) [ms_retenode] {$LNS^{p \rightarrow c}$};
\node[below = 0.0cm of ce]  (ce_config) [reteconfig] {$(m_{1.1}, 1)$};

\node[below right = 0.35cm and 0.25cm of join_config, dashed] (fe) [ms_retenode, minimum width = 1.8cm] {$LNS^{c \rightarrow f}$};
\node[below = 0.0cm of fe]  (fe_config) [reteconfig, minimum width = 1.8cm] {$(m_{1.2}, \infty)$};

\node[below = 0.35cm of ce_config, dashed] (rps1) [ms_retenode] {$RPS_c$};
\node[below = 0.0cm of rps1]  (rps1_config) [reteconfig] {$(m_{1.2.1}, 1)$};

\node[below left = 0.35cm and -0.5cm of fe_config, dashed] (rps2) [ms_retenode] {$RPS_c$};
\node[below = 0.0cm of rps2]  (rps2_config) [reteconfig] {$$};

\node[below right = 0.35cm and -0.5cm of fe_config, dashed] (rps3) [ms_retenode] {$RPS^{\infty}_c$};
\node[below = 0.0cm of rps3]  (rps3_config) [reteconfig] {$(m_{1.2.1}, \infty)$};

\draw [-{Latex}] (semijoin_config) -- (join);

\draw [-{Latex}] (join_config) -- (ce);

\draw [-{Latex}] (join_config) -- (fe);

\draw [-{Latex}] (fe_config) -- (rps2);

\draw [-{Latex}] (rps2) -- (ce_config);

\draw [-{Latex}] (ce_config) -- (rps1);

\draw [-{Latex}] (rps1) -- (fe_config);


\node[below right = 0.35cm and 2.75cm of semijoin_config] (union) [ms_retenode] {$\cup$};
\node[below = 0.0cm of union]  (union_config) [reteconfig] {$(m_{1.3}, \infty)$};

\node[below left = 0.35cm and -0.5cm of union_config, dashed] (ie) [ms_retenode] {$LNS^{c \rightarrow i}$};
\node[below = 0.0cm of ie]  (ie_config) [reteconfig] {$(m_{1.3}, \infty)$};

\node[below right = 0.35cm and -0.5cm of union_config] (empty) [ms_retenode] {$\emptyset$};
\node[below = 0.0cm of empty]  (empty_config) [reteconfig] {$$};

\draw [-{Latex}] (semijoin_config) -- (union);

\draw [-{Latex}] (union_config) -- (ie);

\draw [-{Latex}] (union_config) -- (empty);

\draw [-{Latex}] (fe_config) -- (rps3);

\draw [-{Latex}] (rps3) -- (-1.2, -1.85) -- (union);

\end{tikzpicture}
	\caption{Localized RETE net for computing subgraph-satisfaction dependent matches for the example query in Figure~\ref{fig:example_query_and_rete_ngc} with a consistent configuration for the host graph in Figure~\ref{fig:example_host_graph_sat}} \label{fig:localized_rete_net_sat}
\end{figure}

\begin{figure}[t]

\begin{tikzpicture}

\node (r1) [vertex] {\textit{r\textsubscript{1}:Proj}};


\node[below left = 1.25cm and 1.25cm of r1] (p1) [vertex] {\textit{p\textsubscript{1}:Pkg}};

\node[below = 1.25cm of p1] (c1) [vertex] {\textit{c\textsubscript{1}:Class}};

\node[below = 1.25cm of c1] (f1) [vertex] {\textit{f\textsubscript{1}:Field}};

\node[left = 1.75cm of c1, line width = 4pt] (i1) [vertex] {\textbf{\textit{i\textsubscript{1}:Intf}}};

\draw [-{Latex}] (r1) -- (p1) node [midway, right] {\textit{pe}};

\draw [-{Latex}] (p1) -- (c1) node [midway, right] {\textit{ce}};

\draw [-{Latex}] (p1) -- (c1) node [midway, right] {\textit{ce}};

\draw [-{Latex}] (c1) -- (f1) node [midway, right] {\textit{fe}};

\draw [-{Latex}] (c1) -- (i1) node [midway, above] {\textit{ie}};


\node[below right = 1.25cm and 1.25cm of r1] (p2) [vertex] {\textit{p\textsubscript{2}:Pkg}};

\node[below = 1.25cm of p2] (c2) [vertex] {\textit{c\textsubscript{2}:Class}};

\node[below = 1.25cm of c2] (f2) [vertex] {\textit{f\textsubscript{2}:Field}};

\draw [-{Latex}] (r1) -- (p2) node [midway, right] {\textit{pe}};

\draw [-{Latex}] (p2) -- (c2) node [midway, right] {\textit{ce}};

\draw [-{Latex}] (c2) -- (f2) node [midway, right] {\textit{fe}};


\node[above left = -1.8cm and -2.75cm of f1, rectangle, draw = black, dashed, minimum width = 4.5cm, minimum height = 9.25cm, text depth = 9cm, text width = 4.25cm] (m1) {\textit{m\textsubscript{1}}};

\node[above left = -1.8cm and -2.25cm of c1, rectangle, draw = orange, dashed, minimum width = 3.75cm, minimum height = 5.75cm, text depth = 5.5cm, text width = 3.5cm] (m11) {\textit{m\textsubscript{1.1}}};

\node[above left = -1.55cm and -2cm of p1, rectangle, draw = orange, dashed, minimum width = 3cm, minimum height = 2cm, text depth = 1.75cm, text width = 2.75cm] (m111) {\textit{m\textsubscript{1.1.1}}};

\node[above left = -1.55cm and -2.5cm of f1, rectangle, draw = blue, dashed, minimum width = 3.75cm, minimum height = 5.5cm, text depth = 5.25cm, text width = 3.5cm] (m12) {\textit{m\textsubscript{1.2}}};

\node[above left = -1.55cm and -2cm of c1, rectangle, draw = blue, dashed, minimum width = 3cm, minimum height = 2cm, text depth = 1.75cm, text width = 2.75cm] (m121) {\textit{m\textsubscript{1.2.1}}};

\node[above left = -1.8cm and -2.75cm of f2, rectangle, draw = black, dashed, minimum width = 4.5cm, minimum height = 9.25cm, text depth = 9cm, text width = 4.25cm, align = right] (m2) {\textit{m\textsubscript{2}}};

\node[above left = -1.65cm and -2.125cm of c1, rectangle, draw = darkgreen, dashed, minimum width = 6cm, minimum height = 2.75cm, text depth = 2.5cm, text width = 5.75cm] (m13) {\textit{m\textsubscript{1.3}}};

\end{tikzpicture}
	\caption{Example host graph with relevant subgraph marked in bold} \label{fig:example_host_graph_sat}
\end{figure}

A consistent configuration for the resulting RETE net then indeed guarantees a complete query result with respect to Definition~\ref{def:subgraph_satisfaction_dependence}:

\begin{thm}[Consistent configurations for RETE nets localized via $localize^{sat}$ yield all subgraph satisfaction dependent matches] \label{the:completeness_satisfaction_dependency} 
Let $H$ be a graph, $H_p \subseteq H$, $(Q, \psi)$ an extended graph query, and $\mathcal{C}^\Phi$ a consistent configuration for the RETE net $(N^{sat}, p^{sat}) = localize^{sat}(Q, \psi)$. It then holds that $\forall m \in \allmatches{Q}{H} : m \text{ is subgraph satisfaction dependent } \Rightarrow m \in \resultsstripped{p^{sat}}{\mathcal{C}^\Phi}$.
\end{thm}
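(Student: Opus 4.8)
The statement claims that a consistent configuration for $(N^{sat}, p^{sat}) = localize^{sat}(Q, \psi)$ computes all subgraph satisfaction dependent matches. Since both the definition of subgraph satisfaction dependence (Definition~\ref{def:subgraph_satisfaction_dependence}) and the construction $localize^{sat}$ proceed recursively over the structure of $\psi$, I would prove this by induction over $\psi$, establishing the inclusion $\{m \in \allmatches{Q}{H} \mid m \text{ subgraph satisfaction dependent}\} \subseteq \resultsstripped{p^{sat}}{\mathcal{C}^\Phi}$ for each of the four cases.

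Let me sketch the cases. For the base case $\psi = \texttt{true}$, no match is ever subgraph satisfaction dependent, so the claim holds vacuously regardless of the empty dummy node $[\emptyset]^\Phi$. For $\psi = \neg\psi'$, both the notion of dependence and $localize^{sat}$ simply defer to $(Q, \psi')$, so the claim follows directly from the induction hypothesis. For $\psi = \psi_1 \wedge \psi_2$, a match is dependent iff it is dependent for $(Q,\psi_1)$ or for $(Q,\psi_2)$; since $p^{sat}$ is a marking-sensitive union of the two subnet production nodes, the stripped result set is the union of the two stripped subnet results, and the claim follows from the induction hypotheses for $\psi_1$ and $\psi_2$ together with the union node semantics in Definition~\ref{def:marking-sensitive_result_sets}.

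The existential case $\psi = \exists(a : Q \rightarrow Q', \psi')$ is the crux and the step I expect to be the main obstacle. Here a match $m$ is dependent precisely when there exists $m' : Q' \rightarrow H$ with $m = m' \circ a$ such that either $m'(Q') \cap H_p \neq \emptyset$ or $m'$ is itself subgraph satisfaction dependent for $(Q', \psi')$. I would argue that the union node $[\cup]^\Phi$ over $p^\Phi_{Q'}$ and $p^{sat}_{(Q', \psi')}$ collects exactly the witnesses $m'$ of both kinds: the first disjunct is covered because $(N^\Phi_{Q'}, p^\Phi_{Q'}) = localize(Q')$ yields all matches for $Q'$ touching $H_p$ (invoking Theorem~\ref{the:completeness_consistent_configuration}, completeness under $H_p$), and the second disjunct is covered by the induction hypothesis applied to $p^{sat}_{(Q', \psi')}$. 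The semi-join $[\ltimes]^\Phi$ along $a$ then passes exactly those matches $m$ for $Q$ that have a compatible witness $m'$ in this union, which by the definition of the semi-join along a morphism (Definition~\ref{def:marking-sensitive_result_sets_ngcs}) means $m = m' \circ a$. The delicate part is ensuring that the request projection structure $RPS^{\infty}_r = RPS^{\infty}([\cup]^\Phi, N^\Phi_Q)$ has actually triggered the on-demand computation of every such $m$ in the localized subnet $N^\Phi_Q$: a candidate witness $m'$ in the union need not touch $H_p$ in the sense required for $N^\Phi_Q$ to have extracted $m = m' \circ a$ on its own, so completeness of $N^\Phi_Q$ here relies on the request mechanism rather than on locality. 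I would justify this by showing that the $RPS^{\infty}_r$ projects each witness onto the shared vertex and injects it with marking $\infty$ into the appropriate local navigation structure of $N^\Phi_Q$, so that $N^\Phi_Q$ in a consistent configuration contains $m$; this is the same request-propagation argument underlying Theorem~\ref{the:completeness_consistent_configuration} and Theorem~\ref{the:localized_rete_ngcs_correctness}, and I would either appeal to those results or mirror their inductive reasoning over the structure of $N^\Phi_Q$.

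A subtle point requiring care throughout is that the recursive construction for $\psi'$ in the existential case computes satisfaction dependent matches for $(Q', \psi')$, meaning the induction is over the condition attached to a \emph{different} context graph $Q'$ at each nesting level; I would therefore phrase the induction hypothesis generically over arbitrary extended queries $(Q'', \psi'')$ rather than fixing $Q$, so that the appeals to the hypothesis for $(Q', \psi')$ are formally legitimate. With the induction hypothesis stated in this uniform way, the four cases compose cleanly and the result follows.
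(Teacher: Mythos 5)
Your overall strategy---structural induction over $\psi$ with the four cases, handling the existential case via the union of witnesses and the request projection structure, and phrasing the induction hypothesis generically over extended queries---is exactly the route the paper takes (the paper proves the theorem as an immediate corollary of a lemma established by precisely this induction). However, your induction hypothesis is too weak to close the existential case, and this is a genuine gap rather than a presentational one. You carry only the membership $m' \in \resultsstripped{p^{sat}_{(Q', \psi')}}{\mathcal{C}^\Phi}$ through the induction, but the step you yourself flag as delicate needs more: the request projection structure $RPS^{\infty}_r = RPS^{\infty}([\cup]^\Phi, N^\Phi_Q)$ begins with a marking filter node $[\phi > h]^\Phi$, so a witness $(m', \phi)$ in the union node is projected and injected into $N^\Phi_Q$ only if $\phi > h$. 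With the unstrengthened hypothesis you know nothing about the markings of the witnesses supplied by $p^{sat}_{(Q', \psi')}$, so you cannot show that the request fires, hence cannot show that $m = m' \circ a$ is ever computed in $p^\Phi_Q$ (such an $m$ need not touch $H_p$), and the final semi-join argument collapses. The paper's lemma therefore proves the strengthened claim that every subgraph satisfaction dependent match appears \emph{with marking $\infty$}, i.e., $(m, \infty) \in \mathcal{C}^\Phi(p^{sat})$; the marking $\infty$ is preserved by the union node (which takes maxima) and passes every marking filter, which is what makes the existential step go through. The same strengthening is needed for the other kind of witness: Theorem~\ref{the:completeness_consistent_configuration} gives you only membership for matches of $Q'$ touching $H_p$, whereas you need them marked $\infty$ (Theorem~\ref{the:precision_consistent_configuration}, or the corresponding lemma of the preprint) for them to pass the filter in $RPS^{\infty}_r$.

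A second, smaller imprecision: you appeal to Theorem~\ref{the:completeness_consistent_configuration} for the embedded subnets, but that theorem is stated for standalone localized nets, whereas inside $N^{sat}$ the extension point of $N^\Phi_Q$ has an additional dependency coming from $RPS^{\infty}_r$. The paper handles this by stating its supporting lemmas for \emph{modular extensions} of localized nets and by using the extension-point lemma (if a single-vertex match is present with marking $\infty$ at an extension point, then all compatible matches appear in the production node with marking $\infty$). You do gesture at mirroring that request-propagation argument, which is the right instinct, but the statement you actually need is this extension-point form, not the standalone completeness theorem. With the induction hypothesis strengthened to marking $\infty$ and these appeals replaced by their modular-extension versions, your proof coincides with the paper's.
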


\begin{proof} (Idea)
Can be shown via induction over $\psi$, exploiting the close alignment of $localize^{sat}$ with Definition~\ref{def:subgraph_satisfaction_dependence}.
\end{proof}

Given an extended graph query $(Q, \psi)$ and the RETE net $(N^{sat}, p^{sat}) = localize^{sat}(Q, \psi)$, an execution order for $(N^{sat}, p^{sat})$ can be constructed in a straightforward manner via the procedure $order^{sat}$:

\begin{itemize}
\item For a nested graph condition of the form $\psi = \texttt{true}$, the result of $order^{sat}$ is given by $order^{sat}(N^{sat}) = [\emptyset]^\Phi$.

\item For a nested graph condition of the form $\psi = \exists (a: Q \rightarrow Q', \psi')$, the result of $order^{sat}$ is given by $order^{sat}(N^{sat}) = order(N^\Phi_{Q'}) \circ order^{sat}(N^{sat}_{(Q', \psi')}) \circ [\cup]^\Phi \circ toposort(RPS^{\infty}_r)^{-1} \circ order(N^\Phi_Q) \circ [\ltimes]^\Phi$.

\item For a nested graph condition of the form $\psi = \neg\psi'$, $(N^{sat}, p^{sat}) = localize^{sat}(Q, \psi')$ and thus the ordering via $order^{sat}(N^{sat})$ is covered by one of the other cases.

\item For a nested graph condition of the form $\psi = \psi_1 \wedge \psi_2$, the result of $order^{sat}$ is given by $order^{sat}(N^{sat}, p^{sat}) = order^{sat}(N^{sat}_{(Q, \psi_1)}) \circ order^{sat}(N^{sat}_{(Q, \psi_2)}) \circ [\cup]^\Phi$.
\end{itemize}

This execution order then guarantees consistency of the resulting configuration:

\begin{thm}[Execution of localized RETE nets via $order^{sat}$ yields consistent configurations] \label{the:satisfaction_rete_execution} 
Let $H$ be a graph, $H_p \subseteq H$, $(Q, \psi)$ a graph query, and $\mathcal{C}^\Phi_0$ an arbitrary starting configuration for the marking-sensitive RETE net $(N^{sat}, p^{sat}) = localize^{sat}(Q, \psi)$. Executing $(N^{sat}, p^{sat})$ via $O = order^{sat}(N^{sat})$ then yields a consistent configuration $\mathcal{C}^\Phi = execute(O, N^{sat}, H, H_p, \mathcal{C}^\Phi_0)$.
\end{thm}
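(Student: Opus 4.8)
The plan is to proceed by induction over the structure of $\psi$, following the same overall strategy as the proofs of Theorem~\ref{the:execution_order_consistency} and Theorem~\ref{the:localized_rete_ngcs_execution}. The inductive invariant is that $order^{sat}$ produces a schedule under which every node of $(N^{sat}, p^{sat})$ becomes consistent for $H$ and $H_p$ in the sense of Definition~\ref{def:consistent_marking_sensitive_configuration}. For the base case $\psi = \texttt{true}$, the net $N^{sat}$ consists only of the dummy node $[\emptyset]^\Phi$, whose target result set is empty by definition; executing it once via $order^{sat}(N^{sat}) = [\emptyset]^\Phi$ therefore trivially yields a consistent configuration regardless of $\mathcal{C}^\Phi_0$.

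The central structural observation for the inductive step is that, just as for $localize$ and $localize^\Psi$, every cyclic dependency in $N^{sat}$ is confined to an embedded plain localized subnet $localize(Q)$ or $localize(Q')$, or, by the induction hypothesis, to the recursively constructed subnet $N^{sat}_{(\cdot, \cdot)}$. I would verify this by inspecting each recursive case of the construction in Figure~\ref{fig:localize_sat}: the nodes added directly by $localize^{sat}$ --- the union nodes $[\cup]^\Phi$, the semi-join $[\ltimes]^\Phi$, and the single top-level request projection structure $RPS^{\infty}_r$ --- connect the component production nodes without introducing any new cycle spanning two components. In particular, in the $\exists$ case the only fresh request projection structure carries data from $[\cup]^\Phi$ into $N^\Phi_Q$, while the production node $p^\Phi_Q$ of $N^\Phi_Q$ is read only by $[\ltimes]^\Phi$; since nothing in $N^\Phi_Q$ feeds back into $[\cup]^\Phi$ or its dependencies $p^\Phi_{Q'}$ and $p^{sat}_{(Q', \psi')}$, no cross-component cycle arises. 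The $\wedge$ case adds only a union over two disjoint subnets, and the $\neg$ case adds nothing. Hence the components, ordered by the acyclic dependencies among them, form a DAG.

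Given this, I would argue that $order^{sat}$ realizes a reverse-topological schedule at the level of components: it executes each plain subnet via $order$ and each recursive sat subnet via $order^{sat}$, interleaving the added nodes so that every component is processed only after all components it depends on have been made consistent. Concretely, in the $\exists$ case the schedule first makes $N^\Phi_{Q'}$ and $N^{sat}_{(Q', \psi')}$ consistent (by Theorem~\ref{the:execution_order_consistency} and the induction hypothesis), then $[\cup]^\Phi$, then $RPS^{\infty}_r$, then $N^\Phi_Q$, and finally $[\ltimes]^\Phi$. Consistency of each already-processed component is preserved by all later executions, because by the DAG property later executions only touch nodes that depend on, but are not depended upon by, the earlier ones; for the $\wedge$ case the two sat subnets are disjoint, so the same reasoning applies and the concluding $[\cup]^\Phi$ execution makes the union consistent once both dependencies are.

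The main obstacle I expect is the treatment of the embedded plain subnet $N^\Phi_Q$, which in the $\exists$ case is not self-contained: the top-level $RPS^{\infty}_r$ attaches an additional dependency to one of its vertex-input union nodes, so $order(N^\Phi_Q)$ is applied to a net carrying an external input edge not present in the original localized net of Theorem~\ref{the:execution_order_consistency}. The key point to establish is that, since $RPS^{\infty}_r$ is already consistent before $order(N^\Phi_Q)$ runs and is left untouched while $N^\Phi_Q$ is processed (it depends only on $[\cup]^\Phi$ and its predecessors, none of which lie in $N^\Phi_Q$), this external input behaves as a fixed additional source. The execution of the affected union node within $order(N^\Phi_Q)$ then simply incorporates this constant contribution alongside the vertex input, so that the admissibility argument underlying Theorem~\ref{the:execution_order_consistency} --- that the internal marking-filter nodes break the genuine cycles and $order$ settles them --- carries over with the external input frozen. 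Verifying that this freezing does not interfere with the internal fixpoint reached by $order$ is where the detailed bookkeeping of the full proof is concentrated.
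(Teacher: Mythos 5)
Your proposal is correct and follows essentially the same route as the paper: structural induction over $\psi$ with a trivial base case, $\neg$/$\wedge$ cases discharged by the induction hypothesis, and a component-wise reverse-topological scheduling argument for the $\exists$ case in which already-consistent components are never invalidated by later executions. The ``frozen external input'' obstacle you single out for the embedded plain subnet $N^\Phi_Q$ is exactly what the paper formalizes via the notion of \emph{modular extension} and the corresponding strengthened lemmata (Lemma~\ref{lem:satisfaction_rete_execution} and Lemma 13 of \cite{preprint}), so your sketch coincides with the paper's formal treatment.
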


\begin{proof} (Idea)
Follows because all cyclical structures in $N^{sat}$ are encapsulated in localized RETE subnets for plain graph queries, for which $order$ creates an admissible ordering.
\end{proof}

We hence again get a corollary stating the correctness of RETE net execution via $localize^{sat}$:

\begin{cor}[Execution of RETE nets localized via $localize^{sat}$ yields all subgraph satisfaction dependent matches]
Let $H$ be a graph, $H_p \subseteq H$, and $(Q, \psi)$ a graph query. Furthermore, let $\mathcal{C}^\Phi_0$ be an arbitrary starting configuration for the marking-sensitive RETE net $(N^{sat}, p^{sat}) = localize^{sat}(Q, \psi)$ and $\mathcal{C}^\Phi = execute(order^{sat}(N^{sat}), N^{sat}, H, H_p, \mathcal{C}^\Phi_0)$. It then holds that $\forall m \in \allmatches{Q}{H} : m \text{ is subgraph satisfaction dependent } \Rightarrow m \in \resultsstripped{p^{sat}}{\mathcal{C}^\Phi}$.
\end{cor}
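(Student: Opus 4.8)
The plan is to obtain this corollary as a direct composition of the two immediately preceding theorems, exactly mirroring the argument that establishes Corollary~\ref{cor:completeness_execution} in the plain-query setting. The statement bundles two facts whose individual proofs have already been carried out: that the prescribed execution order actually drives the net into a consistent configuration (Theorem~\ref{the:satisfaction_rete_execution}), and that any consistent configuration of a net built by $localize^{sat}$ exposes all subgraph satisfaction dependent matches at its production node (Theorem~\ref{the:completeness_satisfaction_dependency}). No fresh combinatorial reasoning about the net structure is needed here; the task is purely to instantiate and chain the two theorems with matching parameters.

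Concretely, I would first apply Theorem~\ref{the:satisfaction_rete_execution} to the net $(N^{sat}, p^{sat}) = localize^{sat}(Q, \psi)$, the host graph $H$, the relevant subgraph $H_p$, and the arbitrary starting configuration $\mathcal{C}^\Phi_0$. Since $O = order^{sat}(N^{sat})$ is exactly the execution order appearing in the corollary, the theorem guarantees that $\mathcal{C}^\Phi = execute(O, N^{sat}, H, H_p, \mathcal{C}^\Phi_0)$ is consistent for $H$ with relevant subgraph $H_p$. I would then feed this consistency into Theorem~\ref{the:completeness_satisfaction_dependency}, applied to the same query $(Q, \psi)$, host graph $H$, subgraph $H_p$, and the now-consistent configuration $\mathcal{C}^\Phi$. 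That theorem yields directly that every match $m \in \allmatches{Q}{H}$ that is subgraph satisfaction dependent satisfies $m \in \resultsstripped{p^{sat}}{\mathcal{C}^\Phi}$, which is precisely the claimed implication, and the argument closes.

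Since all substantive work resides in the two cited theorems, there is no genuine obstacle within the corollary itself. The only point requiring minor care is threading the parameters $H$, $H_p$, and $(Q, \psi)$ consistently through both invocations, so that the consistent configuration produced by $order^{sat}$ in the first step is literally the same object $\mathcal{C}^\Phi$ to which Theorem~\ref{the:completeness_satisfaction_dependency} is then applied. Once this identification is made explicit, the implication follows immediately.
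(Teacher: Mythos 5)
Your proposal is correct and matches the paper's own proof exactly: the paper derives this corollary directly by combining Theorem~\ref{the:satisfaction_rete_execution} (execution via $order^{sat}$ yields a consistent configuration) with Theorem~\ref{the:completeness_satisfaction_dependency} (consistent configurations contain all subgraph satisfaction dependent matches). Your careful threading of the parameters $H$, $H_p$, and $(Q,\psi)$ through both invocations is precisely the intended chaining.
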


\begin{proof}
Follows directly from Theorems~\ref{the:completeness_satisfaction_dependency} and~\ref{the:satisfaction_rete_execution}.
\end{proof}

Technically, it can be shown that a RETE net for an extended graph query $(Q, \psi)$ constructed via $(N^{sat}, p^{sat}) = localize^{sat}(Q, \psi)$ has similar performance characteristics as a RETE net constructed via $localize^\Psi$ compared to a regular RETE net $(N, p)$ for $(Q, \psi)$ constructed via the basic procedure described in \cite{barkowsky2023host}. Specifically, the size of a consistent configuration for $(N^{sat}, p^{sat})$ is only increased by a constant factor compared to a consistent configuration for $(N, p)$, resulting in only a constant-factor increase in memory consumption and a minor increase in runtime complexity.\footnote{See Theorem~\ref{the:upper_bound_configuration_size_satisfaction_appendix}, Corollary~\ref{cor:memory_consumption_satisfaction_rete_appendix}, and Theorem~\ref{the:execution_time_satisfaction_rete_appendix} in Appendix~\ref{app:technical_details}.}

However, it is important to note that $(N, p)$ and $(N^{sat}, p^{sat})$ potentially compute very different sets of matches. By Definition~\ref{def:subgraph_satisfaction_dependence}, RETE nets constructed via $localize^{sat}$ may have to compute matches for $Q$ that do not actually satisfy $\psi$ if $\psi$ involves negations or conjunctions, whereas such matches have to be excluded from the results of $(N, p)$.

Relating the performance of $(N^{sat}, p^{sat})$ to the performance of $(N, p)$ therefore relies on redundancy in the basic construction in \cite{barkowsky2023host}. Many optimizations to $(N, p)$, such as the elimination of redundancy or the downward propagation of nested graph conditions in the RETE net described in \cite{barkowsky2023host}, are applicable to localized RETE nets constructed via $localize^\Psi$ and likely yield similar benefits. In contrast, due to the different construction procedure and desired computation results, such optimizations can have a very different effect on $(N^{sat}, p^{sat})$ or may simply not be applicable at all. Consequently, executing $(N^{sat}, p^{sat})$ as part of any execution of $(Q, \psi)$ may create substantial overhead on both memory consumption and execution time in unfavorable cases if further optimizations are involved.

Still, using the $localize^{sat}$ procedure, we can now construct a RETE net that allows the localized computation of changes to the results of an extended graph query $(Q, \psi)$ caused by a subgraph-restricted graph modification $H \xleftarrow{f} K \xrightarrow{g} H'$, which may be beneficial in practice.

The result $(N^\Delta, p^\Delta) = localize^\Delta(Q, \psi)$ of this construction via the procedure $localize^\Delta$ is visualized in Figure~\ref{fig:localize_delta}. The resulting RETE net consists of the localized version $(N^\Phi_Q, p^\Phi_Q) = localize(N^Q, p^Q)$ of a regular RETE net $(N^Q, p^Q)$ with height $h$ for the plain graph query $Q$, a marking filter node $[\phi > h]^\Phi$ with dependency $p^\Phi_Q$, the RETE net $(N^{sat}_{(Q, \psi)}, p^{sat}_{(Q, \psi)}) = localize^{sat}(Q, \psi)$, and a marking-sensitive union node $[\cup]^\Phi$ with dependencies $[\phi > h]^\Phi$ and $p^{sat}_{(Q, \psi)}$. In addition, $N^\Delta$ comprises the RETE net $(N^\Phi_{(Q, \psi)}, p^\Phi_{(Q, \psi)}) = localize^{\Psi}(Q, \psi)$, a request projection structure $RPS^{\infty}_l = RPS^{\infty}([\cup]^\Phi, N^\Phi_{(Q, \psi)})$, and a marking-sensitive semi-join $[\ltimes]^\Phi$ with left dependency $[\cup]^\Phi$ and right dependency $p^\Phi_{(Q, \psi)}$.

Additionally, another copy of the RETE net for computing subgraph satisfaction dependent matches, $(N^{sat'}_{(Q, \psi)}, p^{sat'}_{(Q, \psi)}) = localize^{sat}(Q, \psi)$, is created for computing subgraph satisfaction dependent matches in $H'$. These matches then have to be translated into matches into $H$ along the graph morphisms $f$ and $g$ via a dedicated \emph{marking-sensitive translation node} $[\xrightarrow{f \circ g^{-1}}]^\Phi$ with dependency $p^{sat'}_{(Q, \psi)}$, which is also added to $N^\Delta$. The target result set of this marking-sensitive translation node is given by $\resultslocal{[\xrightarrow{f \circ g^{-1}}]^\Phi}{N^\Delta}{H}{H_p}{\mathcal{C}^\Phi} = \{(m, \phi) | m \in \allmatches{Q}{H} \wedge \exists (m', \phi) \in \mathcal{C}^\Phi(p^{sat'}) : f \circ g^{-1} \circ m' = m\}$. The node $[\xrightarrow{f \circ g^{-1}}]^\Phi$ can then be added as an additional dependency to $[\cup]^\Phi$. 

\begin{figure}
\centering

\begin{tikzpicture}

\node[rectangle,
	minimum width = 9.1cm, 
	minimum height = 4.1cm] (r) at (0,0) {};

\node [below right= -4.11cm and -4.44cm of r] (join) [ms_retenode] {$\ltimes$};

\node [below left = 0.35cm and 1cm of join] (union) [ms_retenode] {$\cup$};

\node [below right = 0.35cm and 1cm of join, dashed] (q_psi) [ms_retenode] {$N^\Phi_{(Q, \psi)}$};

\node [below left = 0.35cm and 0.25cm of union] (filter) [ms_retenode] {$\phi > h$};

\node [below = 0.35cm of filter, dashed] (q) [ms_retenode] {$N^\Phi_{Q}$};

\node [below = 1.45cm of union, dashed] (q_sat) [ms_retenode] {$N^{sat}_{(Q, \psi)}$};

\node [below right = 0.35cm and 0.25cm of union] (translate) [ms_retenode] {$\xrightarrow{f \circ g^{-1}}$};

\node [below = 0.35cm of translate, dashed] (q_satp) [ms_retenode] {$N^{sat'}_{(Q, \psi)}$};

\node [below = 0.35cm of q_psi, dashed] (rpsl) [ms_retenode] {$RPS^{\infty}_l$};

\draw [-{Latex}] (join) -- (union);

\draw [-{Latex}] (join) -- (q_psi);

\draw [-{Latex}] (union) -- (filter);

\draw [-{Latex}] (filter) -- (q);

\draw [-{Latex}] (union) -- (q_sat);

\draw [-{Latex}] (union) -- (translate);

\draw [-{Latex}] (translate) -- (q_satp);

\draw [-{Latex}] (rpsl) -- (union);

\draw [-{Latex}] (q_psi) -- (rpsl);

\end{tikzpicture}
\caption{Result of applying $localize^\Delta$ to an extended graph query $(Q, \psi)$} \label{fig:localize_delta}
\end{figure}

The created net thereby computes the matches into $H$ that satisfy $\psi$ and touch the relevant subgraph or are subgraph satisfaction dependent in $H$ or $H'$:

\begin{thm}[For a subgraph-restricted graph modification, consistent configurations for RETE nets localized via $localize^\Delta$ yield all matches that touch the relevant subgraph in the source or are subgraph satisfaction dependent in the source or target of the modification] \label{the:ngc_delta_rete_completeness} 
Let $(Q, \psi)$ be an extended graph query and $H \xleftarrow{f} K \xrightarrow{g} H'$ a subgraph-restricted graph modification modifying the graph $H$ with relevant subgraph $H_p \subseteq H$ into the graph $H'$ with relevant subgraph $H'_p \subseteq H'$. Furthermore, let $\mathcal{C}^\Phi$ be a configuration that is consistent for $(N^\Delta, p^\Delta) = localize^\Delta(Q, \psi)$ for host graph $H$ and relevant subgraph $H_p$ and consistent for $(N^{sat'}_{(Q, \psi)}, p^{sat'}_{(Q, \psi)}) = localize^{sat}(Q, \psi)$ for host graph $H'$ and relevant subgraph $H'_p$. It must then hold that $\forall m \in \allmatches{Q}{H} : m \models \psi \wedge (m(Q) \cap H_p \neq \emptyset \text{ or } m \text{ is subgraph satisfaction dependent } \text{ or } \exists m' \in \allmatches{Q}{H'} : f \circ g^{-1} \circ m' = m \wedge m' \text{ is subgraph satisfaction dependent}) \Rightarrow m \in \resultsstripped{p^\Delta}{\mathcal{C}^\Phi}$.
\end{thm}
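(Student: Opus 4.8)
The plan is to trace the dataflow of $(N^\Delta, p^\Delta)$ as laid out in Figure~\ref{fig:localize_delta}, where the production node is the marking-sensitive semi-join $p^\Delta = [\ltimes]^\Phi$ whose left dependency is the union node $[\cup]^\Phi$ and whose right dependency is $p^\Phi_{(Q, \psi)}$. Since both dependencies compute matches for the base pattern $Q$, the semi-join's overlap is all of $Q$, so $\resultsstripped{p^\Delta}{\mathcal{C}^\Phi}$ consists of exactly those $m \in \resultsstripped{[\cup]^\Phi}{\mathcal{C}^\Phi}$ that also lie in $\resultsstripped{p^\Phi_{(Q, \psi)}}{\mathcal{C}^\Phi}$. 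I would therefore split the argument into (i) showing that every match $m$ in the claimed target set reaches $[\cup]^\Phi$, and (ii) showing that every such $m$ with $m \models \psi$ additionally appears in $p^\Phi_{(Q, \psi)}$, so that it survives the semi-join.

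For (i), I would treat the three disjuncts separately, using that consistency of $\mathcal{C}^\Phi$ for $(N^\Delta, p^\Delta)$ restricts to consistency for each embedded subnet. If $m(Q) \cap H_p \neq \emptyset$, then since $N^\Phi_Q$ is a self-contained localized plain-pattern net, Theorem~\ref{the:completeness_consistent_configuration} gives $m \in \resultsstripped{p^\Phi_Q}{\mathcal{C}^\Phi}$ and Theorem~\ref{the:precision_consistent_configuration} gives that $m$ carries marking $\infty$; as $h$ is finite, $m$ passes the filter $[\phi > h]^\Phi$ and is collected by $[\cup]^\Phi$. If $m$ is subgraph satisfaction dependent in $H$, Theorem~\ref{the:completeness_satisfaction_dependency} applied to the embedded $localize^{sat}(Q, \psi)$ net over $H$ places $m$ in $\resultsstripped{p^{sat}_{(Q, \psi)}}{\mathcal{C}^\Phi}$ and hence in $[\cup]^\Phi$. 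Finally, if $m = f \circ g^{-1} \circ m'$ with $m'$ subgraph satisfaction dependent in $H'$, the second consistency hypothesis lets me invoke Theorem~\ref{the:completeness_satisfaction_dependency} for $(N^{sat'}_{(Q, \psi)}, p^{sat'}_{(Q, \psi)})$ over $H'$ to obtain $m' \in \resultsstripped{p^{sat'}_{(Q, \psi)}}{\mathcal{C}^\Phi}$; the defining equation of the translation node $[\xrightarrow{f \circ g^{-1}}]^\Phi$ then yields $(m, \phi) \in \mathcal{C}^\Phi([\xrightarrow{f \circ g^{-1}}]^\Phi)$, so $m$ again reaches $[\cup]^\Phi$.

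For (ii), the key mechanism is the request projection structure $RPS^{\infty}_l = RPS^{\infty}([\cup]^\Phi, N^\Phi_{(Q, \psi)})$. Each $m$ collected by $[\cup]^\Phi$ is projected onto a single overlap vertex, assigned marking $\infty$, and injected into the union node above the corresponding vertex input of a local navigation structure inside $N^\Phi_{(Q, \psi)}$. Intuitively, this extends the relevant subgraph effectively seen by $N^\Phi_{(Q, \psi)}$ to include the image of that vertex under $m$, so that $m$ now touches this extended subgraph. Applying the correctness guarantee of $localize^\Psi$ (Theorem~\ref{the:localized_rete_ngcs_correctness}) with respect to the extended subgraph then shows that whenever $m \models \psi$, the match $m$ is computed and retained in $p^\Phi_{(Q, \psi)}$. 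Combining (i) and (ii), every $m$ in the target set with $m \models \psi$ lies in both dependencies of the semi-join and hence in $\resultsstripped{p^\Delta}{\mathcal{C}^\Phi}$, which is the claim.

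The hard part will be the formal justification in (ii) that the marking-$\infty$ request injected by $RPS^{\infty}_l$ makes $N^\Phi_{(Q, \psi)}$ behave exactly as if the projected vertex belonged to the relevant subgraph, so that Theorem~\ref{the:localized_rete_ngcs_correctness} may legitimately be applied to the extended subgraph. This requires a supporting lemma — of the kind underpinning the analysis of $localize$ and $localize^\Psi$ — stating that feeding a vertex request marked $\infty$ into a local navigation structure is equivalent, at the level of consistent configurations, to adding that vertex to $H_p$, together with the observations that marking $\infty$ survives the internal filter of the $RPS$ and that injecting a single vertex of $m$ already suffices, since completeness under a subgraph (Definition~\ref{def:completeness_subgraphs}) forces computation of every match touching it. I would also need to check that the projection vertex of $RPS^{\infty}_l$ is compatible with the identity overlap of the top-level semi-join, so that the reconstructed match is precisely $m$ rather than a mere restriction of it.
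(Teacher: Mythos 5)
Your proposal is correct and follows essentially the same route as the paper's proof: the same three-case analysis showing $m$ reaches $[\cup]^\Phi$ (via the marking filter, the $localize^{sat}$ net over $H$, and the translation node over $H'$, respectively), followed by the request-projection argument and the semi-join semantics. The supporting lemma you flag as the ``hard part'' of step (ii) is precisely what the paper provides as Lemma~\ref{lem:localized_rete_condition_completeness} (stated for modular extensions and extension points rather than for an enlarged $H_p$, but with exactly the content you describe), so your plan closes the same way the paper's proof does.
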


\begin{proof} (Idea)
Follows from Theorems~\ref{the:completeness_consistent_configuration},~\ref{the:localized_rete_ngcs_correctness} and~\ref{the:completeness_satisfaction_dependency}.
\end{proof}

An execution order for a RETE net $(N^\Delta, p^\Delta) = localize^\Delta(Q, \psi)$ can simply be constructed as $order^\Delta(N^\Delta) = order(N^\Phi_Q) \circ [\phi > h]^\Phi \circ order^{sat}(N^{sat}_{(Q, \psi)}) \circ [\xrightarrow{f \circ g^{-1}}]^\Phi \circ [\cup]^\Phi \circ toposort(RPS^{\infty}_l)^{-1} \circ order^\Psi(N^\Phi_{(Q, \psi)}) \circ [\ltimes]^\Phi$:

\begin{thm}[Execution of localized RETE nets via $order^\Delta$ yields consistent configurations] \label{the:ngc_delta_rete_execution} 
Let $(Q, \psi)$ be an extended graph query and $H$ a host graph with relevant subgraph $H_p$ and $\mathcal{C}^\Phi_0$ an arbitrary starting configuration for the marking-sensitive RETE net $(N^\Delta, p^\Delta) = localize^\Delta(Q, \psi)$. Executing $(N^\Delta, p^\Delta)$ via $O = order^\Delta(N^\Delta)$ then yields a consistent configuration $\mathcal{C}^\Phi = execute(O, N^\Delta, H, H_p, \mathcal{C}^\Phi_0)$.
\end{thm}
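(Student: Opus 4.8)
The plan is to mirror the proof strategy of Theorems~\ref{the:execution_order_consistency}, \ref{the:localized_rete_ngcs_execution}, and~\ref{the:satisfaction_rete_execution}, reducing consistency of the whole net to the admissibility of the orders for its encapsulated subnets. First I would decompose $N^\Delta$ into the localized plain-query subnet $(N^\Phi_Q, p^\Phi_Q)$, the satisfaction-dependence subnets $(N^{sat}_{(Q,\psi)}, p^{sat}_{(Q,\psi)})$ and $(N^{sat'}_{(Q,\psi)}, p^{sat'}_{(Q,\psi)})$, the localized extended-query subnet $(N^\Phi_{(Q,\psi)}, p^\Phi_{(Q,\psi)})$, and the \emph{glue} consisting of the nodes $[\phi > h]^\Phi$, $[\cup]^\Phi$, $[\ltimes]^\Phi$, $[\xrightarrow{f \circ g^{-1}}]^\Phi$ together with the single request projection structure $RPS^{\infty}_l$. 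By Theorems~\ref{the:execution_order_consistency}, \ref{the:satisfaction_rete_execution}, and~\ref{the:localized_rete_ngcs_execution}, executing each subnet via its own order ($order$, $order^{sat}$, $order^\Psi$) yields a configuration consistent for that subnet, and the inner orders for the traversed subnets appear verbatim as fragments of $order^\Delta$ (the copy $N^{sat'}_{(Q,\psi)}$ is handled separately, see below).

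The central observation to establish is that every cyclic dependency in $N^\Delta$ is contained within a localized plain-query subnet. The only structure that $localize^\Delta$ adds on top of the existing constructions and that could introduce a new cycle is $RPS^{\infty}_l = RPS^{\infty}([\cup]^\Phi, N^\Phi_{(Q,\psi)})$, which routes $[\cup]^\Phi$ into a local navigation structure of $N^\Phi_{(Q,\psi)}$. I would argue that this does not create a top-level cycle: results flow from $[\cup]^\Phi$ to $[\ltimes]^\Phi$ along two forward paths, directly and through $N^\Phi_{(Q,\psi)}$ up to $p^\Phi_{(Q,\psi)}$, with no feedback edge returning to $[\cup]^\Phi$. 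The request edge injected by $RPS^{\infty}_l$ behaves exactly like the internal request projection structures already present inside $N^\Phi_{(Q,\psi)}$, so that its marking filter breaks cyclic propagation at the level of intermediate results and the injection is absorbed by the same marking-filter argument that underlies Theorem~\ref{the:execution_order_consistency}.

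It then remains to check that $order^\Delta$ visits the glue in dependency order. The sequence executes $order(N^\Phi_Q)$, then $[\phi > h]^\Phi$ (dependent on $p^\Phi_Q$), then $order^{sat}(N^{sat}_{(Q,\psi)})$, then $[\xrightarrow{f \circ g^{-1}}]^\Phi$ (reading the separately established $p^{sat'}_{(Q,\psi)}$), then $[\cup]^\Phi$ (whose dependencies $[\phi > h]^\Phi$, $p^{sat}_{(Q,\psi)}$, and the translation node are already consistent), then $toposort(RPS^{\infty}_l)^{-1}$ and $order^\Psi(N^\Phi_{(Q,\psi)})$ to propagate the injected requests, and finally the production node $[\ltimes]^\Phi$. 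The key point to verify is that no node executed later feeds back into an earlier, already-consistent node: $[\cup]^\Phi$ does not depend on $N^\Phi_{(Q,\psi)}$, and the nodes of $RPS^{\infty}_l$ read only $[\cup]^\Phi$, so executing $order^\Psi(N^\Phi_{(Q,\psi)})$ leaves both consistent, and $[\ltimes]^\Phi$ changes nothing downstream. A straightforward induction along the order then yields consistency of every touched node.

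I expect the main obstacle to be the bookkeeping around the second copy $N^{sat'}_{(Q,\psi)}$ and the translation node, which straddle the two host graphs $H$ and $H'$. Since $order^\Delta$ does not re-traverse $N^{sat'}_{(Q,\psi)}$, I would treat its consistency over $H'$ as established by a preliminary application of $order^{sat}$ (Theorem~\ref{the:satisfaction_rete_execution}), exactly as assumed in the hypotheses of Theorem~\ref{the:ngc_delta_rete_completeness}, and argue that leaving it untouched preserves this consistency while $[\xrightarrow{f \circ g^{-1}}]^\Phi$, reading only the fixed output of $p^{sat'}_{(Q,\psi)}$, becomes and remains consistent. The secondary obstacle is confirming that injecting the external request from $RPS^{\infty}_l$ into a local navigation structure of $N^\Phi_{(Q,\psi)}$ does not invalidate the admissibility argument for $order^\Psi$; this reduces to the observation that a marking-sensitive union node tolerates additional fixed request inputs, precisely as for the intra-net request projection structures, so the fixpoint reached by $order^\Psi(N^\Phi_{(Q,\psi)})$ is unaffected.
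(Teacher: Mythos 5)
Your proposal is correct and takes essentially the same route as the paper: the paper reduces the theorem to Lemma~\ref{lem:ngc_delta_rete_execution}, which in turn cites modular-extension versions of the per-subnet execution-order results (Lemma 13 of \cite{preprint} for $order$, Lemma~\ref{lem:satisfaction_rete_execution} for $order^{sat}$, and Lemma~\ref{lem:localized_rete_condition_execution} for $order^\Psi$), with the glue nodes ($[\phi > h]^\Phi$, $[\xrightarrow{f \circ g^{-1}}]^\Phi$, $[\cup]^\Phi$, $RPS^{\infty}_l$, $[\ltimes]^\Phi$) becoming consistent because they are executed in dependency order. The only difference is presentational: what you argue informally --- that a subnet's order remains admissible when external request projection structures feed additional fixed inputs into its extension points, and that the untraversed copy $N^{sat'}_{(Q,\psi)}$ only supplies a fixed input to the translation node --- is precisely what the paper packages formally as the notion of \emph{modular extension} in its appendix lemmata.
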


\begin{proof} (Idea)
Follows because all cyclical structures in $N^\Delta$ are encapsulated in localized RETE subnets for plain graph queries, for which $order$ creates an admissible ordering.
\end{proof}

As usual, we thus get a corollary stating the correctness of RETE net execution via $order^\Delta$:

\begin{cor}[For a subgraph-restricted graph modification, execution of RETE nets localized via $localize^\Delta$ yields all matches that touch the relevant subgraph in the source or are subgraph satisfaction dependent in the source or target graph of the modification]
Let $(Q, \psi)$ be an extended graph query and $H \xleftarrow{f} K \xrightarrow{g} H'$ a subgraph-restricted graph modification modifying the graph $H$ with relevant subgraph $H_p \subseteq H$ into the graph $H'$ with relevant subgraph $H'_p \subseteq H'$. Furthermore, let $\mathcal{C}^\Phi_0$ be a configuration for $(N^\Delta, p^\Delta) = localize^\Delta(Q, \psi)$ and $(N^{sat'}_{(Q, \psi)}, p^{sat'}_{(Q, \psi)}) = localize^{sat}(Q, \psi)$ that is consistent for $(N^{sat'}_{(Q, \psi)}, p^{sat'}_{(Q, \psi)})$ for host graph $H'$ and relevant subgraph $H'_p$. It then holds for the configuration $\mathcal{C}^\Phi = execute(O, N^\Delta, H, H_p, \mathcal{C}^\Phi_0)$ that $\forall m \in \allmatches{Q}{H} : m \models \psi \wedge (m(Q) \cap H_p \neq \emptyset \text{ or } m \text{ is subgraph satisfaction dependent } \text{or } \exists m' \in \allmatches{Q}{H'} : f \circ g^{-1} \circ m' = m \wedge m' \text{ is subgraph satisfaction dependent}) \Rightarrow m \in \resultsstripped{p^\Delta}{\mathcal{C}^\Phi}$.
\end{cor}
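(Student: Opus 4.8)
The plan is to derive this statement by the same composition pattern used for all the preceding corollaries: combine the execution-consistency result of Theorem~\ref{the:ngc_delta_rete_execution} with the completeness characterization of Theorem~\ref{the:ngc_delta_rete_completeness}, taking $O = order^\Delta(N^\Delta)$. The first thing I would do is unpack what Theorem~\ref{the:ngc_delta_rete_completeness} actually demands of its configuration: it has \emph{two} separate consistency hypotheses, namely that $\mathcal{C}^\Phi$ is consistent for $(N^\Delta, p^\Delta)$ with respect to $H$ and $H_p$, and that it is additionally consistent for the second copy $(N^{sat'}_{(Q, \psi)}, p^{sat'}_{(Q, \psi)})$ with respect to the \emph{modified} graph $H'$ and $H'_p$. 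The proof therefore reduces to establishing both of these properties for $\mathcal{C}^\Phi = execute(O, N^\Delta, H, H_p, \mathcal{C}^\Phi_0)$.

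The first hypothesis follows immediately from Theorem~\ref{the:ngc_delta_rete_execution}, which guarantees that executing $(N^\Delta, p^\Delta)$ via $O = order^\Delta(N^\Delta)$ yields a configuration consistent for $(N^\Delta, p^\Delta)$ over $H$ and $H_p$. For the second hypothesis I would argue that the ordering $order^\Delta(N^\Delta)$ contains \emph{no} node belonging to the subnet $N^{sat'}_{(Q, \psi)}$: it references only the translation node $[\xrightarrow{f \circ g^{-1}}]^\Phi$, which reads from the production node $p^{sat'}$ but itself lies outside $N^{sat'}_{(Q, \psi)}$. Executing $O$ therefore never invokes the node-update procedure on any node of $N^{sat'}_{(Q, \psi)}$, so the result sets these nodes receive under $\mathcal{C}^\Phi$ are exactly those already fixed by the starting configuration $\mathcal{C}^\Phi_0$. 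Because the nodes of $N^{sat'}_{(Q, \psi)}$ depend only on one another, their target result sets are unchanged as well, and the consistency of $\mathcal{C}^\Phi_0$ for $(N^{sat'}_{(Q, \psi)}, p^{sat'}_{(Q, \psi)})$ over $H'$ and $H'_p$, which the corollary assumes, carries over unchanged to $\mathcal{C}^\Phi$.

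With both consistency hypotheses in place, I would apply Theorem~\ref{the:ngc_delta_rete_completeness} directly to conclude $m \in \resultsstripped{p^\Delta}{\mathcal{C}^\Phi}$ for every match $m$ satisfying $\psi$ that either touches $H_p$, is subgraph satisfaction dependent in $H$, or is the $f \circ g^{-1}$-image of a subgraph satisfaction dependent match in $H'$. The main obstacle I anticipate is the bookkeeping around the two distinct host graphs: one must check carefully that the single execution over $H$ does not disturb the part of the configuration required to stay consistent over $H'$. Establishing that $order^\Delta$ genuinely isolates $N^{sat'}_{(Q, \psi)}$ from re-execution, rather than merely that the translation node reads from it, is the crux, and it is exactly what makes the ``consistent over $H'$'' precondition on $\mathcal{C}^\Phi_0$ usable; the remainder is the routine combination shared with the earlier corollaries.
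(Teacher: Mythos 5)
Your proposal is correct and takes essentially the same route as the paper, whose proof is simply ``Follows directly from Theorem~\ref{the:ngc_delta_rete_completeness} and Theorem~\ref{the:ngc_delta_rete_execution}.'' The detail you supply — that $order^\Delta(N^\Delta)$ executes no node of $N^{sat'}_{(Q, \psi)}$ (the translation node $[\xrightarrow{f \circ g^{-1}}]^\Phi$ only reads from $p^{sat'}_{(Q, \psi)}$), so the consistency of $\mathcal{C}^\Phi_0$ for $(N^{sat'}_{(Q, \psi)}, p^{sat'}_{(Q, \psi)})$ over $H'$ and $H'_p$ carries over unchanged to $\mathcal{C}^\Phi$ — is precisely the bookkeeping the paper leaves implicit in that one-line argument.
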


\begin{proof}
Follows directly from Theorem~\ref{the:ngc_delta_rete_completeness} and Theorem~\ref{the:ngc_delta_rete_execution}.
\end{proof}

Changes to the results of an extended graph query $(Q, \psi)$ resulting from a subgraph-restricted graph modification $H \xleftarrow{f} K \xrightarrow{g} H'$ for relevant subgraphs $H_p \subseteq H$ and $H'_p \subseteq H'$ can then be computed by mirroring the construction with $localize^\Delta$ for $H$ and $H'$. This entails creating RETE nets $(N^\Delta, p^\Delta) = localize^\Delta(Q, \psi)$ and $(N^{sat'}_{(Q, \psi)}, p^{sat'}_{(Q, \psi)}) = localize^{sat}(Q, \psi)$ as well as duplicates $(N^{\Delta'}, p^{\Delta'}) = localize^\Delta(Q, \psi)$ and $(N^{sat}_{(Q, \psi)}, p^{sat}_{(Q, \psi)}) = localize^{sat}(Q, \psi)$. $(N^\Delta, p^\Delta)$ and $(N^{sat}_{(Q, \psi)}, p^{sat}_{(Q, \psi)})$ are executed over $H$, whereas $(N^{\Delta'}, p^{\Delta'})$ and $(N^{sat'}_{(Q, \psi)}, p^{sat'}_{(Q, \psi)})$ are executed over $H'$. After appropriate translation of matches, the set of matches that are deleted by the modification or for which the satisfaction of $\psi$ changes from $\texttt{true}$ to $\texttt{false}$ can be computed as the difference between results for $p^\Delta$ and $p^{\Delta'}$:

\begin{thm}[Removal of matches from the results of an extended graph query caused by a subgraph-restricted graph modification can be detected using RETE nets localized via $localize^\Delta$]
Let $(Q, \psi)$ be an extended graph query and $H \xleftarrow{f} K \xrightarrow{g} H'$ a subgraph-restricted graph modification modifying the graph $H$ with relevant subgraph $H_p \subseteq H$ into the graph $H'$ with relevant subgraph $H'_p \subseteq H'$.  Furthermore, let $\mathcal{C}^\Phi$ be a configuration that is consistent for $(N^\Delta, p^\Delta) = localize^\Delta(Q, \psi)$ and $(N^{sat}_{(Q, \psi)}, p^{sat}_{(Q, \psi)}) = localize^{sat}(Q, \psi)$ for host graph $H$ and relevant subgraph $H_p$ and consistent for $(N^{\Delta'}, p^{\Delta'}) = localize^\Delta(Q, \psi)$ and $(N^{sat'}_{(Q, \psi)}, p^{sat'}_{(Q, \psi)}) = localize^{sat}(Q, \psi)$ for host graph $H'$ and relevant subgraph $H'_p$. It then holds that $\{m \in \allmatches{Q}{H} \mid m \models \psi \wedge \nexists m' \in \allmatches{Q}{H'} : m = f \circ g^{-1} \circ m' \wedge m' \models \psi\} = \resultsstripped{p^\Delta}{\mathcal{C}^\Phi} \setminus \{m \in \allmatches{Q}{H} \mid \exists m' \in \resultsstripped{p^{\Delta'}}{\mathcal{C}^{\Phi'}} : m = f \circ g^{-1} \circ m'\}$.
\end{thm}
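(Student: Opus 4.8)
The plan is to reduce the claimed identity to a purely set-theoretic statement about two exactly-characterized result sets, linked by a symmetry between the candidate matches on the $H$-side and the $H'$-side. Throughout, write $\tau = f \circ g^{-1}$ for the partial translation of matches into $H'$ to matches into $H$, and note that $m = \tau \circ m'$ holds for $m \in \allmatches{Q}{H}$ and $m' \in \allmatches{Q}{H'}$ iff there is a span morphism $m_K : Q \rightarrow K$ with $m = f \circ m_K$ and $m' = g \circ m_K$; here $\tau$ is injective on its domain $g(K)$ since $f$ is a monomorphism and $g^{-1}$ a partial injection. Let $C_H$ denote the set of $m \in \allmatches{Q}{H}$ with $m(Q) \cap H_p \neq \emptyset$, or $m$ subgraph satisfaction dependent in $H$, or $m = \tau \circ \tilde{m}'$ for some $\tilde{m}'$ subgraph satisfaction dependent in $H'$, and define $C_{H'}$ symmetrically. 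I abbreviate the two result sets as $R_H = \resultsstripped{p^\Delta}{\mathcal{C}^\Phi}$ and $R_{H'} = \resultsstripped{p^{\Delta'}}{\mathcal{C}^{\Phi'}}$, and the left-hand side as $A$.

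First I would pin down the result sets exactly: $R_H = \{m \in C_H \mid m \models \psi\}$ and $R_{H'} = \{m' \in C_{H'} \mid m' \models \psi\}$. The inclusion $\supseteq$ is precisely Theorem~\ref{the:ngc_delta_rete_completeness}. For $\subseteq$ I would read off the construction in Figure~\ref{fig:localize_delta}: the left input of the production semi-join is the union node $[\cup]^\Phi$, whose three sources compute exactly the $Q$-matches touching $H_p$ (here $[\phi > h]^\Phi$ retains exactly the $\infty$-marked matches by Theorem~\ref{the:precision_consistent_configuration} together with the fact that complementary matches in $N^\Phi_Q$ carry markings at most $h$), the matches subgraph satisfaction dependent in $H$ (the output of $p^{sat}_{(Q,\psi)}$, exactly by the alignment of $localize^{sat}$ with Definition~\ref{def:subgraph_satisfaction_dependence} underlying Theorem~\ref{the:completeness_satisfaction_dependency}), and the $\tau$-images of the matches subgraph satisfaction dependent in $H'$ (the translation node over $p^{sat'}_{(Q,\psi)}$); hence the union computes exactly $C_H$. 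The semi-join then retains from $C_H$ exactly those matches also present in $p^\Phi_{(Q,\psi)}$, which by correctness of $localize^\Psi$ (Theorem~\ref{the:localized_rete_ngcs_correctness}) are exactly the $\psi$-satisfying ones. The analogous reading of $N^{\Delta'}$ over $H'$ yields the characterization of $R_{H'}$.

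The technical heart is a \textbf{candidate symmetry lemma}: whenever $m = \tau \circ m'$, one has $m \in C_H \Leftrightarrow m' \in C_{H'}$. I would prove it by matching the three disjuncts across the correspondence. If $m$ touches $H_p$, then since $m = \tau \circ m'$ forces $m(Q) \subseteq f(K)$, the touched element is preserved and lies in $H_p$, so its image under $g \circ f^{-1}$ lies in $m'(Q) \cap H'_p$ by the containment $(g \circ f^{-1})(H_p) \subseteq H'_p$ from Definition~\ref{def:subgraph_restricted_graph_modification}; the reverse direction uses the symmetric containment. If $m$ is subgraph satisfaction dependent in $H$, then $m'$ qualifies via its own third disjunct with witness $m$. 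If $m = \tau \circ \tilde{m}'$ with $\tilde{m}'$ subgraph satisfaction dependent in $H'$, then injectivity of $\tau$ on $g(K)$ gives $\tilde{m}' = m'$, so $m'$ satisfies its second disjunct. The converse is the mirror image. The delicate point here, which I expect to be the \emph{main obstacle}, is the first disjunct: one must argue carefully that $m = \tau \circ m'$ confines $m$ to the preserved part $f(K)$ and that boundary nodes shared between $H_s$ and $H_p$ are mapped consistently, so that touching $H_p$ genuinely transfers to touching $H'_p$ under $iso_s$ together with the relevant-subgraph containments.

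Finally I would assemble the identity. Splitting ``no $\psi$-satisfying preimage'' into ``no preimage at all'' and ``preimages exist but all violate $\psi$'', I would show $A \subseteq R_H$: in the first case $m$ must use a deleted element, which by Definition~\ref{def:subgraph_restricted_graph_modification} lies in $H_p$, so $m$ touches $H_p$; in the second case, a preimage $m'$ with $m' \not\models \psi$ yields a span with $f \circ m_K = m \models \psi$ and $g \circ m_K = m' \not\models \psi$, so by Theorem~\ref{the:subgraph_satisfaction_dependence_necessary} either $m$ is subgraph satisfaction dependent in $H$ or $m'$ is subgraph satisfaction dependent in $H'$, whence $m \in C_H$ and $m \in R_H$. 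Using $A \subseteq R_H$ and the two exact characterizations, both $A$ and $R_H \setminus \tau(R_{H'})$ take the form of those matches in $R_H$ admitting no $\psi$-satisfying preimage, the only difference being whether the preimage ranges over all of $\allmatches{Q}{H'}$ or only over $C_{H'}$. The candidate symmetry lemma closes this gap: for $m \in R_H \subseteq C_H$, any preimage $m'$ automatically lies in $C_{H'}$, so the two quantifications coincide and the two sets are equal.
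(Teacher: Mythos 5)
Your overall strategy mirrors the paper's own proof quite closely: direction one rests on Theorem~\ref{the:subgraph_satisfaction_dependence_necessary} plus the completeness result (Theorem~\ref{the:ngc_delta_rete_completeness}) and uniqueness of preimages, and direction two reads membership in $\resultsstripped{p^\Delta}{\mathcal{C}^\Phi}$ back to the three sources of the union node and transfers each source across the modification using injectivity and $(g \circ f^{-1})(H_p) \subseteq H'_p$. However, there is a genuine gap in your ``exact characterization'' step: you claim the output of $p^{sat}_{(Q,\psi)}$ is \emph{exactly} the set of subgraph satisfaction dependent matches, hence $R_H = \{m \in C_H \mid m \models \psi\}$ with $C_H$ built from Definition~\ref{def:subgraph_satisfaction_dependence}. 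Theorem~\ref{the:completeness_satisfaction_dependency} only gives one inclusion, and the converse genuinely fails. Localized plain-query nets can contain finite-marked ``complementary'' matches that touch nothing in the relevant subgraph: for a two-edge path query $a \to b \to c$ with $H_p = \{a_1, c_2\}$ and edges $a_1 \to b_1$, $a_2 \to b_1$, $b_1 \to c_1$, $b_1 \to c_2$, the crossing requests of the two request projection structures put $(a_2 \to b_1 \to c_1, 1)$ into the production node's result set. Inside $localize^{sat}$ such matches are \emph{not} filtered away (unlike in $localize^\Delta$, there is no $[\phi > h]^\Phi$ node between $p^\Phi_{Q'}$ and the union feeding the sat semi-join), so they enter the semi-join's right dependency, trigger requests through $RPS^{\infty}_r$, and produce matches in $S = \resultsstripped{p^{sat}_{(Q,\psi)}}{\mathcal{C}^\Phi}$ that are not subgraph satisfaction dependent. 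Consequently $R_H \subseteq C_H$ can fail, and your closing step --- ``for $m \in R_H \subseteq C_H$, any preimage $m'$ automatically lies in $C_{H'}$'' --- breaks exactly for such matches: you cannot conclude that their $\psi$-satisfying preimages lie in $R_{H'}$, so the inclusion $\resultsstripped{p^\Delta}{\mathcal{C}^\Phi} \setminus \tau(\resultsstripped{p^{\Delta'}}{\mathcal{C}^{\Phi'}}) \subseteq A$ is not established.

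The repair is local, and it is what the paper's proof implicitly does: never characterize the result sets against Definition~\ref{def:subgraph_satisfaction_dependence}, but against the \emph{actual} stripped result sets $S$ and $S'$ of the two sat-nets, which are shared between $N^\Delta$ and $N^{\Delta'}$ (each union takes one of them directly and the other through a translation node). Replacing $C_H$ by $\{m \mid m(Q) \cap H_p \neq \emptyset\} \cup S \cup \tau(S')$ and $C_{H'}$ by $\{m' \mid m'(Q) \cap H'_p \neq \emptyset\} \cup S' \cup \tau'(S)$, where $\tau' = g \circ f^{-1}$, these \emph{are} exactly the union contents (the first disjunct is exact because $[\phi > h]^\Phi$ retains precisely the $\infty$-marked matches, Theorem~\ref{the:precision_consistent_configuration}), and your symmetry lemma holds verbatim for them: touching transfers as you argue, $m \in S$ gives $m' = \tau' \circ m \in \tau'(S)$, and $m \in \tau(S')$ gives $m' \in S'$ by injectivity. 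With that substitution your assembly goes through and coincides, case for case, with the paper's two-directional argument; the sharing of the two sat-nets, not exactness of what they compute, is what makes the symmetry work.
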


\begin{proof} (Idea)
Inclusion in both directions can be shown by exploiting the fact that ultimately, both $(N^\Delta, p^\Delta)$ and $(N^{\Delta'}, p^{\Delta'})$ check the satisfaction of $\psi$ for sets of matches that can only differ in matches touching the relevant subgraph.
\end{proof}

Analogously, matches created by the modification or for which the satisfaction of $\psi$ changes from $\texttt{false}$ to $\texttt{true}$ can be computed as the difference between $p^{\Delta'}$ and $p^\Delta$. The required translation of matches can be realized in a straightforward manner via marking-sensitive translation nodes and difference computation can be performed via marking-sensitive anti-joins, as displayed in Figure~\ref{fig:localize_delta_combined}.

\begin{figure}
\centering

\begin{tikzpicture}

\node[rectangle,
	minimum width = 7cm, 
	minimum height = 3.25cm] (r) at (0,0) {};

\node [below left= -3.27cm and -1.71cm of r] (join) [ms_retenode] {$\rhd$};

\node [below = 1.75cm of join, dashed] (delta_q) [ms_retenode] {$N^\Delta_{(Q, \psi)}$};

\node [below right = 0.5cm and -0.5cm of join] (translate) [ms_retenode] {$\xrightarrow{g \circ f^{-1}}$};

\node [below right= -3.27cm and -1.71cm of r] (joinp) [ms_retenode] {$\rhd'$};

\node [below = 1.75cm of joinp, dashed] (delta_qp) [ms_retenode] {$N^{\Delta'}_{(Q, \psi)}$};

\node [below left = 0.5cm and -0.5cm of joinp] (translatep) [ms_retenode] {$\xrightarrow{f \circ g^{-1}}$};

\draw [-{Latex}] (join) -- (delta_q);

\draw [-{Latex}] (join) -- (translatep);

\draw [-{Latex}] (translate) -- (delta_q);

\draw [-{Latex}] (joinp) -- (delta_qp);

\draw [-{Latex}] (joinp) -- (translate);

\draw [-{Latex}] (translatep) -- (delta_qp);

\end{tikzpicture}
\caption{RETE construction for locally computing changes in results for an extended graph query $(Q, \psi)$} \label{fig:localize_delta_combined}
\end{figure}


\section{Evaluation} \label{sec:evaluation}

We aim to investigate whether RETE net localization can improve performance of query execution in scenarios where the relevant subgraph constitutes only a fraction of the full model, considering initial query execution time, execution time for incrementally processing model updates, and memory consumption as performance indicators. We experiment\footnote{Experiments were executed on a Linux SMP Debian 4.19.67-2 computer with Intel Xeon E5-2630 CPU (2.3\,GHz clock rate) and 386\,GB system memory running OpenJDK 11.0.6. Reported measurements correspond to the arithmetic mean of measurements for 10 runs. Memory measurements were obtained using the Java Runtime class. To represent graph data, all experiments use EMF \cite{emf} object graphs that enable reverse navigation of edges. Our implementation is available under \cite{implementation}. More details and query visualizations can be found in \cite{preprint} and \cite{implementation} or in Appendix~\ref{app:queries}.} with the following querying techniques:

\begin{itemize}
\item \textbf{STANDARD}: Our own implementation of a regular RETE net with global execution semantics \cite{barkowsky2023host}.
\item \textbf{LOCALIZED}: Our own implementation of the RETE net used for STANDARD, localized according to the description in Section~\ref{sec:localized_search_with_marking-sensitive_rete} and Section~\ref{sec:localized_rete_extended_queries}.
\item \textbf{DELTA}: Our own implementation of the RETE net used for STANDARD, localized for detecting changes to the results of extended graph queries according to the description in Section~\ref{sec:localized_rete_satisfaction_changes}.
\item \textbf{VIATRA}: The external RETE-based VIATRA tool \cite{varro2016}.
\item \textbf{SDM*}: Our own local-search-based Story Diagram Interpreter tool \cite{giese2009improved}.
\end{itemize}

Note that STANDARD and LOCALIZED implement the additional optimizations for elimination of redundancy and downward propagation of nested graph conditions described in \cite{barkowsky2023host}. To the extent possible, DELTA employs these optimizations as well.

For SDM*, we only consider searching for new query results. We thus underapproximate the time and memory required for a full solution with this strategy, which would also require maintaining previously found results.

\subsection{Plain Graph Queries over Synthetic Abstract Syntax Graphs} \label{sec:evaluation_java_scalability}

We first attempt a systematic evaluation via a synthetic experiment, which emulates a developer loading part of a large model into their workspace and monitoring some well-formedness constraints as they modify the loaded part, that is, relevant subgraph, without simultaneous changes to other model parts.

We therefore generate Java abstract syntax graphs with 1, 10, 100, 1000, and 10000 packages, with each package containing 10 classes with 10 fields referencing other classes in the same or a different package. As relevant subgraph, we consider a single package and its contents. We then execute a plain graph query searching for paths consisting of a package and four classes connected via fields.

After the initial query execution, we modify the relevant subgraph by creating a class in the considered package along with 10 fields referencing other existing classes in the relevant subgraph, that is, the same package. We then perform an incremental update of query results. This step of modifying the relevant subgraph and updating query results is repeated 10 times.

Due to the query being a simple path, the effort for query execution largely depends on the number of overall query results. This number is in turn directly determined by the number of classes considered during execution and their number of connections to other classes. By fixing the former for localized execution via the selection of the relevant subgraph and fixing the latter for both localized and global execution via the construction of the host graph, we thus aim to isolate the effect of localization on the performance of query execution.

Figure~\ref{fig:java_scalability_bars} (left) displays the execution times for the initial execution of the query. The execution time of LOCALIZED remains around 120\,ms for all model sizes. The execution time for SDM* slowly grows from around 350\,ms to 1025\,ms due to indexing effort that is necessary for observing model changes. In contrast, the execution time for the other RETE-based strategies clearly scales with model size, with the execution time for STANDARD growing from around 13\,ms for the smallest model to more than 184\,000\,ms for the largest model. On the one hand, localization thus incurs a noticeable overhead in initial execution time for the smallest model, where even localized query execution is essentially global. On the other hand, it significantly improves execution time for the larger models and even achieves better scalability than the local-search-based tool in this scenario.

The average times for processing a model update are displayed in Figure~\ref{fig:java_scalability_bars} (center). Here, all strategies achieve execution times mostly independent of model size. While the measurements for STANDARD fluctuate, likely due to the slightly unpredictable behavior of hash-based indexing structures, average execution times remain low overall and below 10\,ms for LOCALIZED. Still, localization incurs a noticeable overhead up to factor 6 compared to STANDARD and VIATRA. This overhead is expected, since in this scenario, all considered updates affect the relevant subgraph and thus impact the results of the localized RETE net similarly to the results of the standard RETE nets. Consequently, localization does not reduce computational effort, but causes overhead instead.

Finally, Figure~\ref{fig:java_scalability_bars} (right) shows the memory measurements for all strategies and models after the final update. Here, LOCALIZED again achieves a substantial improvement in scalability compared to the other RETE-based strategies, with a slightly higher memory consumption for the smallest model and an improvement by factor 120 over STANDARD for the largest model. This is a result of the localized RETE net producing the same number of intermediate results for all model sizes, with the slight growth in memory consumption likely a product of the growing size of the model itself. SDM*, not storing any matches, performs better for all but the largest model, where memory consumption surpasses the measurement for LOCALIZED. This surprising result can probably be explained by the fact that SDM* has to index the full model to observe modifications, causing an overhead in memory consumption.

\begin{figure}
\centering
\includegraphics[width=\textwidth]{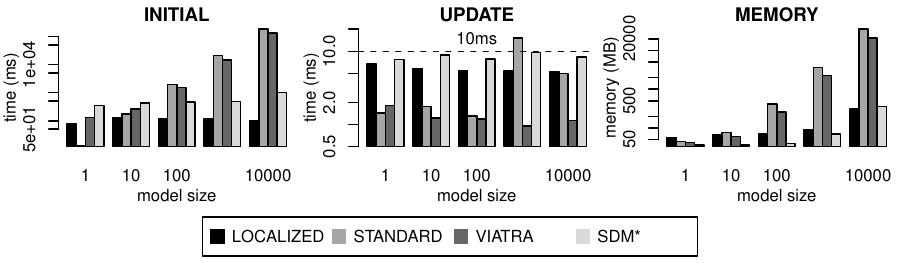}
\caption{Measurements for the synthetic abstract syntax graph scenario (log scale)}\label{fig:java_scalability_bars}
\end{figure}

In addition to these experiments, where the full model is always stored in main memory, we also experimented with a model initially stored on disk via the persistence layer CDO \cite{cdo}. Measurements mostly mirror those for the main-memory-based experiment. Notably though, in conjunction with CDO the LOCALIZED strategy achieves almost ideal scalability regarding memory consumption for this scenario, with measurements around 70\,MB for all model sizes.\footnote{See \cite{preprint} for measurement results.}

\subsection{Plain Graph Queries over Real Abstract Syntax Graphs} \label{sec:evaluation_java_incremental}

To evaluate our approach in the context of a more realistic application scenario, we perform a similar experiment using real data and plain graph queries inspired by object-oriented design patterns \cite{gamma1993design}. In contrast to the synthetic scenario, this experiment emulates a situation where modifications may concern not only the relevant subgraph but the entire model, for instance when multiple developers are simulatenously working on different model parts.

We therefore extract a history of real Java abstract syntax graphs with about 16\,000 vertices and 45\,000 edges from a software repository using the MoDisco tool \cite{bruneliere2010modisco,bruneliere2014modisco}. After executing the queries over the initial commit, we replay the history and perform incremental updates of query results after each commit. The employed queries are visualized in Appendix~\ref{app:queries}. As relevant subgraph, we again consider a single package and its contents.

Figure~\ref{fig:java_incremental_time} displays the aggregate execution time for processing the commits one after another for the queries where LOCALIZED performed best and worst compared to STANDARD, with the measurement at $x = 0$ indicating the initial execution time for the starting model. Initial execution times are similarly low due to a small starting model and in fact slightly higher for LOCALIZED. However, on aggregate LOCALIZED outperforms STANDARD with an improvement between factor 5 and 18 due to significantly lower update times, which are summarized in Figure~\ref{fig:update_clouds} (left).

In this case, the improvement mostly stems from the more precise monitoring of the model for modifications: The RETE nets of both STANDARD and LOCALIZED remain small due to strong filtering effects in the query graphs. However, while STANDARD spends significant effort on processing model change notifications due to observing all appropriately typed model elements, this effort is substantially reduced for LOCALIZED, which only monitors elements relevant to query results required for completeness under the relevant subgraph. The execution times of SDM* can be explained by the same effect. Interestingly, VIATRA seems to implement an improved handling of such notifications, achieving improved execution times for particularly small updates even compared to LOCALIZED, but requiring more time if an update triggers changes to the RETE net. Combined with a higher RETE net initialization time, this results in LOCALIZED also outperforming VIATRA for all considered queries.

Regarding memory consumption, all strategies perform very similarly, which is mostly a result of the size of the model itself dominating the measurement and hiding the memory impact of the rather small RETE nets.

\begin{figure}
\centering
\includegraphics[width=\textwidth]{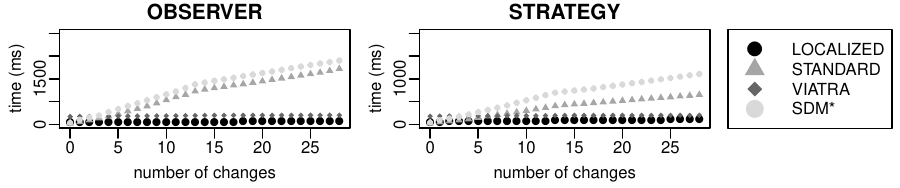}
\caption{Execution times for the real abstract syntax graph scenario}\label{fig:java_incremental_time}
\end{figure}

\begin{figure}
\centering
\includegraphics[width=\textwidth]{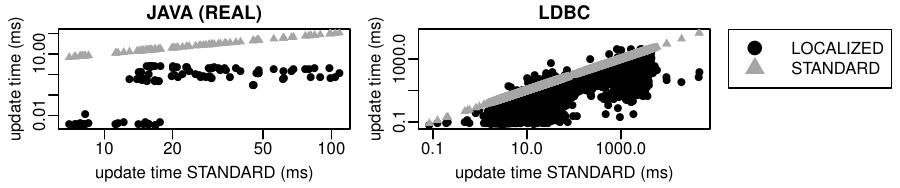}
\caption{Summary of update processing times for plain graph queries in different scenarios (log scale)}\label{fig:update_clouds}
\end{figure}

\subsection{LDBC Social Network Benchmark} \label{sec:evaluation_ldbc}

Finally, we also perform experiments inspired by the independent LDBC Social Network Benchmark \cite{erling2015ldbc,angles2024ldbc}, simulating a case where a user of a social network wants to incrementally track query results relating to them personally.\footnote{Our experiments are not to be confused with an official run of the LDBC Social Network Benchmark. The benchmark specification and data generator only serve as a source of plausible data and queries.}

We therefore generate a synthetic social network consisting of around 850\,000 vertices, including about 1700 persons, and 5\,500\,000 edges using the benchmark's data generator and the predefined scale factor 0.1. We subsequently transform this dataset into a sequence of element creations and deletions based on the timestamps included in the data. We then create a starting graph by replaying the first half of the sequence and perform an initial execution of adapted versions of twelve benchmark queries consisting of plain graph patterns, with a person with a close-to-average number of contacts in the final social network designated as relevant subgraph. After the initial query execution, we replay the remaining changes, incrementally updating the query results after each change. To evaluate our approach for localizing execution of extended graph queries via the RETE mechanism, we also perform analogous experiments with versions of two queries that are equipped with nested graph conditions according to their original formulation in the benchmark. Visualizations of the employed queries can be found in Appendix~\ref{app:queries}.

\subsubsection{Plain Graph Queries}

The resulting execution times for the plain queries where LOCALIZED performed best and worst compared to STANDARD are displayed in Figure~\ref{fig:ldbc_time}. A summary of all update time measurements for LOCALIZED in comparison with STANDARD is also displayed in Figure~\ref{fig:update_clouds} (right). For all queries, LOCALIZED ultimately outperforms the other approaches by a substantial margin, as the localized RETE version forgoes the computation of a large number of irrelevant intermediate results due to the small relevant subgraph on the one hand and avoids redundant computations on the other hand.

\begin{figure}
\centering
\includegraphics[width=\textwidth]{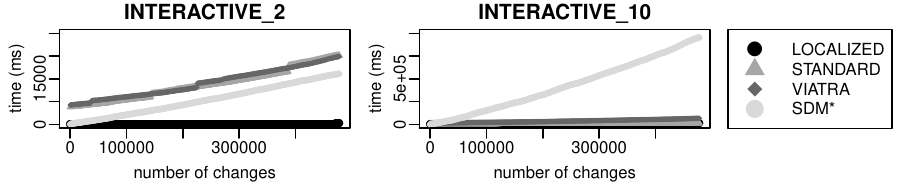}
\caption{Execution times for the LDBC scenario}\label{fig:ldbc_time}
\end{figure}

\begin{figure}
\centering
\includegraphics[width=\textwidth]{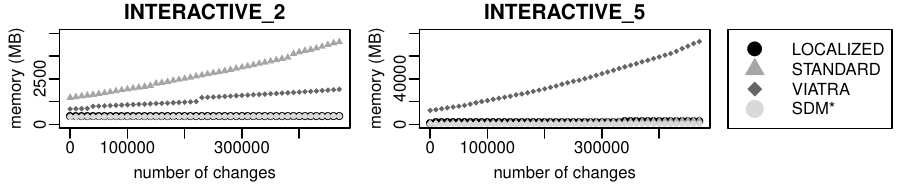}
\caption{Memory measurements for the LDBC scenario}\label{fig:ldbc_memory}
\end{figure}

The memory measurements in Figure~\ref{fig:ldbc_memory} mostly mirror execution times for RETE-based approaches, with the memory consumption for LOCALIZED always lower than for STANDARD and VIATRA except for a period at the beginning of the execution of the query INTERACTIVE\_5, where STANDARD outperforms LOCALIZED. The weaker performance for INTERACTIVE\_5 and INTERACTIVE\_10 likely stems from the fact that, disregarding edge direction, the associated query graphs contain cycles that act as strong filters for subsequent (intermediate) results. These filters then achieve a somewhat similar effect as localization. The weaker performance of VIATRA for INTERACTIVE\_5 is a product of the usage of a suboptimal RETE net. As expected, memory consumption is lowest for SDM* for all queries.

\subsubsection{Extended Graph Queries}

The resulting execution times for the considered extended graph queries are displayed in Figure~\ref{fig:ldbc_ngc_time}. A visual comparison between update times for LOCALIZED and DELTA with STANDARD is shown in Figure~\ref{fig:update_clouds_ngc}.

For both queries, LOCALIZED outperforms STANDARD and VIATRA significantly with respect to both initial execution time and update times, as it benefits from the same localized computation of matches for the base pattern as in the plain query case. Moreover, checking effort for the equipped nested graph conditions is also reduced, with checks only performed locally in the context of relevant base pattern matches rather than globally.

Compared to LOCALIZED, DELTA requires noticeably more initial execution time and time for processing updates for both queries. The deterioration in performance is likely a product of several factors. First, the required duplication of the RETE nets for localized match computation and checking of nested graph condition creates obvious redundancy. Second, DELTA introduces additional RETE nets for computing subgraph satisfaction dependent matches for the base pattern. Third, effectively enabling the execution of queries over two versions of the host graph at the same time comes with additional indexing effort required for the DELTA strategy.

The deterioration in performance is slightly more pronounced for the query INTERACTIVE\_4\_NGC. This is likely due to the fact that the typing of vertices in the nested graph conditions of INTERACTIVE\_3\_NGC prevents changes to the relevant subgraph from impacting the conditions' satisfaction, which leads to empty RETE subnets for the computation of subgraph satisfaction dependent matches.

\begin{figure}
\centering
\includegraphics[width=\textwidth]{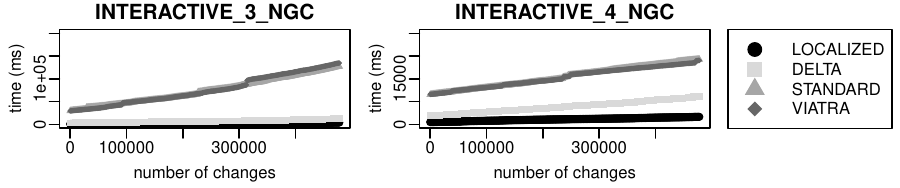}
\caption{Execution times for the LDBC scenario}\label{fig:ldbc_ngc_time}
\end{figure}

\begin{figure}
\centering
\includegraphics[width=\textwidth]{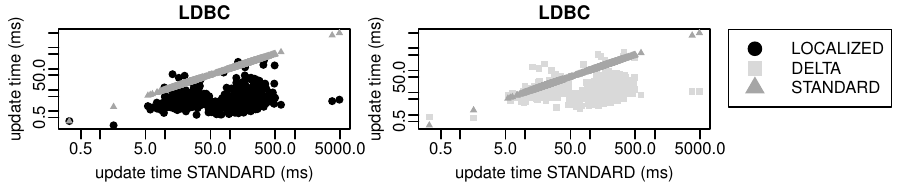}
\caption{Summary of update processing times for extended graph queries in the LDBC scenario (log scale)}\label{fig:update_clouds_ngc}
\end{figure}

\begin{figure}
\centering
\includegraphics[width=\textwidth]{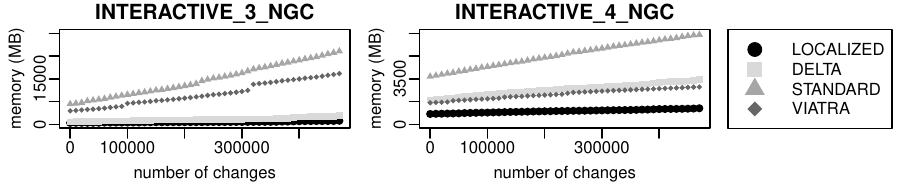}
\caption{Memory measurements for the LDBC scenario}\label{fig:ldbc_ngc_memory}
\end{figure}

Most observations related to execution time again also apply to memory consumption due to the close link between RETE net execution time and configuration size. The most notable exception is the lower memory consumption measured for VIATRA and the query INTERACTIVE\_4\_NGC compared to DELTA. This is likely a product of a generally more memory efficient RETE implementation in VIATRA in comparison with our own, which is also illustrated by the difference in memory consumption between VIATRA and STANDARD despite very similar execution times.

Note that SDM* is excluded from our experiments with extended graph queries. This is due to the fact that it is unclear how to implement change-based search of matches for extended graph queries purely with local search and without implementing some kind of auxiliary data structure for storing information about the search process.

\subsection{Discussion}

On the one hand, our experimental results demonstrate that in situations where the relevant subgraph constitutes only a fraction of the full model, RETE net localization can improve the performance of incremental query execution compared to both the standard RETE approach and a solution based on local search. In such scenarios, localization can improve scalability with respect to initial query execution time and memory consumption, as demonstrated in Sections~\ref{sec:evaluation_java_scalability} and~\ref{sec:evaluation_ldbc} and, if changes are not restricted to the relevant subgraph, also update processing time, as shown in Sections~\ref{sec:evaluation_java_incremental} and~\ref{sec:evaluation_ldbc}.

On the other hand, as demonstrated in Section~\ref{sec:evaluation_java_scalability}, localization incurs an overhead on update processing time if changes are only made to the relevant subgraph and on initial execution time and memory consumption if the relevant subgraph contains most of the elements in the full model. While this overhead will essentially be limited to a constant factor in many scenarios, as analyzed in Section~\ref{sec:localized_rete_performance} and~\ref{sec:localized_rete_extended_queries}, the standard RETE approach remains preferable for query execution with global semantics or if modifications are restricted to the relevant subgraph and initial query execution time and memory consumption are irrelevant.

These observations apply to both plain and extended graph queries when it comes to the application scenario of tracking all matches for the base query pattern that include elements from the relevant subgraph. In the case of plain graph queries, the application scenario of localized tracking of changes to a global query result coincides with the first. However, for extended graph queries, localization requires noticeably more effort in the second scenario and can lead to increased theoretical complexity if further optimizations are involved, as discussed in Section~\ref{sec:localized_rete_satisfaction_changes}. However, our evaluation results in in Section~\ref{sec:evaluation_ldbc} indicate that it may still prove beneficial in practice.

\subsection{Threats to Validity}

To mitigate internal threats to the validity of our results resulting from unexpected JVM behavior, we have performed 10 runs of all experiments. However, with reliable memory measurements a known pain point of Java-based experiments, the reported memory consumption values are still not necessarily accurate and can only serve as an indicator. To minimize the impact of the implementation on conceptual observations, we compare the prototypical implementation of our approach to a regular RETE implementation \cite{barkowsky2023host}, which shares a large portion of the involved code, and to two existing tools \cite{varro2016,giese2009improved}.

We have attempted to address external threats to validity via experiments accounting for different application domains and a combination of synthetic and real-world queries and data, including a setting from an established, independent benchmark. Still, our results cannot be generalized and do not support quantitative claims, but serve to demonstrate conceptual advantages and disadvantages of the presented approach.


\section{Related Work} \label{sec:related_work}

With graph query execution forming the foundation of many applications, there already exists an extensive body of research regarding its optimization.

Techniques based on local search \cite{cordella2004sub,geiss2006grgen,giese2009improved,arendt2010henshin,han2013turboiso,bi2016efficient,juttner2018vf2++} constitute one family of graph querying approaches. While they are designed to exploit locality in the host graph to improve execution time, repeated query execution leads to redundant computations that are only avoided by fully incremental techniques.

In \cite{egyed2006instant}, Egyed proposes a scoping mechanism for local search to support incremental query execution, only recomputing query results when a graph element touched during query execution changes. While this approach offers some degree of incrementality, it is limited to queries with designated root elements that serve as an anchor in the host graph and may still result in redundant computations, since query reevaluation is only controlled at the granularity of root elements.

A second family of solutions is based on discrimination networks \cite{hanson2002trigger,varro2013rete,varro2016,beyhl2018framework}, the most prominent example of which are RETE nets.

VIATRA \cite{varro2016} is a mature tool for incremental graph query execution based on the RETE algorithm \cite{forgy1989rete}, which supports advanced concepts for query specification and optimization not considered in this article. Notably, VIATRA allows reuse of matches for isomorphic query subgraphs within a single RETE net. This is achieved via RETE structures not covered by the rather restrictive definition of well-formedness used in this article, which points to a possible direction for future work. However, while VIATRA also has a local search option for query execution, it does not integrate local search with the incremental query engine but rather offers it as an alternative.

Beyhl \cite{beyhl2018framework} presents an incremental querying technique based on a generalized version of RETE nets, called Generalized Discrimination Networks (GDNs) \cite{hanson2002trigger}. The main difference compared to the RETE algorithm is the lack of join nodes. Instead, more complex nodes that directly compute complex matches using local search are employed. The approach however represents more of a means of controlling the trade-off between local search and RETE rather than an integration and still requires a global computation of matches for the entire host graph.

In previous work \cite{barkowsky2021improving}, we have made a first step in the direction of localizing RETE-based query execution. While this earlier technique already allowed anchoring the execution of a RETE net to certain host graph vertices, this anchoring was based on typing information and its results did not meet the definition of completeness introduced in this article.

Model repositories such as CDO \cite{cdo} and NeoEMF \cite{daniel2017neoemf} provide support for query execution over partial models via lazy loading. As persistence layers, these solutions however focus on implementing an interface of atomic model access operations in order to be agnostic regarding the employed query mechanism.

The Mogwa\"{\i} tool \cite{daniel2018scalable} aims to improve query execution over persistence layers like CDO and NeoEMF by mapping model queries to native queries for the underlying database system instead of using the atomic model access operations provided by the layer's API, avoiding loading the entire model into main memory. The tool however does not consider incremental query execution.

Jahanbin et al. propose an approach for querying partially loaded models stored via persistence layers \cite{jahanbin2022partial} or as XMI files \cite{jahanbin2023towards}. In contrast to the solution presented in this article, their approach still aims to always provide complete query results for the full model and is thus based on static analysis and typing information rather than dynamic exploitation of locality.

Query optimization for relational databases is a research topic that has been under intense study for decades \cite{krishnamurthy1986optimization,lee2001optimizing,leis2015good}. Generally, many of the techniques from this field are applicable to RETE nets, which are ultimately based on relational algebra and related to materialized views in relational databases \cite{gupta1995maintenance}. However, relational databases lack the notion of locality inherent to graph-based encodings and are hence not tailored to exploit local navigation.

This shortcoming has given rise to a number of graph databases \cite{angles2012comparison}, which employ a graph-based data representation instead of a relational encoding and form the basis of some model persistance layers like NeoEMF \cite{daniel2017neoemf}. While these database systems are designed to accommodate local navigation, to the best of our knowledge, support for incremental query execution is still lacking.


\section{Conclusion} \label{sec:conclusion}

In this article, we have presented a relaxed notion of completeness for query results that lifts the requirement of strict completeness of results for graph queries and thereby the need for necessarily global query execution. Based on this relaxed notion of completeness, we have developed an extension of the RETE approach that allows local, yet fully incremental execution of both plain and extended graph queries. An initial evaluation demonstrates that the approach can improve scalability in scenarios with small relevant subgraphs, but causes an overhead in unfavorable cases.

In future work, we want to further evaluate the performance of the presented approach. In particular, we are interested in exploring whether the RETE net localization technique for extended graph queries can improve performance in certain scenarios even if host graph and relevant subgraph coincide. We also plan to investigate whether the proposed solution can be adapted to support bulk loading of partial models in order to reduce overhead caused by lazy loading strategies employed by model persistence layers.

\section*{Acknowledgments}
\noindent
This work was developed in the course of the project modular and incremental Global Model Management (project number 336677879) funded by the DFG.


\bibliographystyle{alphaurl}
\bibliography{localized_rete_for_incremental_graph_queries}


\appendix

\clearpage

\section{Technical Details} \label{app:technical_details}

We first introduce a few technical definitions required only for proofs in this appendix.

Given a local navigation structure $LNS(p)$ containing the marking-sensitive vertex input nodes $[v]^\Phi$ and $[w]^\Phi$, we call $\chi(LNS(p)) = \{[\cup]^\Phi_v, [\cup]^\Phi_w\}$ the \emph{extension points} of $LNS(p)$.

Let $(N, p)$ be a well-formed RETE net with $p$ a join node. For the localized RETE net $(N^\Phi, p^\Phi) = localize(N, p)$ with $N^\Phi = N^\Phi_{\bowtie} \cup N^\Phi_l \cup N^\Phi_r \cup RPS_l \cup RPS_r$, the extension points of $N^\Phi$ are given by $\chi(N^\Phi) = \chi(N^\Phi_l) \cup \chi(N^\Phi_r)$.

Any RETE net $(N^\Phi, p^\Phi) = localize^\Psi(Q, \psi)$ directly contains a RETE subnet $(N^\Phi_Q, p^\Phi_Q) = localize(Q)$ regardless of the form of the nested graph condition $\psi$. The extension points of $(N^\Phi, p^\Phi)$ are then given by $\chi(N^\Phi) = \chi(N^\Phi_Q)$.

We say that a marking-sensitive RETE net $(X^\Phi, p^\Phi_X)$ is a \emph{modular extension} of $(N^\Phi, p^\Phi)$ if $N^\Phi \subseteq X^\Phi$ and $\forall e \in E^{X^\Phi} : s^{X^\Phi}(e) \in V^{N^\Phi} \wedge t^{X^\Phi}(e) \in V^{N^\Phi} \Rightarrow e \in E^{N^\Phi}$.

\subsection{Theorems in Section~\ref{sec:localized_search_with_marking-sensitive_rete}}

\setcounter{thmappcount}{3}

\setcounter{thmapp}{5}
\begin{thmapp}[Matches are marked $\infty$ in query results of RETE nets localized via $localize$ iff they touch the relevant subgraph] \label{the:precision_consistent_configuration_appendix} 
Let $H$ be a graph, $H_p \subseteq H$, $(N, p)$ a well-formed RETE net, and $Q = \query{p}$. Furthermore, let $\mathcal{C}^\Phi$ be a consistent configuration for the localized RETE net $(N^\Phi, p^\Phi) = localize(N, p)$. It then holds that $\forall (m, \phi) \in \mathcal{C}^\Phi(p^\Phi) : m(Q) \cap H_p \neq \emptyset \Leftrightarrow \phi = \infty$.
\end{thmapp}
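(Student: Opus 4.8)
The plan is to prove the statement by induction on the height of the well-formed RETE net $(N, p)$, but with a strengthened hypothesis that accounts for the extra dependencies that request projection structures attach to the extension points of localized subnets. Concretely, I would prove the following: for any modular extension $(X^\Phi, p^\Phi)$ of $(N^\Phi, p^\Phi) = localize(N, p)$ whose additional dependency edges target only the extension points $\chi(N^\Phi)$ and contribute only matches carrying \emph{finite} markings, every consistent configuration $\mathcal{C}^\Phi$ for $X^\Phi$ satisfies $m(Q) \cap H_p \neq \emptyset \Leftrightarrow \phi = \infty$ for all $(m, \phi) \in \mathcal{C}^\Phi(p^\Phi)$. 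The theorem is the special case $X^\Phi = N^\Phi$ with no additional edges. This strengthening is what makes the induction compose: in a join construction each subnet appears as exactly such a modular extension, since the only edges reaching into it from the rest of the net originate at a marking assignment node $[\phi := h]^\Phi$ of a request projection structure, where $h$ is the finite join-tree height.

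First I would establish a sub-invariant at the extension-point union nodes: a marking-sensitive vertex input $[v]^\Phi$ contributes a vertex match with marking $\infty$ exactly for the vertices of $H_p$, and every external contribution is finite by hypothesis, so the maximum taken by $[\cup]^\Phi_v$ yields marking $\infty$ for a vertex match iff that vertex lies in $H_p$. For the base case $p = [v \rightarrow w]$, the production node is the top union of the local navigation structure. The forward navigation node inherits the source vertex's marking from $[\cup]^\Phi_v$ and the backward navigation node inherits the target vertex's marking from $[\cup]^\Phi_w$, and the top union takes their maximum. Hence an edge match is marked $\infty$ iff its source or its target lies in $H_p$, which by the footnote to Definition~\ref{def:completeness_subgraphs} is exactly the condition $m(Q) \cap H_p \neq \emptyset$.

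For the inductive step, $p^\Phi = [\bowtie]^\Phi$ combines $(m_l, \phi_l)$ and $(m_r, \phi_r)$ into $(m_l \cup m_r, \max(\phi_l, \phi_r))$. Applying the strengthened hypothesis to the subnets $N^\Phi_l$ and $N^\Phi_r$, each a modular extension receiving only finite external markings through $RPS_r$ respectively $RPS_l$, I obtain $\phi_l = \infty \Leftrightarrow m_l(Q_l) \cap H_p \neq \emptyset$ and likewise for $\phi_r$. Since $Q = Q_l \cup Q_r$ and $m = m_l \cup m_r$, the match $m$ touches $H_p$ iff $m_l$ or $m_r$ does, and $\max(\phi_l, \phi_r) = \infty$ iff $\phi_l = \infty$ or $\phi_r = \infty$; combining the two equivalences yields the claim at $p^\Phi$.

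I expect the main obstacle to be the careful bookkeeping of the modular-extension hypothesis rather than any single computation: I must verify that in the join construction the request projection structures really inject only finitely marked matches into the extension points, relying on the join-tree height $h$ being finite and on marking assignment nodes overwriting the incoming marking, and that no path inside a subnet can promote such a finite input to an $\infty$ marking at the subnet's production node. Establishing the extension-point sub-invariant up front, together with the observation that the \emph{only} source of the marking $\infty$ anywhere in the construction is a vertex input restricted to $H_p$, is what controls this and keeps the induction sound.
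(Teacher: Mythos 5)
Your proof is correct, and it rests on the same core observation as the paper's: in $localize(N,p)$ the only source of the marking $\infty$ is a marking-sensitive vertex input executed over $H_p$, every request projection structure injects only the finite marking $h$ via its marking assignment node $[\phi := h]^\Phi$, and marking-sensitive joins propagate markings by taking maxima. The route you take from there is genuinely different in its bookkeeping, though. The paper argues directly on the whole net: the forward direction traces an $\infty$-marked match at a join down to an $\infty$-marked match in some local navigation structure, and the converse direction invokes the completeness result (Theorem~\ref{the:completeness_consistent_configuration}) to show that $(m,\infty)$ is present whenever $m$ touches $H_p$ --- an argument that additionally leans on the fact that a match carries at most one marking in a consistent configuration, since exhibiting $(m,\infty)$ alone does not rule out a second, finitely marked tuple for the same match. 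Your structural induction with the strengthened modular-extension hypothesis needs neither completeness nor uniqueness of markings: because the claim quantifies only over tuples already in $\mathcal{C}^\Phi(p^\Phi)$, consistency of the join node gives the decomposition $(m,\phi) = (m_l \cup m_r, \max(\phi_l,\phi_r))$ with $m_l = m|_{Q_l}$ and $m_r = m|_{Q_r}$, and the induction hypothesis --- applicable to $N^\Phi_l$ and $N^\Phi_r$ because each is a modular extension receiving only finitely marked input through $RPS_r$, $RPS_l$, or the ambient net --- settles both directions of the equivalence at once, for every tuple. This makes your proof more self-contained, formalizes the compositionality that the paper compresses into the phrase that joins ``ultimately only combine matches from the top union nodes of local navigation structures,'' and uses the same device (consistency for modular extensions) that the paper's own appendix employs for the $localize^\Psi$ lemmata. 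One small caution: with the paper's edge orientation (dependent node to dependency node), the extra edges of your modular extension have their \emph{sources} at the extension points $\chi(N^\Phi)$ rather than targeting them; the hypothesis should be phrased as requiring that every external node on which an extension point depends contributes only finitely marked tuples, which is what your argument actually uses.
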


\begin{proof}
Since each extension point $x \in \chi(N^\Phi)$ is only preceded by marking assignment nodes that assign a marking other than $\infty$ and some marking-sensitive vertex input node $[v]^\Phi$ that only extracts matches $m$ with $m(v) \in H_p$, it must hold for the union node $[\cup]^\Phi$ at the top of each local navigation structure that $\forall (m, \phi) \in \mathcal{C}^\Phi([\cup]^\Phi) : m(Q) \cap H_p \neq \emptyset \Leftrightarrow \phi = \infty$.

All joins in $N^\Phi$ ultimately only combine matches from the top union nodes of local navigation structures, assigning the maximum marking in the process. Consequently, for any marking-sensitive join node $[\bowtie]^\Phi \in N^\Phi$ and a tuple $(m, \infty) \in \mathcal{C}^\Phi([\bowtie]^\Phi)$, there must be a tuple $(m', \infty) \in \mathcal{C}^\Phi([\cup]^\Phi)$ for some local navigation structure with top union node $[\cup]^\Phi$ with $Q' = query([\cup]^\Phi)$ such that $m(Q') = m'(Q')$. Thus, it must hold that $m(Q) \cap H_p \neq \emptyset$.

Conversely, by the semantics of the marking-sensitive join node, which always assigns the maximum marking of related dependent matches to the resulting matches, a marking of $\infty$ ultimately always propagates to the top of the RETE net. By Theorem~\ref{the:completeness_consistent_configuration} and because the construction of a match that touches the relevant subgraph via joins must always involve at least one join with a match that is marked $\infty$ in a local navigation structure, it thus follows that $\forall m \in \allmatches{Q}{H} : (m, \infty) \in \mathcal{C}^\Phi([\bowtie]^\Phi)$.

Since $p^\Phi$ is either a marking-sensitive join node or the top union node of a local navigation structure, it follows that $\forall (m, \infty) \in \mathcal{C}^\Phi(p^\Phi) : \exists v \in V^Q : m(v) \in V^{H_p}$.
\end{proof}

\subsection{Theorems in Section~\ref{sec:localized_rete_performance}}

\setcounter{thmapp}{9}
\begin{thmapp}[RETE net localization via $localize$ introduces only a constant factor overhead on effective configuration size] \label{the:upper_bound_configuration_size_appendix} 
Let $H$ be an edge-dominated graph, $H_p \subseteq H$, $(N, p)$ a well-formed RETE net with $Q = \query{p}$, $\mathcal{C}$ a consistent configuration for $(N, p)$ for host graph $H$, and $\mathcal{C}^\Phi$ a consistent configuration for the marking-sensitive RETE net $(N^\Phi, p^\Phi) = localize(N, p)$ for host graph $H$ and relevant subgraph $H_p$. It then holds that $\sum_{n^\Phi \in V^{N^\Phi}} \sum_{(m, \phi) \in \mathcal{C}(n^\Phi)} |m| \leq 7 \cdot |\mathcal{C}|_e$.
\end{thmapp}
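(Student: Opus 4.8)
The plan is to exploit the fact that $localize$ produces, for each node of the original net $(N,p)$, exactly seven nodes of $(N^\Phi,p^\Phi)$: every edge input node $[v\to w]$ yields the seven nodes of a local navigation structure, and every join node yields its marking-sensitive join together with the two request projection structures, i.e.\ $1+3+3=7$ nodes. Hence $|V^{N^\Phi}| = 7|V^N|$, and the factor $7$ is exactly the budget I have to spend. I would first record the basic monotonicity fact that localization only ever extracts a subset of the elements the global net sees, so the stripped result set of every $N^\Phi$-node is contained in the result set of its associated $N$-node; together with the uniqueness of markings (each match occurs in at most one tuple, as noted after Definition~\ref{def:consistent_marking_sensitive_configuration}) this lets me count distinct matches.

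I would then split $V^{N^\Phi}$ into \emph{full-size} nodes, whose matches have the same size as matches of some $N$-node, and \emph{single-vertex} nodes, whose matches have size $1$. The full-size family consists of: the forward navigation, backward navigation and top union nodes of each local navigation structure, each a subset of the matches of the corresponding edge input node $e$; the marking-sensitive joins, each a subset of the matches of the corresponding join $j$; and the two marking-filter nodes inside each join's request projection structures, each a subset of the matches of a \emph{child} of that join. These contribute at most $3\sum_e \mathrm{eff}(\mathcal{C}(e)) \le 3|\mathcal{C}|_e$, $\sum_j \mathrm{eff}(\mathcal{C}(j)) \le |\mathcal{C}|_e$, and $\sum_{n\ne p}\mathrm{eff}(\mathcal{C}(n)) \le |\mathcal{C}|_e$ respectively, where the last bound uses that in the tree $(N,p)$ every non-root node is the child of exactly one join. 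The full-size nodes thus contribute at most $5|\mathcal{C}|_e$.

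The crux, and the step I expect to be the main obstacle, is the single-vertex nodes, since they have no same-size counterpart in $N$ and must be controlled through edge-dominatedness combined with the size gap between vertex matches (size $1$) and matches carrying an edge (size $\ge 3$). For the four vertex-level nodes of the local navigation structure built from $[v\to w]$, all matches are single $v$- or $w$-typed vertices, so their number is at most the count of $v$- resp.\ $w$-typed vertices of $H$, which by Definition~\ref{def:edge_dominated_graphs} (with $v$ the source and $w$ the target type of the edge type of $e$) is at most $|\mathcal{C}(e)|$; this yields $\le 4|\mathcal{C}(e)| = \tfrac{4}{3}\mathrm{eff}(\mathcal{C}(e))$ per edge input. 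For the four single-vertex nodes inside a join's request projection structures, a projection cannot increase cardinality, so each holds at most as many matches as the relevant child production node, giving $\le 2|\mathcal{C}(p_l)| + 2|\mathcal{C}(p_r)| \le \tfrac{2}{3}(\mathrm{eff}(\mathcal{C}(p_l)) + \mathrm{eff}(\mathcal{C}(p_r)))$ since each child match has size $\ge 3$; summing over joins again uses the ``each non-root node is a child once'' accounting to give $\le \tfrac{2}{3}|\mathcal{C}|_e$. The single-vertex nodes therefore contribute at most $\tfrac{4}{3}|\mathcal{C}|_e + \tfrac{2}{3}|\mathcal{C}|_e = 2|\mathcal{C}|_e$, and adding the two parts yields the claimed $5|\mathcal{C}|_e + 2|\mathcal{C}|_e = 7|\mathcal{C}|_e$. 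The delicate points I would verify carefully are that edge-dominatedness is applied with the correct source/target types and that the child-summation over the join tree does not double count, both of which rest on the tree structure of the well-formed net $(N,p)$.
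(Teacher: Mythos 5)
Your proof is correct and takes essentially the same route as the paper's, which establishes the theorem via Lemma~\ref{lem:upper_bound_configuration_size}: both arguments combine the subset property of localized result sets (stripped results are contained in the corresponding standard node's complete result set) with uniqueness of markings, apply edge-dominatedness to charge the single-vertex nodes of each local navigation structure to the corresponding edge input node, and exploit the tree structure of $(N,p)$ so that each request projection structure is charged to a join's child exactly once. The only difference is bookkeeping: the paper proves a per-node bound (factor $5$ for the production node, factor $7$ for all other nodes) and then sums, whereas you sum globally over the full-size and single-vertex node families with slightly tighter constants ($\tfrac{4}{3}$ and $\tfrac{2}{3}$ where the paper uses $2$, based on exact match sizes $1$ and $\geq 3$ rather than the bound $|Q_v| \leq |Q_e|/2$), arriving at the same overall factor $7$.
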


\begin{proof}
Follows directly from Lemma~\ref{lem:upper_bound_configuration_size}.
\end{proof}

\subsection{Theorems in Section~\ref{sec:localized_rete_extended_queries}}

\stepcounter{thmappcount}

\setcounter{thmapp}{2}
\begin{thmapp}[Consistent configurations for RETE nets localized via $localize^\Psi$ yield correct query results under the relevant subgraph] \label{the:localized_rete_ngcs_correctness_appendix} 
Let $H$ be a graph, $H_p \subseteq H$, and $(Q, \psi)$ an extended graph query. Furthermore, let $\mathcal{C}^\Phi$ be a consistent configuration for the localized RETE net $(N^\Phi, p^\Phi) = localize^\Psi(Q, \psi)$. The set of matches given by the stripped result set $\resultsstripped{p^\Phi}{\mathcal{C}^\Phi}$ is then correct under $H_p$.
\end{thmapp}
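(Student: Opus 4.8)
The plan is to prove the two inclusions of Definition~\ref{def:correctness_subgraphs_ngcs} (soundness $\resultsstripped{p^\Phi}{\mathcal{C}^\Phi} \subseteq \{m \in \allmatches{Q}{H} \mid m \models \psi\}$ and local completeness $\{m \in \allmatches{Q}{H} \mid m \models \psi \wedge m(Q) \cap H_p \neq \emptyset\} \subseteq \resultsstripped{p^\Phi}{\mathcal{C}^\Phi}$) simultaneously by induction over the structure of $\psi$, following the recursion of $localize^\Psi$. In the base case $\psi = \texttt{true}$ we have $(N^\Phi, p^\Phi) = localize(Q)$, so every match trivially satisfies $\psi$; soundness is immediate since $\resultsstripped{p^\Phi}{\mathcal{C}^\Phi} \subseteq \allmatches{Q}{H} = \{m \mid m \models \psi\}$, and local completeness is exactly Theorem~\ref{the:completeness_consistent_configuration}.

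For the inductive step I would treat the three connectives. Consider first $\psi = \exists(a : Q \rightarrow Q', \psi')$, where $p^\Phi = [\ltimes]^\Phi$ along $a$ has left dependency $p^\Phi_Q = localize(Q)$ and right dependency $p^\Phi_{(Q', \psi')} = localize^\Psi(Q', \psi')$. Soundness follows from the semi-join semantics of Definition~\ref{def:marking-sensitive_result_sets_ngcs} together with the induction hypothesis: any $(m, \phi) \in \mathcal{C}^\Phi(p^\Phi)$ witnesses a compatible $(m', \phi')$ in the right dependency with $m|_{Q_p} = m' \circ a$; by the inductive soundness claim $m' \in \allmatches{Q'}{H}$ and $m' \models \psi'$, hence $m \models \psi$ by the satisfaction clause for existential conditions. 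The anti-join case $\psi = \neg\psi'$ is dual and, for soundness, additionally needs that the right subnet—once suitably requested—contains every match for $(Q,\psi')$ compatible with a given left match, so that the \emph{absence} of a partner in $[\rhd]^\Phi$ genuinely certifies $m \not\models \psi'$. The conjunction case $\psi = \psi_1 \wedge \psi_2$ is handled by applying the semi-join argument to the chained nodes $[\ltimes]^\Phi_1, [\ltimes]^\Phi_2$ in turn, which compose the filters for $\psi_1$ and $\psi_2$.

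The delicate half is local completeness, and this is where the request projection structures do the essential work. Given a target match $m$ with $m \models \psi$ and $m(Q) \cap H_p \neq \emptyset$, completeness of the left subnet (Theorem~\ref{the:completeness_consistent_configuration}) places $m$ in $\mathcal{C}^\Phi(p^\Phi_Q)$, and by Theorem~\ref{the:precision_consistent_configuration} it carries marking $\infty$. The marking filter inside $RPS^{\infty}_l$ therefore admits $m$, whose projection onto $Q_v$ is reassigned marking $\infty$ and injected into an extension point of the right subnet. I would argue that this injection augments the \emph{effective} relevant subgraph of $N^\Phi_{(Q', \psi')}$ (resp. $N^\Phi_{(Q, \psi')}$) to include the image of $m|_{Q_p}$, so that the witness $m'$ guaranteed by $m \models \psi$ now touches the effective subgraph, is computed by the right subnet, and causes the semi-join to retain $m$.

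Carrying this last argument through is the main obstacle, because Theorem~\ref{the:completeness_consistent_configuration} is stated for a standalone localized net, not for a subnet whose extension points receive external requests. I would discharge it by strengthening the induction hypothesis into a statement about a localized net embedded as a \emph{modular extension} (in the sense of the appendix definitions built on the extension points $\chi(N^\Phi)$): for any consistent configuration of the surrounding net, the subnet's stripped result is correct under its relevant subgraph \emph{augmented} by the vertex matches delivered at its extension points. The output of an $RPS^{\infty}$ is precisely such a delivery, so the augmented-completeness claim feeds the recursion—the requests generated by left matches touching $H_p$ play, for the nested condition, exactly the role $H_p$ itself plays for the plain pattern. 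Verifying that the marking discipline (the filter $[\phi > h]^\Phi$ admitting only $\infty$-marked, i.e. $H_p$-touching, left matches and the reassignment $[\phi := \infty]^\Phi$) makes this augmentation coincide with the set of witnesses demanded by the satisfaction clauses, while never requesting matches that would let the anti-join emit a spurious result and break soundness, is the technical crux, and mirrors the modular-extension bookkeeping used for the plain case in the referenced preprint.
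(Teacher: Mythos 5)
Your proposal is correct and follows essentially the same route as the paper: its appendix proof introduces exactly the strengthened induction hypothesis you call for, namely Lemma~\ref{lem:localized_rete_condition_completeness}, which asserts soundness together with completeness relative to the $\infty$-marked vertex matches delivered at the extension points of a modular extension, proves it by structural induction over $\psi$ using the $RPS^{\infty}$ structures to feed the nested subnets (with the plain-query localization result as base case), and then derives the theorem by instantiating the extension-point inputs with the vertex input nodes over $H_p$. Your identification of the modular-extension strengthening as the crux, and of the anti-join case as the place where completeness of the suitably requested right subnet is what certifies soundness of the output, matches the paper's treatment.
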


\begin{proof}
Regardless of what $\psi$ looks like, there is a matching marking-sensitive vertex input node connected to each $x \in \chi(N^\Phi)$. Because there is at least one extension point $x \in \chi(N^\Phi)$ with $query(x) = (\{v\}, \emptyset, \emptyset, \emptyset)$ for each $v \in V^Q$, it follows from Lemma~\ref{lem:localized_rete_condition_completeness} that $\forall m \in \resultsstripped{p^\Phi}{\mathcal{C}^\Phi} : m \models \psi \wedge \forall m \in \allmatches{Q}{H}: (m' \models \psi \wedge m'(Q) \cap H_p \neq \emptyset) \Rightarrow m' \in \resultsstripped{p^\Phi}{\mathcal{C}^\Phi}$. Thus, the theorem holds.
\end{proof}

\begin{thmapp}[Execution of localized RETE nets via $order^{\Psi}$ yields consistent configurations] \label{the:localized_rete_ngcs_execution_appendix} 
Let $H$ be a graph, $H_p \subseteq H$, $(Q, \psi)$ a graph query, and $\mathcal{C}^\Phi_0$ an arbitrary starting configuration for the marking-sensitive RETE net $(N^\Phi, p^\Phi) = localize^{\Psi}(Q, \psi)$. Executing $(N^\Phi, p^\Phi)$ via $O = order^{\Psi}(N^\Phi)$ then yields a consistent configuration $\mathcal{C}^\Phi = execute(O, N^\Phi, H, H_p, \mathcal{C}^\Phi_0)$.
\end{thmapp}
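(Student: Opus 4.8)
The plan is to proceed by induction over the structure of the nested graph condition $\psi$, mirroring the recursive definitions of $localize^\Psi$ and $order^\Psi$. In the base case $\psi = \texttt{true}$ we have $(N^\Phi, p^\Phi) = localize(Q)$ and $order^\Psi(N^\Phi) = order(N^\Phi)$, so consistency of the resulting configuration follows immediately from Theorem~\ref{the:execution_order_consistency}.

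For the inductive cases, the central structural observation is that $localize^\Psi$ only ever connects the recursively constructed subnets — the localized plain-pattern net $(N^\Phi_Q, p^\Phi_Q) = localize(Q)$ and the localized condition nets such as $(N^\Phi_{(Q',\psi')}, p^\Phi_{(Q',\psi')})$ — through request projection structures $RPS^\infty$ and a single marking-sensitive semi-join or anti-join at the top. I would first argue that, when each subnet is contracted to a single node, the resulting top-level graph is acyclic: a structure $RPS^\infty(p^\Phi_Q, N^\Phi_{(Q',\psi')})$ takes $p^\Phi_Q$ as its only dependency and feeds into an extension point of the \emph{other} subnet, and the top join depends on the production nodes of both subnets, but no edge ever leads from a condition net back into $N^\Phi_Q$. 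Hence every cycle of $N^\Phi$ is confined to a plain-pattern subnet embedded via $localize(Q)$, exactly as asserted in the proof sketch. The sequence $order^\Psi(N^\Phi)$ then realizes a reverse topological ordering of this top-level DAG, with each contracted node expanded into a full consistent execution of the corresponding subnet (via $order$ for plain-pattern nets and via the induction hypothesis for condition nets) and each $RPS^\infty$ and top join executed by its own reverse topological sort.

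The core of the argument is to show that this interleaving actually yields a globally consistent configuration. I would establish this by treating each subnet as a modular extension in the sense of the appendix definition, so that its only interaction with the rest of $N^\Phi$ is through its extension points (receiving the output of an $RPS^\infty$) and through its production node (feeding a top join or a downstream $RPS^\infty$). Because $order^\Psi$ executes in reverse topological order of the top-level DAG, whenever a subnet, an $RPS^\infty$, or the top join is executed, all of its external inputs have already been brought to and fixed at their final values and will not change afterwards. It then suffices to argue that executing a subnet via its own order makes it consistent for these fixed external inputs: for a plain-pattern subnet this is Theorem~\ref{the:execution_order_consistency}, where the externally supplied $RPS^\infty$ output merely acts as an additional constant dependency at a union extension point, and for a condition subnet it is the induction hypothesis; the top semi-join or anti-join, executed last, reads two already-consistent dependencies and hence becomes consistent itself.

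The main obstacle I anticipate is precisely this last point — verifying that executing a plain-pattern subnet $N^\Phi_Q$ (or a condition subnet) via $order$ establishes \emph{and preserves} consistency of its internal union extension points once an external $RPS^\infty$ dependency has been attached to them. Internally these subnets still contain the cyclic request projection structures of the plain-query construction, whose correct handling by $order$ relies on the repeated-execution pattern and the filtering of on-request markings. I would therefore need a lemma stating that, for a modular extension in which the newly attached edges enter only at extension points from already-consistent, unchanging nodes, the plain-query ordering from Theorem~\ref{the:execution_order_consistency} remains admissible and leaves the external inputs untouched; the fixed external contribution then simply rides along in every recomputation of the affected union nodes without disturbing the internal fixpoint reasoning.
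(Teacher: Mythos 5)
Your proposal takes essentially the same route as the paper: the paper proves this theorem as an immediate consequence of Lemma~\ref{lem:localized_rete_condition_execution}, which is exactly your strengthened induction hypothesis for modular extensions, proved by structural induction over $\psi$ with precisely your interleaving (plain-pattern subnet, then $RPS^{\infty}$, then condition subnet by induction hypothesis, then the top semi-/anti-join), relying on the fact that nodes executed earlier never depend on nodes executed later. The lemma you flag as the main obstacle --- that executing a modular extension of a plain localized net via $order$ yields consistency for the base net when external edges attach only at extension points --- is precisely what the paper invokes as Lemma 13 of the preprint version, so your decomposition matches the paper's own.
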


\begin{proof}
Follows directly from Lemma~\ref{lem:localized_rete_condition_execution}.
\end{proof}

\setcounter{thmapp}{5}
\begin{thmapp}[RETE net localization via $localize^\Psi$ introduces only a constant factor overhead on effective configuration size] \label{the:upper_bound_configuration_size_ngc_appendix}
Let $H$ be an edge-dominated graph, $H_p \subseteq H$, $(N, p)$ a RETE net created via the procedure described in \cite{barkowsky2023host} for the extended graph query $(Q, \psi)$, $\mathcal{C}$ a consistent configuration for $(N, p)$ for host graph $H$, and $\mathcal{C}^\Phi$ a consistent configuration for the marking-sensitive RETE net $(N^\Phi, p^\Phi) = localize^\Psi(Q, \psi)$ corresponding to $(N, p)$ for host graph $H$ and relevant subgraph $H_p$. It then holds that $\sum_{n^\Phi \in V^{N^\Phi}} \sum_{(m, \phi) \in \mathcal{C}^\Phi(n^\Phi)} |m| \leq 7 \cdot |\mathcal{C}|_e$.
\end{thmapp}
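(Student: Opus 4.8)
The plan is to reduce Theorem~\ref{the:upper_bound_configuration_size_ngc_appendix} to the already-established plain-query bound of Theorem~\ref{the:upper_bound_configuration_size_appendix} by exploiting the fact that $localize^\Psi$ is built modularly out of $localize$ applications plus a bounded amount of "glue" (semi-joins, anti-joins, and request projection structures). The key structural observation I would establish first is a correspondence between nodes of the regular net $(N,p)$ and nodes of $(N^\Phi,p^\Phi)$: since $localize^\Psi$ mirrors the construction of \cite{barkowsky2023host}, every node $n \in V^N$ sits inside some RETE subnet for a plain pattern (either $Q$ or a pattern appearing in $\psi$), and the corresponding localized subnet is exactly $localize$ applied to that plain subnet. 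I would therefore decompose both $N$ and $N^\Phi$ along the recursion on $\psi$ into their constituent plain-query subnets.

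First I would set up the induction on the structure of $\psi$, carrying as the inductive hypothesis the bound $\sum_{n^\Phi} \sum_{(m,\phi)\in\mathcal{C}^\Phi(n^\Phi)} |m| \leq 7\cdot|\mathcal{C}|_e$ for each recursively constructed subnet $(N^\Phi_{(Q',\psi')},p^\Phi_{(Q',\psi')})$ paired with its regular counterpart. In the base case $\psi = \texttt{true}$, we have $(N^\Phi,p^\Phi) = localize(Q)$ exactly, so the claim is precisely Theorem~\ref{the:upper_bound_configuration_size_appendix}. For the inductive cases ($\exists$, $\neg$, $\wedge$), the localized net is a disjoint union of (i) $localize(Q)$, (ii) the recursively localized subnets (covered by the hypothesis), (iii) a constant number of request projection structures $RPS^\infty$, and (iv) one or two marking-sensitive semi-join or anti-join nodes. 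The plan is to bound the contributions of groups (iii) and (iv) separately and show they are already absorbed into the constant factor $7$ that each plain-query block contributes.

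The essential accounting step is that groups (iii) and (iv) add \emph{no new matches} beyond those already present, up to projection and restriction. A semi-join node $[\ltimes]^\Phi$ or anti-join $[\rhd]^\Phi$ has a result set that is a subset of its left dependency's result set (by Definition~\ref{def:marking-sensitive_result_sets_ngcs}), so each contributes at most as much effective size as the node it copies from, which is already a node inside a localized plain-query block. Similarly, each $RPS^\infty$ consists of a marking filter, a projection onto a single vertex $Q_v$, and a marking assignment; the filter's result is a subset of an existing result set, and the projection can only \emph{decrease} match size (from $|m|$ to $|m|_{Q_v}| = 1$) while not increasing the match count beyond what the projected-from node holds. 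I would argue that because marking-sensitive result sets contain no duplicate matches (as cited for Theorem~\ref{the:upper_bound_configuration_size_appendix} via the uniqueness result in \cite{preprint}), the total over all of $N^\Phi$ remains a sum of constantly-many per-$N$-node contributions, each bounded exactly as in the plain case.

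The main obstacle I anticipate is verifying that the constant stays at $7$ rather than growing with each recursive level. The subtlety is that the semi-joins, anti-joins, and $RPS^\infty$ structures introduced at a given level of the $\psi$-recursion must be charged against the $localize(Q)$ block \emph{at that same level} (which already carries the factor $7$ with slack), rather than accumulating across levels. I would handle this by observing that each extra node introduced at a level copies matches from $p^\Phi_Q = localize(Q)$ or from its immediate localized subnets, and that these copies correspond to matches of the associated plain pattern that are already counted within the factor-$7$ budget of the relevant $localize$ block; in the underlying Lemma (analogous to Lemma~\ref{lem:upper_bound_configuration_size}) the factor $7$ arises precisely from the seven nodes of a local navigation structure absorbing one edge-input node, and the additional glue nodes project onto single vertices or merely filter, so they fit within the same per-edge-input allowance rather than demanding a fresh allocation. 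Establishing this charging scheme cleanly — so that the union over the whole of $\psi$ collapses back to $7\cdot|\mathcal{C}|_e$ — is where the real care is needed.
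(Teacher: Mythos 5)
Your skeleton --- structural induction over $\psi$, base case given by Theorem~\ref{the:upper_bound_configuration_size_appendix}, decomposition of $N^\Phi$ into localized plain-query blocks plus glue, and the factor-$2$ bound for each request projection structure --- is exactly the paper's. The gap is in where you charge the glue. You insist the semi-joins and anti-joins ``fit within the same per-edge-input allowance rather than demanding a fresh allocation'', i.e.\ you charge them against the plain-query block they copy from. Tally this at the production node $p_Q$ of the regular plain subnet $N_Q$: Lemma~\ref{lem:upper_bound_configuration_size} already spends $5 \cdot \sum_{m \in \mathcal{C}(p_Q)} |m|$ there (the factor is $5$, not $7$, precisely because the localized counterpart of $p_Q$ carries no internal request projection structure), the external $RPS^{\infty}_l$ spends another $2 \cdot \sum_{m \in \mathcal{C}(p_Q)} |m|$, and your subset-of-left-dependency bound for the top semi-join spends yet another $1 \cdot \sum_{m \in \mathcal{C}(p_Q)} |m|$: total $8 > 7$. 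The overrun cannot be absorbed by the budget $7 \cdot \sum_{m \in \mathcal{C}([\ltimes])} |m|$ of the regular semi-join, because under your accounting that term is never used, and $\mathcal{C}([\ltimes])$ may be empty while $\mathcal{C}(p_Q)$ is large (take $Q$ a single edge and an edge-dominated host graph in which no match for $Q$ satisfies $\psi$). In the conjunction case your scheme accumulates to $5+2+1+2+1 = 11$ at $p_Q$, since both semi-joins and the second request projection structure all trace back to the plain block. So your proof, as laid out, cannot conclude with the constant $7$.

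The missing idea is exactly the ``fresh allocation'' you dismiss, and it requires the correctness theorem, which your proposal never invokes. The paper charges each localized semi-join/anti-join against its \emph{regular} counterpart in $(N,p)$: by Theorem~\ref{the:localized_rete_ngcs_correctness}, $\resultsstripped{[\ltimes]^\Phi}{\mathcal{C}^\Phi}$ contains only matches satisfying the relevant (sub)condition, while $\mathcal{C}([\ltimes])$ contains all such matches, so together with the no-duplicates lemma the localized semi-join costs at most $1 \cdot \sum_{m \in \mathcal{C}([\ltimes])} |m|$; a purely structural ``subset of the left dependency'' argument cannot deliver this. In the conjunction case the second request projection structure, whose dependency is $[\ltimes]^\Phi_1$, is likewise charged (factor $2$) to the regular $[\ltimes]_1$. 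Then the arithmetic closes: $5+2 = 7$ at $p_Q$, $1+2 = 3 \leq 7$ at $[\ltimes]_1$, and $1 \leq 7$ at the top node. A second, smaller omission: the recursive subnets receive extra input through the request projection structures feeding their extension points, so your induction hypothesis does not apply to them verbatim; the paper covers this by remarking that the bound for a localized block already accounts for the worst case in which its extension points hold all possible single-vertex matches into $H$.
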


\begin{proof}
We show the correctness of the theorem via structural induction over the nested graph condition $\psi$.

In the base case where $\psi = true$, it holds that $localize^\Psi(Q, \psi) = localize(N, p)$. From Theorem~\ref{the:upper_bound_configuration_size}, it then immediately follows that $\sum_{n^\Phi \in V^{N^\Phi}} \sum_{(m, \phi) \in \mathcal{C}^\Phi(n^\Phi)} |m| \leq 7 \cdot |\mathcal{C}|_e$.

We now proceed to showing that, under the induction hypothesis that the lemma holds for a nested graph condition of nesting depth $d$, the lemma holds for any nested graph condition of nesting depth $d+1$.

For a nested condition of the form $\psi = \exists (a: Q \rightarrow Q', \psi')$, $N^\Phi$ consists of the localized RETE net for $Q$, $(N^\Phi_Q, p^\Phi_Q) = localize(Q)$, the RETE net $(N^\Phi_{(Q', \psi')}, p^\Phi_{(Q', \psi')}) = localize^\Psi(Q', \psi')$, a request projection structure $RPS^{\infty}_l = RPS^{\infty}(p^\Phi_Q, N^\Phi_{(Q', \psi')})$, and a marking-sensitive semi-join node $p^\Phi = [\ltimes]^\Phi$.

By the construction described in \cite{barkowsky2023host}, $(N, p)$ then consists of a RETE net for the plain graph query $Q$, $(N_Q, p_Q)$, the RETE net for the extended graph query $(Q', \psi')$, $(N', p')$, and a semi-join $p = [\ltimes]$ with dependencies $p_Q$ and $p'$.

Lemma~\ref{lem:upper_bound_configuration_size} entails that $\sum_{n^\Phi \in V^{N^\Phi_Q}} \sum_{(m, \phi) \in \mathcal{C}^\Phi(n^\Phi)} |m| \leq 7 \sum_{n \in V^{N_Q} \setminus \{p_Q\}} \sum_{m \in \mathcal{C}(n)} |m| + 5 \cdot \sum_{m \in \mathcal{C}|_{\{p_Q\}}} |m|$, since $localize(Q) = localize(N_Q, p_Q)$.

In the worst case, the extension points of $(N^\Phi, p^\Phi)$ already contain all possible single-vertex matches into $H$ even without the additional input via $RPS^{\infty}_l$. Therefore, by the induction hypothesis, it must hold that $\sum_{n^\Phi \in V^{N^\Phi_{(Q', \psi')}}} \sum_{(m, \phi) \in \mathcal{C}^\Phi(n^\Phi)} |m| \leq 7 \cdot |\mathcal{C}|_{V^{N'}}|_e$.

From Lemma 12 in \cite{preprint}, it follows that $\sum_{n^\Phi_Q \in V^{RPS^{\infty}_l}} \sum_{(m, \phi) \in \mathcal{C}^\Phi(n^\Phi_Q)} |m| \leq 2 \cdot |\mathcal{C}|_{\{p_Q\}}|_e$. Finally, from Theorem~\ref{the:localized_rete_ngcs_correctness} and Lemma 12 in \cite{preprint}, it follows that $\sum_{(m, \phi) \in \mathcal{C}^\Phi(p^\Phi_{(Q, \psi)}} |m| \leq |\mathcal{C}|_{\{[\ltimes]\}}|_e$.

Consequently, it must hold that $\sum_{n^\Phi \in V^{N^\Phi}} \sum_{(m, \phi) \in \mathcal{C}^\Phi(n^\Phi)} |m| \leq 7 \cdot |\mathcal{C}|_{V^{N_Q} \setminus \{p_Q\}}|_e + 7 \cdot |\mathcal{C}|_{\{p_Q\}}|_e + 7 \cdot |\mathcal{C}|_{V^{N'}}|_e + |\mathcal{C}|_{\{[\ltimes]\}}|_e$ and thus $\sum_{n^\Phi \in V^{N^\Phi}} \sum_{(m, \phi) \in \mathcal{C}^\Phi(n^\Phi)} |m| \leq 7 \cdot |\mathcal{C}|_e$.

For a nested condition of the form $\psi = \neg\psi'$, $N^\Phi$ consists of the localized RETE net for the plain pattern $Q$, $(N^\Phi_Q, p^\Phi_Q) = localize(Q)$, the localized RETE net $(N^\Phi_{(Q, \psi')}, p^\Phi_{(Q, \psi')}) = localize^\Psi(Q, \psi')$, a request projection structure $RPS^{\infty}_l = RPS^{\infty}(p^\Phi_Q, N^\Phi_{(Q, \psi')})$, and marking-sensitive anti-join node $p^\Phi = [\rhd]^\Phi$.

By the construction described in \cite{barkowsky2023host}, $(N, p)$ then consists of a RETE net for the plain graph query $Q$, $(N_Q, p_Q)$, the RETE net for the extended graph query $(Q, \psi')$, $(N', p')$, and an anti-join $p = [\rhd]$ with dependencies $p_Q$ and $p'$.

By analogous argumentation as for the existential case, it must then also hold that $\sum_{n^\Phi \in V^{N^\Phi}} \sum_{(m, \phi) \in \mathcal{C}^\Phi(n^\Phi)} |m| \leq 7 \cdot |\mathcal{C}|_e$.

For a nested condition of the form $\psi = \psi_1 \wedge \psi_2$, $N^\Phi$ consists of
the localized RETE net for the plain pattern $Q$, $(N^\Phi_Q, p^\Phi_Q) = localize(Q)$,
the localized RETE net $(N^\Phi_{(Q, \psi_1)}, p^\Phi_{(Q, \psi_1)}) = localize^\Psi(Q, \psi_1)$, a request projection structure $RPS^{\infty}_1 = RPS^{\infty}(p^\Phi_Q, N^\Phi_{(Q, \psi_1)})$, a marking-sensitive semi-join node $[\ltimes]^\Phi_1$,
the localized RETE net $(N^\Phi_{(Q, \psi_2)}, p^\Phi_{(Q, \psi_2)}) = localize^\Psi(Q, \psi_2)$, a request projection structure $RPS^{\infty}_2 = RPS^{\infty}([\ltimes]^\Phi_1, N^\Phi_{(Q, \psi_2)})$, and a marking-sensitive semi-join node $p^\Phi = [\ltimes]^\Phi_2$.

By the construction described in \cite{barkowsky2023host}, $(N, p)$ then consists of a RETE net for the plain graph query $Q$, $(N_Q, p_Q)$, the RETE net for the extended graph query $(Q, \psi_1)$, $(N_1, p_1)$, the RETE net for the extended graph query $(Q, \psi_2)$, $(N_2, p_2)$, a semi-join $[\ltimes]_1$ with dependencies $p_Q$ and $p_1$, and a semi-join $p = [\ltimes]_2$ with dependencies $[\ltimes]_1$ and $p_2$.

As in the existential case, it then also holds that $\sum_{n^\Phi \in V^{N^\Phi_Q}} \sum_{(m, \phi) \in \mathcal{C}^\Phi(n^\Phi)} |m| \leq 7 \sum_{n \in V^{N_Q} \setminus \{p_Q\}} \sum_{m \in \mathcal{C}(n)} |m| + 5 \cdot \sum_{m \in \mathcal{C}(p_Q)} |m|$. By the induction hypothesis, it holds that $\sum_{n^\Phi \in V^{N^\Phi_{(Q, \psi_1)}}} \sum_{(m, \phi) \in \mathcal{C}^\Phi(n^\Phi)} |m| \leq 7 \cdot |\mathcal{C}|_{V^{N_1}}|_e$ and $\sum_{n^\Phi \in V^{N^\Phi_{(Q, \psi_2)}}} \sum_{(m, \phi) \in \mathcal{C}^\Phi(n^\Phi)} |m| \leq 7 \cdot |\mathcal{C}|_{V^{N_2}}|_e$.

From Lemma 12 in \cite{preprint}, it follows that $\sum_{n^\Phi_Q \in V^{RPS^{\infty}_1}} \sum_{(m, \phi) \in \mathcal{C}^\Phi(n^\Phi_Q)} |m| \leq 2 \cdot |\mathcal{C}|_{\{p_Q\}}|_e$.

We furthermore know by Theorem~\ref{the:localized_rete_ngcs_correctness} that $\mathcal{C}^\Phi([\ltimes]^\Phi_1)$ can only contain matches for $Q$ that satisfy $\phi_1$ and $\mathcal{C}([\ltimes]_1)$ contains all such matches. From Lemma 12 in \cite{preprint}, it thus follows that $\sum_{(m, \phi) \in \mathcal{C}^\Phi([\ltimes]^\Phi_1)} |m| \leq |\mathcal{C}|_{\{[\ltimes]_1\}}|_e$. Consequently, it must hold that $\sum_{n^\Phi_Q \in V^{RPS^{\infty}_2}} \sum_{(m, \phi) \in \mathcal{C}^\Phi(n^\Phi_Q)} |m| \leq 2 \cdot |\mathcal{C}|_{\{[\ltimes]_1\}}|_e$. By Theorem~\ref{the:localized_rete_ngcs_correctness} and Lemma 12 in \cite{preprint}, it also follows that $\sum_{(m, \phi) \in \mathcal{C}^\Phi([\ltimes]^\Phi_2)} |m| \leq |\mathcal{C}|_{\{[\ltimes]_2\}}|_e$.

We thus know that $\sum_{n^\Phi \in V^{N^\Phi}} \sum_{(m, \phi) \in \mathcal{C}^\Phi(n^\Phi)} |m| \leq 7 \cdot |\mathcal{C}|_{V^{N_Q} \setminus \{p_Q\}}|_e + 7 \cdot |\mathcal{C}|_{\{p_Q\}}|_e + 7 \cdot |\mathcal{C}|_{V^{N_1}}|_e + 7 \cdot |\mathcal{C}|_{V^{N_2}}|_e + 3 \cdot |\mathcal{C}|_{\{[\ltimes]_1\}}|_e + |\mathcal{C}|_{\{[\ltimes]_2\}}|_e$ and thus $\sum_{n^\Phi \in V^{N^\Phi}} \sum_{(m, \phi) \in \mathcal{C}^\Phi(n^\Phi)} |m| \leq 7 \cdot |\mathcal{C}|_e$.

From the correctness of base case and induction step then follows the correctness of the theorem.
\end{proof}

\setcounter{thmapp}{7}
\begin{thmapp}[Execution time overhead introduced by RETE net localization via $localize^\Psi$ depends on query graph size compared to average match size]
Let $H$ be an edge-dominated graph, $H_p \subseteq H$, $(N, p)$ a RETE net created via the procedure described in \cite{barkowsky2023host} for the extended graph query $(Q, \psi)$, $\mathcal{C}$ a consistent configuration for $(N, p)$ for host graph $H$, and $\mathcal{C}^\Phi_0$ the empty configuration for the marking-sensitive RETE net $(N^\Phi, p^\Phi) = localize^\Psi(Q, \psi)$ corresponding to $(N, p)$. Executing $(N^\Phi, p^\Phi)$ via $execute(order^\Psi(N^\Phi), N^\Phi, H, H_p, \mathcal{C}^\Phi_0)$ then takes $O(T \cdot |Q_{max}|)$ steps, with $T = \sum_{n \in V^N} |\mathcal{C}(n)|$ and $|Q_{max}|$ the size of the largest graph among $Q$ and graphs in $\psi$.
\end{thmapp}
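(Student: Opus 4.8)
The plan is to proceed by structural induction over the nested graph condition $\psi$, mirroring the recursive structure of $localize^\Psi$ and $order^\Psi$ and reducing each case to Theorem~\ref{the:complexity_time_localized} together with the observation that the nodes added at each recursion level are executed exactly once. For the base case $\psi = \texttt{true}$, the construction gives $(N^\Phi, p^\Phi) = localize(Q)$ and $order^\Psi(N^\Phi) = order(N^\Phi)$, so Theorem~\ref{the:complexity_time_localized} applies directly and yields a bound of $O(T \cdot (|Q_a| + |Q|))$. Since the average match size $|Q_a|$ and the query size $|Q|$ are each at most $|Q_{max}|$, this is contained in $O(T \cdot |Q_{max}|)$, establishing the base case.

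For the inductive step, I would fix a condition of nesting depth $d+1$ and treat the three compound forms from Figure~\ref{fig:localize_psi}. In each case, the corresponding regular net $(N, p)$ of \cite{barkowsky2023host} partitions into the plain subnet $N_Q$ for $Q$, one or two subnets $N'$ for the nested extended subqueries, and the top-level semi- or anti-join(s); correspondingly, $T = \sum_{n \in V^N} |\mathcal{C}(n)|$ splits additively into the $T$-contributions of these disjoint parts. The order $order^\Psi$ runs (i) $order(N^\Phi_Q)$ on the encapsulated plain subnet, (ii) $order^\Psi$ recursively on each extended-subquery subnet, and (iii) each inserted request projection structure $RPS^{\infty}$ and each top-level join exactly once. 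For part (i) I invoke Theorem~\ref{the:complexity_time_localized} and bound $|Q_a| + |Q| \le 2|Q_{max}|$; for part (ii) I invoke the induction hypothesis, noting that the maximal pattern size of any subquery is at most $|Q_{max}|$; for part (iii) I bound the work of each once-executed node by the effective size of its input result sets, which by Theorem~\ref{the:upper_bound_configuration_size_ngc} is at most $7 \cdot |\mathcal{C}|_e = 7 \cdot T \cdot |Q_a| \le 7 \cdot T \cdot |Q_{max}|$, with every individual match of size at most $|Q_{max}|$. Summing these contributions and using the additive split of $T$ gives a total of $O(T \cdot |Q_{max}|)$ for the whole net.

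The main obstacle I anticipate is ensuring that the once-executed structures do not accumulate beyond $O(T \cdot |Q_{max}|)$ across the full recursion: a naive bound of $O(T \cdot |Q_{max}|)$ charged at each of the $O(|\psi|)$ recursion levels would be too weak. The resolution is that $T$ is a global quantity partitioned across the disjoint subnets, so each recursion level charges only its local share of $T$, and the once-executed $RPS^{\infty}$ and join nodes at that level operate precisely on result sets whose sizes are covered by that local share — in particular, left-match preservation of the marking-sensitive semi- and anti-joins keeps $|\mathcal{C}^\Phi(p^\Phi)| \le |\mathcal{C}^\Phi(p^\Phi_Q)|$, so the join at each level adds no more than its dependencies' contributions. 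A secondary point to verify carefully is that $order^\Psi$ indeed lists every semi-join, anti-join, and condition-level $RPS^{\infty}$ exactly once and that none of them lie inside a cyclic plain-query subnet, so that Theorem~\ref{the:complexity_time_localized} governs all repeated execution while the remaining nodes each contribute only a single pass.
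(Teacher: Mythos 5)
There is a genuine gap in your inductive step, and it sits exactly where you invoke the induction hypothesis (part (ii)) and, one level down, Theorem~\ref{the:complexity_time_localized} (part (i)). Both of these statements presuppose that the net in question is executed in isolation, starting from the empty configuration, with its marking-sensitive vertex inputs anchored only to $H_p$. But inside $localize^\Psi(Q, \psi)$ the nested subnets are \emph{not} executed under these hypotheses: for $\psi = \exists(a : Q \rightarrow Q', \psi')$ the structure $RPS^{\infty}_l = RPS^{\infty}(p^\Phi_Q, N^\Phi_{(Q', \psi')})$ adds a dependency from a union node inside a local navigation structure of $N^\Phi_{(Q', \psi')}$ to the marking assignment node of the request projection structure, and $order^\Psi$ deliberately executes $toposort(RPS^{\infty}_l)^{-1}$ \emph{before} $order^\Psi(N^\Phi_{(Q', \psi')})$, so the subnet runs with its extension points already populated by injected matches (the same happens in the $\neg\psi'$ and $\psi_1 \wedge \psi_2$ cases, and recursively for every plain subnet below the top level). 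Consequently the workload of the subnet can exceed what it would be in isolation, and neither the bare theorem statement you are inducting on nor Theorem~\ref{the:complexity_time_localized} licenses the bound you claim for it. Your ``local share of $T$'' resolution is the right intuition, but it is itself the thing that needs proof: one must show that even under worst-case injected inputs the number of matches the localized subnet computes stays within a constant factor of the corresponding standard net's counts.

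This is why the paper does not induct on the theorem at all. It uses a flat, global accounting: every node of $N^\Phi$ either lies in some plain localized subnet $(N^\Phi_{Q'}, p^\Phi_{Q'}) = localize(N^{Q'}, p^{Q'})$ executed once via $order(N^\Phi_{Q'})$, or is a once-executed semi-join, anti-join, or request-projection node. For the former it invokes a strengthened version of the plain-query time bound that explicitly tolerates one extra RPS dependency on an extension point (Lemma 17 in \cite{preprint}); for the latter it uses the linear batch-execution bounds (Lemmata~\ref{lem:execution_time_semi_join_batch} and~\ref{lem:execution_time_anti_join_batch}); and the whole account is closed by the counting Lemma~\ref{lem:localized_rete_ngc_match_number}, whose own structural induction handles the injected inputs by taking the worst case in which the extension points already contain all possible single-vertex matches. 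To repair your proof you would have to strengthen your induction hypothesis in the same way (a statement about modular extensions, bounding time by the matches actually computed) and prove the counting bound separately --- at which point you have essentially reconstructed the paper's argument rather than found an alternative to it.
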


\begin{proof}
By the construction of $order^\Psi(N^\Phi)$, each node in $N^\Phi$ is either executed only once and none of its dependencies are executed after it or is part of some subnet $(N^\Phi_{Q'}, p^\Phi_{Q'}) = localize(N^{Q'}, p^{Q'})$ for some subnet $(N^{Q'}, p^{Q'})$ of $(N, p)$ and only executed as part of a single execution of $order(N^\Phi_{Q'})$.

By Theorem~\ref{the:complexity_time_localized}, the execution of $order(N^\Phi_{Q'})$ for $(N^\Phi_{Q'}, p^\Phi_{Q'})$ in isolation is in $O(T_{Q'} \cdot |Q'|)$, with $T_{Q'} = \sum_{n \in V^{N^{Q'}}} |\mathcal{C}(n)|$. Each subnet $(N^\Phi_{Q'}, p^\Phi_{Q'})$ has at most one union node in one of its local navigation structures that has the marking assignment node of a request projection structure inserted by $localize^\Psi$ as an additional dependency. By Lemma 17 in \cite{preprint}, the time for executing all nodes in $N^\Phi$ that are part of such a subnet $(N^\Phi_{Q'}, p^\Phi_{Q'})$ is in $O((T^\Phi_Q + T^\Phi_{RPS}) \cdot Q_{max})$, with $T^\Phi_Q = \sum_{n \in N^\Phi_Q} |\mathcal{C}(n)|$ and $T^\Phi_{RPS} = \sum_{n \in N^\Phi_{RPS}} |\mathcal{C}(n)|$, where $N^\Phi_Q$ denotes the set of all RETE nodes in such subnets $(N^\Phi_{Q'}, p^\Phi_{Q'})$ and $N^\Phi_{RPS}$ denotes the set of all RETE nodes in request projection structures inserted by $localize^\Psi$.

All remaining nodes in $(N^\Phi, p^\Phi)$ are then marking-sensitive semi-joins, anti-joins or projections, marking assignments, or marking filters that are only executed once. Moreover, each node in $(N^\Phi, p^\Phi)$ can be the dependency of at most one of these semi-joins or anti-joins and at most one of these marking filters. By Lemmata 16, 18, and 19 in \cite{preprint} and Lemmata~\ref{lem:execution_time_semi_join_batch} and~\ref{lem:execution_time_anti_join_batch}, the execution of these nodes is thus in $O((T^\Phi_\Psi + T^\Phi_{RPS} + T^\Phi_Q) \cdot Q_{max})$, with $T^\Phi_\Psi = \sum_{n \in N^\Phi_\Psi} |\mathcal{C}(n)|$, where $N^\Phi_\Psi$ denotes the set of all semi-join and anti-join nodes inserted by $localize^\Psi$.

Since $T^\Phi_\Psi + T^\Phi_{RPS} + T^\Phi_Q \leq 7 \cdot T$ by Lemma~\ref{lem:localized_rete_ngc_match_number}, it then follows that executing $(N^\Phi, p^\Phi)$ via $execute(order^\Psi(N^\Phi), N^\Phi, H, H_p, \mathcal{C}^\Phi_0)$ takes $O(T \cdot |Q_{max}|)$ steps.
\end{proof}

\subsection{Theorems in Section~\ref{sec:localized_rete_satisfaction_changes}}

\setcounter{thmapp}{10}
\begin{thmapp}[Subgraph-restricted graph modifications can only change NGC satisfaction for subgraph satisfaction dependent matches] \label{the:subgraph_satisfaction_dependence_necessary_appendix}
Let $(Q, \psi)$ be an extended graph query and $H \xleftarrow{f} K \xrightarrow{g} H'$ a subgraph-restricted graph modification from host graph $H$ with relevant subgraph $H_p$ into the modified host graph $H'$ with relevant subgraph $H'_p$. It then holds for any graph morphisms $m_K : Q \rightarrow K$ that $(f \circ m_K \models \psi \wedge g \circ m_K \not\models \psi) \vee (f \circ m_K \not\models \psi \wedge g \circ m_K \models \psi) \Rightarrow f \circ m_K \text{ is subgraph satisfaction dependent } \text{ or } g \circ m_K \text{ is subgraph satisfaction dependent}$.
\end{thmapp}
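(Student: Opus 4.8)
The plan is to argue by structural induction over the nested graph condition $\psi$, proving the statement uniformly for every context graph and every morphism into $K$, so that the induction hypothesis can be applied to subconditions living over other context graphs. The base case $\psi = \texttt{true}$ is vacuous, since both $f \circ m_K$ and $g \circ m_K$ satisfy $\texttt{true}$ and the premise is never met. For $\psi = \neg\psi'$, the satisfaction of $\psi$ flips for $m_K$ exactly when the satisfaction of $\psi'$ flips for the same $m_K$; the induction hypothesis for $\psi'$ then yields that $f \circ m_K$ or $g \circ m_K$ is subgraph satisfaction dependent for $(Q, \psi')$, which by Definition~\ref{def:subgraph_satisfaction_dependence} is literally the same as being subgraph satisfaction dependent for $(Q, \neg\psi')$. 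For $\psi = \psi_1 \wedge \psi_2$, a change in satisfaction forces a change in at least one conjunct; applying the induction hypothesis to that conjunct and using that dependence for a conjunction is implied by dependence for either conjunct closes this case.

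The heart of the argument is the existential case $\psi = \exists(a : Q \rightarrow Q', \psi')$. Assume without loss of generality that $m := f \circ m_K \models \psi$ while $\bar m := g \circ m_K \not\models \psi$; the reverse direction is symmetric, exchanging the roles of $H, f, H_p$ with $H', g, H'_p$. Satisfaction of $\psi$ by $m$ provides a witness $m' : Q' \rightarrow H$ with $m' \models \psi'$ and $m = m' \circ a$. I would then split on whether $m'(Q') \cap H_p \neq \emptyset$. If it does, then by the existential clause of Definition~\ref{def:subgraph_satisfaction_dependence} the match $m$ is immediately subgraph satisfaction dependent, and we are done.

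Otherwise $m'$ avoids the relevant subgraph, so its image lies in $H_s$. Here I would invoke the subgraph-restricted structure: well-definedness of $iso_s = g \circ f^{-1}|_{H_s}$ as an isomorphism forces $H_s \subseteq f(K)$, so $m'$ factors through the monomorphism $f$ and $m'_K := f^{-1} \circ m' : Q' \rightarrow K$ is a genuine morphism with $f \circ m'_K = m'$. Setting $\bar m' := g \circ m'_K$, a short computation using $m = f \circ m_K$ and $f$ monic gives $\bar m' \circ a = g \circ f^{-1} \circ m' \circ a = g \circ f^{-1} \circ m = g \circ m_K = \bar m$, so $\bar m'$ is a candidate witness for $\bar m \models \psi$. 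Since $\bar m \not\models \psi$ by assumption, $\bar m'$ cannot satisfy $\psi'$, whereas $m' = f \circ m'_K$ does; thus the satisfaction of $\psi'$ changes along $m'_K$. The induction hypothesis for $\psi'$, applied to $m'_K$, then yields that $m'$ or $\bar m'$ is subgraph satisfaction dependent for $(Q', \psi')$, and feeding this back through the existential clause of Definition~\ref{def:subgraph_satisfaction_dependence} shows that $m$ respectively $\bar m$ is subgraph satisfaction dependent for $(Q, \psi)$.

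I expect the existential case to be the main obstacle, and within it the transport of the witness across the modification: one must verify that a witness avoiding the relevant subgraph survives the modification (this is exactly where subgraph-restrictedness is essential, through $H_s \subseteq f(K)$ and the isomorphism $iso_s$) and that the transported morphism $\bar m'$ respects the factorization through $a$ so as to qualify as a candidate witness on the $H'$ side. The partial-morphism variant of $a$ over some $Q_p \subseteq Q$ requires only replacing the equalities $m = m' \circ a$ and $\bar m = \bar m' \circ a$ by their restrictions $m|_{Q_p} = m' \circ a$ and $\bar m|_{Q_p} = \bar m' \circ a$, which does not affect the structure of the argument.
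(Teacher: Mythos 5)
Your proof is correct and follows essentially the same route as the paper's: structural induction over $\psi$ with a vacuous base case, direct appeals to the induction hypothesis for $\neg$ and $\wedge$, and, in the existential case, transporting the witness across the modification through $K$ (your $m'_K = f^{-1}\circ m'$, $\bar m' = g\circ m'_K$ is exactly the paper's $m_{K_a}$ and $m_a' = iso_s\circ m_a$). The only cosmetic difference is that you split on whether the witness touches $H_p$ while the paper splits on whether $iso_s\circ m_a$ is a well-defined morphism into $H'$ — the same dichotomy stated contrapositively — and your version makes the verification $\bar m'\circ a = \bar m$ explicit where the paper leaves it implicit.
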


\begin{proof}
We show the correctness of the theorem via structural induction over the nested graph condition $\psi$.

In the base case of $\psi = \texttt{true}$, the theorem is trivially satisfied, since any match always satisfies $\texttt{true}$.

We now proceed to showing that, under the induction hypothesis that the lemma holds for a nested graph condition of nesting depth $d$, the lemma holds for any nested graph condition of nesting depth $d+1$. Let therefore $m_K : Q \rightarrow K$ be a graph morphism such that $(f \circ m_K \models \psi \wedge g \circ m_K \not\models \psi) \vee (f \circ m_K \not\models \psi \wedge g \circ m_K \models \psi)$.

For a nested condition of the form $\psi = \neg\psi'$, it follows that $(f \circ m_K \not\models \psi' \wedge g \circ m_K \models \psi') \vee (f \circ m_K \models \psi' \wedge g \circ m_K \not\models \psi')$. From the induction hypothesis and the definition of subgraph satisfaction dependence then follows the correctness of the theorem.

For a nested condition of the form $\psi = \psi_1 \wedge \psi_2$, it follows that $(f \circ m_K \models \psi_1 \wedge g \circ m_K \not\models \psi_1) \vee (f \circ m_K \not\models \psi_1 \wedge g \circ m_K \models \psi_1)$ or $(f \circ m_K \models \psi_2 \wedge g \circ m_K \not\models \psi_2) \vee (f \circ m_K \not\models \psi_2 \wedge g \circ m_K \models \psi_2)$. From the induction hypothesis and the definition of subgraph satisfaction dependence then follows the correctness of the theorem.

For a nested condition of the form $\psi = \exists (a: Q \rightarrow Q', \psi')$, it must hold that (1) $\exists m_a \in \allmatches{Q'}{H} : f \circ m_K = m_a \circ a \wedge m_a \models \psi' \wedge \nexists m_a' \in \allmatches{Q'}{H'} : g \circ m_K = m_a' \circ a \wedge m_a' \models \psi'$ or (2) $\nexists m_a \in \allmatches{Q'}{H} : f \circ m_K = m_a \circ a \wedge m_a \models \psi' \wedge \exists m_a' \in \allmatches{Q'}{H'} : g \circ m_K = m_a' \circ a \wedge m_a' \models \psi'$.

Case 1: Let $m_a : Q' \rightarrow H$ such that $f \circ m_K = m_a \circ a \wedge m_a \models \psi'$. It must then either hold that $m_a' = iso_s \circ m_a$ is (1.1) not a valid graph morphism from $Q'$ into $H'$ or (1.2) $m_a' \not\models \psi'$. In case 1.1, since $iso_s$ is an isomorphism, it then follows that $m_a(Q') \cap H_p \neq \emptyset$ and thus $f \circ m_K$ is subgraph satisfaction dependent. In case 1.2, it follows that there must be some graph morphism $m_{K_a}$ such that $m_a = f \circ m_{K_a}$ and $m_a' = g \circ m_{K_a}$. Consequently, $m_a$ must be subgraph satisfaction dependent for $\psi'$ or $m_a'$ must be subgraph satisfaction dependent for $\psi'$ by the induction hypothesis. Since in this case, $g \circ m_K = m_a' \circ a$, it follows that $f \circ m_K \text{ is subgraph satisfaction dependent } \text{ or } g \circ m_K \text{ is subgraph satisfaction dependent}$.

Case 2: Argumentation works analogously to Case 1.

In the case where $a$ is a partial graph morphism from some subgraph $Q_p \subseteq Q$ into $Q'$, the argumentation works analogously.

From the correctness of the base case and the induction step follows the correctness of the theorem.
\end{proof}

\begin{thmapp}[Consistent configurations for RETE nets localized via $localize^{sat}$ yield all subgraph satisfaction dependent matches] \label{the:completeness_satisfaction_dependency_appendix} 
Let $H$ be a graph, $H_p \subseteq H$, $(Q, \psi)$ an extended graph query, and $\mathcal{C}^\Phi$ a consistent configuration for the RETE net $(N^{sat}, p^{sat}) = localize^{sat}(Q, \psi)$. It then holds that $\forall m \in \allmatches{Q}{H} : m \text{ is subgraph satisfaction dependent } \Rightarrow m \in \resultsstripped{p^{sat}}{\mathcal{C}^\Phi}$.
\end{thmapp}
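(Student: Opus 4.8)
The plan is to proceed by structural induction over the nested graph condition $\psi$, mirroring the shared recursive structure of Definition~\ref{def:subgraph_satisfaction_dependence} and the construction $localize^{sat}$. Throughout I would use that a configuration $\mathcal{C}^\Phi$ consistent for the full net $(N^{sat}, p^{sat})$ restricts to a consistent configuration on every recursively built subnet: $localize^{sat}$ never inserts a dependency pointing \emph{downward} into the subnets $(N^{sat}_{(Q', \psi')}, p^{sat}_{(Q', \psi')})$, $(N^{sat}_{(Q, \psi_i)}, p^{sat}_{(Q, \psi_i)})$, or $(N^\Phi_{Q'}, p^\Phi_{Q'})$, which are only read from through their production nodes, so their internal result sets are unaffected by the surrounding structure. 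The only subnet that receives an extra input is the localized plain net $(N^\Phi_Q, p^\Phi_Q) = localize(Q)$, and for that case the completeness guarantee of Theorem~\ref{the:completeness_consistent_configuration} already tolerates additional extension-point inputs.

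The easy cases dispatch quickly. For $\psi = \texttt{true}$, no match is ever subgraph satisfaction dependent by Definition~\ref{def:subgraph_satisfaction_dependence}, so the implication holds vacuously, consistent with $p^{sat} = [\emptyset]^\Phi$ carrying an empty result set. For $\psi = \neg\psi'$, both subgraph satisfaction dependence and the construction coincide with those for $(Q, \psi')$, so the claim follows directly from the induction hypothesis. For $\psi = \psi_1 \wedge \psi_2$, a match $m$ is subgraph satisfaction dependent iff it is so for $(Q, \psi_1)$ or $(Q, \psi_2)$; by the induction hypothesis $m$ then lies in $\resultsstripped{p^{sat}_{(Q, \psi_1)}}{\mathcal{C}^\Phi}$ or $\resultsstripped{p^{sat}_{(Q, \psi_2)}}{\mathcal{C}^\Phi}$, and since $p^{sat} = [\cup]^\Phi$ collects the stripped matches of both dependencies, $m \in \resultsstripped{p^{sat}}{\mathcal{C}^\Phi}$.

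The existential case $\psi = \exists(a : Q \rightarrow Q', \psi')$ carries the real content. A subgraph satisfaction dependent $m$ admits some $m' : Q' \rightarrow H$ with $m = m' \circ a$ such that either $m'(Q') \cap H_p \neq \emptyset$ or $m'$ is subgraph satisfaction dependent for $(Q', \psi')$. In the first case Theorem~\ref{the:completeness_consistent_configuration} applied to $(N^\Phi_{Q'}, p^\Phi_{Q'}) = localize(Q')$ gives $m' \in \resultsstripped{p^\Phi_{Q'}}{\mathcal{C}^\Phi}$; in the second the induction hypothesis gives $m' \in \resultsstripped{p^{sat}_{(Q', \psi')}}{\mathcal{C}^\Phi}$. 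Either way $m' \in \resultsstripped{[\cup]^\Phi}{\mathcal{C}^\Phi}$, where $[\cup]^\Phi$ is the union over $p^\Phi_{Q'}$ and $p^{sat}_{(Q', \psi')}$. It then remains to show that $m$ reaches the left input $p^\Phi_Q$ of the semi-join $p^{sat} = [\ltimes]^\Phi$ along $a$: the request projection structure $RPS^{\infty}_r$ projects $m'$ onto a shared vertex $v$ in the image of $a$, assigns marking $\infty$, and feeds the resulting single-vertex match into an extension point of $N^\Phi_Q$. I would argue that this $\infty$-marked vertex request plays exactly the role a relevant-subgraph vertex plays in $N^\Phi_Q$, so that completeness for localized plain nets forces $m$ — which satisfies $m(v) = m'(a(v))$ and hence restricts to the requested vertex — into $\resultsstripped{p^\Phi_Q}{\mathcal{C}^\Phi}$. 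With $m \in \resultsstripped{p^\Phi_Q}{\mathcal{C}^\Phi}$, $m' \in \resultsstripped{[\cup]^\Phi}{\mathcal{C}^\Phi}$, and $m = m' \circ a$, the semantics of the marking-sensitive semi-join along $a$ place $m$ in $\resultsstripped{p^{sat}}{\mathcal{C}^\Phi}$.

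I expect the main obstacle to be precisely this last step: justifying that $RPS^{\infty}_r$ triggers the construction of $m$ inside $N^\Phi_Q$. This is not a direct application of Theorem~\ref{the:completeness_consistent_configuration}, which is phrased relative to $H_p$, but rather of the extension-point-level completeness lemma underlying its proof, generalized so that the anchoring vertices are those supplied through extension points with marking $\infty$ rather than those drawn from $H_p$ (cf.\ the analogous role of Lemma~\ref{lem:localized_rete_condition_completeness} in the $localize^\Psi$ development). I would therefore isolate and invoke that lemma, and treat the partial-morphism variant of $a$ analogously, since replacing $m = m' \circ a$ by $m|_{Q_p} = m' \circ a$ only changes the bookkeeping of which vertex the projection targets.
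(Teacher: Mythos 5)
Your overall strategy --- structural induction over $\psi$ with the same case split, dispatching $\texttt{true}$, negation, and conjunction quickly and doing the real work in the existential case via the union node, $RPS^{\infty}_r$, and an extension-point-level completeness lemma --- is exactly the route the paper takes (its Lemma~\ref{lem:completeness_satisfaction_dependency}, from which the theorem follows in one line). However, there is a genuine gap in how you formulate the induction hypothesis: you state it in terms of \emph{stripped} result sets, i.e., you forget the markings. This is too weak to push through the existential case. The request projection structure $RPS^{\infty}_r = RPS^{\infty}([\cup]^\Phi, N^\Phi_Q)$ is not a plain projection: its entrance is a marking filter node $[\phi > h]^\Phi$, so a match $m'$ sitting in $\mathcal{C}^\Phi([\cup]^\Phi)$ is only projected onto the shared vertex and forwarded into the extension point of $N^\Phi_Q$ if its \emph{marking} exceeds $h$ (in effect, only if it is $\infty$; the paper even remarks that the $\infty$-assignment node in $RPS^{\infty}$ is technically redundant precisely because only $\infty$-marked matches get through). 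With your induction hypothesis, $m' \in \resultsstripped{p^{sat}_{(Q', \psi')}}{\mathcal{C}^\Phi}$ says nothing about its marking, so $m'$ may be blocked by the filter and the requested computation inside $N^\Phi_Q$ never happens. The same issue appears in your first subcase: Theorem~\ref{the:completeness_consistent_configuration} only gives $m' \in \resultsstripped{p^\Phi_{Q'}}{\mathcal{C}^\Phi}$; you need Theorem~\ref{the:precision_consistent_configuration} (or the extension-point lemma itself) to know that $m'$ carries marking $\infty$ there.

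The fix is exactly the strengthening the paper builds into Lemma~\ref{lem:completeness_satisfaction_dependency}: prove by induction the stronger statement that every subgraph satisfaction dependent match appears \emph{with marking $\infty$} in $\mathcal{C}^\Phi(p^{sat})$, phrased for arbitrary modular extensions of $(N^{sat}, p^{sat})$ (which also formalizes your informal ``consistency restricts to subnets'' observation). The stronger claim survives the negation case (identity of constructions) and the conjunction case (the marking-sensitive union takes maxima, so $\infty$ is preserved), and in the existential case it is precisely what lets $m'$ pass the filter of $RPS^{\infty}_r$; after that, your appeal to the extension-point-level completeness lemma (the paper's ``Lemma 1 in \cite{preprint}'') and the semantics of the marking-sensitive semi-join closes the argument, and the theorem is the immediate corollary obtained by stripping markings. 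So you correctly identified the anchoring lemma needed inside $N^\Phi_Q$, but the marking bookkeeping has to be carried through the induction itself, not only at that final anchoring step.
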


\begin{proof}
Follows directly from Lemma~\ref{lem:completeness_satisfaction_dependency}.
\end{proof}

\begin{thmapp}[Execution of localized RETE nets via $order^{sat}$ yields consistent configurations] \label{the:satisfaction_rete_execution_appendix}
Let $H$ be a graph, $H_p \subseteq H$, $(Q, \psi)$ a graph query, and $\mathcal{C}^\Phi_0$ an arbitrary starting configuration for the marking-sensitive RETE net $(N^{sat}, p^{sat}) = localize^{sat}(Q, \psi)$. Executing $(N^{sat}, p^{sat})$ via $O = order^{sat}(N^{sat})$ then yields a consistent configuration $\mathcal{C}^\Phi = execute(O, N^{sat}, H, H_p, \mathcal{C}^\Phi_0)$.
\end{thmapp}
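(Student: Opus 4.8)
The plan is to prove the claim by structural induction over the nested graph condition $\psi$, mirroring the recursive definitions of $localize^{sat}$ and $order^{sat}$. The guiding observation is that if we collapse every maximal localized plain-query subnet $localize(Q')$ occurring in $N^{sat}$ to a single macro-node, the resulting \emph{condensed} dependency graph is acyclic: every edge that $localize^{sat}$ adds on top of these subnets (the dependencies of the union node $[\cup]^\Phi$, of the request projection structure $RPS^{\infty}_r$, and of the semi-join $[\ltimes]^\Phi$, respectively the union node in the conjunction case) runs strictly ``upward'' and never re-enters a subnet it transitively depends on. Read at this macro level, $order^{sat}$ is precisely a reverse topological traversal of the condensation, with each plain-query macro-node expanded via $order$ and each recursively constructed $localize^{sat}$ block expanded via $order^{sat}$. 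The proof thus reduces to (i) checking acyclicity of the condensation in each case and (ii) showing that executing a block after every block it depends on has been made consistent renders it consistent without disturbing the already-consistent blocks.

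For the base case $\psi = \texttt{true}$, the net is the single dummy node $[\emptyset]^\Phi$, whose target result set is empty for every configuration, so executing it once via $order^{sat}(N^{sat}) = [\emptyset]^\Phi$ trivially yields consistency. The case $\psi = \neg\psi'$ is immediate, since $localize^{sat}(Q, \neg\psi') = localize^{sat}(Q, \psi')$ and $order^{sat}$ delegates accordingly, so the induction hypothesis applies verbatim. For $\psi = \psi_1 \wedge \psi_2$, the net is the combination of the two recursively constructed subnets $N^{sat}_{(Q, \psi_1)}$ and $N^{sat}_{(Q, \psi_2)}$ feeding a single union node $p^{sat} = [\cup]^\Phi$; the condensation is plainly acyclic, the induction hypothesis makes both subnets consistent after the first two segments of $order^{sat}$, and the terminal execution of $[\cup]^\Phi$ (whose only dependencies are the two now-stable production nodes) makes it consistent too. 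As nothing feeds back into either subnet, both remain consistent, and the whole configuration is consistent.

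The substantive case is $\psi = \exists(a : Q \rightarrow Q', \psi')$. Here the macro-dependencies are $[\ltimes]^\Phi \to \{N^\Phi_Q, [\cup]^\Phi\}$, $N^\Phi_Q \to RPS^{\infty}_r$, $RPS^{\infty}_r \to [\cup]^\Phi$, and $[\cup]^\Phi \to \{N^\Phi_{Q'}, N^{sat}_{(Q', \psi')}\}$; this is a DAG, since $N^\Phi_Q$ feeds only the semi-join and the $RPS$ output while $[\cup]^\Phi$ depends only on the $Q'$-side blocks, so no edge returns to $N^\Phi_Q$. The sequence $order(N^\Phi_{Q'}) \circ order^{sat}(N^{sat}_{(Q', \psi')}) \circ [\cup]^\Phi \circ toposort(RPS^{\infty}_r)^{-1} \circ order(N^\Phi_Q) \circ [\ltimes]^\Phi$ processes these macro-nodes in reverse topological order: by Theorem~\ref{the:execution_order_consistency} the first segment makes $N^\Phi_{Q'}$ consistent and by the induction hypothesis the second makes $N^{sat}_{(Q', \psi')}$ consistent; executing $[\cup]^\Phi$ and then the acyclic $RPS^{\infty}_r$ makes both consistent since their dependencies are stable; it then remains to run $order(N^\Phi_Q)$ and finally $[\ltimes]^\Phi$, whose dependencies $p^\Phi_Q$ and $[\cup]^\Phi$ are stable at that point.

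The main obstacle is the embedded execution of the plain-query subnet $N^\Phi_Q$: the structure $RPS^{\infty}_r$ attaches an external dependency to one of its extension-point union nodes, so $N^\Phi_Q$ is no longer the standalone net to which Theorem~\ref{the:execution_order_consistency} literally applies. I would close this gap by observing that $N^{sat}$ is a \emph{modular extension} of $N^\Phi_Q$ in the sense defined above, i.e.\ no new edge joins two nodes both lying in $N^\Phi_Q$; the only additions are inbound edges to its extension points. These points are marking-sensitive union nodes whose target result set is merely the marking-maximizing union over all dependencies, so the already-stabilized $RPS^{\infty}_r$ input is indistinguishable, from the viewpoint of $N^\Phi_Q$'s internal execution, from extra source data present throughout. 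Hence the argument underlying Theorem~\ref{the:execution_order_consistency} (the double execution of inner blocks that resolves the internal request-projection cycles of $localize$) goes through unchanged with the external input held fixed, yielding consistency of all of $N^\Phi_Q$ relative to the full configuration. Since $RPS^{\infty}_r$ and the $Q'$-side blocks do not depend on $N^\Phi_Q$, executing $order(N^\Phi_Q)$ does not de-consistify them, and $[\ltimes]^\Phi$ has no successors; the entire configuration is therefore consistent, completing the induction. The key lemma to state and discharge precisely is exactly this modularity property: that $order$ certifies consistency of a localized plain-query subnet within any modular extension whose inbound edges target only its extension points.
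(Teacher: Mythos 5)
Your proposal is correct and follows essentially the same route as the paper: the paper proves this theorem via a lemma (its Lemma~\ref{lem:satisfaction_rete_execution}) established by structural induction over $\psi$, executing the blocks of $order^{sat}$ in exactly the reverse-topological fashion you describe, and closing the gap for embedded plain-query subnets by a modular-extension version of Theorem~\ref{the:execution_order_consistency} (Lemma 13 in \cite{preprint}) --- precisely the ``key lemma'' you identify at the end. The only cosmetic difference is that the paper states the entire induction for modular extensions of $(N^{sat}, p^{sat})$ up front (so the same lemma can be reused later for $order^\Delta$), rather than invoking modularity only where the request projection structure attaches to an extension point of $N^\Phi_Q$.
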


\begin{proof}
Follows directly from Lemma~\ref{lem:satisfaction_rete_execution}.
\end{proof}

\setcounter{thmapp}{14}
\begin{thmapp}[For a subgraph-restricted graph modification, consistent configurations for RETE nets localized via $localize^\Delta$ yield all matches that touch the relevant subgraph in the source or are subgraph satisfaction dependent in the source or target of the modification] \label{the:ngc_delta_rete_completeness_appendix} 
Let $(Q, \psi)$ be an extended graph query and $H \xleftarrow{f} K \xrightarrow{g} H'$ a subgraph-restricted graph modification modifying the graph $H$ with relevant subgraph $H_p \subseteq H$ into the graph $H'$ with relevant subgraph $H'_p \subseteq H'$. Furthermore, let $\mathcal{C}^\Phi$ be a configuration that is consistent for $(N^\Delta, p^\Delta) = localize^\Delta(Q, \psi)$ for host graph $H$ and relevant subgraph $H_p$ and consistent for $(N^{sat'}_{(Q, \psi)}, p^{sat'}_{(Q, \psi)}) = localize^{sat}(Q, \psi)$ for host graph $H'$ and relevant subgraph $H'_p$. It must then hold that $\forall m \in \allmatches{Q}{H} : m \models \psi \wedge (m(Q) \cap H_p \neq \emptyset \text{ or } m \text{ is subgraph satisfaction dependent } \text{ or } \exists m' \in \allmatches{Q}{H'} : f \circ g^{-1} \circ m' = m \wedge m' \text{ is subgraph satisfaction dependent}) \Rightarrow m \in \resultsstripped{p^\Delta}{\mathcal{C}^\Phi}$.
\end{thmapp}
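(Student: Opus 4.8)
The plan is to unfold the production node of $N^\Delta$ and reduce the claim to two membership conditions, each dischargeable by one of the cited results. By construction (Figure~\ref{fig:localize_delta}), $p^\Delta = [\ltimes]^\Phi$ is a marking-sensitive semi-join whose left dependency is the union node $[\cup]^\Phi$ and whose right dependency is $p^\Phi_{(Q,\psi)}$, the production node of $localize^\Psi(Q,\psi)$. Since both dependencies carry matches for the full query graph $Q$, the semi-join overlap is $Q$ itself, so by Definition~\ref{def:marking-sensitive_result_sets_ngcs} a match $m$ lies in $\resultsstripped{p^\Delta}{\mathcal{C}^\Phi}$ exactly when $m \in \resultsstripped{[\cup]^\Phi}{\mathcal{C}^\Phi}$ and $m \in \resultsstripped{p^\Phi_{(Q,\psi)}}{\mathcal{C}^\Phi}$. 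I would therefore fix an arbitrary $m \in \allmatches{Q}{H}$ satisfying the hypothesis and establish these two memberships separately.

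For the first membership I would match the three disjuncts of the hypothesis against the three dependencies of $[\cup]^\Phi$. If $m(Q) \cap H_p \neq \emptyset$, then by Theorem~\ref{the:completeness_consistent_configuration} $m$ is computed in $N^\Phi_Q$ and by Theorem~\ref{the:precision_consistent_configuration} it carries marking $\infty > h$, so it survives the filter $[\phi > h]^\Phi$ and enters $[\cup]^\Phi$. If $m$ is subgraph satisfaction dependent in $H$, Theorem~\ref{the:completeness_satisfaction_dependency} places it in $p^{sat}_{(Q,\psi)}$, a further dependency of $[\cup]^\Phi$. If instead there is a subgraph satisfaction dependent $m'$ in $H'$ with $f \circ g^{-1} \circ m' = m$, then the same theorem applied to the copy $(N^{sat'}_{(Q,\psi)}, p^{sat'}_{(Q,\psi)})$, for which $\mathcal{C}^\Phi$ is assumed consistent over $H'$, puts $m'$ in $p^{sat'}_{(Q,\psi)}$, and the translation node $[\xrightarrow{f \circ g^{-1}}]^\Phi$ maps it to $m$ by its defining target result set. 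In every case $m \in \resultsstripped{[\cup]^\Phi}{\mathcal{C}^\Phi}$.

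For the second membership I would use $m \models \psi$ together with the request mechanism. Because $m$ already lies in $[\cup]^\Phi$, the attached request projection structure $RPS^{\infty}_l = RPS^{\infty}([\cup]^\Phi, N^\Phi_{(Q,\psi)})$ projects $m$ onto a vertex of $Q$ and injects the corresponding single-vertex match, marked $\infty$, at an extension point of $N^\Phi_{(Q,\psi)}$. This makes $m$ touch the region that $N^\Phi_{(Q,\psi)}$ treats as relevant, so by the correctness of $localize^\Psi$ (Theorem~\ref{the:localized_rete_ngcs_correctness}), in the form that accounts for additional inputs supplied at the extension points of a modular extension, and since $m \models \psi$, the match $m$ is produced by $p^\Phi_{(Q,\psi)}$. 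Combining the two memberships via the semi-join semantics yields $m \in \resultsstripped{p^\Delta}{\mathcal{C}^\Phi}$, as required.

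The main obstacle is this second membership. Theorem~\ref{the:localized_rete_ngcs_correctness} as stated concerns $N^\Phi_{(Q,\psi)}$ executed in isolation against a fixed relevant subgraph, whereas here it is embedded in $N^\Delta$ and fed extra requests through $RPS^{\infty}_l$, and the match $m$ need not touch $H_p$ at all. I would therefore route this step through the modular-extension generalisation of the $localize^\Psi$ completeness argument, verifying that injecting even a single vertex of $m$ suffices to drive the local navigation and the internal condition-checking subnets to reconstruct $m$ and evaluate $\psi$ on it exactly as they would if $m$ genuinely touched $H_p$. Once the translation node's target result set and the $H'$-consistency assumption are unfolded, the remaining cases reduce to bookkeeping.
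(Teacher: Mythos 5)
Your proposal is correct and follows essentially the same route as the paper's proof: the same unfolding of $p^\Delta$ as a semi-join over $[\cup]^\Phi$ and $p^\Phi_{(Q,\psi)}$, the same three-case analysis matching the disjuncts of the hypothesis to the three dependencies of $[\cup]^\Phi$ (membership via Theorems~\ref{the:completeness_consistent_configuration}/\ref{the:precision_consistent_configuration}, Theorem~\ref{the:completeness_satisfaction_dependency} for both host graphs, and the translation node's target result set), and the same final step feeding $[\cup]^\Phi$ through $RPS^{\infty}_l$ into the condition-checking subnet. The ``modular-extension generalisation'' of Theorem~\ref{the:localized_rete_ngcs_correctness} that you flag as the main obstacle is precisely the paper's Lemma~\ref{lem:localized_rete_condition_completeness}, which its proof invokes at exactly this point, so your outline coincides with the paper's argument.
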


\begin{proof}
By the definition of $localize^\Delta$, $N^\Delta$ consists of $(N^\Phi_Q, p^\Phi_Q) = localize(N^Q, p^Q)$ for a regular RETE net $(N^Q, p^Q)$ with height $h$ for the plain graph query $Q$, $[\phi > h]^\Phi$, $(N^{sat}, p^{sat}) = localize^{sat}(Q, \psi)$, and $[\cup]^\Phi$. In addition, $N^\Delta$ comprises $(N^\Phi_{(Q, \psi)}, p^\Phi_{(Q, \psi)}) = localize^{\Psi}(Q, \psi)$, $RPS^{\infty}_l = RPS^{\infty}([\cup]^\Phi, N^\Phi_{(Q, \psi)})$, $[\ltimes]^\Phi$, and $[\xrightarrow{f \circ g^{-1}}]^\Phi$.

Let $m \in \allmatches{Q}{H}$ with $m \models \psi \wedge (m(Q) \cap H_p \neq \emptyset \text{ or } m \text{ is subgraph satisfaction dependent or } \\ \exists m' \in \allmatches{Q}{H'} : f \circ g^{-1} \circ m' = m \wedge m' \text{ is subgraph satisfaction dependent})$. We can then distinguish three cases:

Case 1: If $m(Q) \cap H_p \neq \emptyset$, it follows from Lemma 1 in \cite{preprint} that there must be a tuple $(m, \infty) \in \mathcal{C}^\Phi(p^\Phi_Q)$ and thus also $(m, \infty) \in  \mathcal{C}^\Phi([\phi > h]^\Phi)$.

Case 2: If $m \text{ is subgraph satisfaction dependent}$, it follows from Theorem~\ref{the:completeness_satisfaction_dependency} that there must be a tuple $(m, \phi) \in \mathcal{C}^\Phi(p^{sat})$.

Case 3: If $\exists m' \in \allmatches{Q}{H'} : f \circ g^{-1} \circ m' = m \wedge m' \text{ is subgraph satisfaction dependent}$, it follows from Theorem~\ref{the:completeness_satisfaction_dependency} that there must be a tuple $(m', \phi) \in \mathcal{C}^\Phi(p^{sat'})$ such that $f \circ g^{-1} \circ m' = m$. From the semantics of the marking-sensitive translation node then follows that there must be a tuple $(m, \phi) \in \mathcal{C}^\Phi([\xrightarrow{f \circ g^{-1}}]^\Phi)$.

In all three cases, there must hence be a tuple $(m, \phi) \in \mathcal{C}^\Phi([\cup]^\Phi)$. By Lemma~\ref{lem:localized_rete_condition_completeness} and because $m \models \psi$, it must then hold that there is a tuple $(m, \phi') \in \mathcal{C}^\Phi(p^\Phi_{(Q, \psi)})$ and, by the semantics of the marking-sensitive semi-join, $(m, \phi) \in \resultsstripped{p^\Delta}{\mathcal{C}^\Phi}$.
\end{proof}

\begin{thmapp}[Execution of localized RETE nets via $order^\Delta$ yields consistent configurations] \label{the:ngc_delta_rete_execution_appendix}
Let $(Q, \psi)$ be an extended graph query and $H$ a host graph with relevant subgraph $H_p$ and $\mathcal{C}^\Phi_0$ an arbitrary starting configuration for the marking-sensitive RETE net $(N^\Delta, p^\Delta) = localize^\Delta(Q, \psi)$. Executing $(N^\Delta, p^\Delta)$ via $O = order^\Delta(N^\Delta)$ then yields a consistent configuration $\mathcal{C}^\Phi = execute(O, N^\Delta, H, H_p, \mathcal{C}^\Phi_0)$.
\end{thmapp}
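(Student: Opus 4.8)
The plan is to exploit the modular structure of $N^\Delta$: every cyclic dependency it contains is confined to one of the localized subnets for \emph{plain} graph queries produced by $localize$, while the nodes linking these subnets form an acyclic structure. Concretely, I would first enumerate the constituents of $N^\Delta$ from the definition of $localize^\Delta$ — the standalone plain subnet $(N^\Phi_Q, p^\Phi_Q) = localize(N^Q, p^Q)$, the marking filter $[\phi > h]^\Phi$, the satisfaction subnet $(N^{sat}_{(Q,\psi)}, p^{sat}_{(Q,\psi)})$, the translation node $[\xrightarrow{f \circ g^{-1}}]^\Phi$ reading from the separately maintained copy $(N^{sat'}_{(Q,\psi)}, p^{sat'}_{(Q,\psi)})$, the union $[\cup]^\Phi$, the request projection structure $RPS^{\infty}_l = RPS^{\infty}([\cup]^\Phi, N^\Phi_{(Q,\psi)})$, the subnet $(N^\Phi_{(Q,\psi)}, p^\Phi_{(Q,\psi)}) = localize^\Psi(Q,\psi)$, and the semi-join $[\ltimes]^\Phi$. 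I would then observe that $N^\Delta$ is a modular extension of each of $N^\Phi_Q$, $N^{sat}_{(Q,\psi)}$, and $N^\Phi_{(Q,\psi)}$, so that the only edges linking such a subnet to the rest of $N^\Delta$ emanate from its production node or enter the extension points of its local navigation structures, with $RPS^{\infty}_l$ being the single structure adding an inbound edge to $N^\Phi_{(Q,\psi)}$.

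Next I would argue that contracting each of these three subnets to a single vertex turns $N^\Delta$ into a directed acyclic graph, following the data flow $N^\Phi_Q \to [\phi > h]^\Phi \to [\cup]^\Phi$, $N^{sat}_{(Q,\psi)} \to [\cup]^\Phi$, $[\xrightarrow{f \circ g^{-1}}]^\Phi \to [\cup]^\Phi$, and $[\cup]^\Phi \to RPS^{\infty}_l \to N^\Phi_{(Q,\psi)} \to [\ltimes]^\Phi$ together with $[\cup]^\Phi \to [\ltimes]^\Phi$. Since the lone feedback edge from $[\cup]^\Phi$ re-enters $N^\Phi_{(Q,\psi)}$ only through $RPS^{\infty}_l$ and then flows upward to $[\ltimes]^\Phi$, no cycle arises at the contracted level. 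The remaining cycles therefore all sit inside $N^\Phi_Q$, $N^{sat}_{(Q,\psi)}$, and $N^\Phi_{(Q,\psi)}$, where Theorems~\ref{the:execution_order_consistency},~\ref{the:satisfaction_rete_execution_appendix}, and~\ref{the:localized_rete_ngcs_execution_appendix} respectively guarantee that $order$, $order^{sat}$, and $order^\Psi$ produce a consistent configuration for an arbitrary starting configuration; I can thus treat each subnet as a black box.

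With this in place, I would verify that $order^\Delta$ is a reverse topological sorting of the contracted DAG in which each contracted vertex is replaced by the corresponding sub-ordering. Reading $order^\Delta(N^\Delta) = order(N^\Phi_Q) \circ [\phi > h]^\Phi \circ order^{sat}(N^{sat}_{(Q,\psi)}) \circ [\xrightarrow{f \circ g^{-1}}]^\Phi \circ [\cup]^\Phi \circ toposort(RPS^{\infty}_l)^{-1} \circ order^\Psi(N^\Phi_{(Q,\psi)}) \circ [\ltimes]^\Phi$ against the dependencies above shows that every node is executed only after all of its dependencies outside its own modular subnet have already reached their consistent values, and that no node executed later is a dependency of an already-finalized one. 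Crucially, $RPS^{\infty}_l$ is executed \emph{before} $order^\Psi(N^\Phi_{(Q,\psi)})$, so that by the time $order^\Psi$ re-evaluates the extension-point union nodes of $N^\Phi_{(Q,\psi)}$ the requested input is already present and is simply absorbed into the arbitrary starting configuration covered by Theorem~\ref{the:localized_rete_ngcs_execution_appendix}. The consistency of each subnet established when its block runs is then preserved, because its only external dependencies ($[\cup]^\Phi$ for $N^\Phi_{(Q,\psi)}$ via $RPS^{\infty}_l$, and none for the source subnets) are not touched afterwards.

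The main obstacle I expect is making this preservation argument fully rigorous: I must ensure that embedding a plain subnet inside $N^\Delta$ and feeding it external input via $RPS^{\infty}_l$ does not alter the outcome of its internal ordering relative to isolated execution. This is precisely what the modular-extension property delivers — no new internal edges are introduced, so the target result set of every subnet node depends only on subnet-internal nodes plus the fixed external inputs — and I would discharge it by invoking the modular-execution lemma underlying Theorem~\ref{the:localized_rete_ngcs_execution_appendix}. A secondary point requiring care is the status of the separately maintained copy $N^{sat'}_{(Q,\psi)}$, which $order^\Delta$ never re-executes: the translation node $[\xrightarrow{f \circ g^{-1}}]^\Phi$ becomes consistent purely with respect to the fixed content of $p^{sat'}$ carried in $\mathcal{C}^\Phi_0$, and this consistency is retained because $p^{sat'}$ is not re-evaluated, which lets me localize the consistency claim to the nodes $order^\Delta$ is responsible for.
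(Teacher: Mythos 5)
Your proposal is correct and follows essentially the same route as the paper: the paper's proof reduces the theorem to a modular-extension lemma and then discharges it by combining the modular execution lemmas for the plain localized subnet ($order$), the satisfaction subnet ($order^{sat}$), and the $localize^\Psi$ subnet ($order^\Psi$), exactly the black-box decomposition you describe, with the gluing nodes (filter, translation, union, $RPS^{\infty}_l$, semi-join) forming an acyclic structure executed in dependency order. Your additional care about the fixed content of $p^{sat'}_{(Q,\psi)}$ and about feeding external input into extension points before running $order^\Psi$ matches the role the modular-extension lemmata play in the paper's argument.
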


\begin{proof}
Follows directly from Lemma~\ref{lem:ngc_delta_rete_execution}.
\end{proof}

\setcounter{thmapp}{17}
\begin{thmapp}[Removal of matches from the results of an extended graph query caused by a subgraph-restricted graph modification can be detected using RETE nets localized via $localize^\Delta$]
Let $(Q, \psi)$ be an extended graph query and $H \xleftarrow{f} K \xrightarrow{g} H'$ a subgraph-restricted graph modification modifying the graph $H$ with relevant subgraph $H_p \subseteq H$ into the graph $H'$ with relevant subgraph $H'_p \subseteq H'$.  Furthermore, let $\mathcal{C}^\Phi$ be a configuration that is consistent for $(N^\Delta, p^\Delta) = localize^\Delta(Q, \psi)$ and $(N^{sat}_{(Q, \psi)}, p^{sat}_{(Q, \psi)}) = localize^{sat}(Q, \psi)$ for host graph $H$ and relevant subgraph $H_p$ and consistent for $(N^{\Delta'}, p^{\Delta'}) = localize^\Delta(Q, \psi)$ and $(N^{sat'}_{(Q, \psi)}, p^{sat'}_{(Q, \psi)}) = localize^{sat}(Q, \psi)$ for host graph $H'$ and relevant subgraph $H'_p$. It then holds that $\{m \in \allmatches{Q}{H} \mid m \models \psi \wedge \nexists m' \in \allmatches{Q}{H'} : m = f \circ g^{-1} \circ m' \wedge m' \models \psi\} = \resultsstripped{p^\Delta}{\mathcal{C}^\Phi} \setminus \{m \in \allmatches{Q}{H} \mid \exists m' \in \resultsstripped{p^{\Delta'}}{\mathcal{C}^{\Phi'}} : m = f \circ g^{-1} \circ m'\}$.
\end{thmapp}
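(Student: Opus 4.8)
The plan is to prove the two set inclusions separately, after first reading off clean characterizations of both result sets. Write $R = \resultsstripped{p^\Delta}{\mathcal{C}^\Phi}$ and $R' = \resultsstripped{p^{\Delta'}}{\mathcal{C}^{\Phi'}}$, and let $\mathit{Tr}(R') = \{m \in \allmatches{Q}{H} \mid \exists m' \in R' : m = f \circ g^{-1} \circ m'\}$ denote the translated target result set. First I would establish that the top marking-sensitive semi-joins of $(N^\Delta, p^\Delta)$ and $(N^{\Delta'}, p^{\Delta'})$ make $R$ and $R'$ \emph{exactly} the $\psi$-satisfying matches among the candidate sets $S_H = \resultsstripped{[\cup]^\Phi}{\mathcal{C}^\Phi}$ and $S_{H'} = \resultsstripped{[\cup]^{\Phi'}}{\mathcal{C}^{\Phi'}}$ feeding these semi-joins: since the request projection structure $RPS^{\infty}_l$ requests every candidate from the localized subnet $N^\Phi_{(Q, \psi)}$, the correctness of $localize^\Psi$ (Theorem~\ref{the:localized_rete_ngcs_correctness}) guarantees that a candidate survives the semi-join iff it satisfies $\psi$. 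In particular, $R$ and $R'$ contain only matches satisfying $\psi$.

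The decisive structural observation, corresponding to the informal proof idea, is a symmetry between $S_H$ and $S_{H'}$ on \emph{preserved} matches. A match $m \in \allmatches{Q}{H}$ either uses an element of $H \setminus f(K)$ --- which, by Definition~\ref{def:subgraph_restricted_graph_modification}, lies in $H_p$, so that $m$ can have no preimage under $f \circ g^{-1}$ --- or maps entirely into $f(K)$, in which case $iso_s = g \circ f^{-1}|_{H_s}$ yields a unique corresponding match $m' = g \circ f^{-1} \circ m \in \allmatches{Q}{H'}$ with $m = f \circ g^{-1} \circ m'$. I would show that for such corresponding preserved matches $m \leftrightarrow m'$ we have $m \in S_H \Leftrightarrow m' \in S_{H'}$. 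This is read off from the three dependencies of the union nodes: the marking-filter branch contributes exactly the matches touching the relevant subgraph (Theorem~\ref{the:precision_consistent_configuration}), and $(g \circ f^{-1})(H_p) \subseteq H'_p$ together with $(f \circ g^{-1})(H'_p) \subseteq H_p$ makes ``touches $H_p$'' and ``touches $H'_p$'' agree across the correspondence; crucially, both candidate sets draw on the \emph{same two} $localize^{sat}$ computations (over $H$ and over $H'$), which each net wires in directly on one side and through a marking-sensitive translation node on the other, so the two satisfaction-dependence branches match up under the correspondence by uniqueness of preimages.

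For the inclusion $\subseteq$, I would take $m$ in the left-hand set. Then $m \notin \mathit{Tr}(R')$, since any $m' \in R'$ with $m = f \circ g^{-1} \circ m'$ would satisfy $\psi$ by soundness of $R'$, contradicting membership in the left-hand set. To place $m$ in $R$ I apply Theorem~\ref{the:ngc_delta_rete_completeness}: if $m$ touches $H_p$ this is immediate; otherwise $m$ is preserved with correspondent $m'$, and the defining condition forces $m' \not\models \psi$ while $m \models \psi$, so satisfaction changes along the shared $m_K = f^{-1} \circ m \in \allmatches{Q}{K}$, whence Theorem~\ref{the:subgraph_satisfaction_dependence_necessary} makes $m$ subgraph satisfaction dependent in $H$ or $m'$ subgraph satisfaction dependent in $H'$ --- either disjunct is covered by the completeness clause of Theorem~\ref{the:ngc_delta_rete_completeness}. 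For the reverse inclusion $\supseteq$, I take $m \in R \setminus \mathit{Tr}(R')$, note $m \models \psi$ by soundness, and argue by contradiction: if some $m' \in \allmatches{Q}{H'}$ with $m = f \circ g^{-1} \circ m'$ satisfies $\psi$, then $m$ is preserved with correspondent $m'$, and the symmetry gives $m' \in S_{H'}$ from $m \in S_H$; since $m' \models \psi$ this yields $m' \in R'$ and hence $m \in \mathit{Tr}(R')$, a contradiction. Therefore no such $m'$ exists and $m$ lies in the left-hand set.

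I expect the main obstacle to be the candidate-set symmetry of the second step: it requires precisely tracking how the two $localize^\Delta$ instances are assembled in the combined construction of Figure~\ref{fig:localize_delta_combined} from shared $localize^{sat}$ subnets and translation nodes, verifying that corresponding preserved matches land in the analogous branches on both sides, and separately disposing of matches that involve created or deleted elements and therefore have no correspondent. Once the symmetry and the exact identity $R = \{m \in S_H \mid m \models \psi\}$ are in place, the two inclusions themselves are short.
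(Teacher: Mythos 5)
Your proposal is correct and takes essentially the same route as the paper's proof: the forward inclusion rests, exactly as in the paper, on the observation that deleted elements lie in $H_p$ (so a match without a correspondent touches the relevant subgraph), on Theorem~\ref{the:subgraph_satisfaction_dependence_necessary} to obtain satisfaction dependence in the correspondent case, and on Theorem~\ref{the:ngc_delta_rete_completeness}; and the reverse inclusion rests on soundness of the top semi-joins together with your branch-wise ``symmetry'' of the two union nodes, which the paper carries out as an inline three-case analysis (the $[\phi > h]^\Phi$ branch, the $p^{sat}$ branch, the translated $p^{sat'}$ branch) rather than as a separate lemma. The one point to tighten is your unqualified identity $R = \{m \in S_H \mid m \models \psi\}$: the marking filter inside $RPS^{\infty}_l$ forwards only $\infty$-marked candidates into $N^\Phi_{(Q,\psi)}$, so only the soundness half of that identity holds for arbitrary candidates, and the completeness half must be argued for $\infty$-marked ones --- which is why the paper tracks markings through its cases (via Theorems~\ref{the:completeness_consistent_configuration} and~\ref{the:precision_consistent_configuration}, and Lemma~\ref{lem:completeness_satisfaction_dependency}, which guarantees marking $\infty$ for satisfaction-dependent matches) instead of asserting the identity outright.
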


\begin{proof}
By the definition of $localize^\Delta$, $N^\Delta$ consists of $(N^\Phi_Q, p^\Phi_Q) = localize(N^Q, p^Q)$ for a regular RETE net $(N^Q, p^Q)$ with height $h$ for the plain graph query $Q$, $[\phi > h]^\Phi$, $(N^{sat}, p^{sat}) = localize^{sat}(Q, \psi)$, and $[\cup]^\Phi$, which has a dependency on $p^{sat'}_{(Q, \psi)}$ from $(N^{sat'}_{(Q, \psi)}, p^{sat'}_{(Q, \psi)})$. In addition, $N^\Delta$ comprises $(N^\Phi_{(Q, \psi)}, p^\Phi_{(Q, \psi)}) = localize^{\Psi}(Q, \psi)$, $RPS^{\infty}_l = RPS^{\infty}([\cup]^\Phi, N^\Phi_{(Q, \psi)})$, $[\ltimes]^\Phi$, and $[\xrightarrow{f \circ g^{-1}}]^\Phi$.

Analogously, $N^{\Delta'}$ consists of $(N^{\Phi'}_Q, p^{\Phi'}_Q) = localize(N^Q, p^Q)$, $[\phi > h]^{\Phi'}$, $(N^{sat'}, p^{sat'}) = localize^{sat}(Q, \psi)$, and $[\cup]^{\Phi'}$ which has a dependency on $p^{sat}_{(Q, \psi)}$ from $(N^{sat}_{(Q, \psi)}, p^{sat}_{(Q, \psi)})$. In addition, $N^{\Delta'}$ comprises $(N^{\Phi'}_{(Q, \psi)}, p^{\Phi'}_{(Q, \psi)}) = localize^{\Psi}(Q, \psi)$, $RPS^{\infty'}_l = RPS^{\infty}([\cup]^{\Phi'}, N^{\Phi'}_{(Q, \psi)})$, $[\ltimes]^{\Phi'}$, and $[\xrightarrow{g \circ f^{-1}}]^{\Phi'}$.

We first show that $\{m \in \allmatches{Q}{H} \mid m \models \psi \wedge \nexists m' \in \allmatches{Q}{H'} : m = f \circ g^{-1} \circ m' \wedge m' \models \psi\} \subseteq \resultsstripped{p^\Delta}{\mathcal{C}^\Phi} \setminus \{m \in \allmatches{Q}{H} \mid \exists m' \in \resultsstripped{p^{\Delta'}}{\mathcal{C}^{\Phi'}} : m = f \circ g^{-1} \circ m'\}$. Therefore, let $m \in \allmatches{Q}{H}$ such that $m \models \psi \wedge \nexists m' \in \allmatches{Q}{H'} : m = f \circ g^{-1} \circ m' \wedge m' \models \psi$.

We can the distinguish two cases:

Case 1: If $\nexists m' \in \allmatches{Q}{H'} : m = f \circ g^{-1} \circ m'$, it follows that $m(Q) \cap H_p \neq \emptyset$, since otherwise, $m' = iso_s \circ m$ would violate the assumption that $\nexists m' \in \allmatches{Q}{H'} : m = f \circ g^{-1} \circ m'$. It must then hold by Theorem~\ref{the:completeness_consistent_configuration} that there is a tuple $(m, \infty) \in \mathcal{C}^\Phi(p^\Phi_Q)$ and thus also $m \in \resultsstripped{p^\Delta}{\mathcal{C}^\Phi}$ by Lemma~\ref{lem:localized_rete_condition_completeness}. By the assumption that $\nexists m' \in \allmatches{Q}{H'} : m = f \circ g^{-1} \circ m'$, $\resultsstripped{p^{\Delta'}}{\mathcal{C}^{\Phi'}}$ then cannot contain a match $m'$ such that $m = f \circ g^{-1} \circ m'$ and thereby, $m \in \resultsstripped{p^\Delta}{\mathcal{C}^\Phi} \setminus \{m \in \allmatches{Q}{H} \mid \exists m' \in \resultsstripped{p^{\Delta'}}{\mathcal{C}^{\Phi'}} : m = f \circ g^{-1} \circ m'\}$.

Case 2: If there exists a match $m' \in \allmatches{Q}{H'}$ such that $m = f \circ g^{-1} \circ m'$, it must hold that $m' \not\models \psi$ for this match $m'$. By Theorem~\ref{the:subgraph_satisfaction_dependence_necessary}, we then know that either $m$ or $m'$ must be subgraph satisfaction dependent. By Theorem~\ref{the:ngc_delta_rete_completeness}, it must then hold that $m \in \resultsstripped{p^\Delta}{\mathcal{C}^\Phi}$. By Lemma~\ref{lem:localized_rete_condition_completeness}, it also follows that $m' \notin \resultsstripped{p^{\Delta'}}{\mathcal{C}^{\Phi'}}$. By the injectivity of $f$ and $g$, $m'$ must furthermore be unique. Consequently, $m \in \resultsstripped{p^\Delta}{\mathcal{C}^\Phi} \setminus \{m \in \allmatches{Q}{H} \mid \exists m' \in \resultsstripped{p^{\Delta'}}{\mathcal{C}^{\Phi'}} : m = f \circ g^{-1} \circ m'\}$.

Thus, it follows that $\{m \in \allmatches{Q}{H} \mid m \models \psi \wedge \nexists m' \in \allmatches{Q}{H'} : m = f \circ g^{-1} \circ m' \wedge m' \models \psi\} \subseteq \resultsstripped{p^\Delta}{\mathcal{C}^\Phi} \setminus \{m \in \allmatches{Q}{H} \mid \exists m' \in \resultsstripped{p^{\Delta'}}{\mathcal{C}^{\Phi'}} : m = f \circ g^{-1} \circ m'\}$.

We now show that $\{m \in \allmatches{Q}{H} \mid m \models \psi \wedge \nexists m' \in \allmatches{Q}{H'} : m = f \circ g^{-1} \circ m' \wedge m' \models \psi\} \supseteq \resultsstripped{p^\Delta}{\mathcal{C}^\Phi} \setminus \{m \in \allmatches{Q}{H} \mid \exists m' \in \resultsstripped{p^{\Delta'}}{\mathcal{C}^{\Phi'}} : m = f \circ g^{-1} \circ m'\}$. Therefore, let $m \in \resultsstripped{p^\Delta}{\mathcal{C}^\Phi} \setminus \{m \in \allmatches{Q}{H} \mid \exists m' \in \resultsstripped{p^{\Delta'}}{\mathcal{C}^{\Phi'}} : m = f \circ g^{-1} \circ m'\}$.

It then directly follows from Lemma~\ref{lem:localized_rete_condition_completeness} that $m \models \psi$. We will now show that there cannot exist a corresponding match $m'$ from $Q$ into $H'$ with $m = f \circ g^{-1} \circ m'$ and $m' \models \psi$.

From the construction of $N^\Delta$, it follows that there must exist a tuple $(m, \infty) \in \mathcal{C}^\Phi(p^\Phi_Q)$, a tuple $(m, \phi) \in \mathcal{C}^\Phi(p^{sat})$, or a tuple $(m', \phi) \in \mathcal{C}^{\Phi'}(p^{sat'})$.

Case 1: If $(m, \infty) \in \mathcal{C}^\Phi(p^\Phi_Q)$, it must hold by Theorem~\ref{the:precision_consistent_configuration} that $m(Q) \cap H_p \neq \emptyset$. Since there must be some element $e$ in $m(Q) \cap H_p$ for any match $m'$ with $m = f \circ g^{-1} \circ m'$ and because $(g \circ f^{-1})(H_p) \subseteq H'_p$, it must hold that $m'(Q) \cap H'_p \neq \emptyset$. Consequently, $(m', \infty) \in \mathcal{C}^{\Phi'}(p^{\Phi'}_Q)$ by Theorem~\ref{the:completeness_consistent_configuration}. Since $m \notin \{m \in \allmatches{Q}{H} \mid \exists m' \in \resultsstripped{p^{\Delta'}}{\mathcal{C}^{\Phi'}} : m = f \circ g^{-1} \circ m'\}$ and thus $m' \notin \resultsstripped{p^{\Delta'}}{\mathcal{C}^{\Phi'}}$, it must follow from Lemma~\ref{lem:localized_rete_condition_completeness} that $m' \not\models \psi$, since otherwise, it would hold that $\{m \in \allmatches{Q}{H} \mid \exists m' \in \resultsstripped{p^{\Delta'}}{\mathcal{C}^{\Phi'}} : m = f \circ g^{-1} \circ m'\}$.

Case 2: If $(m, \phi) \in \mathcal{C}^\Phi(p^{sat})$, for any match $m'$ with $m = f \circ g^{-1} \circ m'$, it holds that $m' = g \circ f^{-1} \circ m'$ due to the injectivity of $f$ and $g$. Therefore, it must hold that there is some tuple $(m', \phi) \in \mathcal{C}^{\Phi'}([\xrightarrow{g \circ f^{-1}}]^{\Phi'})$. Hence, it must hold that $m' \not\models \psi$, since otherwise, it would hold that $\{m \in \allmatches{Q}{H} \mid \exists m' \in \resultsstripped{p^{\Delta'}}{\mathcal{C}^{\Phi'}} : m = f \circ g^{-1} \circ m'\}$.

Case 3: If $(m', \phi) \in \mathcal{C}^{\Phi'}(p^{sat'})$ for some $m'$ with $m = f \circ g^{-1} \circ m'$, it immediately follows that $m' \not\models \psi$, since otherwise, it would hold that $\{m \in \allmatches{Q}{H} \mid \exists m' \in \resultsstripped{p^{\Delta'}}{\mathcal{C}^{\Phi'}} : m = f \circ g^{-1} \circ m'\}$.

Thus, it follows that $\{m \in \allmatches{Q}{H} \mid m \models \psi \wedge \nexists m' \in \allmatches{Q}{H'} : m = f \circ g^{-1} \circ m' \wedge m' \models \psi\} \supseteq \resultsstripped{p^\Delta}{\mathcal{C}^\Phi} \setminus \{m \in \allmatches{Q}{H} \mid \exists m' \in \resultsstripped{p^{\Delta'}}{\mathcal{C}^{\Phi'}} : m = f \circ g^{-1} \circ m'\}$.

With inclusions in both directions, it holds that $\{m \in \allmatches{Q}{H} \mid m \models \psi \wedge \nexists m' \in \allmatches{Q}{H'} : m = f \circ g^{-1} \circ m' \wedge m' \models \psi\} = \resultsstripped{p^\Delta}{\mathcal{C}^\Phi} \setminus \{m \in \allmatches{Q}{H} \mid \exists m' \in \resultsstripped{p^{\Delta'}}{\mathcal{C}^{\Phi'}} : m = f \circ g^{-1} \circ m'\}$.
\end{proof}

\begin{thm}[RETE net localization via $localize^{sat}$ introduces only a constant factor overhead on effective configuration size] \label{the:upper_bound_configuration_size_satisfaction_appendix}
Let $H$ be an edge-dominated graph, $H_p \subseteq H$, $(N, p)$ a RETE net created via the procedure described in \cite{barkowsky2023host} for the extended graph query $(Q, \psi)$, $\mathcal{C}$ a consistent configuration for $(N, p)$ for host graph $H$, and $\mathcal{C}^\Phi$ a consistent configuration for the marking-sensitive RETE net $(N^{sat}, p^{sat}) = localize^{sat}(Q, \psi)$ for host graph $H$ and relevant subgraph $H_p$ corresponding to $(N, p)$. It then holds that $\sum_{n^{sat} \in V^{N^{sat}}} \sum_{(m, \phi) \in \mathcal{C}^\Phi(n^{sat})} |m| \leq 18 \cdot |\mathcal{C}|_e$.
\end{thm}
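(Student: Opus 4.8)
The plan is to proceed by structural induction over the nested graph condition $\psi$, exactly mirroring the argument for $localize^\Psi$ in Theorem~\ref{the:upper_bound_configuration_size_ngc_appendix}, bounding the effective configuration size contributed by each structural component of $(N^{sat}, p^{sat})$ against the effective size of the corresponding fragment of the regular net $(N, p)$. In the base case $\psi = \texttt{true}$, $localize^{sat}$ produces only the dummy node $[\emptyset]^\Phi$ with permanently empty result set, so its effective size is $0 \leq 18 \cdot |\mathcal{C}|_e$. For $\psi = \neg\psi'$ we have $localize^{sat}(Q, \neg\psi') = localize^{sat}(Q, \psi')$, while the regular nets for $(Q, \neg\psi')$ and $(Q, \psi')$ share the same plain-pattern subnets and differ only in a top anti-join versus semi-join; since the latter only adds to the denominator $|\mathcal{C}|_e$, the induction hypothesis transfers directly.

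For $\psi = \exists(a : Q \rightarrow Q', \psi')$ the net $N^{sat}$ splits into six pieces to be bounded separately. The two plain localizations $localize(Q)$ and $localize(Q')$ are each bounded by factor $7$ against $|\mathcal{C}|_{V^{N_Q}}|_e$ respectively $|\mathcal{C}|_{V^{N_{Q'}}}|_e$ via Theorem~\ref{the:upper_bound_configuration_size} (which already tolerates the extra single-vertex input fed in through the request projection structure), and the recursive net $localize^{sat}(Q', \psi')$ by the induction hypothesis against $|\mathcal{C}|_{V^{N_{(Q', \psi')}}}|_e$. The marking-sensitive union $[\cup]^\Phi$ produces only deduplicated matches for $Q'$, so its effective size is bounded by $|\mathcal{C}|_{\{p_{Q'}\}}|_e$; the request projection structure $RPS^\infty_r$ is bounded by twice its source via Lemma~12 in \cite{preprint}; and the top semi-join $[\ltimes]^\Phi$ emits only matches for $Q$, whose count is bounded by that of its left dependency $p^\Phi_Q$. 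The crucial match-counting step is that every subgraph satisfaction dependent match for $Q$ is, by Definition~\ref{def:subgraph_satisfaction_dependence} and Theorem~\ref{the:completeness_satisfaction_dependency}, witnessed by a match for $Q'$; these ``new'' matches --- which need not satisfy $\psi$ and hence have no direct counterpart in $\mathcal{C}(p)$ --- are thereby still charged to the $Q'$-fragment of the regular configuration.

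The conjunction case $\psi = \psi_1 \wedge \psi_2$ is analogous, with the union over the two recursive sat-nets charged to $p_Q$. Here the decisive observation --- and the reason a constant factor is attainable at all --- is that the base construction of \cite{barkowsky2023host} itself instantiates a fresh copy of the plain-pattern net $N_Q$ for every conjunct and every nesting level, so the denominator $|\mathcal{C}|_e$ grows in lockstep with the replication inside $N^{sat}$. Counting copies, every plain-pattern net is replicated at most twice in $N^{sat}$ relative to $(N,p)$: the genuinely new duplication is that an extension pattern $Q'$ appears both as the explicit $localize(Q')$ and as the context pattern of the recurrence $localize^{sat}(Q', \psi')$, whereas context patterns and conjuncts are replicated at least as often in $(N, p)$. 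This yields the factor $18 = 2 \cdot 7 + 1 + 2 + 1$, where $2 \cdot 7$ accounts for the at most two factor-$7$ localizations of any single pattern and the remaining $1 + 2 + 1$ for the union, request projection structure, and semi-join charged on top of that pattern's production node.

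I expect the main obstacle to be precisely this double-counting of the extension pattern. A na\"ive application of the induction hypothesis charges $N_{Q'}$ both at factor $7$ (through the explicit $localize(Q')$) and at factor $18$ (through $localize^{sat}(Q', \psi')$), summing to $25$ and breaking the bound whenever the configuration mass is concentrated in $N_{Q'}$. Resolving this requires the observation that the induction hypothesis is tight for $localize^{sat}(Q', \psi')$ only when $\psi'$ is itself deeply structured, in which case $|\mathcal{C}|_{V^{N_{(Q', \psi')}}}|_e$ strictly dominates $|\mathcal{C}|_{V^{N_{Q'}}}|_e$, so that the extra context copy of $N_{Q'}$ never simultaneously incurs the full factor. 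The clean way to make this rigorous is either to strengthen the induction hypothesis so that it tracks the contribution to the outermost context pattern $N_Q$ separately (bounding it by factor $7$ rather than $18$), or equivalently to replace the recursive formulation by a single global charging argument that assigns each node of $N^{sat}$ to a copy of a regular-net node and verifies the per-copy multiplicity bound of $18$ directly.
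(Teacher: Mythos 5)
Your proposal is correct, and in its final form it coincides with the paper's proof: the paper dispenses with induction altogether and carries out exactly the global charging argument you describe, assigning to each plain-pattern subnet $N_{Q'}$ of the regular net a charge of at most $8 = 7 + 1$ for its role as context pattern of its own existential condition ($localize(Q')$ plus the top semi-join) and at most $10 = 7 + 1 + 2$ for its role as extension pattern of its unique parent condition ($localize(Q')$ again, plus the union node, plus the request projection structure), with the tree structure of $\psi$ guaranteeing that the child role occurs at most once, whence $8 + 10 = 18$. Your diagnosis of why the inductive argument of Theorem~\ref{the:upper_bound_configuration_size_ngc_appendix} does not transfer verbatim --- $localize^{sat}$, unlike $localize^\Psi$, instantiates a localization of the extension pattern both explicitly and inside the recursive call --- is precisely the structural feature that forces the larger constant and, implicitly, the paper's switch of proof technique for this theorem. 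One correction to your alternative (strengthened-induction) route: the hypothesis must bound the charge on the outermost context subnet by $8$, not $7$, because the production node of $localize^{sat}$ in the existential case is a semi-join whose result set again consists of matches for the context pattern and is charged against $p_{Q'}$; with $7$ the induction cannot close (the parent's own context already incurs $7 + 1 > 7$), whereas with $8$ it closes exactly ($8$ at the context, $10 + 8 = 18$ at the extension pattern), recovering the paper's accounting.
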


\begin{proof}
For nested queries of $(Q, \psi)$ with the form $(Q', \texttt{true})$, $localize^{sat}$ only introduces a dummy node to $N^{sat}$. The result set for any dummy node in $N^{sat}$ is always empty. Hence, the result sets of these nodes do not contribute to configuration size. Moreover, $localize^{sat}$ does not introduce any additional nodes for nested queries with the form $(Q', \neg\psi')$.

For each nested query with the form $(Q', \psi_1 \wedge \psi_2)$, $N^{sat}$ contains one marking-sensitive union node $[\cup]^\Phi$, whereas $N$ contains a RETE subnet $(N_{Q'}, p_{Q'})$ computing matches for $Q'$ as well as two semi-joins. Since the result set for $[\cup]^\Phi$ can only contain at most one tuple for each match for $Q'$ and the result set for $p^{Q'}$ contains all such matches, it follows that $\sum_{(m, \phi) \in \mathcal{C}^\Phi([\cup]^\Phi)} |m| \leq \sum_{m \in \mathcal{C}(p_{Q'})} |m|$.

For each nested query with the form $(Q', \exists(a: Q' \rightarrow Q'', \psi'))$, $N^{sat}$ contains the subnets $(N^\Phi_{Q'}, p^\Phi_{Q'}) = localize(Q')$ and $(N^\Phi_{Q''}, p^\Phi_{Q''}) = localize(Q'')$, a marking-sensitive union node $[\cup]^\Phi$, a marking-sensitive semi-join $[\ltimes]^\Phi$, as well as the request projection structure $RPS^{\infty}_l = RPS^{\infty}([\cup]^\Phi, N^\Phi_{Q'})$. $N$ contains a RETE subnet $(N_{Q'}, p_{Q'})$ computing matches for $Q'$ and a semi-join $[\ltimes]$.

By Theorem~\ref{the:upper_bound_configuration_size} and because the additional dependency on $RPS^{\infty}_l$ does not affect the worst case of a fully populated net $(N^\Phi_{Q}, p^\Phi_{Q})$, it holds that $\sum_{n^\Phi_{Q'} \in V^{N^{\Phi}_{Q'}}} \sum_{(m, \phi) \in \mathcal{C}^\Phi(n^\Phi_{Q'})} |m| \leq 7 \cdot |\mathcal{C}|_{N_{Q'}}|_e$. Also, by the semantics of the marking-sensitive semi-join and because the result set for $p^\Phi_{Q'}$ contains each match for $Q'$ at most once by Lemma 12 in \cite{preprint}, it must hold that $\sum_{(m, \phi) \in \mathcal{C}^\Phi([\ltimes]^\Phi)} |m| \leq \sum_{m \in \mathcal{C}(p_{Q'})} |m| \leq  |\mathcal{C}|_{N_{Q'}}|_e$.

Any nested query with the form $(Q', \exists(a: Q' \rightarrow Q'', \psi'))$ must have some nested query $(Q'', \exists(a: Q'' \rightarrow Q''', \psi''))$ or $(Q'', \texttt{true})$ in order for $\psi$ by the definition of nested graph conditions. In either case, $N$ has to contain a corresponding subnet $(N_{Q''}, p_{Q''})$ computing matches for $Q''$. It then holds for this subnet $N_{Q''}$ by Theorem~\ref{the:upper_bound_configuration_size} that $\sum_{n^\Phi_{Q''} \in V^{N^{\Phi}_{Q''}}} \sum_{(m, \phi) \in \mathcal{C}^\Phi(n^\Phi_{Q''})} |m| \leq 7 \cdot |\mathcal{C}|_{N_{Q''}}|_e$. It also must hold that $\sum_{(m, \phi) \in \mathcal{C}^\Phi([\cup]^\Phi)} |m| \leq \sum_{m \in \mathcal{C}(p_{Q''})} |m| \leq  |\mathcal{C}|_{N_{Q''}}|_e$ and $\sum_{n^\Phi \in V^{RPS^{\infty}_l}} \sum_{(m, \phi) \in \mathcal{C}^\Phi(n^\Phi)} |m| \leq 2 \cdot |\mathcal{C}|_{N_{Q''}}|_e$.

Since $\psi$ is a tree of subconditions, it follows that each subnet $(N_{Q''}, p_{Q''})$ must be counted again like this for at most one parent query $(Q', \exists(a: Q' \rightarrow Q'', \psi'))$. For any nested query of the form $(Q', \texttt{true})$, $\mathcal{C}^\Phi$ stores matches with a combined size of at most $10 \cdot |\mathcal{C}|_{N_{Q'}}|_e$. For any nested query of the form, $(Q', \exists(a: Q' \rightarrow Q'', \psi'))$, $\mathcal{C}^\Phi$ stores matches with a combined size of at most $(8 + 10) \cdot |\mathcal{C}|_{N_{Q'}}|_e = 18 \cdot |\mathcal{C}|_{N_{Q'}}|_e$.

In the worst case, it must thus hold that $\sum_{n^{sat} \in V^{N^{sat}}} \sum_{(m, \phi) \in \mathcal{C}^\Phi(n^{sat})} |m| \leq 18 \cdot |\mathcal{C}|_e$.
\end{proof}

\begin{cor}[RETE net localization via $localize^{sat}$ introduces only a constant factor overhead on memory consumption] \label{cor:memory_consumption_satisfaction_rete_appendix} 
Let $H$ be an edge-dominated graph, $H_p \subseteq H$, $(N, p)$ a RETE net created via the procedure described in \cite{barkowsky2023host} for the extended graph query $(Q, \psi)$, $\mathcal{C}$ a consistent configuration for $(N, p)$ for host graph $H$, and $\mathcal{C}^\Phi$ a consistent configuration for the marking-sensitive RETE net $(N^{sat}, p^{sat}) = localize^{sat}(Q, \psi)$ for host graph $H$ and relevant subgraph $H_p$ corresponding to $(N, p)$. Assuming that storing a match $m$ requires an amount of memory in $O(|m|)$ and storing an element from $\overline{\mathbb{N}}$ requires an amount of memory in $O(1)$, storing $\mathcal{C}^\Phi$ requires memory in $O(|\mathcal{C}|_e)$.
\end{cor}

\begin{proof}
Follows directly from Theorem~\ref{the:upper_bound_configuration_size_satisfaction_appendix} and the assumptions.
\end{proof}

\begin{thm}[Execution time overhead introduced by RETE localization via $localize^{sat}$ depends on query graph size compared to average match size] \label{the:execution_time_satisfaction_rete_appendix}
Let $H$ be an edge-dominated graph, $H_p \subseteq H$, $(N, p)$ a RETE net created via the procedure described in \cite{barkowsky2023host} for the extended graph query $(Q, \psi)$, $\mathcal{C}$ a consistent configuration for $(N, p)$ for host graph $H$, and $\mathcal{C}^\Phi_0$ the empty configuration for the marking-sensitive RETE net $(N^{sat}, p^{sat}) = localize^{sat}(Q, \psi)$ corresponding to $(N, p)$. Executing $(N^\Phi, p^\Phi)$ via $execute(order^\Psi(N^\Phi), N^\Phi, H, H_p, \mathcal{C}^\Phi_0)$ then takes $O(T \cdot |Q_{max}|)$ steps, with $T = \sum_{n \in V^N} |\mathcal{C}(n)|$ and $|Q_{max}|$ the size of the largest graph among $Q$ and graphs in $\psi$.
\end{thm}

\begin{proof}
By the construction of $order^{sat}(N^{sat})$, each node in $N^{sat}$ is either executed only once and none of its dependencies are executed after it or is part of some subnet $(N^\Phi_{Q'}, p^\Phi_{Q'}) = localize(N^{Q'}, p^{Q'})$ for some subnet $(N^{Q'}, p^{Q'})$ of $(N, p)$ and only executed as part of a single execution of $order(N^\Phi_{Q'})$.

By Theorem~\ref{the:complexity_time_localized}, the execution of $order(N^\Phi_{Q'})$ for $(N^\Phi_{Q'}, p^\Phi_{Q'})$ in isolation is in $O(T_{Q'} \cdot |Q'|)$, with $T_{Q'} = \sum_{n \in V^{N^{Q'}}} |\mathcal{C}(n)|$. Each subnet $(N^\Phi_{Q'}, p^\Phi_{Q'})$ has at most one union node in one of its local navigation structures that has the marking assignment node of a request projection structure inserted by $localize^{sat}$ as an additional dependency. By Lemma17 in \cite{preprint}, the time for executing all nodes in $N^{sat}$ that are part of such a subnet $(N^\Phi_{Q'}, p^\Phi_{Q'})$ is in $O((T^\Phi_Q + T^\Phi_{RPS}) \cdot Q_{max})$, with $T^\Phi_Q = \sum_{n \in N^\Phi_Q} |\mathcal{C}(n)|$ and $T^\Phi_{RPS} = \sum_{n \in N^\Phi_{RPS}} |\mathcal{C}(n)|$, where $N^\Phi_Q$ denotes the set of all RETE nodes in such subnets $(N^\Phi_{Q'}, p^\Phi_{Q'})$ and $N^\Phi_{RPS}$ denotes the set of all RETE nodes in projection structures inserted by $localize^\Psi$.

All remaining nodes in $(N^{sat}, p^{sat})$ are then marking-sensitive semi-joins, unions or projections, marking assignments, or marking filters that are only executed once. Moreover, each node in $(N^{sat}, p^{sat})$ can be the dependency of at most one of these semi-joins or unions and at most one of these marking filters. By Lemmata 16, 17, 18, and 19 in \cite{preprint} and Lemma~\ref{lem:execution_time_semi_join_batch}, the execution of these nodes is thus in $O((T^\Phi_{sat} + T^\Phi_{RPS} + T^\Phi_Q) \cdot Q_{max})$, with $T^\Phi_{sat} = \sum_{n \in N^\Phi_{sat}} |\mathcal{C}(n)|$, where $N^\Phi_{sat}$ denotes the set of all semi-join and anti-join nodes inserted by $localize^{sat}$.

Since $T^\Phi_{sat} + T^\Phi_{RPS} + T^\Phi_Q \leq 25 \cdot T$ by Lemma~\ref{lem:satisfaction_rete_match_number}, it then follows that executing $(N^{sat}, p^{sat})$ via $execute(order^{sat}(N^{sat}), N^{sat}, H, H_p, \mathcal{C}^\Phi_0)$ takes $O(T \cdot |Q_{max}|)$ steps.
\end{proof}

\subsection{Supplementary Lemmata}

\begin{lem}[RETE net localization via $localize$ introduces an overhead of at most factor 5 on effective configuration size for the production node and at most factor 7 for all other nodes] \label{lem:upper_bound_configuration_size} 
Let $H$ be an edge-dominated graph, $H_p \subseteq H$, $(N, p)$ a well-formed RETE net with $Q$ the associated query graph of $p$, $\mathcal{C}$ a consistent configuration for $(N, p)$ for host graph $H$, and $\mathcal{C}^\Phi$ a consistent configuration for the localized RETE net $(N^\Phi, p^\Phi) = localize(N, p)$ for host graph $H$ and relevant subgraph $H_p$. It then holds that $\sum_{n^\Phi \in V^{N^\Phi}} \sum_{(m, \phi) \in \mathcal{C}^\Phi(n^\Phi)} |m| \leq 7 \cdot \sum_{n \in V^N \setminus \{p\}} \sum_{m \in \mathcal{C}(n)} |m| + 5 \cdot \sum_{m \in \mathcal{C}(p)} |m|$.
\end{lem}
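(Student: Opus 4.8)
The plan is to prove a slightly strengthened form of the statement by structural induction on the well-formed net $(N,p)$, estimating the result set of every node of $N^\Phi$ separately and then regrouping so that the (constantly many) localized nodes associated with each standard node are charged against that node. To make the induction survive the fact that the request projection structures feed extra single-vertex matches into the union nodes (the extension points $\chi$) of the subnets, I would phrase the claim for an arbitrary consistent configuration of any \emph{modular extension} of $localize(N,p)$: since every edge inserted by a surrounding construction targets only an extension point and never joins two nodes already inside a subnet, the full localized net is a modular extension of each of its subnets, and the induction hypothesis then applies verbatim to the subnet parts of the global configuration. Two configuration-independent facts drive every per-node estimate. First, each localized edge-input, navigation, or join node computes a \emph{subset} of the matches computed by its standard counterpart, so its size contribution is at most $\sum_{m\in\mathcal{C}(n)}|m|$, and by the uniqueness-of-marking property these sets contain no duplicate matches. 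Second, for the marking-sensitive vertex and union nodes, edge-domination (Definition~\ref{def:edge_dominated_graphs}) bounds the number of vertex matches of a given type by the number of edges of any incident edge type, independently of which request inputs a union node receives.

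For the base case $p=[v\rightarrow w]$, the net $N^\Phi=LNS(p)$ has seven nodes. Writing $E$ for the number of edges of $type(e)$ in $H$, the three edge-carrying nodes (forward, backward, and the top union) each hold at most $E$ matches of size $3$, and the four vertex-carrying nodes each hold at most $E$ matches of size $1$ by edge-domination, for a total of at most $9E+4E=13E$. Since $\sum_{m\in\mathcal{C}(p)}|m|=3E$ and $V^N\setminus\{p\}=\emptyset$, this lies below $5\cdot 3E$, establishing the bound with the factor $5$ on the production node.

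For the inductive case, where $p$ is a join with children $p_l,p_r$ and subtrees $N_l,N_r$, I would partition $V^{N^\Phi}$ into the two subnets, the join $[\bowtie]^\Phi$, and the two structures $RPS_l,RPS_r$. The induction hypothesis bounds the subnet contributions by $7\sum_{n\in V^{N_l}\setminus\{p_l\}}\sum_{m\in\mathcal{C}(n)}|m|+5\sum_{m\in\mathcal{C}(p_l)}|m|$ and symmetrically for the right. The new root $[\bowtie]^\Phi$ contributes at most $\sum_{m\in\mathcal{C}(p)}|m|$. The decisive estimate is for $RPS_l$: its filter node holds full $Q_l$-matches, contributing at most $\sum_{m\in\mathcal{C}(p_l)}|m|=|\mathcal{C}(p_l)|\cdot|Q_l|$, while its projection and assignment nodes each hold at most $d_l\le|\mathcal{C}(p_l)|$ single-vertex matches. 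Using that every query subgraph in a well-formed net is weakly connected with at least one edge, hence $|Q_l|\ge 3$, the projection-plus-assignment cost $2d_l$ is at most $\tfrac23\sum_{m\in\mathcal{C}(p_l)}|m|$, so $RPS_l\le\tfrac53\sum_{m\in\mathcal{C}(p_l)}|m|<2\sum_{m\in\mathcal{C}(p_l)}|m|$, and likewise for $RPS_r$. Collecting terms, note that $V^N\setminus\{p\}=V^{N_l}\cup V^{N_r}$ now contains $p_l$ and $p_r$; promoting them from the factor-$5$ to the factor-$7$ column frees a slack of $2\sum_{m\in\mathcal{C}(p_l)}|m|+2\sum_{m\in\mathcal{C}(p_r)}|m|$, which absorbs the two request projection structures, while the factor-$5$ budget on the new production node $p$ comfortably covers the join's single $\sum_{m\in\mathcal{C}(p)}|m|$. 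This yields exactly $\le 7\sum_{n\in V^N\setminus\{p\}}\sum_{m\in\mathcal{C}(n)}|m|+5\sum_{m\in\mathcal{C}(p)}|m|$.

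The main obstacle I anticipate is twofold. The sharper difficulty is the bookkeeping that keeps each request projection structure within the factor-$2$ slack created by promoting a child root: a naive count of filter, projection, and assignment would give factor $3$ and break the induction, so the argument genuinely depends on the observation that the projected single-vertex matches are at least three times cheaper than the full matches they are derived from (via $|Q_l|\ge 3$) together with edge-domination for the vertex and navigation nodes. The secondary, more technical hurdle is justifying configuration robustness through the modular-extension formulation, so that the subnet bounds remain valid once the subnets receive their additional request inputs in the assembled net; this is the reason the induction must be carried out for modular extensions rather than for the isolated localized subnets.
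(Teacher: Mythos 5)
Your proof is correct, and its per-structure estimates are exactly those in the paper's own proof: the local navigation structure is bounded by factor $5$ via edge-domination plus the fact that every marking-sensitive node's stripped result set is a subset of $\allmatches{Q'}{H}$ for its query subgraph $Q'$ with at most one marking per match; the marking-sensitive join is bounded by factor $1$ by the same subset-and-uniqueness argument; each request projection structure is bounded within factor $2$ of its feeding dependency (your $\tfrac{5}{3}$ via $|Q_l|\geq 3$ is marginally tighter than the paper's $1+\tfrac12+\tfrac12$ via $|Q_v|\leq |Q_l|/2$); and the final accounting is the same $5+2=7$ for non-production nodes versus $5$ for the production node. The genuine difference is organizational. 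The paper proves the lemma by a single direct count over all nodes of $N^\Phi$, grouping them by their associated standard node and invoking the tree structure once: each LNS top union and each marking-sensitive join is the target of at most one request projection structure, and the counterpart of $p$ of none. It never needs your modular-extension strengthening, because every per-node bound is taken against the set of all matches into the full graph $H$, which is independent of whatever extra single-vertex requests arrive at extension points — so the bounds hold verbatim for the subnets as they sit inside the assembled net. Your structural induction makes the recursive charging explicit but must then pay for it with the strengthened hypothesis over modular extensions; this is a legitimate move (the paper introduces modular extensions for precisely this kind of robustness in other appendix lemmata), and your identification of request-independence as the crux is the right one, but the paper's direct count shows the extra layer can be dispensed with entirely for this particular statement.
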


\begin{proof}
For each edge input node $[v \rightarrow w]$ in $N$ with associated query subgraph $Q_e$, $N^\Phi$ contains the seven nodes of the corresponding local navigation structure $LNS([v \rightarrow w])$: $[v]^\Phi$, $[w]^\Phi$, $[\cup]^\Phi_{v}$, $[\cup]^\Phi_{w}$, $[v \rightarrow_n w]^\Phi$, $[w \leftarrow_n v]^\Phi$, and $[\cup]^\Phi$. By the semantics of $[v \rightarrow w]$, it must hold that $\mathcal{C}([v \rightarrow w]) = \allmatches{Q_e}{H}$.

By the semantics of the forward and backward local navigation nodes, we know that $\resultsstripped{[v \rightarrow_n w]^\Phi}{\mathcal{C}^\Phi} \subseteq \allmatches{Q_e}{H}$ and $\resultsstripped{[w \leftarrow_n v]^\Phi}{\mathcal{C}^\Phi} \subseteq \allmatches{Q_e}{H}$ and thus, by the semantics of the marking sensitive union node, $\resultsstripped{[\cup]^\Phi}{\mathcal{C}^\Phi} \subseteq \allmatches{Q_e}{H}$.

Furthermore, by the semantics of the marking sensitive input nodes, it must hold that $\resultsstripped{[v]^\Phi}{\mathcal{C}^\Phi} \subseteq \allmatches{Q_{v}}{H}$ and $\resultsstripped{[w]^\Phi}{\mathcal{C}^\Phi} \subseteq \allmatches{Q_{w}}{H}$, where $Q_{v}$ and $Q_{w}$ are the query subgraphs associated with $v$ respectively $w$. By the semantics of the marking sensitive union nodes and the assumption regarding the common query subgraph associated with their dependencies, it must also hold that $\resultsstripped{[\cup]_{v}^\Phi}{\mathcal{C}^\Phi} \subseteq \allmatches{Q_{v}}{H}$ and $\resultsstripped{[\cup]_{w}^\Phi}{\mathcal{C}^\Phi} \subseteq \allmatches{Q_{w}}{H}$.

By the semantics of the involved marking-sensitive RETE nodes, for each node in $LNS([n_1 \rightarrow n_2])$, each match can only be associated with at most one marking in the node's current result set in $\mathcal{C}^\Phi$. Furthermore, by assumption, it holds that $|\allmatches{Q_v}{H}| \leq |\allmatches{Q_e}{H}|$ and $|\allmatches{Q_w}{H}| \leq |\allmatches{Q_e}{H}|$. Lastly, we know that $|Q_v| = |Q_w| \leq \frac{|Q_e|}{2}$.

It thus holds that $\sum_{n^\Phi \in V^{LNS([v \rightarrow w])}} \sum_{(m, \phi) \in \mathcal{C}(n^\Phi)} |m| \leq 5 \cdot \sum_{m \in \mathcal{C}([v \rightarrow w])} |m|$.

For each join node $[\bowtie]$ in $N$ with associated query subgraph $Q_{\bowtie}$ and dependencies $n_l$ and $n_r$, $N^\Phi$ contains the corresponding marking-sensitive join node $[\bowtie]^\Phi$ with dependencies $n^\Phi_l$ and $n^\Phi_r$ corresponding to $n_l$ respectively $n_r$ with associated query subgraphs $Q_l$ respectively $Q_r$, as well as the six nodes from the related request projection structures: $[\phi > h]^\Phi_l$, $[\pi_{Q_v}]^\Phi_l$, $[\phi := h]^\Phi_l$, $[\phi > h]^\Phi_r$, $[\pi_{Q_v}]^\Phi_r$, and $[\phi := h]^\Phi_r$.

By the semantics of join and marking sensitive join, we know that $\mathcal{C}(n_l) \subseteq \resultsstripped{n^\Phi_l}{\mathcal{C}^\Phi} \wedge \mathcal{C}(n_r) \subseteq \resultsstripped{n^\Phi_r}{\mathcal{C}^\Phi} \Rightarrow \mathcal{C}([\bowtie]) \subseteq \resultsstripped{[\bowtie]^\Phi}{\mathcal{C}^\Phi}$. For every edge input node $[v \rightarrow w]$ in $N$, we know that $\mathcal{C}([v \rightarrow w]) \subseteq \resultsstripped{[\cup]^\Phi}{\mathcal{C}^\Phi}$, where $[\cup]^\Phi$ is the root node of $LNS([v \rightarrow w])$. It is easy to see that thereby, it must hold that $\mathcal{C}([\bowtie]) \subseteq \resultsstripped{[\bowtie]^\Phi}{\mathcal{C}^\Phi}$. By Lemma 12 in \cite{preprint}, it follows that $\sum_{(m, \phi) \in \mathcal{C}([\bowtie]^\Phi)} |m| \leq \sum_{m \in \mathcal{C}([\bowtie])} |m|$.

$[\phi > h]^\Phi_l$ is associated with the query subgraph $Q_l$ and $[\phi > h]^\Phi_r$ is associated with query subgraph $Q_r$. $[\pi_{Q_v}]^\Phi_l$, $[\phi := h]^\Phi_l$, $[\pi_{Q_v}]^\Phi_r$, and $[\phi := h]^\Phi_r$ are all associated with a query subgraph $Q_v$ that contains only a single vertex, whereas both $Q_l$ and $Q_r$ contain at least one edge. Consequently, $|Q_v| \leq \frac{|Q_l|}{2}$ and $|Q_v| \leq \frac{|Q_r|}{2}$. Hence, it must hold that $\sum_{n^\Phi \in RPS_l} \sum_{(m, \phi) \in \mathcal{C}(n^\Phi)} |m| \leq 2 \cdot \sum_{m \in \mathcal{C}(n_l)} |m|$ and $\sum_{n^\Phi \in RPS_r} \sum_{(m, \phi) \in \mathcal{C}(n^\Phi)} |m| \leq 2 \cdot \sum_{m \in \mathcal{C}(n_r)} |m|$.

Due to the tree-like structure of $(N, p)$, the marking-sensitive union node at the top of each local navigation structure and each marking-sensitive join can be connected to at most one request projection structure. Moreover, the union node at the top of the local navigation structure or the marking-sensitive join corresponding to $p$ is not connected to any request projection structure. It thus follows that $\sum_{n^\Phi \in V^{N^\Phi}} \sum_{(m, \phi) \in \mathcal{C}^\Phi(n^\Phi)} |m| \leq 7 \cdot \sum_{n \in V^N \setminus \{p\}} \sum_{m \in \mathcal{C}(n)} |m| + 5 \cdot \sum_{m \in \mathcal{C}(p)} |m|$.
\end{proof}

\begin{lem}[RETE net localization via $localize$ introduces an overhead of factor 7 on the number of computed matches for the production node and factor 10 for all other nodes] \label{lem:localized_rete_match_number} 
Let $H$ be an edge-dominated graph, $H_p \subseteq H$, $(N, p)$ a well-formed RETE net with $Q$ the associated query graph of $p$, $\mathcal{C}$ a consistent configuration for $(N, p)$ for host graph $H$, and $\mathcal{C}^\Phi$ a consistent configuration for the localized RETE net $(N^\Phi, p^\Phi) = localize(N, p)$ for host graph $H$ and relevant subgraph $H_p$. It then holds that $\sum_{n^\Phi \in V^{N^\Phi}} |\mathcal{C}^\Phi(n^\Phi)| \leq 10 \sum_{n \in V^N \setminus \{p\}} |\mathcal{C}(n)| + 7 \times |\mathcal{C}(p)|$.
\end{lem}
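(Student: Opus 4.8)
The plan is to follow the structure of the proof of Lemma~\ref{lem:upper_bound_configuration_size}, but to track the \emph{number} of matches stored at each node rather than their accumulated effective size. The key preliminary observation is that, by the uniqueness of markings in consistent configurations (Lemma 12 in \cite{preprint}), for every node $n^\Phi \in V^{N^\Phi}$ the cardinality $|\mathcal{C}^\Phi(n^\Phi)|$ equals the number of distinct matches in its stripped result set $\resultsstripped{n^\Phi}{\mathcal{C}^\Phi}$. Hence it suffices to bound the number of distinct matches at each localized node in terms of the match counts of the corresponding nodes of $(N, p)$, and then to add these contributions up using the tree structure of $(N, p)$.

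First I would handle the local navigation structures. For each edge input node $[v \rightarrow w]$ in $N$ with edge query subgraph $Q_e$, the seven nodes of $LNS([v \rightarrow w])$ split into three edge-level nodes ($[v \rightarrow_n w]^\Phi$, $[w \leftarrow_n v]^\Phi$, and the top $[\cup]^\Phi$), whose result sets are subsets of $\allmatches{Q_e}{H}$, and four vertex-level nodes ($[v]^\Phi$, $[w]^\Phi$, $[\cup]^\Phi_v$, $[\cup]^\Phi_w$), whose result sets are subsets of $\allmatches{Q_v}{H}$ or $\allmatches{Q_w}{H}$ regardless of any additional request-projection inputs feeding the union nodes. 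Because $H$ is edge-dominated (Definition~\ref{def:edge_dominated_graphs}), we have $|\allmatches{Q_v}{H}| \leq |\allmatches{Q_e}{H}|$ and $|\allmatches{Q_w}{H}| \leq |\allmatches{Q_e}{H}|$, so each of the seven nodes holds at most $|\mathcal{C}([v \rightarrow w])|$ matches, bounding the whole structure by $7 \cdot |\mathcal{C}([v \rightarrow w])|$. This is where the count differs from Lemma~\ref{lem:upper_bound_configuration_size}: since we no longer profit from the smaller size of single-vertex matches, the four vertex-level nodes now contribute with full weight and the factor rises from $5$ to $7$.

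Next I would handle the join nodes and the attached request projection structures. For the marking-sensitive join $[\bowtie]^\Phi$ corresponding to a join $[\bowtie]$ in $N$, its result set is a subset of $\allmatches{Q_\bowtie}{H} = \mathcal{C}([\bowtie])$, so $|\mathcal{C}^\Phi([\bowtie]^\Phi)| \leq |\mathcal{C}([\bowtie])|$; more generally $|\mathcal{C}^\Phi(n^\Phi)| \leq |\mathcal{C}(n)|$ for the localized production node $n^\Phi$ of the subnet for any node $n$. For a request projection structure reading from such an $n^\Phi$, the marking filter yields a subset of $\mathcal{C}^\Phi(n^\Phi)$, the projection can only decrease the number of distinct matches, and the marking assignment preserves it; hence each of the three nodes holds at most $|\mathcal{C}(n)|$ matches, bounding the structure by $3 \cdot |\mathcal{C}(n)|$.

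Finally I would assemble the accounting using the tree structure of $(N, p)$. Each node of $N^\Phi$ is charged to a unique node of $N$: the seven $LNS$ nodes to their edge input, the join node to its join, and—crucially—each of the two request projection structures created at a join to the \emph{dependency} node it reads from. Since $(N, p)$ is a tree, every non-root node is a dependency of exactly one parent join and thus carries exactly one request projection structure, whereas the production node $p$ is no node's dependency and carries none. Summing the contributions, a non-production edge input costs $7 + 3 = 10$ times its match count, a non-production join costs $1 + 3 = 4$, and the production node costs either the seven $LNS$ nodes (if $p$ is an edge input) or the single join node (if $p$ is a join), in both cases at most $7 \cdot |\mathcal{C}(p)|$; this yields exactly $10 \sum_{n \in V^N \setminus \{p\}} |\mathcal{C}(n)| + 7 \cdot |\mathcal{C}(p)|$. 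The main obstacle is precisely this last bookkeeping step: one must verify that every request projection structure is charged to exactly one dependency node and that the root escapes any such charge, so that the coefficients collapse cleanly to $10$ and $7$ rather than to a looser constant.
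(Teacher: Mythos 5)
Your proposal is correct and follows essentially the same route as the paper's own proof: per-structure bounds (factor $7$ for each local navigation structure via edge-domination and uniqueness of markings, factor $1$ for each marking-sensitive join, factor $3$ for each request projection structure), followed by the same tree-based accounting that charges each request projection structure to the unique dependency node it reads from, with the production node carrying none. The only cosmetic difference is that you make the cardinality-versus-size contrast with Lemma~\ref{lem:upper_bound_configuration_size} (factor $7$ instead of $5$ for the LNS) and the uniqueness-of-markings reduction explicit up front, which the paper leaves implicit in its citations of Lemma 12 of \cite{preprint}.
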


\begin{proof}
For each edge input node $[v \rightarrow w]$ in $N$ with associated query subgraph $Q_e$, $N^\Phi$ contains the seven nodes of the corresponding local navigation structure $LNS([v \rightarrow w])$: $[v]^\Phi$, $[w]^\Phi$, $[\cup]^\Phi_{v}$, $[\cup]^\Phi_{w}$, $[v \rightarrow_n w]^\Phi$, $[w \leftarrow_n v]^\Phi$, and $[\cup]^\Phi$. By the semantics of $[v \rightarrow w]$, it must hold that $\mathcal{C}([v \rightarrow w]) = \allmatches{Q_e}{H}$.

By the semantics of the forward and backward local navigation nodes, we know that $\resultsstripped{[v \rightarrow_n w]^\Phi}{\mathcal{C}^\Phi} \subseteq \allmatches{Q_e}{H}$ and $\resultsstripped{[w \leftarrow_n v]^\Phi}{\mathcal{C}^\Phi} \subseteq \allmatches{Q_e}{H}$ and thus, by the semantics of the marking sensitive union node, $\resultsstripped{[\cup]^\Phi}{\mathcal{C}^\Phi} \subseteq \allmatches{Q_e}{H}$.

Furthermore, by the semantics of the marking sensitive input nodes, it must hold that $\resultsstripped{[v]^\Phi}{\mathcal{C}^\Phi} \subseteq \allmatches{Q_{v}}{H}$ and $\resultsstripped{[w]^\Phi}{\mathcal{C}^\Phi} \subseteq \allmatches{Q_{w}}{H}$, where $Q_{v}$ and $Q_{w}$ are the query subgraphs associated with $v$ respectively $w$. By the semantics of the marking sensitive union nodes and the assumption regarding the common query subgraph associated with their dependencies, it must also hold that $\resultsstripped{[\cup]_{v}^\Phi}{\mathcal{C}^\Phi} \subseteq \allmatches{Q_{v}}{H}$ and $\resultsstripped{[\cup]_{w}^\Phi}{\mathcal{C}^\Phi} \subseteq \allmatches{Q_{w}}{H}$.

By the semantics of the involved marking-sensitive RETE nodes, for each node in $LNS([n_1 \rightarrow n_2])$, each match can only be associated with at most one marking in the node's current result set in $\mathcal{C}^\Phi$. Furthermore, by assumption, it holds that $\allmatches{Q_v}{H} \leq \allmatches{Q_e}{H}$ and $\allmatches{Q_w}{H} \leq \allmatches{Q_e}{H}$.

It thus holds that $\sum_{n^\Phi \in V^{LNS([v \rightarrow w])}} |\mathcal{C}(n^\Phi)| \leq 7 \cdot |\mathcal{C}([v \rightarrow w])|$.

For each join node $[\bowtie]$ in $N$ with associated query subgraph $Q_{\bowtie}$ and dependencies $n_l$ and $n_r$, $N^\Phi$ contains the corresponding marking-sensitive join node $[\bowtie]^\Phi$ with dependencies $n^\Phi_l$ and $n^\Phi_r$ corresponding to $n_l$ respectively $n_r$ with associated query subgraphs $Q_l$ respectively $Q_r$, as well as the six nodes from the related request projection structures: $[\phi > h]^\Phi_l$, $[\pi_{Q_v}]^\Phi_l$, $[\phi > h]^\Phi_l$, $[\phi > h]^\Phi_r$, $[\pi_{Q_v}]^\Phi_r$, and $[\phi > h]^\Phi_r$.

By the semantics of join and marking sensitive join, we know that $\mathcal{C}(n_l) \subseteq \resultsstripped{n^\Phi_l}{\mathcal{C}^\Phi} \wedge \mathcal{C}(n_r) \subseteq \resultsstripped{n^\Phi_r}{\mathcal{C}^\Phi} \Rightarrow \mathcal{C}([\bowtie]) \subseteq \resultsstripped{[\bowtie]^\Phi}{\mathcal{C}^\Phi}$. For every edge input node $[v \rightarrow w]$ in $N$, we know that $\mathcal{C}([v \rightarrow w]) \subseteq \resultsstripped{[\cup]^\Phi}{\mathcal{C}^\Phi}$, where $[\cup]^\Phi$ is the root node of $LNS([v \rightarrow w])$. It is easy to see that thereby, it must hold that $\mathcal{C}([\bowtie]) \subseteq \resultsstripped{[\bowtie]^\Phi}{\mathcal{C}^\Phi}$. By Lemma 12 in \cite{preprint}, it follows that $|\mathcal{C}([\bowtie]^\Phi)| \leq |\mathcal{C}([\bowtie])|$.

It must then also hold that $\sum_{n^\Phi \in RPS_l} |\mathcal{C}(n^\Phi)| \leq 3 \cdot |\mathcal{C}(n_l)|$ and $\sum_{n^\Phi \in RPS_r} |\mathcal{C}(n^\Phi)| \leq 3 \cdot |\mathcal{C}(n_r)|$.

Due to the tree-like structure of $(N, p)$, the marking-sensitive union node at the top of each local navigation structure and each marking-sensitive join can be connected to at most one request projection structure. Moreover, the union node at the top of the local navigation structure or the marking-sensitive join corresponding to $p$ is not connected to any request projection structure. It thus follows that $\sum_{n^\Phi \in V^{N^\Phi}} |\mathcal{C}^\Phi(n^\Phi)| \leq 10 \sum_{n \in V^N \setminus \{p\}} |\mathcal{C}(n)| + 7 \times |\mathcal{C}(p)|$.
\end{proof}

\begin{lem}[Consistent configurations for modular extensions of RETE nets localized via $localize^\Psi$ yield only matches that satisfy the NGC and all admissible matches with a partial match marked $\infty$ in an extension point] \label{lem:localized_rete_condition_completeness}
Let $H$ be a graph, $(Q, \psi)$ a graph query, $(X^\Phi, p_X^\Phi)$ a modular extension of the localized RETE net $(N^\Phi, p^\Phi) = localize^\Psi(Q, \phi)$, and $\mathcal{C}$ a configuration for $(X^\Phi, p_X^\Phi)$ that is consistent for all nodes in $V^{N^\Phi}$. Furthermore, let $x \in \chi(N^\Phi)$, $Q_v = (\{v\}, \emptyset, \emptyset, \emptyset)$ the query subgraph associated with $x$, and $(m', \infty) \in \mathcal{C}(x)$ for some match $m' : Q_v \rightarrow H$. It then holds that $\forall (m, \phi) \in \mathcal{C}(p^\Phi) : m \models \psi \wedge \forall m \in \allmatches{Q}{H}: m(v) = m'(v) \wedge m \models \psi \Rightarrow (m, \infty) \in \mathcal{C}(p^\Phi)$.
\end{lem}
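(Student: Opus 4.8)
The plan is to prove the two halves of the statement---soundness ($\forall (m, \phi) \in \mathcal{C}(p^\Phi): m \models \psi$) and the conditional completeness ($m(v) = m'(v)$ and $m \models \psi$ imply $(m, \infty) \in \mathcal{C}(p^\Phi)$)---\emph{simultaneously} by structural induction over the nested graph condition $\psi$, mirroring the recursive definition of $localize^\Psi$. Throughout, I would exploit that $\chi(N^\Phi) = \chi(N^\Phi_Q)$, so the seeded extension point $x$ lives in the base subnet $(N^\Phi_Q, p^\Phi_Q) = localize(Q)$, and that for each subnet $N^\Phi_{sub} \in \{N^\Phi_{(Q',\psi')}, N^\Phi_{(Q,\psi')}, N^\Phi_{(Q,\psi_1)}, N^\Phi_{(Q,\psi_2)}\}$ the ambient net $(X^\Phi, p^\Phi_X)$ is again a modular extension of $N^\Phi_{sub}$: the only edges $X^\Phi$ adds incident to $N^\Phi_{sub}$-nodes are the $RPS^{\infty}$ edges into its extension points and the outgoing semi-/anti-join edge, and neither is an internal edge between two $N^\Phi_{sub}$-nodes. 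Since $\mathcal{C}$ is consistent on all of $N^\Phi \supseteq N^\Phi_{sub}$, the induction hypothesis applies to each subnet with its extension points seeded by the attached $RPS^{\infty}$.

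For the base case $\psi = \texttt{true}$ we have $(N^\Phi, p^\Phi) = localize(Q)$; soundness is vacuous because every morphism satisfies $\texttt{true}$, and completeness is exactly the seeded-completeness property of $localize$ (Lemma 1 in \cite{preprint}, the generalization of Theorem~\ref{the:completeness_consistent_configuration} to an arbitrary $\infty$-marked seed at an extension point): feeding $(m', \infty)$ into $x$ forces every $m \in \allmatches{Q}{H}$ with $m(v) = m'(v)$ into $\mathcal{C}(p^\Phi)$ marked $\infty$. For the inductive step, soundness in each case follows from the relational semantics of the top node and the hypothesis on the subcondition: for $\psi = \exists(a, \psi')$ a tuple $(m, \phi) \in \mathcal{C}([\ltimes]^\Phi)$ witnesses some $(m_r, \phi_r) \in \mathcal{C}(p^\Phi_{(Q',\psi')})$ with $m|_{Q_p} = m_r \circ a$ (the total case $m = m_r \circ a$ being identical), so the hypothesis gives $m_r \models \psi'$ and hence $m \models \exists(a, \psi')$; the conjunction case chains two semi-joins along the identity overlap $Q_\cap = Q$, so each left witness equals $m$ and each hypothesis contributes $m \models \psi_i$; the negation case is the subtle one discussed below.

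For completeness in the inductive step I would run the canonical ``seed, propagate, recombine'' chain. The seeded base subnet yields $(m, \infty) \in \mathcal{C}(p^\Phi_Q)$ by the base-case argument. The request projection structure $RPS^{\infty}$ then filters ($\infty$ passes $[\phi > h]$), projects $m$ onto the overlap vertex, and injects the resulting single-vertex match, marked $\infty$, into an extension point of the subnet. Since $m \models \psi$ supplies a witness $m_r \models \psi'$ (resp. $m \models \psi_1$ and $m \models \psi_2$) compatible with this injected seed, the induction hypothesis places $(m_r, \infty)$ (resp. $(m, \infty)$) into the subnet's production node; the top semi-join recombines it with $(m, \infty)$ from the left, taking $\max(\infty, \infty) = \infty$, giving $(m, \infty) \in \mathcal{C}(p^\Phi)$. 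For $\psi = \psi_1 \wedge \psi_2$ this is applied twice in sequence, using that $[\ltimes]^\Phi_1$ already carries $(m, \infty)$ before feeding $RPS^{\infty}_2$.

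The main obstacle is the marking bookkeeping behind the negation (anti-join) case and, more generally, the claim that soundness covers \emph{all} tuples in $\mathcal{C}(p^\Phi)$ regardless of marking. An anti-join only drops left matches possessing a right witness, so I must rule out a match $m \in \mathcal{C}(p^\Phi_Q)$ with $m \models \psi'$ slipping through because it was never routed into the condition subnet. The resolution rests on the precision property (Theorem~\ref{the:precision_consistent_configuration}) together with the fact that every finite marking produced inside $localize(Q)$ is bounded by the height of $Q$'s join tree, which is strictly below the filter threshold $h$ of $RPS^{\infty}$: hence the matches surviving at $p^\Phi_Q$ that could ever be kept are exactly the $\infty$-marked ones, all of which pass $[\phi > h]$ and reach the subnet, where the hypothesis (completeness) guarantees that any $m$ with $m \models \psi'$ is detected and removed. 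Carefully justifying that no spurious finite-marked \emph{complete} $Q$-match reaches $p^\Phi_Q$---i.e. that complementary matches only ever combine upward with an $\infty$-marked partner, even under external $RPS^{\infty}$ seeds---is where the delicate argument lies, and I would lean on the seeded generalizations of the invariants already used for Theorems~\ref{the:completeness_consistent_configuration} and~\ref{the:precision_consistent_configuration}.
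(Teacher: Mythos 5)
Your proof skeleton coincides with the paper's: structural induction over $\psi$, the base case discharged by the seeded completeness property of $localize$ (Lemma 1 of \cite{preprint}), the induction hypothesis applied to each condition subnet via exactly the modular-extension observation you make, completeness via the seed--propagate--recombine chain through the $RPS^{\infty}$ structures, and soundness read off the semantics of the top semi-join/anti-join. (One harmless slip: a marking-sensitive semi-join preserves the marking $\phi_l$ of its left operand rather than taking a maximum, but since the left tuple is marked $\infty$ the conclusion $(m, \infty) \in \mathcal{C}(p^\Phi)$ is unaffected.)

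The genuine problem is the invariant you lean on to close the anti-join case. You propose to show that no finite-marked \emph{complete} $Q$-match reaches $p^\Phi_Q$, i.e.\ that ``complementary matches only ever combine upward with an $\infty$-marked partner,'' and consequently that only $\infty$-marked tuples can ever be kept by $[\rhd]^\Phi$. Both claims are false. A marking-sensitive join admits \emph{every} compatible pair of dependency matches and merely assigns the maximum marking, so a left complement (marked $h$, extracted on request of an $\infty$-marked right match $r'$) can combine with a right complement (marked $h$, extracted on request of an $\infty$-marked left match $l'$). Concretely, for $Q = u \rightarrow w \rightarrow z$, host edges $u_1 \rightarrow w_1$, $u_2 \rightarrow w_1$, $w_1 \rightarrow z_1$, $w_1 \rightarrow z_2$, and $H_p = \{u_1, z_2\}$, the tuples $(u_2 w_1, 1)$ and $(w_1 z_1, 1)$ are both produced on request and join to the complete match $(u_2 w_1 z_1, 1)$ at $p^\Phi_Q$. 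This is no anomaly: by Theorem~\ref{the:precision_consistent_configuration}, any complete match not touching $H_p$ that appears at $p^\Phi_Q$ \emph{must} carry a finite marking, and such matches do appear. Such a tuple is retained by the anti-join whenever it lacks a right witness, and since it is blocked by the filter $[\phi > h]$ it never generates its own seed for the condition subnet --- which is precisely the scenario you identified as critical and then argued away with a false invariant. Note that the paper's own proof is also silent on this point: it establishes the two needed facts only for $\infty$-marked tuples at $p^\Phi_Q$.

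What actually closes the case is a different argument. A finite-marked complete match at $p^\Phi_Q$ has the form $l \cup r$ with each complement requested on behalf of an $\infty$-marked partner, so $l \cup r'$ (respectively $l' \cup r$) is an $\infty$-marked match at $p^\Phi_Q$ agreeing with $m$ on all of $V^{Q_l}$ (respectively $V^{Q_r}$). Hence, whichever projection vertex $RPS^{\infty}_l$ uses, a seed covering $m$'s value at that vertex is generated anyway; and because the completeness clause of the induction hypothesis quantifies over \emph{all} matches that agree with a seed (not merely the match that produced it), $m \models \psi'$ forces $m$ into the subnet's production node, where the anti-join then removes it. Making this rigorous --- in particular for nested join trees and multi-vertex overlaps, and observing that it presupposes the modular extension injects only $\infty$-marked tuples into extension points (true of all constructions in the paper, but not of arbitrary modular extensions, for which the statement can in fact be violated) --- is the content missing from your proposal, and arguably from the paper's proof as well.
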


\begin{proof}
We prove the lemma via structural induction over the nested graph condition $\psi$.

In the base case, that is, for a nested condition of the form $\psi = true$, satisfaction of the Lemma follows directly from Lemma 1 in the preprint version of our conference paper \cite{preprint}.

We now proceed to showing that, under the induction hypothesis that the lemma holds for a nested graph condition of nesting depth $d$, the lemma holds for any nested graph condition of nesting depth $d+1$.

For a nested condition of the form $\psi = \exists (a: Q \rightarrow Q', \psi')$, $N^\Phi$ consists of the localized RETE net for $Q$, $(N^\Phi_Q, p^\Phi_Q) = localize(Q)$, the localized RETE net $(N^\Phi_{(Q', \psi')}, p^\Phi_{(Q', \psi')}) = localize^\Psi(Q', \psi')$, a request projection structure $RPS^{\infty}_l = RPS^{\infty}(p^\Phi_Q, N^\Phi_{(Q', \psi')})$, and a marking-sensitive semi-join node $p^\Phi = [\ltimes]^\Phi$ with left dependency $p^\Phi_Q$ and right dependency $p^\Phi_{(Q', \psi')}$.

In the case where $a$ is a partial graph morphism from some subgraph $Q_p \subseteq Q$ into $Q'$, the argumentation works analogously.

By Lemma 1 in \cite{preprint}, it immediately follows that $\forall m \in \allmatches{Q}{H}: m(v) = m'(v) \Rightarrow (m, \infty) \in \mathcal{C}(p^\Phi_Q)$.  Assuming that the Lemma holds for localized RETE net $(N^\Phi_{(Q', \psi')}, p^\Phi_{(Q', \psi')}) = localize^\Psi(Q', \psi')$, by the definition of $RPS^{\infty}_l$ it must also hold that $\forall (m, \infty) \in \mathcal{C}(p^\Phi_Q), m' \in \allmatches{Q'}{H}: m = m' \circ a \wedge m' \models \psi' \Rightarrow (m', \infty) \in \mathcal{C}(p^\Phi_{(Q', \psi')})$ and $\forall (m', \phi') \in \mathcal{C}(p^\Phi_{(Q', \psi')}) : m' \models \psi'$. From the semantics of the marking-sensitive semi-join then follows the satisfaction of the Lemma.

For a nested condition of the form $\psi = \neg\psi'$, $N^\Phi$ consists of the localized RETE net for the plain pattern $Q$, $(N^\Phi_Q, p^\Phi_Q) = localize(Q)$, the localized RETE net $(N^\Phi_{(Q, \psi')}, p^\Phi_{(Q, \psi')}) = localize^\Psi(Q, \psi')$, a request projection structure $RPS^{\infty}_l = RPS(p^\Phi_Q, N^\Phi_{(Q, \psi')})$, and a marking-sensitive anti-join node $p^\Phi = [\rhd]^\Phi$ with left dependency $p^\Phi_Q$ and right dependency $p^\Phi_{(Q, \psi')}$.

By Lemma 1 in \cite{preprint}, it immediately follows that $\forall m \in \allmatches{Q}{H}: m(v) = m'(v) \Rightarrow (m, \infty) \in \mathcal{C}(p^\Phi_Q)$.  Assuming that the Lemma holds for localized RETE net $(N^\Phi_{(Q, \psi')}, p^\Phi_{(Q, \psi')}) = localize^\Psi(Q, \psi')$, by the definition of $RPS^{\infty}_l$ it must also hold that $\forall (m, \infty) \in \mathcal{C}(p^\Phi_Q) : m \models \psi' \Rightarrow (m, \infty) \in \mathcal{C}(p^\Phi_{(Q, \psi')})$ and $\forall (m, \phi) \in \mathcal{C}(p^\Phi_{(Q, \psi')}) : m \models \psi'$. From the semantics of the marking-sensitive anti-join then follows the satisfaction of the Lemma.

For a nested condition of the form $\psi = \psi_1 \wedge \psi_2$, $N^\Phi$ consists of
the localized RETE net for the plain pattern $Q$, $(N^\Phi_Q, p^\Phi_Q) = localize(Q)$,
the localized RETE net $(N^\Phi_{(Q, \psi_1)}, p^\Phi_{(Q, \psi_1)}) = localize^\Psi(Q, \psi_1)$, a request projection structure $RPS^{\infty}_1 = RPS^{\infty}(p^\Phi_Q, N^\Phi_{(Q, \psi_1)})$, a marking-sensitive semi-join node $[\ltimes]^\Phi_1$ with left dependency $p^\Phi_Q$ and right dependency $p^\Phi_{(Q', \psi')}$,
the localized RETE net $(N^\Phi_{(Q, \psi_2)}, p^\Phi_{(Q, \psi_2)}) = localize^\Psi(Q, \psi_2)$, a request projection structure $RPS^{\infty}_2 = RPS^{\infty}([\ltimes]^\Phi_1, N^\Phi_{(Q, \psi_2)})$, and a marking-sensitive semi-join node $p^\Phi = [\ltimes]^\Phi_2$ with left dependency $[\ltimes]^\Phi_1$ and right dependency $p^\Phi_{(Q, \psi_2)}$.

By Lemma 1 in \cite{preprint}, it immediately follows that $\forall m \in \allmatches{Q}{H}: m(v) = m'(v) \Rightarrow (m, \infty) \in \mathcal{C}(p^\Phi_Q)$.  Assuming that the lemma holds for localized RETE net $(N^\Phi_{(Q, \psi_1)}, p^\Phi_{(Q, \psi_1)}) = localize^\Psi(Q, \psi_1)$, by the definition of $RPS^{\infty}_1$ it must also hold that $\forall (m, \infty) \in \mathcal{C}(p^\Phi_Q): m \models \psi_1 \Rightarrow (m, \infty) \in \mathcal{C}(p^\Phi_{(Q, \psi_1)})$ and $\forall (m_1, \phi_1) \in \mathcal{C}(p^\Phi_{(Q, \psi_1)}) : m_1 \models \psi_1$. Assuming the lemma also holds for $(N^\Phi_{(Q, \psi_2)}, p^\Phi_{(Q, \psi_2)}) = localize^\Psi(Q, \psi_2)$, from the semantics of the marking-sensitive semi-join and the definition of $RPS^{\infty}_2$ it then also follows that $\forall (m, \infty) \in \mathcal{C}(p^\Phi_Q): m \models \psi_1 \wedge m \models \psi_2 \Rightarrow (m, \infty) \in \mathcal{C}(p^\Phi_{(Q, \psi_2)})$ and $\forall (m_2, \phi_2) \in \mathcal{C}(p^\Phi_{(Q, \psi_2)}) : m_2 \models \psi_2$. From the semantics of the marking-sensitive semi-join then follows the satisfaction of the Lemma.
\end{proof}

\begin{lem}[Execution of modular extensions of localized RETE nets via $order^\Psi$ yields consistent configurations for the base RETE net] \label{lem:localized_rete_condition_execution}
Let $H$ be a graph, $H_p \subseteq H$, $(Q, \psi)$ a graph query, and $(X^\Phi, p_X^\Phi)$ a modular extension of the marking-sensitive RETE net $(N^\Phi, p^\Phi) = localize^\Psi(Q, \psi)$. Furthermore, let $\mathcal{C}^\Phi_0$ be an arbitrary configuration for $(X^\Phi, p_X^\Phi)$. Executing $(X^\Phi, p_X^\Phi)$ via $O = order^\Psi(N^\Phi)$ then yields a configuration $\mathcal{C}^\Phi = execute(O, N^\Phi, H, H_p, \mathcal{C}^\Phi_0)$ that is consistent for all nodes in $N^\Phi$.
\end{lem}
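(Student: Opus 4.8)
The plan is to prove the lemma by structural induction over the nested graph condition $\psi$, in lockstep with the recursive definitions of $localize^\Psi$ and $order^\Psi$. Two observations, valid regardless of the form of $\psi$, would drive the whole argument. First, the execution order $O = order^\Psi(N^\Phi)$ schedules only nodes that belong to $N^\Phi$; hence every node of $X^\Phi \setminus N^\Phi$ keeps its result set from $\mathcal{C}^\Phi_0$ throughout, so any dependency of an $N^\Phi$-node that lies outside $N^\Phi$ has a constant value. Consequently the consistency of an $N^\Phi$-node can only be destroyed by a later re-execution of one of its $N^\Phi$-internal dependencies. Second, modular extension is transitive and is preserved by the decomposition of $N^\Phi$ into the subnets produced by $localize^\Psi$, so $(X^\Phi, p_X^\Phi)$ is simultaneously a modular extension of $N^\Phi_Q = localize(Q)$ and of every nested subnet $N^\Phi_{(Q', \psi')}$ arising in the construction; this is exactly what licenses applying the induction hypothesis to those subnets inside the larger net.

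For the base case $\psi = \texttt{true}$ we have $N^\Phi = localize(Q)$ and $order^\Psi(N^\Phi) = order(N^\Phi)$, so the claim is precisely the modular-extension form of Theorem~\ref{the:execution_order_consistency} for localized plain-query nets, which I would invoke from \cite{preprint}. Its essential content is that the marking filter nodes inserted by $localize$ sever the cyclic requirement relations at the level of intermediate results, so that $order$ behaves like a reverse-topological schedule even though the underlying RETE graph contains cycles.

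For the inductive step I would treat the three forms of $\psi$ uniformly by viewing $N^\Phi$ as a small arrangement of \emph{macro-nodes}: $N^\Phi_Q$, the request projection structure(s) $RPS^\infty_\bullet$, the nested subnet(s), and the top semi-/anti-join(s). The crucial structural fact is that the only cycles in $N^\Phi$ are confined inside the plain-query subnets: the request projection structures inserted by $localize^\Psi$ feed strictly from one subnet's production node into an extension point of another, and the joins read only from below, so at the granularity of macro-nodes the dependency graph is a DAG. I would then read off from the definition of $order^\Psi$ that the macro-nodes are scheduled consistently with this DAG — for the existential and negation cases $N^\Phi_Q$, then $RPS^\infty_l$, then the nested subnet, then the join; for the conjunction case the chain $N^\Phi_Q, RPS^\infty_1, N^\Phi_{(Q,\psi_1)}, [\ltimes]^\Phi_1, RPS^\infty_2, N^\Phi_{(Q,\psi_2)}, [\ltimes]^\Phi_2$. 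Each plain subnet is made consistent by the base-case result and each nested subnet by the induction hypothesis (both applied as modular extensions), each $RPS^\infty_\bullet$ is acyclic and made consistent by its reverse-topological schedule, and the single final execution of each join makes it consistent. Because no macro-node is scheduled before one it depends on, and because external dependencies are constant, no later segment re-executes a dependency of an already-consistent node, so consistency once established for a subnet or structure is preserved to the end.

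The main obstacle I anticipate is the bookkeeping for the conjunction case, where I must verify precisely that $RPS^\infty_2$ depends only on $[\ltimes]^\Phi_1$ (already consistent at that point) and feeds only into $N^\Phi_{(Q,\psi_2)}$, and that none of $N^\Phi_Q$, $N^\Phi_{(Q,\psi_1)}$, or $[\ltimes]^\Phi_1$ acquires a dependency on any node executed afterward — this is what guarantees their consistency survives the remaining segments. Establishing this amounts to checking that the extension points which $localize^\Psi$ augments always lie in the subnet downstream of the inserting request projection structure, never in an already-processed one; this follows directly from the construction but must be stated carefully to close the induction.
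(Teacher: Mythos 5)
Your proposal is correct and follows essentially the same route as the paper's proof: structural induction over $\psi$, with the base case discharged by the modular-extension form of the plain-query execution result (Lemma 13 in \cite{preprint}), and the inductive step arguing that the segments of $order^\Psi$ ($N^\Phi_Q$, the request projection structures, the nested subnets, the final semi-/anti-joins) are scheduled so that no already-consistent part depends on a later-executed node, with the induction hypothesis applied to the nested subnets as modular extensions. Your explicit remarks on constancy of $X^\Phi \setminus N^\Phi$ and on transitivity of modular extension just make precise what the paper uses implicitly.
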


\begin{proof}
We show the correctness of the Lemma via structural induction over the nested graph condition $\psi$.

In the base case, where $\psi = true$, $O$ is given by $O = localize(Q)$. By Lemma 13 in \cite{preprint}, it follows directly that the lemma holds.

We now proceed to showing that, under the induction hypothesis that the lemma holds for a nested graph condition of nesting depth $d$, the lemma holds for any nested graph condition of nesting depth $d+1$.

For a nested condition of the form $\psi = \exists (a: Q \rightarrow Q', \psi')$, $O$ is given by $O = order(N^\Phi_Q) \circ toposort^{-1}(RPS^{\infty}_l) \circ order^\Psi(N^\Phi_{(Q', \psi')}) \circ [\ltimes]^\Phi$. By Lemma 13 in \cite{preprint}, it follows that $\mathcal{C}^\Phi_1 = execute(order(N^\Phi_Q), X^\Phi, H, H_p, \mathcal{C}^\Phi_0)$ is consistent for all nodes in $N^\Phi_Q$. Since no node in $N^\Phi_Q$ depends on a node in $RPS^{\infty}_l$, $\mathcal{C}^\Phi_2 = execute(toposort^{-1}(RPS^{\infty}_l), X^\Phi, H, H_p, \mathcal{C}^\Phi_1)$ must then be consistent for all nodes in $N^\Phi_Q \cup RPS^{\infty}_l$. By the same argument and the induction hypothesis, $\mathcal{C}^\Phi_3 = execute(order^\Psi(N^\Phi_{(Q', \psi')}), X^\Phi, H, H_p, \mathcal{C}^\Phi_2)$ is consistent for all nodes in $N^\Phi_Q \cup RPS^{\infty}_l \cup N^\Phi_{(Q', \psi')}$ and $\mathcal{C}^\Phi_4 = execute([\ltimes]^\Phi, X^\Phi, H, H_p, \mathcal{C}^\Phi_3) = execute(O, N^\Phi, H, H_p, \mathcal{C}^\Phi_0)$ is consistent for all nodes in $N^\Phi$.

By analogous argumentation, the lemma then also holds for a nested condition of the form $\psi = \neg\psi'$.

For a nested condition of the form $\psi = \psi_1 \wedge \psi_2$, $O$ is given by $order^\Psi(N^\Phi) = order(N^\Phi_Q) \circ toposort^{-1}(RPS^{\infty}_1) \circ order^\Psi(N^\Phi_{(Q, \psi_1)}) \circ [\ltimes]^\Phi_1 \circ toposort^{-1}(RPS^{\infty}_2) \circ order^\Psi(N^\Phi_{(Q, \psi_2)}) \circ [\ltimes]^\Phi_2$. By the same argumentation as for the case where $\psi = \exists (a: Q \rightarrow Q', \psi')$, the configuration $\mathcal{C}^\Phi_1 = execute(O_1, N^\Phi, H, H_p, \mathcal{C}^\Phi_0)$ is consistent for all nodes in $N^\Phi_Q \cup RPS^{\infty}_1 \cup N^\Phi_{(Q, \psi_1)} \cup [\ltimes]^\Phi_1$, where $O_1 = order(N^\Phi_Q) \circ toposort^{-1}(RPS^{\infty}_1) \circ order^\Psi(N^\Phi_{(Q, \psi_1)}) \circ [\ltimes]^\Phi_1$. By the induction hypothesis and the construction of $N^\Phi$, it then holds that $\mathcal{C}^\Phi_2 = execute(O_2, N^\Phi, H, H_p, \mathcal{C}^\Phi_1) = execute(O, N^\Phi, H, H_p, \mathcal{C}^\Phi_0)$ is consistent for all nodes in $N^\Phi$, where $O_2 = toposort^{-1}(RPS^{\infty}_2) \circ order^\Psi(N^\Phi_{(Q, \psi_2)}) \circ [\ltimes]^\Phi_2$.

From the correctness of the base case and the induction step follows the correctness of the lemma.
\end{proof}

\begin{lem}[Execution time of semi-join nodes is linear in the number of matches for the semi-join's left dependency] \label{lem:execution_time_semi_join_batch} 
Let $H$ be a graph, $H_p \subseteq H$, $[\ltimes]^\Phi \in V^{N^\Phi}$ a marking-sensitive semi-join node with left dependency $n^\Phi_l$ and right dependency $n^\Phi_r$, and $\mathcal{C}^\Phi_0$ a configuration such that $\mathcal{C}^\Phi_0([\ltimes]^\Phi) = \emptyset$.
Executing $[\ltimes]^\Phi$ via $\mathcal{C}^\Phi_1 = execute([\ltimes]^\Phi, N^\Phi, H, H_p, \mathcal{C}^\Phi_0)$ then takes $O(|\mathcal{C}^\Phi_0(n^\Phi_l)|)$ steps.
\end{lem}

\begin{proof}
In the case where $\mathcal{C}^\Phi_0([\ltimes]^\Phi) = \emptyset$, $[\ltimes]^\Phi$ can be executed by enumerating all elements $(m_l, \phi_l) \in \mathcal{C}^\Phi_0(n^\Phi_l)$, checking whether it holds that $\exists (m_r, \phi_r) \in \mathcal{C}^\Phi(n^\Phi_r) : m_l|_{Q_\cap} = m_r|_{Q_\cap}$, and adding them to $\mathcal{C}^\Phi_0([\ltimes]^\Phi)$ if the condition holds. Assuming efficient indexing structures that allow lookup and insertion times linear in match size, this takes $O(|\mathcal{C}^\Phi_0(n^\Phi_l)|)$ steps.
\end{proof}

\begin{lem}[Execution time of anti-join nodes is linear in the number of matches for the semi-join's left dependency] \label{lem:execution_time_anti_join_batch}
Let $H$ be a graph, $H_p \subseteq H$, $[\rhd]^\Phi \in V^{N^\Phi}$ a marking-sensitive anti-join node with left dependency $n^\Phi_l$ and right dependency $n^\Phi_r$, and $\mathcal{C}^\Phi_0$ a configuration such that $\mathcal{C}^\Phi_0([\rhd]^\Phi) = \emptyset$.
Executing $[\rhd]^\Phi$ via $\mathcal{C}^\Phi_1 = execute([\rhd]^\Phi, N^\Phi, H, H_p, \mathcal{C}^\Phi_0)$ then takes $O(|\mathcal{C}^\Phi_0(n^\Phi_l)|)$ steps.
\end{lem}

\begin{proof}
In the case where $\mathcal{C}^\Phi_0([\rhd]^\Phi) = \emptyset$, $[\rhd]^\Phi$ can be executed by enumerating all elements $(m_l, \phi_l) \in \mathcal{C}^\Phi_0(n^\Phi_l)$, checking whether it holds that $\nexists (m_r, \phi_r) \in \mathcal{C}^\Phi(n^\Phi_r) : m_l|_{Q_\cap} = m_r|_{Q_\cap}$, and adding them to $\mathcal{C}^\Phi_0([\rhd]^\Phi)$ if the condition holds. Assuming efficient indexing structures that allow lookup and insertion times linear in match size, this takes $O(|\mathcal{C}^\Phi_0(n^\Phi_l)|)$ steps.
\end{proof}

\begin{lem}[RETE net localization via $localize^\Psi$ introduces an overhead of at most factor 10 on the number of computed matches] \label{lem:localized_rete_ngc_match_number} 
Let $H$ be an edge-dominated graph, $H_p \subseteq H$, $(N, p)$ a RETE net created via the procedure described in \cite{barkowsky2023host} for the extended graph query $(Q, \psi)$, $\mathcal{C}$ a consistent configuration for $(N, p)$ for host graph $H$, and $\mathcal{C}^\Phi$ a consistent configuration for the marking-sensitive RETE net $(N^\Phi, p^\Phi) = localize^\Psi(Q, \psi)$ for host graph $H$ and relevant subgraph $H_p$ corresponding to $(N, p)$. It then holds that $\sum_{n^\Phi \in V^{N^\Phi}} |\mathcal{C}^\Phi(n^\Phi)| \leq 10 \cdot \sum_{n \in V^N} \mathcal{C}(n)|$.
\end{lem}

\begin{proof}
We show the correctness of the theorem via structural induction over the nested graph condition $\psi$.

In the base case where $\psi = true$, it holds that $localize^\Psi(Q, \psi) = localize(N, p)$. From Lemma~\ref{lem:localized_rete_match_number}, it then immediately follows that $\sum_{n^\Phi \in V^{N^\Phi}} |\mathcal{C}^\Phi(n^\Phi)| \leq 10 \cdot \sum_{n \in V^N} |\mathcal{C}(n)|$.

We now proceed to showing that, under the induction hypothesis that the lemma holds for a nested graph condition of nesting depth $d$, the lemma holds for any nested graph condition of nesting depth $d+1$.

For a nested condition of the form $\psi = \exists (a: Q \rightarrow Q', \psi')$, $N^\Phi$ consists of the localized RETE net for $Q$, $(N^\Phi_Q, p^\Phi_Q) = localize(Q)$, the localized RETE net $(N^\Phi_{(Q', \psi')}, p^\Phi_{(Q', \psi')}) = localize^\Psi(Q', \psi')$, a request projection structure $RPS^{\infty}_l = RPS^{\infty}(p^\Phi_Q, N^\Phi_{(Q', \psi')})$, and a marking-sensitive semi-join node $p^\Phi = [\ltimes]^\Phi$.

By the construction described in \cite{barkowsky2023host}, $(N, p)$ then consists of a RETE net for the plain graph query $Q$, $(N_Q, p_Q)$, the RETE net for the extended graph query $(Q', \psi')$, $(N', p')$, and a semi-join $p = [\ltimes]$ with dependencies $p_Q$ and $p'$.

By Lemma~\ref{lem:localized_rete_match_number}, we know that $\sum_{n^\Phi \in V^{N^\Phi_Q}} |\mathcal{C}^\Phi(n^\Phi)| \leq 10 \sum_{n \in V^{N_Q} \setminus \{p_Q\}} |\mathcal{C}(n)| + 7 \cdot |\mathcal{C}(p_Q)|$, since $localize(Q) = localize(N_Q, p_Q)$.

In the worst case, the extension points of $(N^\Phi_{(Q', \psi')}, p^\Phi_{(Q', \psi')})$ already contain all possible single-vertex matches into $H$ even without the additional input via $RPS^{\infty}_l$. Therefore, by the induction hypothesis, it must hold that $\sum_{n^\Phi \in V^{N^\Phi_{(Q', \psi')}}} |\mathcal{C}^\Phi(n^\Phi)| \leq 10 \cdot \sum_{n \in V^{N'}} |\mathcal{C}(n)|$.

From Lemma 12 in \cite{preprint}, it follows that $\sum_{n^\Phi_Q \in V^{RPS^{\infty}_l}} |\mathcal{C}^\Phi(n^\Phi_Q)| \leq 3 \cdot |\mathcal{C}(p_Q)|$.

Finally, from Theorem~\ref{the:localized_rete_ngcs_correctness} and Lemma 12 in \cite{preprint}, it follows that $|\mathcal{C}^\Phi(p^\Phi)| \leq |\mathcal{C}([\ltimes])|$.

Consequently, it must hold that $\sum_{n^\Phi \in V^{N^\Phi}} |\mathcal{C}^\Phi(n^\Phi)| \leq 10 \cdot \sum_{n \in V^N} |\mathcal{C}(n)|$.

For a nested condition of the form $\psi = \neg\psi'$, $N^\Phi$ consists of the localized RETE net for the plain pattern $Q$, $(N^\Phi_Q, p^\Phi_Q) = localize(Q)$, the localized RETE net $(N^\Phi_{(Q, \psi')}, p^\Phi_{(Q, \psi')}) = localize^\Psi(Q, \psi')$, a request projection structure $RPS^{\infty}_l = RPS(p^\Phi_Q, N^\Phi_{(Q, \psi')})$, and a marking-sensitive anti-join node $p^\Phi = [\rhd]^\Phi$.

By the construction described in \cite{barkowsky2023host}, $(N, p)$ then consists of a RETE net for the plain graph query $Q$, $(N_Q, p_Q)$, the RETE net for the extended graph query $(Q, \psi')$, $(N', p')$, and an anti-join $p = [\rhd]$ with dependencies $p_Q$ and $p'$.

By analogous argumentation as for the existential case, it must then also hold that $\sum_{n^\Phi \in V^{N^\Phi}} |\mathcal{C}^\Phi(n^\Phi)| \leq 10 \cdot \sum_{n \in V^N} |\mathcal{C}(n)|$.

For a nested condition of the form $\psi = \psi_1 \wedge \psi_2$, $N^\Phi$ consists of
the localized RETE net for the plain pattern $Q$, $(N^\Phi_Q, p^\Phi_Q) = localize(Q)$,
the localized RETE net $(N^\Phi_{(Q, \psi_1)}, p^\Phi_{(Q, \psi_1)}) = localize^\Psi(Q, \psi_1)$, a request projection structure $RPS^{\infty}_1 = RPS^{\infty}(p^\Phi_Q, N^\Phi_{(Q, \psi_1)})$, a marking-sensitive semi-join node $[\ltimes]^\Phi_1$,
the localized RETE net $(N^\Phi_{(Q, \psi_2)}, p^\Phi_{(Q, \psi_2)}) = localize^\Psi(Q, \psi_2)$, a request projection structure $RPS^{\infty}_2 = RPS^{\infty}([\ltimes]^\Phi_1, N^\Phi_{(Q, \psi_2)})$, and a marking-sensitive semi-join node $p^\Phi = [\ltimes]^\Phi_2$.

By the construction described in \cite{barkowsky2023host}, $(N, p)$ then consists of a RETE net for the plain graph query $Q$, $(N_Q, p_Q)$, the RETE net for the extended graph query $(Q, \psi_1)$, $(N_1, p_1)$, the RETE net for the extended graph query $(Q, \psi_2)$, $(N_2, p_2)$, a semi-join $[\ltimes]_1$ with dependencies $p_Q$ and $p_1$, and a semi-join $p = [\ltimes]_2$ with dependencies $[\ltimes]_1$ and $p_2$.

As in the existential case, it holds that $\sum_{n^\Phi \in V^{N^\Phi_Q}} |\mathcal{C}^\Phi(n^\Phi)| \leq 10 \sum_{n \in V^{N_Q} \setminus \{p_Q\}} |\mathcal{C}(n)| + 7 \cdot |\mathcal{C}(p_Q)|$. Furthermore, it must hold that $\sum_{n^\Phi \in V^{N^\Phi_{(Q, \psi_1)}}} |\mathcal{C}^\Phi(n^\Phi)| \leq 10 \cdot \sum_{n \in V^{N_1}} |\mathcal{C}(n)|$ and $\sum_{n^\Phi \in V^{N^\Phi_{(Q, \psi_1)}}} |\mathcal{C}^\Phi(n^\Phi)| \leq 10 \cdot \sum_{n \in V^{N_2}} |\mathcal{C}(n)|$ by the induction hypothesis.

From Lemma 12 in \cite{preprint}, it follows that $\sum_{n^\Phi_Q \in V^{RPS^{\infty}_1}} |\mathcal{C}^\Phi(n^\Phi_Q)| \leq 3 \cdot |\mathcal{C}(p_Q)|$.

We furthermore know by Theorem~\ref{the:localized_rete_ngcs_correctness} that $\mathcal{C}^\Phi([\ltimes]^\Phi_1)$ can only contain matches for $Q$ that satisfy $\phi_1$ and $\mathcal{C}([\ltimes]_1)$ contains all such matches. From Lemma 12 in \cite{preprint}, it thus follows that $|\mathcal{C}^\Phi([\ltimes]^\Phi_1)| \leq |\mathcal{C}([\ltimes]_1)|$. Consequently, it must hold that $\sum_{n^\Phi_Q \in V^{RPS^{\infty}_2}} |\mathcal{C}^\Phi(n^\Phi_Q)| \leq 3 \cdot |\mathcal{C}([\ltimes]_1)|$. By Theorem~\ref{the:localized_rete_ngcs_correctness} and Lemma 12 in \cite{preprint}, it also follows that $|\mathcal{C}^\Phi([\ltimes]^\Phi_2)| \leq |\mathcal{C}([\ltimes]_2)|$.

We thus know that $\sum_{n^\Phi \in V^{N^\Phi}} |\mathcal{C}^\Phi(n^\Phi)| \leq 10 \cdot \sum_{n \in V^N} |\mathcal{C}(n)|$.

From the correctness of base case and induction step then follows the correctness of the lemma.
\end{proof}

\begin{lem}[Consistent configurations for modular extensions of RETE nets localized via $localize^{sat}$ yield all subgraph satisfaction dependent matches for an extended graph query with marking $\infty$] \label{lem:completeness_satisfaction_dependency} 
Let $H$ be a graph, $H_p \subseteq H$, $(Q, \psi)$ an extended graph query, and $\mathcal{C}^\Phi$ a consistent configuration for the modular extension $(X^{sat}, p^{sat})$ of $(N^{sat}, p^{sat}) = localize^{sat}(Q, \psi)$. It then holds that $\forall m \in \allmatches{Q}{H} : m \text{ is subgraph satisfaction dependent } \Rightarrow (m, \infty) \in \mathcal{C}^\Phi(p^{sat})$.
\end{lem}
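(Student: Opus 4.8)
The plan is to prove the lemma by structural induction over the nested graph condition $\psi$, mirroring the recursive structure of $localize^{sat}$ against Definition~\ref{def:subgraph_satisfaction_dependence}. For the base case $\psi = \texttt{true}$ the implication holds vacuously, since no match is subgraph satisfaction dependent and $p^{sat} = [\emptyset]^\Phi$ has an empty result set. For the induction step I would phrase the hypothesis for \emph{arbitrary} modular extensions, precisely so that it can be reapplied to the embedded subnets $N^{sat}_{(Q', \psi')}$, $N^{sat}_{(Q, \psi_1)}$ and $N^{sat}_{(Q, \psi_2)}$; this is legitimate because the surrounding construction attaches new nodes only on top of these subnets' production nodes and never inserts edges between their interior nodes, so $X^{sat}$ is a modular extension of each.

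The negation and conjunction cases are routine. For $\psi = \neg\psi'$ the net is literally $localize^{sat}(Q, \psi')$ and subgraph satisfaction dependence for $(Q, \neg\psi')$ coincides with that for $(Q, \psi')$, so the claim transfers verbatim from the induction hypothesis. For $\psi = \psi_1 \wedge \psi_2$ a match is subgraph satisfaction dependent exactly when it is so for $(Q, \psi_1)$ or $(Q, \psi_2)$; the hypothesis then places $(m, \infty)$ into $\mathcal{C}^\Phi(p^{sat}_{(Q, \psi_1)})$ or $\mathcal{C}^\Phi(p^{sat}_{(Q, \psi_2)})$, and since the marking-sensitive union node $p^{sat} = [\cup]^\Phi$ keeps the maximal marking, $\infty$ dominates and $(m, \infty) \in \mathcal{C}^\Phi(p^{sat})$ follows.

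The existential case $\psi = \exists(a : Q \rightarrow Q', \psi')$ carries the real content. Given a subgraph satisfaction dependent $m$ with witness $m' : Q' \rightarrow H$ satisfying $m = m' \circ a$, I distinguish the two alternatives of Definition~\ref{def:subgraph_satisfaction_dependence}. If $m'(Q') \cap H_p \neq \emptyset$, then, because the plain subnet $N^\Phi_{Q'} = localize(Q')$ receives input only from its own vertex input nodes over $H_p$, Theorems~\ref{the:completeness_consistent_configuration} and~\ref{the:precision_consistent_configuration} (in their modular-extension form) give $(m', \infty) \in \mathcal{C}^\Phi(p^\Phi_{Q'})$. If instead $m'$ is subgraph satisfaction dependent for $(Q', \psi')$, the induction hypothesis gives $(m', \infty) \in \mathcal{C}^\Phi(p^{sat}_{(Q', \psi')})$. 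In either alternative the union node yields $(m', \infty) \in \mathcal{C}^\Phi([\cup]^\Phi)$. I would then trace $m'$ through the request projection structure $RPS^{\infty}_r$, which projects $m'$ onto a shared vertex, reassigns marking $\infty$ and feeds the result into an extension point of $N^\Phi_Q$; the completeness guarantee for plain localized nets under such injected partial matches (the plain-net counterpart of Lemma~\ref{lem:localized_rete_condition_completeness}) then forces $(m, \infty) \in \mathcal{C}^\Phi(p^\Phi_Q)$, using $m = m' \circ a$. Finally the marking-sensitive semi-join $[\ltimes]^\Phi$ along $a$ finds $m$ in its left dependency and the compatible $m'$ in its right dependency and, preserving the left marking, outputs $(m, \infty) \in \mathcal{C}^\Phi(p^{sat})$; the partial-morphism variant of $a$ is handled identically with the compatibility condition $m|_{Q_p} = m' \circ a$.

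I expect the main obstacle to lie in the modular-extension bookkeeping of the existential case: one must verify that propagating the $\infty$-marking from $[\cup]^\Phi$ through $RPS^{\infty}_r$ into an extension point of $N^\Phi_Q$ genuinely triggers construction of the full match $m$, which requires the completeness of $N^\Phi_Q$ to survive the embedding, where it carries an extra incoming dependency rather than standing alone. This is exactly why the lemma is stated for modular extensions, and the delicate point is to confirm that the added dependency only contributes partial matches at an extension point and thereby neither disturbs the established markings nor the consistency of the base net's nodes.
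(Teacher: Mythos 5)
Your proposal is correct and follows essentially the same route as the paper's proof: structural induction over $\psi$, with the negation and conjunction cases discharged by the induction hypothesis and the union-node semantics, and the existential case handled by collecting $(m',\infty)$ at the union node (via plain-net completeness/marking for the $H_p$-touching alternative and the induction hypothesis for the dependent alternative), pushing it through $RPS^{\infty}_r$ into an extension point of $N^\Phi_Q$, and concluding with the semi-join along $a$. The auxiliary fact you call the ``plain-net counterpart of Lemma~\ref{lem:localized_rete_condition_completeness}'' is exactly the result (Lemma 1 of the preprint) the paper invokes for both injection steps, so your modular-extension bookkeeping concern is resolved by the same ingredient the authors use.
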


\begin{proof}
We show the correctness of the lemma via structural induction over $\psi$.

In the base case of $\psi = \texttt{true}$, the lemma trivially holds, since no match is subgraph satisfaction dependent in this case.

We now proceed to showing that, under the induction hypothesis that the lemma holds for a nested graph condition of nesting depth $d$, the lemma holds for any nested graph condition of nesting depth $d+1$.

For a nested condition of the form $\psi = \neg\psi'$, the correctness of the lemma follows directly from the induction hypothesis.

For a nested condition of the form $\psi = \psi_1 \wedge \psi_2$, the correctness of the lemma follows directly from the induction hypothesis and the semantics of the marking-sensitive union node.

For a nested condition of the form $\psi = \exists (a: Q \rightarrow Q', \psi')$, $(N^{sat}, p^{sat}) = localize^{sat}(Q, \psi)$ consists of the localized RETE net for the plain pattern $Q$, $(N^\Phi_Q, p^\Phi_Q) = localize(Q)$, the localized RETE net for the plain pattern $Q'$, $(N^\Phi_{Q'}, p^\Phi_{Q'}) = localize(Q')$, the RETE net $(N^{sat}_{(Q', \psi')}, p^{sat}_{(Q', \psi')}) = localize^{sat}(Q', \psi')$, a marking-sensitive union node $[\cup]^\Phi$ with dependencies $p^\Phi_{Q'}$ and $p^{sat}_{(Q', \psi')}$, a request projection structure $RPS^{\infty}_r = RPS^{\infty}([\cup]^\Phi, N^\Phi_Q)$, and a marking-sensitive semi-join node $p^{sat} = [\ltimes]^\Phi$ with left dependency $p^\Phi_Q$ and right dependency $[\cup]^\Phi$.

By Lemma 1 in \cite{preprint}, the induction hypothesis, and the semantics definition of the marking-sensitive union node, it must then hold that $\forall m' \in \allmatches{Q'}{H} : m'(Q') \cap H_p \neq \emptyset \text{ or } m' \text{ is subgraph satisfaction dependent } \Rightarrow (m', \infty) \in \mathcal{C}^\Phi([\cup]^\Phi)$.

From the construction of $RPS^{\infty}_r$ and Lemma 1 in \cite{preprint}, it follows that $\forall m \in \allmatches{Q}{H} : (\exists m' \in \allmatches{Q'}{H} : m = m' \circ a \wedge (m'(Q') \cap H_p \neq \emptyset \text{ or } m' \text{ is subgraph satisfaction dependent})) \Rightarrow (m, \infty) \in \mathcal{C}^\Phi(p^\Phi_Q)$.

From the semantics of the marking-sensitive semi-join then follows the correctness of the lemma.

From the satisfaction of the base case and the induction step follows the correctness of the lemma.
\end{proof}

\begin{lem}[Execution of modular extensions of localized RETE nets via $order^{sat}$ yields consistent configurations for the base RETE net] \label{lem:satisfaction_rete_execution}
Let $H$ be a graph, $H_p \subseteq H$, $(Q, \psi)$ a graph query, and $(X^{sat}, p_X^{sat})$ a modular extension of the marking-sensitive RETE net $(N^{sat}, p^{sat}) = localize^{sat}(Q, \psi)$. Furthermore, let $\mathcal{C}^\Phi_0$ be an arbitrary configuration for $(X^{sat}, p_X^{sat})$. Executing $(X^{sat}, p_X^{sat})$ via $O = order^{sat}(N^{sat})$ then yields a configuration $\mathcal{C}^\Phi = execute(O, N^{sat}, H, H_p, \mathcal{C}^\Phi_0)$ that is consistent for all nodes in $N^{sat}$.
\end{lem}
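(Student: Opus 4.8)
The plan is to prove the lemma by structural induction over the nested graph condition $\psi$, closely following the argument used for Lemma~\ref{lem:localized_rete_condition_execution}. The recurring principle will be that $order^{sat}(N^{sat})$ decomposes into a concatenation of suborders, each of which either drives a localized subnet for a plain pattern to consistency, executes a single connecting node ($[\cup]^\Phi$ or $[\ltimes]^\Phi$), or runs the reverse topological sorting of a request projection structure. For each segment I will establish two facts: first, that the nodes treated in the segment become consistent; and second, that no node executed in a later segment is a dependency of an already-consistent node, so that consistency is preserved to the end. The only cyclic structure in $N^{sat}$ is confined to the localized plain-pattern subnets $N^\Phi_Q$ and $N^\Phi_{Q'}$; for those I will invoke the plain-query analog (Lemma 13 in \cite{preprint}), which already guarantees that $order$ consistently executes a modular extension of a localized net, so that the cycles need not be reasoned about directly.

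For the base case $\psi = \texttt{true}$, $N^{sat}$ is the single dummy node $[\emptyset]^\Phi$, whose target result set is empty regardless of the configuration; the single execution prescribed by $order^{sat}(N^{sat}) = [\emptyset]^\Phi$ therefore yields consistency. In the induction step I assume the claim for nesting depth $d$. The case $\psi = \neg\psi'$ is immediate, since both $localize^{sat}$ and $order^{sat}$ coincide with those for $(Q, \psi')$. For $\psi = \psi_1 \wedge \psi_2$ I will execute $order^{sat}(N^{sat}_{(Q,\psi_1)})$ and then $order^{sat}(N^{sat}_{(Q,\psi_2)})$, observing that these are disjoint copies so the second execution cannot disturb the first, and finally execute the production union node $[\cup]^\Phi$ once over its now fixed, consistent dependencies.

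The existential case $\psi = \exists(a : Q \rightarrow Q', \psi')$ carries the weight of the argument, and I will process the six segments of $order^{sat}$ in order: Lemma 13 in \cite{preprint} makes $N^\Phi_{Q'}$ consistent; the induction hypothesis makes $N^{sat}_{(Q',\psi')}$ consistent without touching $N^\Phi_{Q'}$; $[\cup]^\Phi$ is then executed once over consistent dependencies $p^\Phi_{Q'}$ and $p^{sat}_{(Q',\psi')}$; the reverse toposort of the tree $RPS^{\infty}_r$ makes it consistent over the now fixed $[\cup]^\Phi$; Lemma 13 is applied a second time to make $N^\Phi_Q$ consistent, treating the edge from $RPS^{\infty}_r$ into its extension point as a modular extension whose feeding result set no longer changes; and finally the semi-join $p^{sat} = [\ltimes]^\Phi$ is executed once.

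The hard part will be the bookkeeping that the inter-subnet dependency structure is acyclic and compatible with the chosen order, so that every ``execute once'' step lands on dependencies that are already consistent and no later step feeds back into an earlier subnet. Concretely, I must verify that $RPS^{\infty}_r$ depends only on $[\cup]^\Phi$, hence only on the $Q'$-side and not on $N^\Phi_Q$, so that executing $order(N^\Phi_Q)$ leaves $RPS^{\infty}_r$ unchanged, and that the only cycles present — those internal to $N^\Phi_Q$ and $N^\Phi_{Q'}$ arising from their own request projection structures — are exactly the ones Lemma 13 is designed to absorb. With this topological argument in place, the lemma reduces to the plain-query case together with the single-execution behavior of the connecting semi-join and union nodes, and the correctness of base case and induction step yields the claim.
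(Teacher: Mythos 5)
Your proposal is correct and takes essentially the same route as the paper's proof: structural induction over $\psi$, with the base, negation, and conjunction cases discharged trivially or by the induction hypothesis, and the existential case handled by walking through the segments of $order^{sat}(N^{sat})$ and invoking the plain-query result (Lemma 13 in \cite{preprint}) for the modular extensions $N^\Phi_{Q'}$ and $N^\Phi_Q$ together with the induction hypothesis for $N^{sat}_{(Q',\psi')}$. Your write-up is in fact somewhat more explicit than the paper's about the dependency bookkeeping (that later segments never feed back into already-consistent subnets), which the paper leaves implicit.
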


\begin{proof}
We show the correctness of the lemma via structural induction over $\psi$.

In the base case of $\psi = \texttt{true}$, the lemma trivially holds.

We now proceed to showing that, under the induction hypothesis that the lemma holds for a nested graph condition of nesting depth $d$, the lemma holds for any nested graph condition of nesting depth $d+1$.

For a nested condition of the form $\psi = \neg\psi'$, the correctness of the lemma follows directly from the induction hypothesis.

For a nested condition of the form $\psi = \psi_1 \wedge \psi_2$, the correctness of the lemma follows directly from the induction hypothesis.

For a nested condition of the form $\psi = \exists (a: Q \rightarrow Q', \psi')$, the marking-sensitive RETE net $(N^{sat}, p^{sat}) = localize^{sat}(Q, \psi)$ consists of $(N^\Phi_Q, p^\Phi_Q) = localize(Q)$, $(N^\Phi_{Q'}, p^\Phi_{Q'}) = localize(Q')$, $(N^{sat}_{(Q', \psi')}, p^{sat}_{(Q', \psi')}) = localize^{sat}(Q', \psi')$, $[\cup]^\Phi$, $RPS^{\infty}_r = RPS([\cup]^\Phi, N^\Phi_Q)$, and $p^{sat} = [\ltimes]^\Phi$. The execution order for $(N^{sat}, p^{sat})$ is then given by $order^{sat}(N^{sat}) = order(N^\Phi_{Q'}) \circ order^{sat}(N^{sat}_{(Q', \psi')}) \circ [\cup]^\Phi \circ toposort(RPS^{\infty}_r)^{-1} \circ order(N^\Phi_Q) \circ [\ltimes]^\Phi$.

By Lemma 13 in \cite{preprint}, $\mathcal{C}_1^\Phi = execute(order(N^\Phi_{Q'}), N^{sat}, H, H_p, \mathcal{C}^\Phi_0)$ is consistent for all nodes in $N^\Phi_{Q'}$. $\mathcal{C}_2^\Phi = execute(order^{sat}(N^{sat}_{(Q', \psi')}), N^{sat}, H, H_p, \mathcal{C}^\Phi_1)$ is then also consistent for all nodes in $N^{sat}_{Q'}$ by the induction hypothesis. By Lemma 13 in \cite{preprint}, $\mathcal{C}^\Phi = execute([\cup]^\Phi \circ toposort(RPS^{\infty}_r)^{-1} \circ order(N^\Phi_Q) \circ [\ltimes]^\Phi, N^{sat}, H, H_p, \mathcal{C}^\Phi_2)$ must then be consistent for all nodes in $N^{sat}$.

From the satisfaction of the base case and the induction step follows the correctness of the lemma.
\end{proof}

\begin{lem}[RETE net localization via $localize^{sat}$ introduces an overhead of at most factor 25 on the number of computed matches] \label{lem:satisfaction_rete_match_number}
Let $H$ be an edge-dominated graph, $H_p \subseteq H$, $(N, p)$ a RETE net created via the procedure described in \cite{barkowsky2023host} for the extended graph query $(Q, \psi)$, $\mathcal{C}$ a consistent configuration for $(N, p)$ for host graph $H$, and $\mathcal{C}^\Phi$ a consistent configuration for the marking-sensitive RETE net $(N^{sat}, p^{sat}) = localize^{sat}(Q, \psi)$ for host graph $H$ and relevant subgraph $H_p$ corresponding to $(N, p)$. It then holds that $\sum_{n^{sat} \in V^{N^{sat}}} |\mathcal{C}^\Phi(n^{sat})| \leq 25 \cdot \sum_{n \in V^N} |\mathcal{C}(n)|$.
\end{lem}

\begin{proof}
For nested queries of $(Q, \psi)$ with the form $(Q', \texttt{true})$, $localize^{sat}$ only introduces a dummy node to $N^{sat}$. The result set for any dummy node in $N^{sat}$ is always empty. Hence, the result sets of these nodes do not contribute to configuration size. Moreover, $localize^{sat}$ does not introduce any additional nodes for nested queries with the form $(Q', \neg\psi')$.

For each nested query with the form $(Q', \psi_1 \wedge \psi_2)$, $N^{sat}$ contains one marking-sensitive union node $[\cup]^\Phi$, whereas $N$ contains a RETE subnet $(N_{Q'}, p_{Q'})$ computing matches for $Q'$ as well as two semi-joins. Since the result set for $[\cup]^\Phi$ can only contain at most one tuple for each match for $Q'$ and the result set for $p^{Q'}$ contains all such matches, it follows that $|\mathcal{C}^\Phi([\cup]^\Phi)| \leq |\mathcal{C}(p_{Q'})|$.

For each nested query with the form $(Q', \exists(a: Q' \rightarrow Q'', \psi'))$, $N^{sat}$ contains the subnets $(N^\Phi_{Q'}, p^\Phi_{Q'}) = localize(Q')$ and $(N^\Phi_{Q''}, p^\Phi_{Q''}) = localize(Q'')$, a marking-sensitive union node $[\cup]^\Phi$, a marking-sensitive semi-join $[\ltimes]^\Phi$, as well as the request projection structure $RPS^{\infty}_l = RPS^{\infty}([\cup]^\Phi, N^\Phi_{Q'})$. $N$ contains a RETE subnet $(N_{Q'}, p_{Q'})$ computing matches for $Q'$ and a semi-join $[\ltimes]$.

By Lemma~\ref{lem:localized_rete_match_number} and because the additional dependency on $RPS^{\infty}_l$ does not affect the worst case of a fully populated RETE net $(N^\Phi_{Q}, p^\Phi_{Q})$, it must hold that $\sum_{n^\Phi_{Q'} \in V^{N^{\Phi}_{Q'}}} |\mathcal{C}^\Phi(n^\Phi_{Q'})| \leq 10 \cdot \sum_{n_{Q'} \in V^{N_{Q'}}} |\mathcal{C}(n_{Q'})|$. Also, by the semantics of the marking-sensitive semi-join and because the result set for $p^\Phi_{Q'}$ contains each match for $Q'$ at most once by Lemma 12 in \cite{preprint}, it must hold that $\sum_{(m, \phi) \in \mathcal{C}^\Phi([\ltimes]^\Phi)} |m| \leq \sum_{m \in \mathcal{C}(p_{Q'})} |m| \leq \sum_{n_{Q'} \in V^{N_{Q'}}} |\mathcal{C}(n_{Q'})|$.

Any nested query with the form $(Q', \exists(a: Q' \rightarrow Q'', \psi'))$ must have some nested query $(Q'', \exists(a: Q'' \rightarrow Q''', \psi''))$ or $(Q'', \texttt{true})$ in order for $\psi$ by the definition of nested graph conditions. In either case, $N$ has to contain a corresponding subnet $(N_{Q''}, p_{Q''})$ computing matches for $Q''$. It then holds for this subnet $N_{Q''}$ by Theorem~\ref{the:upper_bound_configuration_size} that $\sum_{n^\Phi_{Q''} \in V^{N^{\Phi}_{Q''}}} |\mathcal{C}^\Phi(n^\Phi_{Q''})| \leq 10 \cdot \sum_{n_{Q''} \in V^{N_{Q''}}} |\mathcal{C}(n_{Q''})|$. It also must hold that $|\mathcal{C}^\Phi([\cup]^\Phi)| \leq |\mathcal{C}(p_{Q''})| \leq  \sum_{n_{Q''} \in V^{N_{Q''}}} |\mathcal{C}(n_{Q''})|$ and $\sum_{n^\Phi \in V^{RPS^{\infty}_l}} |\mathcal{C}^\Phi(n^\Phi)| \leq 3 \cdot \sum_{n_{Q''} \in V^{N_{Q''}}} |\mathcal{C}(n_{Q''})|$.

Since $\psi$ is a tree of subconditions, it follows that each subnet $(N_{Q''}, p_{Q''})$ must be counted again like this for at most one parent query $(Q', \exists(a: Q' \rightarrow Q'', \psi'))$. For any nested query of the form $(Q', \texttt{true})$, $\mathcal{C}^\Phi$ stores at most $14 \cdot \sum_{n_{Q''} \in V^{N_{Q'}}} |\mathcal{C}(n_{Q'})|$ matches. For any nested query of the form, $(Q', \exists(a: Q' \rightarrow Q'', \psi'))$, $\mathcal{C}^\Phi$ stores at most $(11 + 14) \cdot \sum_{n_{Q''} \in V^{N_{Q'}}} |\mathcal{C}(n_{Q'})|$ matches.

In the worst case, it must thus hold that  $\sum_{n^{sat} \in V^{N^{sat}}} |\mathcal{C}^\Phi(n^{sat})| \leq 25 \cdot \sum_{n \in V^N} |\mathcal{C}(n)|$.
\end{proof}

\begin{lem}[Execution of modular extensions of localized RETE nets via $order^\Delta$ yields consistent configurations for the base RETE net] \label{lem:ngc_delta_rete_execution} 
Let $H$ be a graph, $H_p \subseteq H$, $(Q, \psi)$ a graph query, and $(X^\Delta, p_X^\Delta)$ a modular extension of the marking-sensitive RETE net $(N^\Delta, p^\Delta) = localize^\Delta(Q, \psi)$. Furthermore, let $\mathcal{C}^\Phi_0$ be an arbitrary configuration for $(X^\Delta, p_X^\Delta)$. Executing $(X^\Delta, p_X^\Delta)$ via $O = order^\Delta(N^\Delta)$ then yields a configuration $\mathcal{C}^\Phi = execute(O, N^\Delta, H, H_p, \mathcal{C}^\Phi_0)$ that is consistent for all nodes in $N^\Delta$.
\end{lem}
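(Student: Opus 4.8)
The plan is to prove the lemma by the same modular, phase-by-phase strategy used for Lemma~\ref{lem:localized_rete_condition_execution} and Lemma~\ref{lem:satisfaction_rete_execution}, exploiting the fact that $order^\Delta(N^\Delta)$ is a concatenation of orderings for self-contained subnets separated by single executions of the ``glue'' nodes. Concretely, I would split $O = order^\Delta(N^\Delta)$ at its boundaries into the segments $order(N^\Phi_Q)$, $[\phi > h]^\Phi$, $order^{sat}(N^{sat}_{(Q,\psi)})$, $[\xrightarrow{f \circ g^{-1}}]^\Phi$, $[\cup]^\Phi$, $toposort(RPS^{\infty}_l)^{-1}$, $order^\Psi(N^\Phi_{(Q,\psi)})$, and finally $[\ltimes]^\Phi$, and then show by induction along this sequence that after each segment the configuration is consistent for every node executed so far and that no previously established consistency is disturbed.

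The key structural observation to establish first is that, at the top level, $N^\Delta$ is acyclic. Reading off the dependency edges, the only arcs among the glue nodes and subnet roots are $[\ltimes]^\Phi \to [\cup]^\Phi$, $[\ltimes]^\Phi \to p^\Phi_{(Q,\psi)}$, $RPS^{\infty}_l \to [\cup]^\Phi$, an arc from an extension point of $N^\Phi_{(Q,\psi)}$ into $RPS^{\infty}_l$, and $[\cup]^\Phi \to [\phi>h]^\Phi$, $[\cup]^\Phi \to p^{sat}_{(Q,\psi)}$, $[\cup]^\Phi \to [\xrightarrow{f \circ g^{-1}}]^\Phi$. In particular $[\cup]^\Phi$ does \emph{not} depend on $N^\Phi_{(Q,\psi)}$, so no cycle crosses a subnet boundary; every cycle lies entirely inside one of the localized subnets $N^\Phi_Q$, $N^{sat}_{(Q,\psi)}$, or $N^\Phi_{(Q,\psi)}$, where it is an $RPS$-induced cycle encapsulated within a plain-query localized net. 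For those three subnets I would invoke the matching consistency results---Theorem~\ref{the:execution_order_consistency} in its modular-extension form (Lemma 13 of \cite{preprint}) for $N^\Phi_Q$, Lemma~\ref{lem:satisfaction_rete_execution} for $N^{sat}_{(Q,\psi)}$, and Lemma~\ref{lem:localized_rete_condition_execution} for $N^\Phi_{(Q,\psi)}$---observing that each is executed over a configuration in which all of its own dependencies have already been fixed (e.g.\ $RPS^{\infty}_l$ is made consistent before $order^\Psi(N^\Phi_{(Q,\psi)})$ feeds the corresponding extension point), so each subnet behaves exactly as in its stand-alone setting and the cited lemma applies verbatim. The remaining glue nodes $[\phi>h]^\Phi$, $[\cup]^\Phi$, the three nodes of $RPS^{\infty}_l$, and $[\ltimes]^\Phi$ are each executed exactly once, after all their dependencies have been made consistent and before any of those dependencies is touched again; hence each becomes consistent and, since nothing it depends on is re-executed, stays consistent.

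The main obstacle, and the step I would treat most carefully, is the status of the translation node $[\xrightarrow{f \circ g^{-1}}]^\Phi$ together with its dependency subnet $N^{sat'}_{(Q,\psi)}$, which belongs to $N^\Delta$ yet never appears in $order^\Delta(N^\Delta)$. Here I would use that the target result set of $[\xrightarrow{f \circ g^{-1}}]^\Phi$ is defined purely in terms of the \emph{current} result set of $p^{sat'}_{(Q,\psi)}$ rather than in terms of $H'$; a single execution of the node therefore renders it consistent with respect to whatever $p^{sat'}_{(Q,\psi)}$ currently holds, and this is preserved because $order^\Delta$ never re-executes any node of $N^{sat'}_{(Q,\psi)}$ afterwards, so that subnet retains its starting values untouched throughout---exactly the regime in which $localize^\Delta$ is meant to be run, where (cf.\ the accompanying corollary) $\mathcal{C}^\Phi_0$ is already consistent for $N^{sat'}_{(Q,\psi)}$ over $H'$. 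Finally, I would fold all of this into one induction over the segments of $O$, appealing to the modular-extension definition to guarantee that the surrounding nodes of $X^\Delta$ add no new internal edges to $N^\Delta$ and hence cannot interfere with the consistency established for the base net.
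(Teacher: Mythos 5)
Your proof is correct and takes essentially the same route as the paper's, whose entire proof is a one-line citation of Lemma 13 in \cite{preprint} together with Lemmata~\ref{lem:satisfaction_rete_execution} and~\ref{lem:localized_rete_condition_execution}; your segment-by-segment induction over $order^\Delta(N^\Delta)$, invoking exactly those three results for the subnets $N^\Phi_Q$, $N^{sat}_{(Q,\psi)}$, and $N^\Phi_{(Q,\psi)}$, is precisely the argument that citation compresses. The additional care you take with the glue nodes and with the translation node $[\xrightarrow{f \circ g^{-1}}]^\Phi$ (whose dependency subnet $N^{sat'}_{(Q,\psi)}$ is never executed by $order^\Delta$, so consistency is relative to its untouched current result set) is a sound filling-in of details the paper leaves implicit.
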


\begin{proof}
Follows directly from Lemma 13 in \cite{preprint} and Lemmata~\ref{lem:satisfaction_rete_execution} and~\ref{lem:localized_rete_condition_execution}.
\end{proof}

\section{Queries} \label{app:queries}

Information regarding the metamodel, that is, the type graph for the synthetic and real abstract syntax graph scenarios can be found in \cite{bruneliere2014modisco}. For a visualization of the type graph for the LDBC scenario, see \cite{angles2024ldbc}.

\begin{figure}[!ht]
\centering
\includegraphics[scale=0.55]{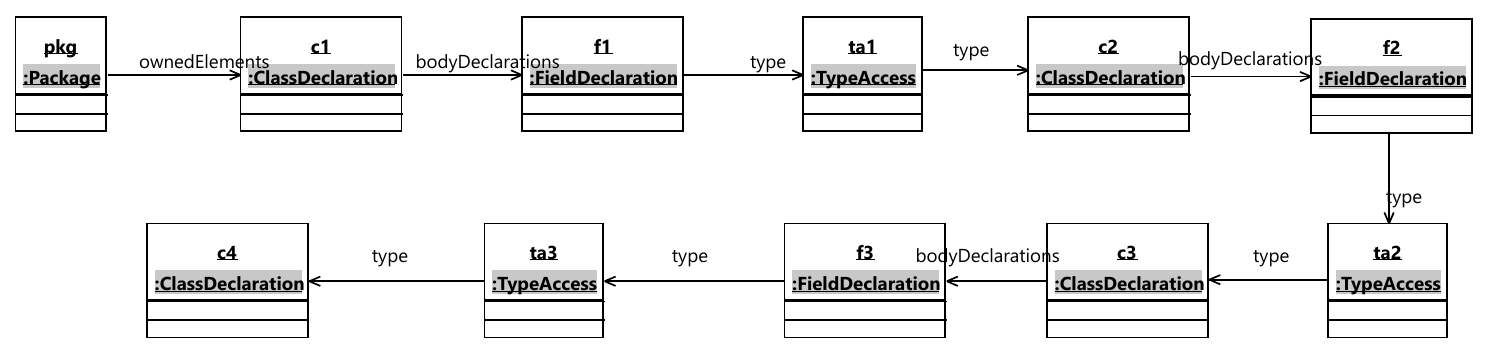}
\caption{Simple path query for the synthetic abstract syntax graph scenario}
\end{figure}

\begin{figure}[!ht]
\centering
\includegraphics[scale=0.55]{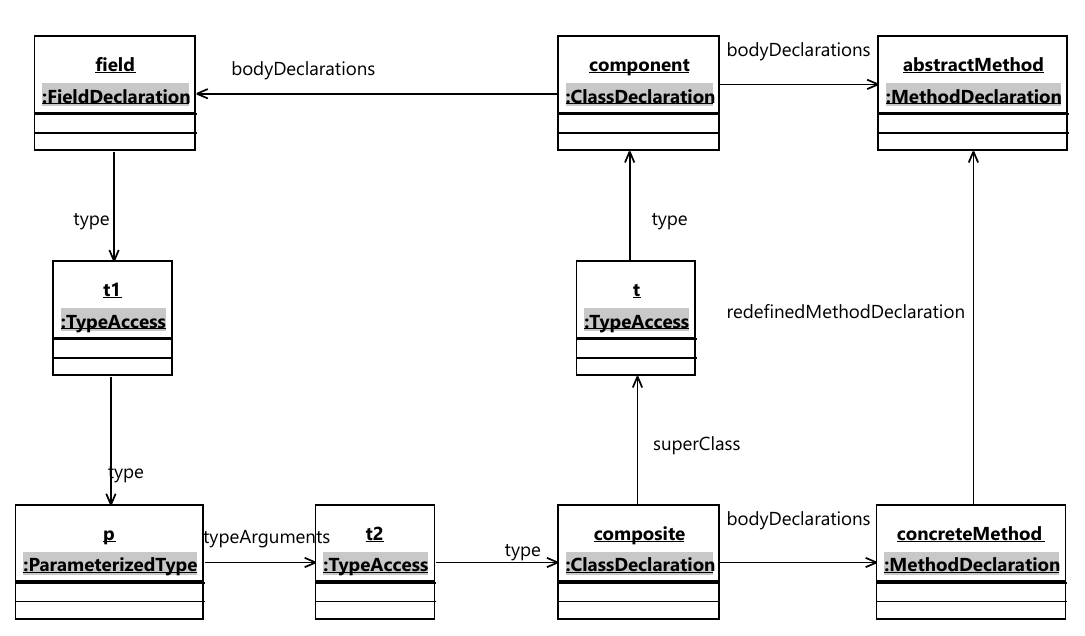}
\caption{Composite query for the real abstract syntax graph scenario}
\end{figure}

\begin{figure}[!ht]
\centering
\includegraphics[scale=0.53]{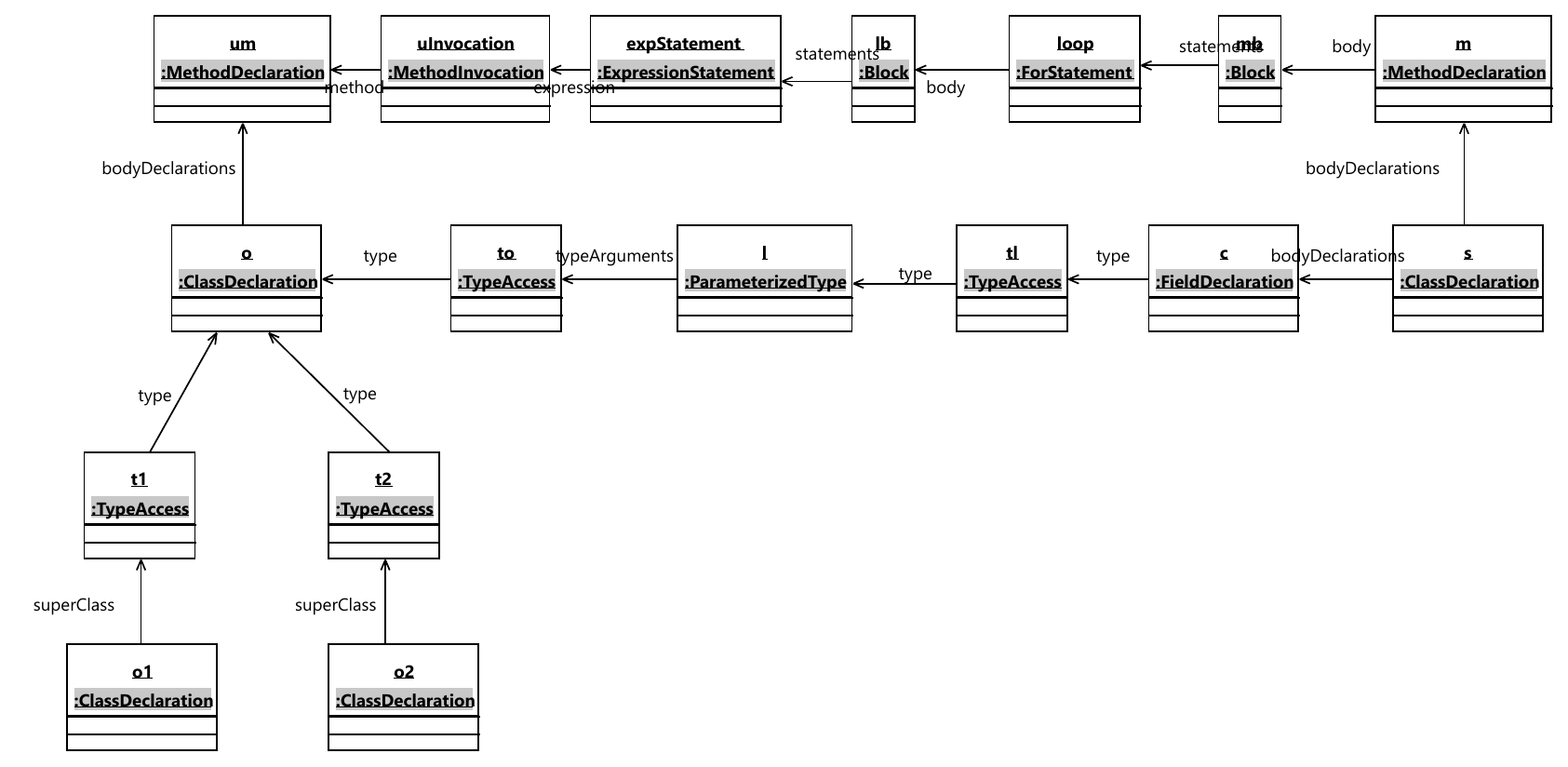}
\caption{Observer query for the real abstract syntax graph scenario}
\end{figure}

\begin{figure}[!ht]
\centering
\includegraphics[scale=0.55]{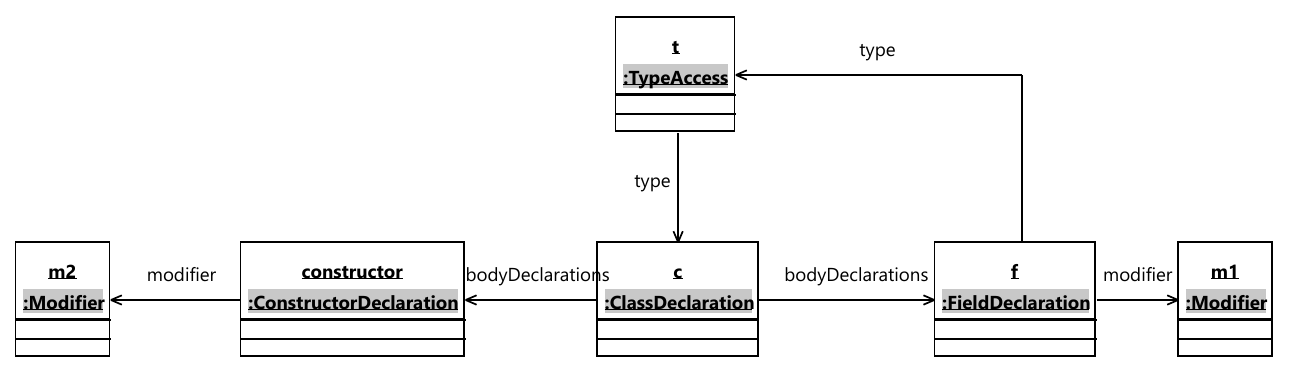}
\caption{Singleton query for the real abstract syntax graph scenario}
\end{figure}

\begin{figure}[!ht]
\centering
\includegraphics[scale=0.55]{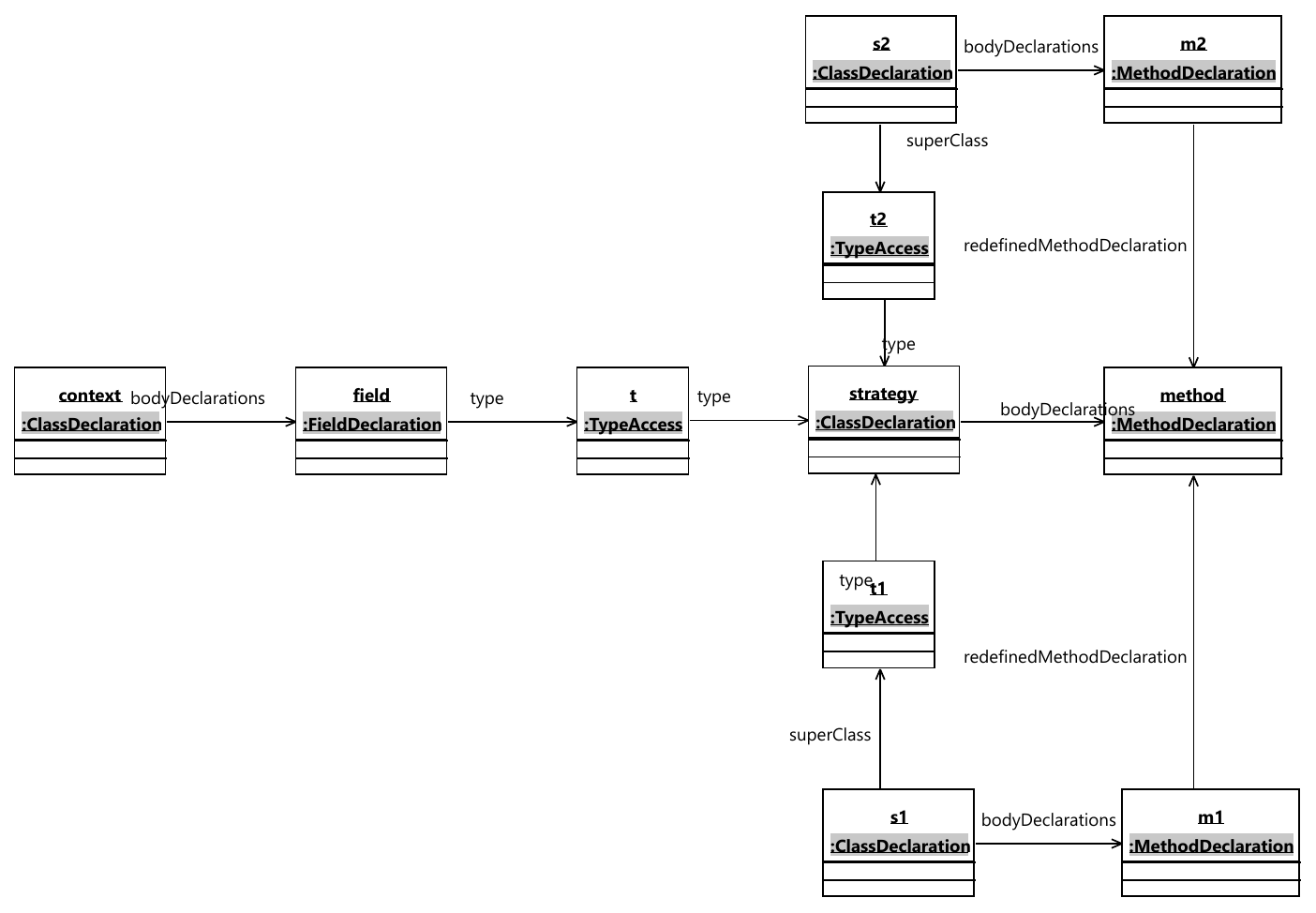}
\caption{Strategy query for the real abstract syntax graph scenario}
\end{figure}

\begin{figure}[!ht]
\centering
\includegraphics[scale=0.55]{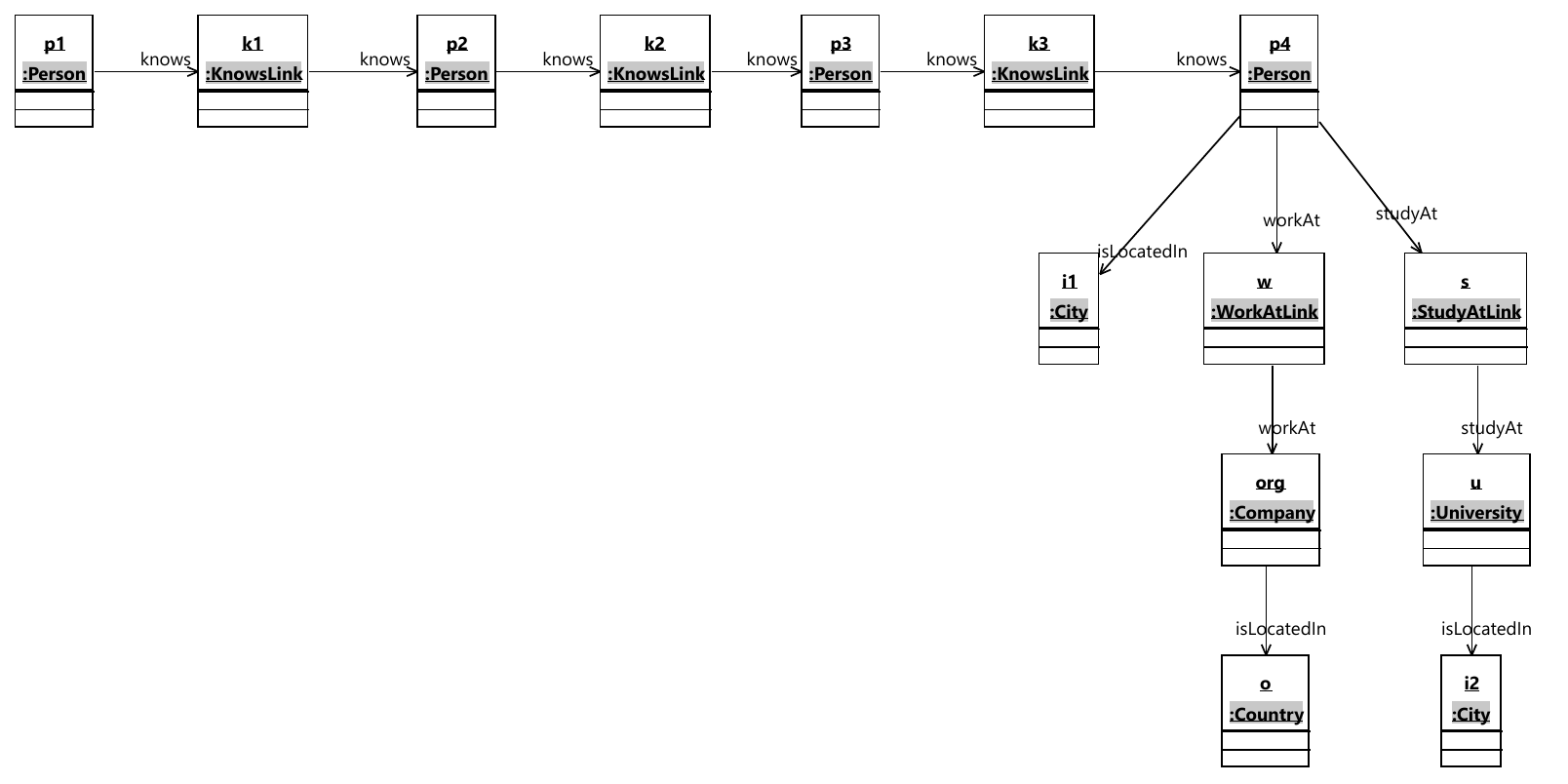}
\caption{Query ``Interactive 1'' for the LDBC scenario}
\end{figure}

\begin{figure}[!ht]
\centering
\includegraphics[scale=0.55]{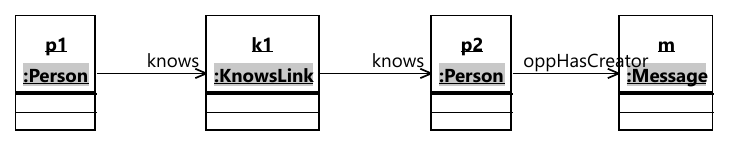}
\caption{Query ``Interactive 2'' for the LDBC scenario}
\end{figure}

\begin{figure}[!ht]
\centering
\includegraphics[scale=0.55]{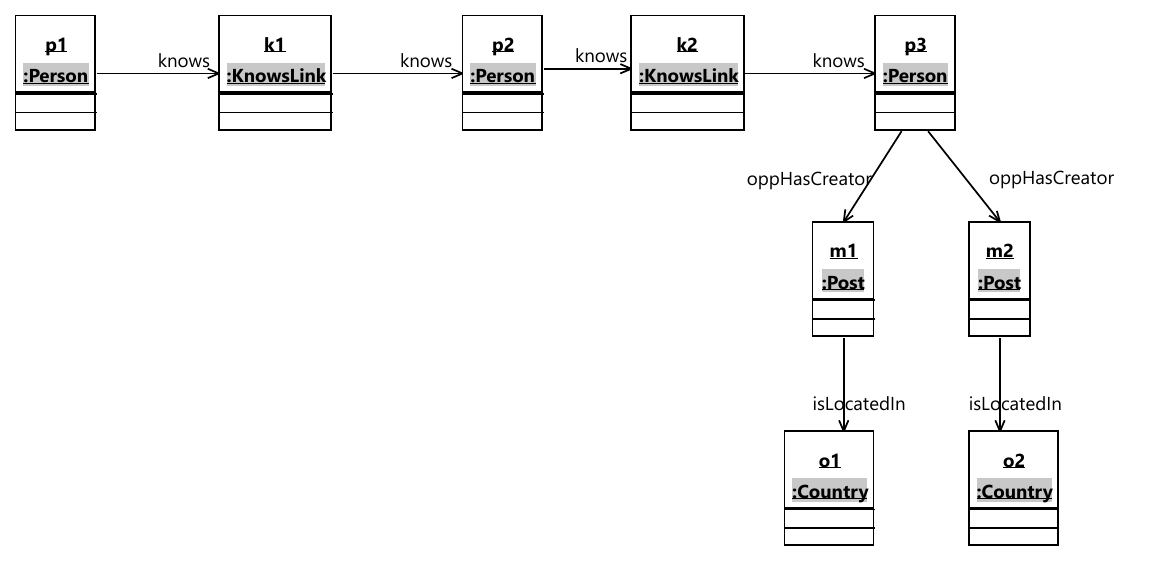}
\caption{Query ``Interactive 3'' for the LDBC scenario}
\end{figure}

\begin{figure}[!ht]
\centering
\includegraphics[scale=0.55]{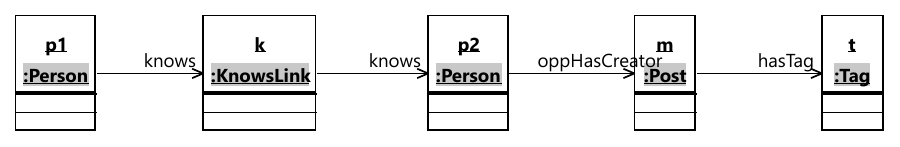}
\caption{Query ``Interactive 4'' for the LDBC scenario}
\end{figure}

\begin{figure}[!ht]
\centering
\includegraphics[scale=0.55]{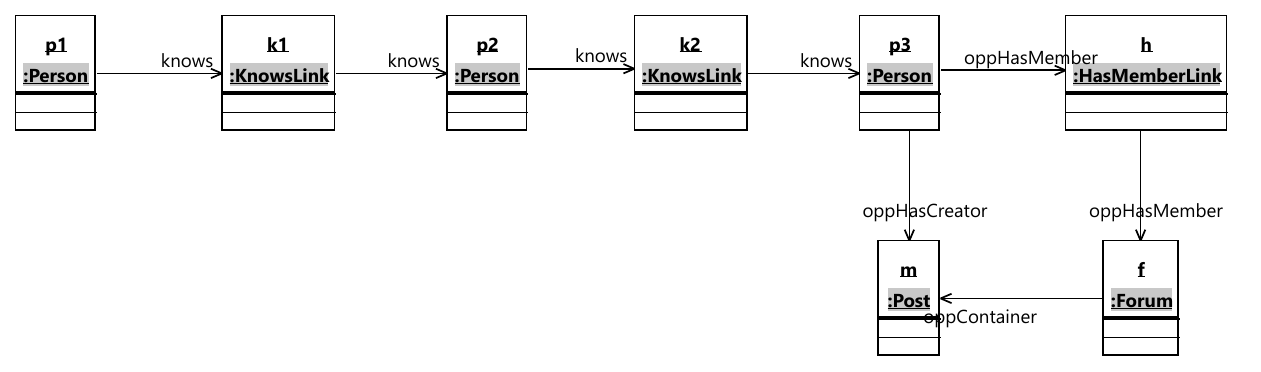}
\caption{Query ``Interactive 5'' for the LDBC scenario}
\end{figure}

\begin{figure}[!ht]
\centering
\includegraphics[scale=0.55]{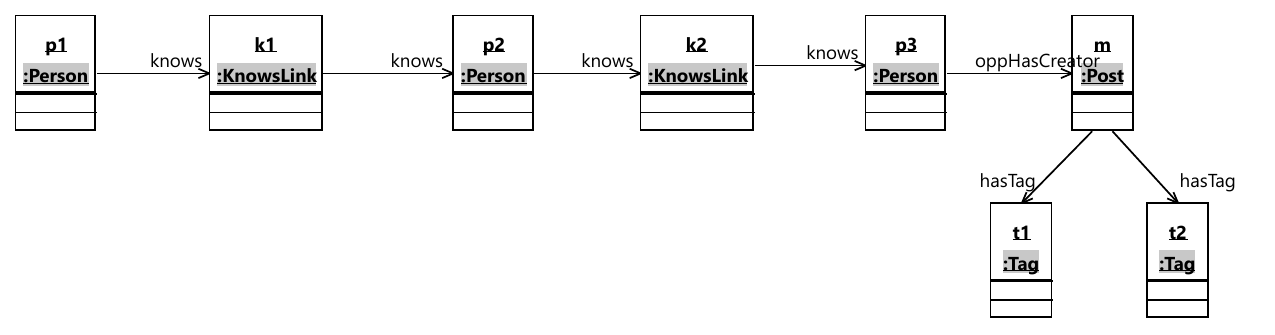}
\caption{Query ``Interactive 6'' for the LDBC scenario}
\end{figure}

\begin{figure}[!ht]
\centering
\includegraphics[scale=0.55]{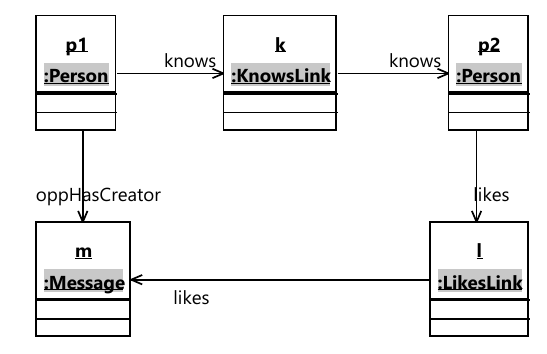}
\caption{Query ``Interactive 7'' for the LDBC scenario}
\end{figure}

\begin{figure}[!ht]
\centering
\includegraphics[scale=0.55]{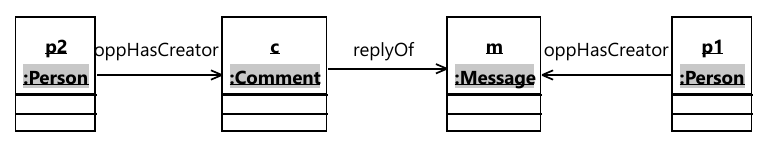}
\caption{Query ``Interactive 8'' for the LDBC scenario}
\end{figure}

\begin{figure}[!ht]
\centering
\includegraphics[scale=0.55]{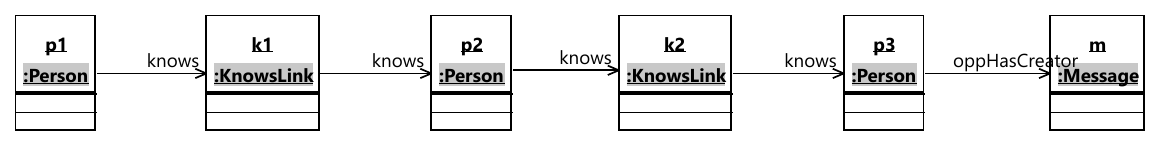}
\caption{Query ``Interactive 9'' for the LDBC scenario}
\end{figure}

\begin{figure}[!ht]
\centering
\includegraphics[scale=0.55]{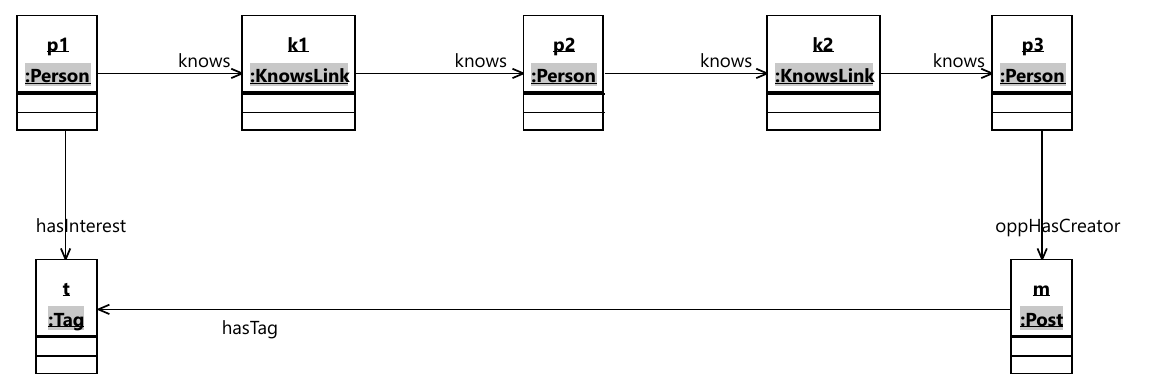}
\caption{Query ``Interactive 10'' for the LDBC scenario}
\end{figure}

\begin{figure}[!ht]
\centering
\includegraphics[scale=0.55]{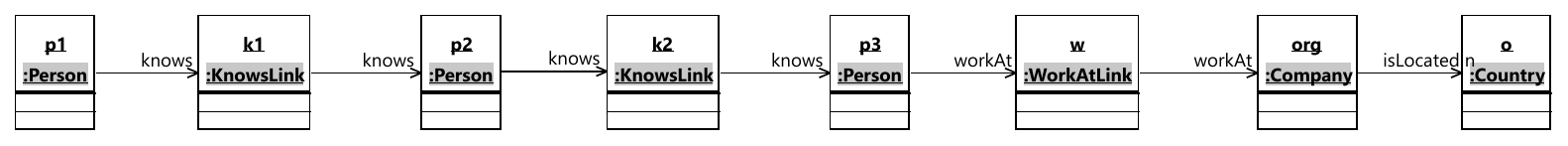}
\caption{Query ``Interactive 11'' for the LDBC scenario}
\end{figure}

\begin{figure}[!ht]
\centering
\includegraphics[scale=0.55]{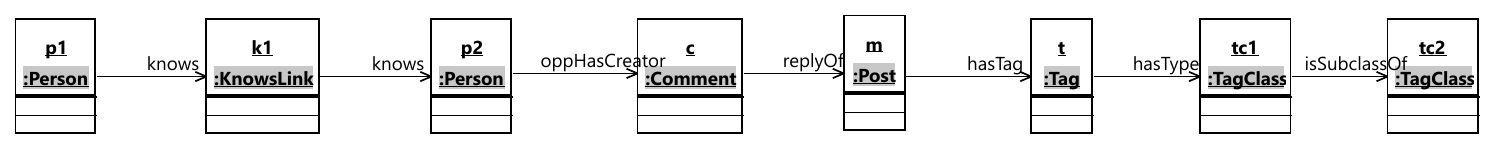}
\caption{Query ``Interactive 12'' for the LDBC scenario}
\end{figure}

\clearpage

\begin{figure}[!ht]
\centering
\includegraphics[scale=0.55]{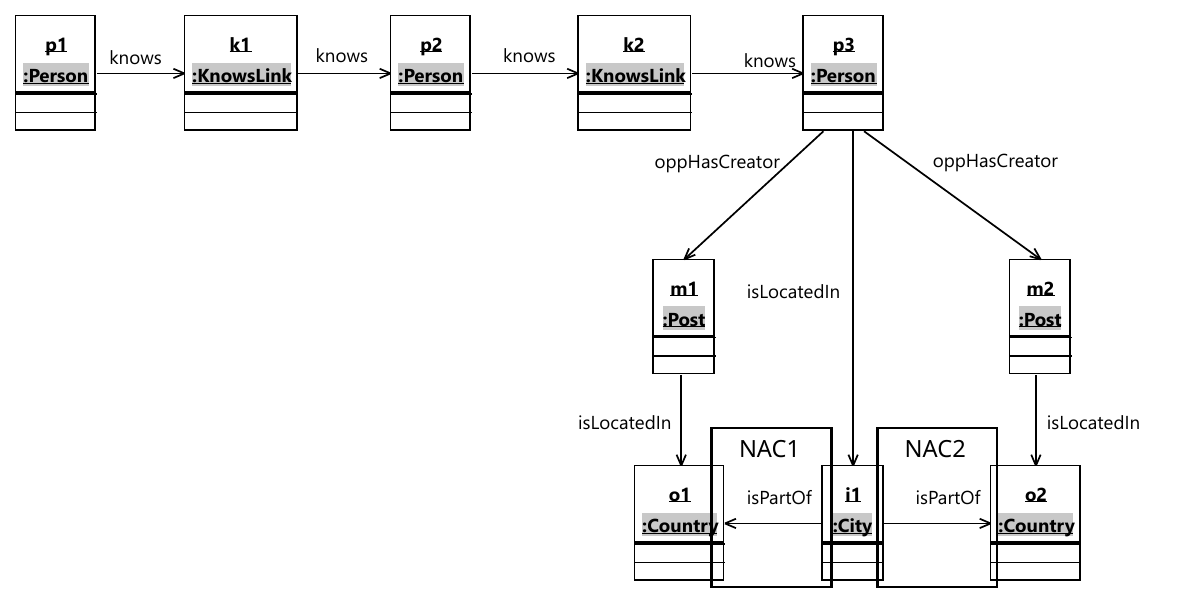}
\caption{Query ``Interactive 3 NGC'' for the LDBC scenario. The query includes a nested graph condition of the form $\neg\exists(a_1 : Q \rightarrow Q_1', \texttt{true}) \wedge \neg\exists(a_2 : Q \rightarrow Q_2', \texttt{true})$, with the parts of $Q_1'$ and $Q_2'$ that are not in the image of $a_1$ respectively $a_2$ inside the boxes labeled ``NAC1'' and ``NAC2''}
\end{figure}

\begin{figure}[!ht]
\centering
\includegraphics[scale=0.55]{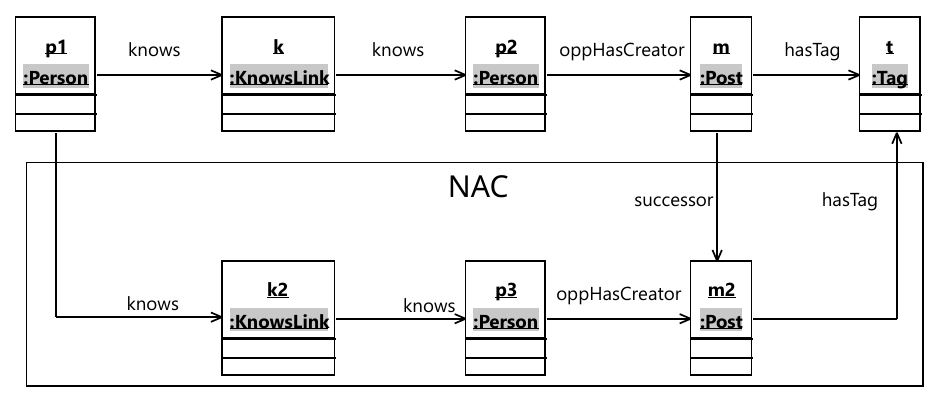}
\caption{Query ``Interactive 4 NGC'' for the LDBC scenario. The query includes a nested graph condition of the form $\neg\exists(a : Q \rightarrow Q', \texttt{true})$, with the part of $Q'$ that is not in the image of $a$ inside the box labeled ``NAC''}
\end{figure}

\end{document}